\newtheorem{lem}{Lemma}
\newtheorem{thm}{Theorem}
\newtheorem{cor}{Corollary}
\title{Diagonalizing Bose Gases in the Gross-Pitaevskii Regime and Beyond}
\author{Morris Brooks}
\date{October 2023}
\begin{document}

\maketitle

\begin{abstract}
We present a novel approach to the Bogoliubov theory of dilute Bose gases, allowing for an elementary derivation of the celebrated Lee-Huang-Yang formula in the Gross-Pitaevskii regime. Furthermore, we identify the low lying excitation spectrum beyond the Gross-Pitaevskii scaling, extending a recent result \cite{BCS} to significantly more singular scaling regimes. Finally, we provide an upper bound on the ground state energy in the Gross-Pitaevskii regime that captures the correct expected order of magnitude beyond the Lee-Huang-Yang formula.
\end{abstract}

\section{Introduction}
\label{Sec:Introduction}
Given a non-negative radial function with compact support $V\in L^1 \! \left( \mathbb{R}^3\right)$ let us introduce the rescaled two-particle interaction $V_L(x,y):=L^2 V\!\left(L(x-y)\right)$, where $L>0$. In the following we will study the many-particle Hamilton operator $H_{N,\kappa}$ acting on the $N$-particle Hilbert space $L^2_{\mathrm{sym}} \! \left(\Lambda^N\right)$, with $\Lambda:=\left[-\frac{1}{2},\frac{1}{2}\right]^3$ being the three-dimensional torus, given by
\begin{align}
\nonumber
H_{N,\kappa}: & =\sum_{i=1}^N (-\Delta)_{x_i}+\sum_{1\leq i<j\leq N}V_{N^{1-\kappa}}(x_i,x_j)\\
\label{Eq:Hamiltonian}   
&=\sum_{k\in 2\pi \mathbb{Z}^3} |k|^2 a_k^\dagger a_k \! + \! \frac{1}{2}\sum_{jk,mn\in 2\pi \mathbb{Z}^3}\left(V_{N^{1-\kappa}}\right)_{jk,mn}a_k ^\dagger a_j^\dagger a_m a_n,
\end{align}
where $a_k$ and $a_k^\dagger$ are the standard annihilation and creation operators on the bosonic Fock space $\mathcal{F}:=\bigoplus_{n=0}^\infty L^2_{\mathrm{sym}} \! \left(\Lambda^n\right)$ corresponding to the modes $e^{ikx}$ for $k\in 2\pi \mathbb{Z}^3$, $x_i-x_j$ refers to the distance between two particles on the torus $\Lambda$ and $0\leq \kappa < \frac{2}{3}$ is an additional scaling parameter. If not indicated otherwise, we will always assume that indices run in the set $2\pi \mathbb{Z}^3$, which we will usually neglect in our notation, and we write $k\neq 0$ in case the index runs in the set $2\pi \mathbb{Z}^3\setminus \{0\}$. 

The study of the low energy properties, such as the ground state energy $E_{N,\kappa}$, of dilute Bose gases described by the Hamilton operator $H_{N,\kappa}$ has a long standing history in the mathematics, as well the physics, literature. A rigorous derivation of the leading order 
\begin{align}
\label{Eq:Leading_Order}
    E_{N,\kappa}=4\pi\mathfrak{a} N^{1+\kappa}+o_{N\rightarrow \infty} \! \left(N^{1+\kappa}\right),
\end{align}
has been achieved in \cite{LY} for the thermodynamic limit in the regime of small densities $\rho$, where $\kappa$ is determined by the density $\rho$ via $\rho=N^{3\kappa-2}$, and later for the Gross-Pitaevskii regime $\kappa:=0$ in \cite{LSY}, with the scattering length $\mathfrak{a}$ defined as $8\pi \mathfrak{a}:=\int_{\mathbb{R}^3} V(x)\varphi(x)\mathrm{d}x$, where $\varphi$ is the radial solution of $(-2\Delta+V)\varphi=0$ subject to the boundary condition $\varphi\underset{|x|\rightarrow \infty}{\longrightarrow} 1$. More recently the subleading contribution to the asymptotics in Eq.~(\ref{Eq:Leading_Order}), famously conjectured to be the Lee-Huang-Yang term emerging from the underlying Bogoliubov theory \cite{B,LHY}, has been identified in the Gross-Pitaevskii regime in \cite{BBCS} and in the thermodynamic limit in \cite{FS1,FS2}. Following these landmark results, a lot of effort has been made to streamline and generalize the methods and results, see for example \cite{BCS,HST,FGJMO,BS,HHNST}.

In this work we want to combine the approach in \cite{FS1,FS2,FGJMO} with the one in \cite{BBCS,HST}, with the goal of finding an algebraic method, such as the one in \cite{FS1,FS2,FGJMO}, with a clean representation of important cancellations comparable to the one in \cite{HST}. Furthermore we will follow an alternative approach when it comes to the scattering length, which will be introduced via the Feshbach-Schur map discussed in Section \ref{Sec:The two-body Problem}. The following three hand-selected Theorem \ref{Th:GP}, Theorem \ref{Th:Beyond_GP} and Theorem \ref{Th:Upper_Bound} should be seen as an advertisement for the general method and its robustness, highlighting its various advantages. Starting with Theorem \ref{Th:GP}, we recover the, at this point, well known results in \cite{BBCS,HST}, following, what we would call, a short and elementary proof. All the relevant cancellations within the proof are exclusively discussed in Section \ref{Sec:The two-body Problem}.

\begin{thm}
\label{Th:GP}
    Let $E_N$ be the ground state energy of the operator $H_{N}:=H_{N,0}$. Then we have
    \begin{align}
    \label{Eq:GP_GSE}
        E_N= 4\pi\mathfrak{a}_N (N-1)+\frac{1}{2}\sum_{k\neq 0}\left\{\sqrt{|k|^4+16\pi \mathfrak{a} |k|^2}-|k|^2-8\pi \mathfrak{a}+\frac{(8\pi \mathfrak{a})^2}{2|k|^2}\right\}+o_{N\rightarrow \infty}(1),
    \end{align}
    where $\mathfrak{a}_N$ is the box scattering length defined in Eq.~(\ref{Eq:Box_Scattering_Length}).
\end{thm}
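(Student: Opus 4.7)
The plan is to prove Eq.~(\ref{Eq:GP_GSE}) via matching upper and lower bounds. In both directions I would follow a three-step Bogoliubov strategy: (i) conjugate $H_N$ by a unitary excitation map that trades the $N$-particle Hilbert space for a truncated Fock space $\mathcal{F}^{\leq N}_+$ over the orthogonal complement of the zero-momentum mode, producing a Bogoliubov-type Hamiltonian plus cubic and quartic corrections; (ii) use the Feshbach--Schur map of Section \ref{Sec:The two-body Problem} to renormalize the bare interaction into one governed by the scattering length; and (iii) diagonalize the resulting quadratic Hamiltonian by a Bogoliubov rotation.

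Step (ii) is the novel ingredient advertised in the introduction. Rather than conjugating by the exponential of a quadratic generator built from the scattering solution (as in \cite{BBCS,HST}), the Feshbach--Schur map directly replaces the matrix elements of $V_N$ by effective ones whose $(0,0;0,0)$ entry is $8\pi\mathfrak{a}_N$, and simultaneously absorbs the dangerous cubic and high-momentum quartic contributions. This should yield the leading constant $4\pi\mathfrak{a}_N(N-1)$ together with a quadratic piece
\[
\sum_{k\neq 0} |k|^2 a_k^\dagger a_k + 4\pi\mathfrak{a}\sum_{k\neq 0}\bigl(2 a_k^\dagger a_k + a_k^\dagger a_{-k}^\dagger + a_k a_{-k}\bigr),
\]
up to remainders that still need to be controlled. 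Step (iii) is the classical Bogoliubov diagonalization with parameters $\tanh(2\tau_k)=-8\pi\mathfrak{a}/(|k|^2+8\pi\mathfrak{a})$, producing quasi-particle energies $\epsilon(k)=\sqrt{|k|^4+16\pi\mathfrak{a}|k|^2}$ and a $c$-number shift $\tfrac{1}{2}\sum_{k\neq 0}(\epsilon(k)-|k|^2-8\pi\mathfrak{a})$. The counterterm $+(8\pi\mathfrak{a})^2/(2|k|^2)$ appearing in Eq.~(\ref{Eq:GP_GSE}) arises from combining this shift with the second-order Born expansion hidden inside $4\pi\mathfrak{a}_N(N-1)$, rendering the sum absolutely convergent.

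The upper bound is then immediate by testing against the pullback of the Bogoliubov vacuum through the excitation map. The main obstacle is the lower bound: showing that the cubic and quartic remainders from (ii) contribute only $o(1)$ on any approximate ground state. I would control them by operator inequalities of the form $\pm\mathcal{E} \leq N^{-\alpha}(\mathcal{N}_+ + 1)(\mathcal{K}+1)$, with $\mathcal{N}_+$ the number of excitations and $\mathcal{K}$ the kinetic operator, and then invoke a priori bounds $\langle\mathcal{N}_+\rangle,\langle\mathcal{K}\rangle = O(1)$ on low-energy states, obtained from a standard condensation bootstrap initialized by Eq.~(\ref{Eq:Leading_Order}). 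Verifying the algebraic identity at the heart of the Feshbach--Schur cancellation and closing this bootstrap uniformly in $N$ are the technical heart of the argument; everything else reduces to bookkeeping of Bogoliubov constants.
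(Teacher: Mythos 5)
Your plan is workable in principle, but it is the route of \cite{BBCS,HST} (unitary excitation map, unitary Bogoliubov rotation, strong a priori bounds) with the Feshbach--Schur map slotted in as the renormalizer, and this differs from the paper in ways that matter. The paper never conjugates by unitaries: the inverse Feshbach--Schur map $T$ is lifted to the many-body level through the non-unitary variables $c_k$ and $\psi_{jk}$ of Eqs.~(\ref{Eq:Definition_c})--(\ref{Eq:Definition_psi}), giving an exact operator identity for $H_N$ whose residuum $\mathcal{R}$ is tracked explicitly; it is precisely this residuum, through $w_k$ and $C_k=2|k|^2w_k^2\simeq(8\pi\mathfrak{a})^2/(2|k|^2)$, that supplies the counterterm rendering the Lee--Huang--Yang sum convergent --- not a Born term ``hidden inside'' $4\pi\mathfrak{a}_N(N-1)$, which remains intact in the constant; if you keep the full $\mathfrak{a}_N$ in the leading term, the compensating $+({8\pi\mathfrak{a}})^2/(2|k|^2)$ must be generated by the action of the renormalization on the kinetic energy, so your bookkeeping of this cancellation needs to be redone. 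Relatedly, since the Feshbach--Schur map is not unitary, ``replacing the matrix elements of $V_N$'' after a unitary excitation map is not a spectrum-preserving operation: you either reproduce the paper's exact identity together with its error terms $\mathcal{E}_1$--$\mathcal{E}_4$, or you implement the kernel $T-1$ in an exponential generator, which is essentially \cite{HST}. On the analytic side, the paper's lower bound is closed with far weaker inputs than you invoke: Bogoliubov's lemma is used as an operator inequality (no diagonalizing rotation), the errors are controlled by the particle number alone (Lemma~\ref{Lem:GP_Error_Estimate}), and the only a priori information is the $o(N)$ condensation of \cite{LSe} plus an IMS localization to $\mathcal{N}\leq \rho N$; your scheme instead requires order-one bounds on the excitation number and kinetic energy of low-energy states, i.e.\ the optimal-BEC machinery, which is a substantial theorem rather than a ``standard bootstrap'' from Eq.~(\ref{Eq:Leading_Order}). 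Finally, the upper bound is not ``immediate'': the Bogoliubov vacuum must be dressed by the non-unitary correlation operator $1-G$ and localized in particle number, with normalization and error control, which is the content of Theorem~\ref{Th:Upper_Bound} and Section~\ref{Sec:Trial_States_and_their_Energy}, on which the identity~(\ref{Eq:GP_GSE}) also relies. In short, your approach buys familiarity with the existing proofs at the price of importing their heavy a priori estimates, while the paper's buys economy: no unitaries, no cut-off gymnastics, no kinetic energy in the error bounds, and only soft condensation input.
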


    While Eq.~(\ref{Eq:GP_GSE}) claims an identity, our proof will focus on the lower bound, as the upper bound is discussed in detail in Theorem \ref{Th:Upper_Bound}. Besides commenting on the length and complexity of the proof of Theorem \ref{Th:GP}, we also want to emphasise that our proof of Theorem \ref{Th:GP} and Theorem \ref{Th:Beyond_GP}, and especially Theorem \ref{Th:Upper_Bound}, do not require any cut-off parameters. However, it can be useful to introduce such a parameter anyways, as it shortens the proof of Theorem \ref{Th:GP} and allows us to slightly extend the range of possible values $\kappa$ in case of Theorem \ref{Th:Beyond_GP}, to be precise we can verify Theorem \ref{Th:Beyond_GP} without any cut-off parameter for $0\leq \kappa<\frac{1}{11}$. Furthermore we want to point out that the error terms in the proof of the lower bound in Theorem \ref{Th:GP} can be controlled by the particle number operator alone, and do not require the kinetic energy.

    In the subsequent Theorem \ref{Th:Beyond_GP} we extend the results in \cite{BCS} concerning the excitation spectrum of the operator $H_{N,\kappa}$ for $0\leq \kappa<\kappa_0$, where $\kappa_0$ is of the order of magnitude $10^{-3}$, to the significantly large range $0\leq \kappa<\frac{1}{8}$. We do however want to emphasise, that our proof of Theorem \ref{Th:Beyond_GP} relies on strong a priori estimates provided by recent results in \cite{F}, and later in a slightly different setting by \cite{BS,HHNST}.

\begin{thm}
\label{Th:Beyond_GP}  
Let $E_{N,\kappa}$ be the ground state energy of $H_{N,\kappa}$, $0\leq \kappa<\frac{1}{8}$ and assume that $V$ is radially symmetric decreasing. With the definition $\tau:=\frac{5}{2}\! \left(\frac{1}{8}-\kappa\right)>0$, $E_{N,\kappa}$ is given by 
\begin{align}
    \label{Eq:Beyond_GP_GSE}
   &  4\pi\mathfrak{a}_{N^{1-\kappa}}N^{\kappa} (N \! - \! 1) \! + \! \frac{1}{2} \! \sum_{k\neq 0} \! \left\{ \! \sqrt{|k|^4 \! + \! 16\pi \mathfrak{a} N^{\kappa}|k|^2} \! - \! |k|^2 \! - \! 8\pi \mathfrak{a} N^{\kappa} \! + \! \frac{(8\pi \mathfrak{a} N^{\kappa})^2}{2|k|^2} \! \right\}\! + \! O_{N\rightarrow \infty}\! \left(N^{-\tau}\right).
\end{align}
Let furthermore $E_{N,\kappa}^{(d)}$ denote the $d$-th eigenvalue, in increasing order, and let us enumerate the set $\left\{\sum_{k\neq 0}n_k \sqrt{|k|^4+16\pi \mathfrak{a} N^\kappa |k|^2}: n_k\in \mathbb{N}_0\right\}$ in increasing order by $\{\lambda_{N,\kappa}^{(1)},\lambda_{N,\kappa}^{(2)},\dots \}$. Then 
\begin{align*}
    E^{(d)}_{N,\kappa} -E_{N,\kappa} =  \lambda_{N,\kappa}^{(d)}   +  O_{N\rightarrow \infty}\! \left(N^{-\tau}\right) .
\end{align*}
\end{thm}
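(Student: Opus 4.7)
I would conjugate $H_{N,\kappa}$ by a unitary $U$ into an approximately diagonal Bogoliubov Hamiltonian and deduce both statements from the min-max principle. Concretely, I would aim for
\[
U^{*} H_{N,\kappa} U \;=\; E^{\mathrm{Bog}}_{N,\kappa} \;+\; \sum_{k\neq 0} \sqrt{|k|^4+16\pi\mathfrak{a}N^\kappa|k|^2}\, b_k^\dagger b_k \;+\; \mathcal{E}_{N,\kappa},
\]
where $E^{\mathrm{Bog}}_{N,\kappa}$ is exactly the first line of~(\ref{Eq:Beyond_GP_GSE}), the $b_k$ are canonical boson operators on the transformed Fock space, and $\mathcal{E}_{N,\kappa}=O(N^{-\tau})$ whenever it is evaluated on low-lying eigenvectors of $H_{N,\kappa}$ or on sufficiently regular trial states.

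The transformation $U$ is assembled in the same three stages as in the proof of Theorem~\ref{Th:GP}, adapted to general $\kappa$. First, I would apply the excitation map that replaces $a_0,a_0^\dagger$ by $\sqrt{N}$ up to a controlled correction, reducing the problem to an operator on the Fock space of excitations orthogonal to the condensate. Second, I would apply the Feshbach-Schur based renormalization of Section~\ref{Sec:The two-body Problem} to extract the scattering length, replacing the bare coupling $V_{N^{1-\kappa}}$ by an effective interaction proportional to $8\pi\mathfrak{a}N^\kappa$ and generating the cubic/quartic counterterms needed to cancel the dangerous off-diagonal parts. Third, a Bogoliubov rotation with coefficients tailored to the coupling $8\pi\mathfrak{a}N^\kappa$ diagonalizes the resulting quadratic form and produces the dispersion $\sqrt{|k|^4+16\pi\mathfrak{a}N^\kappa|k|^2}$.

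The hard part is the estimate on $\mathcal{E}_{N,\kappa}$. For $\kappa=0$ the residual cubic and quartic terms can be dominated by the excitation number operator alone, but for $\kappa>0$ the factor $N^\kappa$ in the effective coupling amplifies them, and one additionally needs high moments of the kinetic energy. The crucial input is therefore the strong a~priori bounds of~\cite{F} (and their variants in~\cite{BS,HHNST}), valid under the radially symmetric decreasing assumption on $V$, which give uniform control of high moments of the number of excitations and of the kinetic energy multiplied by powers of the excitation number on every low-lying eigenvector of $H_{N,\kappa}$. Because the three transformations above alter moments of the number operator only by multiplicative constants, these bounds survive the conjugation, and a careful scaling analysis of each residual contribution yields the precise threshold $\kappa<\tfrac{1}{8}$ together with the rate $\tau=\tfrac{5}{2}(\tfrac{1}{8}-\kappa)$.

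Once the approximate diagonalization is in place, the min-max principle closes the argument. For the lower bound on $E^{(d)}_{N,\kappa}$ I would test $U^{*} H_{N,\kappa} U$ on the $d$-dimensional subspace obtained by pulling the first $d$ eigenvectors of $H_{N,\kappa}$ through $U^{*}$; the a~priori bounds guarantee that $\mathcal{E}_{N,\kappa}$ contributes only $O(N^{-\tau})$ on them. For the matching upper bound I would evaluate $U^{*}H_{N,\kappa}U$ on the explicit Bogoliubov eigenstates $\prod_{k\neq 0}(b_k^\dagger)^{n_k}\Omega_{\mathcal{B}}$, whose excitation-number moments are trivially bounded as long as only finitely many $n_k$ are nonzero, which covers every fixed level $d$. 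Combining the two gives both~(\ref{Eq:Beyond_GP_GSE}) and the excitation spectrum formula, uniformly for $\kappa\in[0,\tfrac{1}{8})$.
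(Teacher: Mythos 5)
Your plan is the standard unitary route (excitation map, renormalization, Bogoliubov rotation, min-max), which is different in spirit from the paper -- the paper deliberately avoids unitaries and instead rewrites $H_{N,\kappa}$ exactly in the non-unitary variables $c_k,\psi_{jk},d_k$ of Eqs.~(\ref{Eq:Definition_c})--(\ref{Eq:Definition_d}) and uses the algebraic Bogoliubov identity of Section~\ref{Sec:Beyond_GP} -- but the difference of route is not the problem. The problem is that the entire quantitative content of the theorem is asserted rather than proved: the statement that ``a careful scaling analysis of each residual contribution yields the precise threshold $\kappa<\tfrac18$ together with the rate $\tau=\tfrac52(\tfrac18-\kappa)$'' is exactly the hard part, and nothing in your sketch identifies which residual terms are dangerous, how they scale in $N^{\kappa}$, or why the budget closes precisely at $\kappa=\tfrac18$. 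In the paper this is the content of Lemmata~\ref{Lem:Lambda_Control}, \ref{Lem:Upsilon_Control}, \ref{Lem:Universal_Error_Estimate} and \ref{Lem:Particle_Number_Comparison}, together with the specific choices $\lambda_0=\tfrac38$, $\lambda=\tfrac{21}{32}$, $K=N^{\frac{5}{16}+\frac{\kappa}{2}}$; without an analogue of these estimates your argument does not produce the theorem, it only restates it.

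There is also a concrete false step in the way you propose to use the a priori input. First, \cite{F} gives a bound on the \emph{expectation} $\braket{\Psi_d,\mathcal{N}\Psi_d}\lesssim N^{\frac{5\kappa}{2}}$ for low-lying eigenstates (Eq.~(\ref{Eq:A_priori_Condensation})), not ``uniform control of high moments of the number of excitations and of the kinetic energy multiplied by powers of the excitation number''. The error terms for $\kappa>0$ cannot be controlled by a first moment alone; this is why the paper must first localize the eigenvectors into the spectral subspace $\mathcal{F}^{\leq}_{N^{\lambda_0}}\cap\mathcal{F}^{+}_{N^{\lambda}}$ by a two-step IMS argument (Lemma~\ref{Lem:IMS}), and the localization error itself enters the $N^{-\tau}$ budget. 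Your plan skips this step entirely. Second, the claim that the transformations ``alter moments of the number operator only by multiplicative constants'' fails for $\kappa>0$: a Bogoliubov rotation with coefficients $\nu_k\sim N^{\kappa}/|k|^2$ shifts the number operator by $\sum_k\nu_k^2\sim N^{\frac{3\kappa}{2}}$ and mixes in factors $N^{\frac{\kappa}{2}}$ per moment (cf.\ Lemma~\ref{Lem:True_Particle_Number_Comparison}), and the cubic/quartic counterterms exchange number moments for kinetic-energy moments, which \cite{F} does not provide on eigenvectors. So the bridge between the a priori information and the error estimate -- the place where the restriction $\kappa<\tfrac18$ actually arises -- is missing, and as written the argument would not close even in outline.
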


Again, the proof of Theorem \ref{Th:Beyond_GP} focuses on the lower bound, as the upper bound is discussed in a more general setting in the subsequent Theorem \ref{Th:Upper_Bound}. We want to point out here, that even though our algebraic manipulations do not make use of unitary operations, they are equally well suited to provide lower bounds and upper bounds on the ground state energy as well as the excitation spectrum. Furthermore we want to note that our resolution of the spectrum up to the order $N^{-\tau}$ is sharp enough to see the non-linear contribution of $|k|^4$ in the excitations $\sqrt{|k|^4+16\pi \mathfrak{a} N^\kappa |k|^2}$ for the slightly smaller range $0\leq \kappa<\frac{5}{48}$.

In our final Theorem \ref{Th:Upper_Bound}, we provide sufficient upper bounds on the excitation energy $E_{N,\kappa}^{(d)}$ in order to conclude the proof of Theorem \ref{Th:GP} and Theorem \ref{Th:Beyond_GP}. Furthermore, we derive an improved, and novel, upper bound on the ground state energy $E_N$ in the Gross-Pitaevskii regime, which gives a correction of the order $\frac{\log N}{N}$. We want to emphasise that the scale $\frac{\log N}{N}$ is indeed the expected order of magnitude for the next term in the asymptotic expansion in Eq.~(\ref{Eq:GP_GSE}), see \cite{W}, and we believe it is crucial at this point that our method does not rely on cut-off techniques in momentum space. 

\begin{thm}
\label{Th:Upper_Bound}
In the Gross-Pitaevskii regime $\kappa:=0$, we have the upper bound
\begin{align}
\label{Eq:Upper_Bound_Ground_State}
 E_{N} & \leq 4\pi\mathfrak{a}_N (N-1)+\frac{1}{2}\sum_{k\neq 0}\left\{\sqrt{|k|^4+16\pi \mathfrak{a} |k|^2}-|k|^2-8\pi \mathfrak{a}+\frac{(8\pi \mathfrak{a})^2}{2|k|^2}\right\}+C \frac{\log N}{N}
\end{align}
for a suitable $C>0$. Furthermore for $\kappa<\frac{2}{13}$ and $d\in \mathbb{N}$, $E_{N,\kappa}^{(d)}$ is bounded from above by
\begin{align}
\label{Eq:Upper_Bound_Excited_State}
4\pi\mathfrak{a}_N N^\kappa (N \! - \! 1) \! + \! \frac{1}{2}\sum_{k\neq 0}\left\{\sqrt{|k|^4 \! + \! 16\pi \mathfrak{a}  N^\kappa |k|^2} \! - \! |k|^2 \! - \! 8\pi \mathfrak{a} N^\kappa  \! + \! \frac{(8\pi \mathfrak{a} N^\kappa )^2}{2|k|^2}\right\} \! + \! \lambda_{N,\kappa}^{(d)} \! + \! C N^{\frac{13\kappa}{4}-\frac{1}{2}}.
\end{align}
\end{thm}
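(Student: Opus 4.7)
The strategy I would follow is the variational principle combined with the Feshbach-Schur framework developed in Section~\ref{Sec:The two-body Problem}. Writing an $N$-body trial state as $\Psi = P\psi + \eta$, with $P$ the projection onto configurations with a bounded number of excitations above the condensate and $Q = 1-P$, and choosing $\eta := -(QH_{N,\kappa}Q - z)^{-1} Q H_{N,\kappa} P \psi$ optimally, the Schur complement formula yields $\langle \Psi, (H_{N,\kappa} - z) \Psi \rangle = \langle P\psi, F_z P\psi \rangle$ where $F_z$ is the Feshbach-Schur operator. By design, $F_z$ absorbs the singular short-range two-body correlations and exhibits a renormalised quadratic Bogoliubov part whose coupling constant is the box scattering length $\mathfrak{a}_{N^{1-\kappa}}$. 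Applying the min-max principle to a $(d+1)$-dimensional trial subspace of such states thus reduces the task to evaluating $F_z$ on explicit quasi-free vectors.

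For the ground-state bound I would take $\psi$ to be the vacuum of the Bogoliubov operator obtained by diagonalising the quadratic part of $F_z$; its expectation reproduces the first two explicit terms on the right of Eq.~(\ref{Eq:Upper_Bound_Ground_State}). For the $d$-th excited state I replace $\psi$ by a Bogoliubov quasi-particle excitation with occupation numbers $(n_k)$ whose energy realises $\lambda_{N,\kappa}^{(d)}$. The residual error $C \log N/N$ (respectively $C N^{\frac{13\kappa}{4}-\frac{1}{2}}$) then collects three contributions: higher-order remainders in $F_z$ beyond the quadratic Bogoliubov part (chiefly the quartic terms $a^\dagger a^\dagger a a$ and the cubic $a^\dagger a^\dagger a$), the subleading expansion of $\mathfrak{a}_{N^{1-\kappa}}$ relative to $\mathfrak{a}$, and boundary effects from truncating to the range of $P$. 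Crucially, each of these admits a momentum-space representation that can be summed without introducing any UV cutoff, as emphasised after the statement of the theorem.

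The hard part will be extracting the sharp $\log N / N$ rate in the Gross-Pitaevskii regime. Naive $\ell^1$-type bounds on the quartic remainder produce only $O(N^{-1})$ without a logarithm, and any momentum cutoff would discard exactly the high-frequency tail responsible for the logarithmic enhancement. One therefore has to evaluate the quartic terms against the exact Bogoliubov amplitudes $u_k, v_k$, showing that the resulting momentum sum develops a logarithmic rather than a power divergence, which is then tamed by the overall volume factor $N^{-1}$. For the beyond-GP excited-state bound the analogous difficulty is a careful bookkeeping of the $N^\kappa$ scaling of every cross term between the Bogoliubov excitations and the Feshbach-Schur correlations; the exponent $\tfrac{13\kappa}{4}-\tfrac{1}{2}$ should drop out of balancing the leading cubic remainder $a^\dagger a^\dagger a$, of natural size $N^{-1/2}$ after the Bogoliubov rotation, against polynomial blow-ups in $N^\kappa$ coming from the enhanced interaction strength and the larger excitation amplitudes $v_k$.
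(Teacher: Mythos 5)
Your overall variational philosophy --- dress explicit Bogoliubov quasi-free states so that a renormalised quadratic form with coupling given by the box scattering length emerges, and track the cubic/quartic remainders without any momentum cutoff --- is in the right spirit, and your heuristic for the origin of the $\log N/N$ term (quartic-type remainders evaluated against the Bogoliubov amplitudes $\nu_k\sim |k|^{-2}$, producing a momentum sum of the type $\sum_k |k|^{-3}$ tamed by $N^{-1}$) matches what actually happens: in the paper the logarithm sits in the explicit constant $C_\mathrm{log}=\sum_k\Pi_k-D_\mathrm{log}$ of Theorem \ref{Th:Bounds_Without_Cut_Off}, and both pieces are sums of exactly that type.

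However, the core of your proposal --- performing the Feshbach--Schur reduction at the \emph{many-body} level, with $\eta:=-(QH_{N,\kappa}Q-z)^{-1}QH_{N,\kappa}P\psi$ and an effective operator $F_z$ on the low-excitation subspace --- leaves the hardest step unaddressed. The assertion that ``$F_z$ absorbs the singular short-range correlations and exhibits a renormalised Bogoliubov part with coupling $\mathfrak{a}_{N^{1-\kappa}}$'' is precisely what has to be proven; it requires inverting and expanding a genuinely many-body resolvent on the high-excitation sector to precision $\log N/N$ (respectively $N^{\frac{13\kappa}{4}-\frac{1}{2}}$), together with the self-consistency in $z$ and control of $\|\eta\|$, and nothing in your sketch indicates how this would be done --- a naive expansion of $(QH_{N,\kappa}Q-z)^{-1}$ reproduces the very correlation problem you started from. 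The paper avoids this entirely: the only Feshbach--Schur inversion is the two-body pseudo-inverse $R$ of $\pi_{\mathcal H}(-\Delta_2+V_{N^{1-\kappa}})\pi_{\mathcal H}$ from Section \ref{Sec:The two-body Problem}, lifted algebraically through the variables $c_k$, $\psi_{jk}$, $d_k$, while the correlations your resolvent is meant to generate are inserted by hand through the explicit cubic operator $G=\frac{1}{2}\sum_{p,k}\sqrt{N}f_{p,k}\mathfrak{b}_k^\dagger\mathfrak{b}_{p-k}^\dagger\mathfrak{a}_p$: the trial states are $\Psi_d=Z_d^{-1}(1-G)\chi_\delta\Gamma_d$ with $\Gamma_d$ an eigenvector of $\sum_k e_k\mathfrak{b}_k^\dagger\mathfrak{b}_k$, and the energy is computed directly via the approximate annihilation identities of Corollaries \ref{Corollary:Annihilation_I} and \ref{Corollary:Annihilation_II} and Theorem \ref{Th:Bounds_Without_Cut_Off}. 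Unless you supply a concrete scheme for controlling the many-body resolvent --- or replace it by an explicit dressing such as $(1-G)$ --- your argument does not yet yield either stated error rate; in particular the exponent $\frac{13\kappa}{4}-\frac{1}{2}$ cannot be extracted from the balancing heuristic you give without quantitative bounds on the cubic and quartic remainders restricted to number-cutoff sectors, of the kind provided by Lemmas \ref{Lem:Lambda_Control} and \ref{Lem:Upsilon_Control} and by the fast decay of the number distribution of $\Gamma_d$ in Lemma \ref{Lem:Exponential_Decay}.
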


\section{The two-body Problem}
\label{Sec:The two-body Problem}
Before we come to the (approximate) diagonalization of the many-body operator $H_{N,\kappa}$, which will be the basis of verifying Theorem \ref{Th:GP}, Theorem \ref{Th:Beyond_GP} and Theorem \ref{Th:Upper_Bound}, we will first investigate the corresponding problem for the two body Hamiltonian $H:=-\Delta_2+V_L(x,y)$ defined on $L^2\! \left(\Lambda^2\right)$ with $\Delta_2:=\Delta_x+\Delta_y$. The goal of this Section is to find a transformation $S:L^2\! \left(\Lambda^2\right) \longrightarrow L^2\! \left(\Lambda^2\right)$, which removes the correlations of $H$ between low energy momentum pairs $(k_1,k_2)\in \mathcal{L}\subseteq \left(2\pi \mathbb{Z}\right)^6$ and high energy momentum pairs $\mathcal{H}:=\left(2\pi \mathbb{Z}\right)^6\setminus \mathcal{L}$. We will specify the set $\mathcal{L}$ later this section, for now let $\pi_{\mathcal{L}}$ and $\pi_{\mathcal{H}}=1-\pi_{\mathcal{L}}$ denote the corresponding projections in $L^2\! \left(\Lambda^2\right)$. With $\pi_\mathcal{L}$ and $\pi_\mathcal{H}$ at hand, $H$ can be represented as a block matrix
\begin{align*}
    H=\begin{bmatrix}
\pi_\mathcal{L} (-\Delta_2+V_L)\pi_\mathcal{L}   & \pi_\mathcal{L} V_L\pi_\mathcal{H}  \\
\pi_\mathcal{H} V_L\pi_\mathcal{L} & \pi_\mathcal{H} (-\Delta_2+V_L)\pi_\mathcal{H}  
\end{bmatrix}.
\end{align*}
It is now an elementary task in linear algebra to remove the correlation terms $\pi_\mathcal{H} V_L\pi_\mathcal{L}$ and $\pi_\mathcal{L} V_L\pi_\mathcal{H}$, and therefore bringing the operator $H$ in a block diagonal form. For this purpose let us define the Feshbach–Schur map $S:=1-R  V_L\pi_\mathcal{L}=\begin{bmatrix}
1   & 0  \\
-R  V_L\pi_\mathcal{L} & 1  
\end{bmatrix}$, where $R$ is the pseudo-inverse of $\pi_\mathcal{H} (-\Delta_2+V_L)\pi_\mathcal{H}$, and compute
\begin{align}
\label{Eq:Feshbach–Schur}
    S^\dagger H S & =\begin{bmatrix}
\pi_\mathcal{L} \left(-\Delta_2+V_L-V_L R V_L\right)\pi_\mathcal{L}   & 0  \\
0 & \pi_\mathcal{H} (-\Delta_2+V_L)\pi_\mathcal{H}
\end{bmatrix}.
\end{align}
Consequently $H=T^{\dagger}\left(-\Delta_2+\widetilde{V}_L\right)T$ where $T:=1+R  V_L\pi_\mathcal{L}=\begin{bmatrix}
1   & 0  \\
R  V_L\pi_\mathcal{L} & 1  
\end{bmatrix}$ is the inverse of the Feshbach–Schur map $S$ and the renormalized potential is defined as
\begin{align}
\label{Eq:Block_Potential}
  \widetilde{V}_L:=\begin{bmatrix}
\pi_\mathcal{L} \left(V_L-V_L R V_L\right)\pi_\mathcal{L}   & 0  \\
0 & \pi_\mathcal{H} V_L \pi_\mathcal{H}
\end{bmatrix}.  
\end{align}
In the following we will always use the concrete choice $\mathcal{L}:=\bigcup_{|k|<K}\{( k,0),(0, k)\}$, for a given $0< K\leq \infty$. At this point we want to emphasise that in the case $K=\infty$, the definition of $T$ and $\widetilde{V}_L$ do not include any cut-off parameter in momentum space. 

We can now relate the renormalized potential $V_{L}-V_{L} R V_{L}$ with the scattering properties of the potential $V(x)$. Following \cite{HST}, let us first define the box scattering length
\begin{align}
\label{Eq:Box_Scattering_Length}
\mathfrak{a}_L:=\frac{L}{8\pi}\Big(V_{L}-V_{L} R V_{L}\Big)_{0 0 ,00}=\frac{1}{8\pi}\left(\int V(x)\mathrm{d}x-L\braket{V_L,R V_L}\right),
\end{align}
which is independent of the choice of $K$. It has been shown in \cite{H} that the box scattering length satisfies $|\mathfrak{a}_L-\mathfrak{a}|\lesssim L^{-1}$, where $\mathfrak{a}$ is the scattering length of $V(x)$ introduced in Section \ref{Sec:Introduction}. Furthermore, also the other matrix entries of the renormalized potential $V_{L}-V_{L} R V_{L}$ appearing in Eq.~(\ref{Eq:Block_Potential}) can be related to the scattering length as we demonstrate in Lemma \ref{Lem:Coefficient_Comparison}.

\begin{lem}
\label{Lem:Coefficient_Comparison}
Let $V\in L^1 \! \left(\mathbb{R}^3\right)$. Then we have for all $k_i\in 2\pi\mathbb{Z}^3$ satisfying $k_1+k_2=k_3+k_4$
    \begin{align}
    \label{Eq:Coefficients_Renormalized_Potential}
        \left|L\Big(V_{L}-V_{L} R V_{L}\Big)_{k_1 k_2, k_3 k_4}-8\pi \mathfrak{a} \right|\leq CL^{-1}  \! \left( \! 1+\sum_{i=1}^4 |k_i| \! \right),
    \end{align}
   and $\left|L\Big(V_{L}-V_{L} R V_{L}\Big)_{k_1 k_2, k_3 k_4}\right|\leq C$ uniformly in $k_i$, for a suitable $C>0$. All estimates are uniform in the parameter $K$ introduced below Eq.~(\ref{Eq:Block_Potential}).
\end{lem}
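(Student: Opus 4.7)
The plan is to use the factorization $W|\psi\rangle = V_L\,\varphi_\psi$ with the Feshbach--Schur \emph{dressed state} $\varphi_\psi := (1 - RV_L)\psi$, which turns the matrix element into
$$W_{k_1k_2,k_3k_4} = \langle k_1k_2|\,V_L\,|\varphi_{k_3k_4}\rangle.$$
Since $R$, $V_L$ and $\pi_\mathcal{L}$ all commute with center-of-mass translations, $\varphi_{k_3k_4}$ separates as $\varphi_{k_3k_4}(x,y) = e^{i(k_3+k_4)\cdot C}g(r)$ with $r = x-y$, $C = (x+y)/2$, and $g = g_{P,p_\mathrm{in}}$ depending only on $P := k_3+k_4$ and $p_\mathrm{in} := (k_3 - k_4)/2$ through a Feshbach--Schur scattering equation on $\Lambda$. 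Rescaling $s = Lr$ (legitimate for large $L$ since $V$ has compact support) then yields
$$LW_{k_1k_2,k_3k_4} = \int V(s)\,e^{-ip_\mathrm{out}\cdot s/L}\,g(s/L)\,\mathrm{d}s, \qquad p_\mathrm{out} := (k_1-k_2)/2,$$
and the choice $(k_3,k_4) = (k_1,k_2) = (0,0)$ recovers $8\pi\mathfrak{a}_L$ directly from the definition~(\ref{Eq:Box_Scattering_Length}).

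For the first inequality, I would Taylor expand $e^{-ip_\mathrm{out}\cdot s/L} = 1 + O(|p_\mathrm{out}|\,|s|/L)$ on the compact support of $V$, producing an error of order $L^{-1}(|k_1|+|k_2|)$, and then compare $g_{P,p_\mathrm{in}}$ with $g_{0,0}$. The scattering equations for $g_{P,p_\mathrm{in}}$ and $g_{0,0}$ differ in (i) the low-momentum boundary data ($e^{ip_\mathrm{in}\cdot r}$ versus $1$), (ii) the kinetic shift $|P|^2/2$ in the operator $(-2\Delta_r + V_L + |P|^2/2)$, and (iii) the projection onto low relative modes ($q = \pm P/2$ versus $q = 0$). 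A resolvent expansion based on coercivity of $\pi_\mathcal{H}(-2\Delta_r + V_L)\pi_\mathcal{H}$ produces, after rescaling, an error $O(L^{-1}(|k_3| + |k_4|))$ from each. Combining with the cited bound $|\mathfrak{a}_L - \mathfrak{a}| \lesssim L^{-1}$ closes the first estimate. The uniform bound $|LW_{k_1k_2,k_3k_4}| \leq C$ is simpler: the bare term $L(V_L)_{k_1k_2,k_3k_4} = \hat V((k_1-k_3)/L)$ is bounded by $\|V\|_{L^1}$, while for the quadratic term I would use the operator inequality $0 \leq R \leq R_0 := \pi_\mathcal{H}(-\Delta_2)^{-1}\pi_\mathcal{H}$ (which holds since $V_L \geq 0$) together with Cauchy--Schwarz in the positive $R_0$-bilinear form and a momentum-space Schur bound, all uniform in the cutoff $K$.

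The hard part is step (iii) of the perturbative comparison: naively $(-2\Delta_r)^{-1}$ evaluated at the low modes $q = \pm P/2$ is of size $|P|^{-2}$, so the rank-$3$ projection difference $\pi_{\mathcal{L},r,P} - \pi_{\mathcal{L},r,0}$ would contribute an $O(1)$ term rather than the target $O(L^{-1}|P|)$. Extracting the extra factor depends on using that after rescaling $q \mapsto q/L$ the modes $q = 0$ and $q = \pm P/2$ both approach the origin at rate $|P|/L$, combined with continuity of $\hat V$ at zero via $\hat V(q/L) = \hat V(0) + O(|q|/L)$, so that the three mode contributions can be grouped into a difference that cancels the $|P|^{-2}$ singularity against a geometric factor of order $|P|^2$.
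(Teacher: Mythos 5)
Your overall strategy (pass to the two-body scattering problem, compare with the zero-momentum problem, and invoke $|\mathfrak{a}_L-\mathfrak{a}|\lesssim L^{-1}$) is in the same spirit as the paper, but as written the argument has genuine gaps, and the step you single out as the hard one is not where the difficulty actually sits. First, the center-of-mass separation is not as clean as claimed: on the torus the relative-momentum lattice is $\tfrac{P}{2}+2\pi\mathbb{Z}^3$, so the scattering problems for total momentum $P$ and for total momentum $0$ live on different lattices (equivalently, twisted boundary conditions in the relative variable), and a ``resolvent expansion'' comparing $g_{P,p_\mathrm{in}}$ with $g_{0,0}$ is not directly meaningful; the paper avoids this by never separating variables, conjugating instead with $e^{ipx}$ and comparing the boosted resolvents $R_p$ and $R_0$ on the \emph{same} space through $R_p-R_0=R_0\left(-|p|^2+2i\nabla_x\cdot p\right)R_p$. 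Second, your claim that each of the three perturbations produces an error $O\!\left(L^{-1}(|k_3|+|k_4|)\right)$ ``by coercivity'' is precisely the step that does not follow from naive bounds: with only the a priori decay $|\psi_p(q)|\lesssim \left(L(|q|^2+|q+p|^2)\right)^{-1}$ the term linear in $p$ is not summable to $O(L^{-2}|p|)$, and the paper has to symmetrize over $\pm p$ as in Eq.~(\ref{Eq:Estimate_With_Symmetry}) and use the improved decay $\sum_{\ell}|\ell|^2|\psi_p(\ell)|^2\lesssim L^{-1}$ of Lemma \ref{Lem:Increased_Decay} to control $\eta_p-\eta_{-p}$ before the bound closes; your sketch contains no analogue of this cancellation. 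By contrast, the projection mismatch you identify as the hard part is benign: because of the ``$1+$'' in Eq.~(\ref{Eq:Coefficients_Renormalized_Potential}) an $O(L^{-1})$ contribution suffices there, and in the paper it appears as the term $\lambda e^{ip(y-x)}$ with $|\lambda|\lesssim L^{-1}$, disposed of by the uniform bounds — no cancellation of $|P|^{-2}$ against $|P|^2$ is needed, and the heuristic you offer for such a cancellation is not a proof.

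The uniform bound also has a gap. Replacing $R$ by $R_0=\pi_\mathcal{H}(-\Delta_2)^{-1}\pi_\mathcal{H}$ discards the potential inside the resolvent, which is exactly what allows the paper's estimate to depend only on $\|V\|_{L^1}$: with $|\widehat{V}|\leq \|V\|_{L^1}$ alone the momentum-space Schur/Cauchy--Schwarz sum for $L\braket{V_L\phi,R_0V_L\phi}$ diverges, and the kernel version reduces to the Coulomb energy $\int\!\!\int V(s)V(s')|s-s'|^{-1}\,\mathrm{d}s\,\mathrm{d}s'$, which need not be finite for $V\in L^1$ with compact support. The paper instead keeps $V_L$ in the resolvent and uses the operator inequality $\pi_\mathcal{H}V_LRV_L\pi_\mathcal{H}\leq \pi_\mathcal{H}V_L\pi_\mathcal{H}$ (so the high-high entries reduce to $\int V$), together with a separate self-consistent estimate showing $\|\pi_\mathcal{L}V_LRV_L\pi_\mathcal{L}\|\lesssim L^{-1}$; some such mechanism is needed in your argument as well. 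Finally, note that your Taylor expansion of $e^{-ip_\mathrm{out}\cdot s/L}$ only treats the bare $V_L$ leg; the corresponding statement for the $V_LRV_L$ part is what the paper obtains via the $W$-trick and Cauchy--Schwarz in Eq.~(\ref{Eq:Zero_Total_Momentum_Comparison}).
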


In order to keep the proof of Lemma \ref{Lem:Coefficient_Comparison} in the Appendix \ref{Appendix:Coefficients} self-contained, we will provide a proof of the fact that $|\mathfrak{a}_L-\mathfrak{a}|\lesssim L^{-1}$ as well in Lemma \ref{Lem:Scattering_Comparison}. We want to emphasize at this point that the bounds in Lemma \ref{Lem:Coefficient_Comparison}, as well as the other estimates in this manuscript, are uniform in the $L^1$-norm of $V$. Furthermore it turns out that with the exception of Section \ref{Sec:Trial_States_and_their_Energy} and Appendix \ref{Appendix:A_priori_Condensation}, all the bounds depend only on quantities related to the renormalized potential $V_{N^{1-\kappa}}-V_{N^{1-\kappa}} R V_{N^{1-\kappa}}$.

\section{The many-body Diagonalization}
In this section we want to lift the diagonalization procedure for the two-particle Hamiltonian derived in Section \ref{Sec:The two-body Problem} to a diagonalization of the many-body Hamiltonian $H_{N,\kappa}$. This will be done by the introduction of a many-body counterpart to the inverse Feshbach–Schur map $T:L^2\! \left(\Lambda^2\right) \longrightarrow L^2\! \left(\Lambda^2\right)$ defined below Eq.~(\ref{Eq:Feshbach–Schur}), which is realized by the new sets of variables
\begin{align}
\label{Eq:Definition_c}
    c_k: & =a_k+\sum_{j , mn}\left(T-1\right)_{jk, mn} a_j^\dagger a_m a_n,\\
    \label{Eq:Definition_psi}
    \psi_{j k}: & =a_j a_k+\sum_{mn}\left(T-1\right)_{j k, m n} a_m a_n.
\end{align}
In order to illustrate the intimate relation between these variables and the underlying transformation $T$, we compute, using the canonical commutation relations (CCR) $\left[a_j,a_k^\dagger\right]=\delta_{j,k}$,
\begin{align}
\label{Eq:Transformed_Operator}
   & \ \ \  \ \ \  \ \ \ \sum_k |k|^2 c_k^\dagger c_k \! + \! \frac{1}{2}\sum_{jk,mn}\left(\widetilde{V}_{N^{1-\kappa}}\right)_{jk,mn}\psi_{j k}^\dagger \psi_{m n}   \\
   \nonumber
   & = \! \sum_k |k|^2 a_k^\dagger a_k \! + \! \frac{1}{2}\sum_{jk,mn}\bigg(T^\dagger\! \! \left(-\Delta_2 \! + \! \widetilde{V}_{N^{1-\kappa}}\right)\! T \! + \! \Delta_2\bigg)_{j k,mn}a_k^\dagger a_j^\dagger a_m a_n+\mathcal{R},
\end{align}
where $\mathcal{R}:=\sum_{jk,mn;j',m'n'}|k|^2 \overline{(T-1)_{j' k,m'n'}}(T-1)_{j k,m n} a_{n'}^\dagger a_{m'}^\dagger a_j^\dagger a_{j'}a_m a_n$ and $\widetilde{V}_{N^{1-\kappa}}$ is the renormalized potential defined in Eq.~(\ref{Eq:Block_Potential}). Using the coefficients 
\begin{align}
\label{Eq:w_Coefficient}
    w_k:=N(T-1)_{(-k)k,0 0}=\frac{N}{2|k|^2}\Big(V_{N^{1-\kappa}}-V_{N^{1-\kappa}} R V_{N^{1-\kappa}}\Big)_{(-k)k,0 0} \ \ \underset{N\rightarrow \infty}{\simeq} \ \ \frac{4\pi \mathfrak{a}N^\kappa}{|k|^2},
\end{align}
the residuum $\mathcal{R}$ can further be decomposed into a single-particle operator $\sum_{k\neq 0}|k|^2 w_k^2 a_k^\dagger a_k$ and an error term $\mathcal{E}_1:=\mathcal{R}-\sum_{k\neq 0}|k|^2 w_k^2 a_k^\dagger a_k$, which is small given the assumption that the number of excited particles $\mathcal{N}:=\sum_{k\neq 0}a_k^\dagger a_k$ is small compared to the total number of particles $N$, as we confirm later this Section. Hence the computation of the many-body operator in Eq.~(\ref{Eq:Transformed_Operator}) essentially reduces to the computation of the two-body operator
\begin{align*}
    T^\dagger \left(-\Delta_2 \! + \! \widetilde{V}_{N^{1-\kappa}}\right) T = -\Delta_2+V_{N^{1-\kappa}},
\end{align*}
which has been carried out in Section \ref{Sec:The two-body Problem}, see the comment below Eq.~(\ref{Eq:Feshbach–Schur}). Combining what we have so far, we can represent $H_{N,\kappa}$ in terms of the new variables $c_k$ and $\psi_{j k}$ as
\begin{align*}
    H_{N,\kappa}=\sum_k |k|^2 c_k^\dagger c_k \! + \! \frac{1}{2}\sum_{jk,mn}\left(\widetilde{V}_{N^{1-\kappa}}\right)_{jk,mn}\psi_{j k}^\dagger \psi_{m n}-\mathcal{R}.
\end{align*}
The advantage of this representation compared to Eq.~(\ref{Eq:Hamiltonian}) is that the renormalized potential $\widetilde{V}_{N^{1-\kappa}}$ has the block diagonal structure $\widetilde{V}_{N^{1-\kappa}}=\pi_{\mathcal{L}} (V_{N^{1-\kappa}}-V_{N^{1-\kappa}} R V_{N^{1-\kappa}})\pi_{\mathcal{L}}+\pi_{\mathcal{H}}V_{N^{1-\kappa}} \pi_{\mathcal{H}}$, see the definition of $\widetilde{V}_{N^{1-\kappa}}$ in Eq.~(\ref{Eq:Block_Potential}). Using the positivity of $V_{N^{1-\kappa}}$ therefore yields
\begin{align}
\nonumber
    H_{N,\kappa} & \geq \sum_k |k|^2 c_k^\dagger c_k \! + \! \frac{1}{2}\sum_{jk,mn}\Big(\pi_{\mathcal{L}} (V_{N^{1-\kappa}}-V_{N^{1-\kappa}} R V_{N^{1-\kappa}})\pi_{\mathcal{L}}\Big)_{jk,mn}\psi_{j k}^\dagger \psi_{m n}-\mathcal{R}\\
    \nonumber
    & = \sum_k |k|^2 c_k^\dagger c_k \! + \! \frac{1}{2}\sum_{jk,mn}\Big(\pi_{\mathcal{L}} (V_{N^{1-\kappa}}-V_{N^{1-\kappa}} R V_{N^{1-\kappa}})\pi_{\mathcal{L}}\Big)_{jk,mn}a_k^\dagger a_j^\dagger a_m a_n-\mathcal{R}\\
    \label{Eq:Throwing_Away_High_Momenta}
    &= \sum_k |k|^2 c_k^\dagger c_k + \frac{\sigma_0}{N}\,  a_0^{2\dagger}a_0^2+ \! \! \sum_{0<|k|< K} \! \!\frac{4\sigma_k}{N}\, a_0^\dagger a_0\, a_k^\dagger a_k-\mathcal{R},
\end{align}
with $\sigma_k:=\frac{N}{4}\Big(V_{N^{1-\kappa}}-V_{N^{1-\kappa}} R V_{N^{1-\kappa}}\Big)_{k 0 ,k0}+\frac{N}{4}\Big(V_{N^{1-\kappa}}-V_{N^{1-\kappa}} R V_{N^{1-\kappa}}\Big)_{0 k ,k0} \ \underset{N\rightarrow \infty}{\simeq} \ 4\pi \mathfrak{a}N^\kappa$, where we have used that $ \psi_{m n}=a_m a_n$ in case at least one of the indices $m$ or $n$ is zero. We further observe that $a_0^\dagger a_0=N-\mathcal{N}$ and $a_0^{2\dagger}a_0^2=N(N-1)-(2N-1)\mathcal{N}+\mathcal{N}^2$ hold on the $N$-particle Hilbert space $L^2_\mathrm{sym} \! \left(\Lambda^N\right)$, where $\mathcal{N}$ is defined below Eq.~(\ref{Eq:w_Coefficient}), which yields 
\begin{align*}
 \frac{\sigma_0}{N} a_0^{2\dagger}a_0^2+ \!  \!  \! \sum_{0<|k|< K} \frac{4\sigma_k}{N} a_0^\dagger a_0\, a_k^\dagger a_k=4\pi \mathfrak{a}_{N^{1-\kappa}}N^\kappa (N-1)+ \!  \!  \! \sum_{0<|k|< K} 4\sigma_k \, a_k^\dagger a_k-2\sigma_0\mathcal{N}-\mathcal{E}_2   
\end{align*}
with $\mathcal{E}_2:=\sum_{0<|k|< K}\frac{4\sigma_k}{N} \mathcal{N}a_k^\dagger a_k-\frac{\sigma_0}{N} \left(\mathcal{N}^2+\mathcal{N}\right)$, where we used $\sigma_0=4\pi \mathfrak{a}_{N^{1-\kappa}}N^\kappa $ by the definition of $\mathfrak{a}_{N^{1-\kappa}}$ in Eq.~(\ref{Eq:Box_Scattering_Length}). Utilizing $\mathcal{E}_1$ introduced below Eq.~(\ref{Eq:w_Coefficient}), we therefore obtain
\begin{align}
\label{Eq:Pseudo_Quadratic}
    H_{N,\kappa}-4\pi \mathfrak{a}_{N^{1-\kappa}}N^\kappa (N-1)\geq \sum_k |k|^2 c_k^\dagger c_k + \sum_{k\neq 0} \mu_k a_k^\dagger a_k-\mathcal{E}_1-\mathcal{E}_2
\end{align}
with $\mu_k:=\mathds{1}(|k|<K)4\sigma_k-2\sigma_0-|k|^2 w_k^2\ \underset{N, K\rightarrow \infty}{\simeq}8\pi\mathfrak{a}N^{\kappa}-\frac{\left(4\pi\mathfrak{a}N^\kappa\right)^2}{|k|^2}$. In the following we want to apply the theory of Bogoliubov operators in order to analyse the operator on the right hand side of Eq.~(\ref{Eq:Pseudo_Quadratic}). However the operators $c_k$ do not satisfy the CCR, not even in an approximate sense, and they do not leave $L^2_\mathrm{sym} \! \left(\Lambda^N\right)$ invariant. It is well understood at this point, that the second issue can be avoided by including an appropriate phase factor $a_0^\dagger \frac{1}{\sqrt{a_0 a_0^\dagger }}$, as has been done in \cite{S,DN,LNSS}. Regarding the first issue we define new variables
\begin{align}
\label{Eq:Definition_d}
    d_k:=a_0^\dagger \frac{1}{\sqrt{a_0 a_0^\dagger }}c_k-w_k a_{-k}^\dagger\frac{1}{\sqrt{a_0 a_0^\dagger }} a_0 ,
\end{align}
as well as the increment $\delta_k:=a_0^\dagger \frac{1}{\sqrt{a_0a_0^\dagger }}a_k-d_k$. As we will see in Lemma \ref{Lem:Delta_Control}, the increments can be regarded as being small and therefore the operators $d_k$ do satisfy approximate CCR. With $\delta_k$ at hand, we can now rewrite $ a_k^\dagger a_k = a_k^\dagger \frac{1}{\sqrt{a_0a_0^\dagger }}a_0 a_0^\dagger \frac{1}{\sqrt{a_0a_0^\dagger }}a_k=\left(d_k+\delta_k\right)^\dagger \left(d_k+\delta_k\right)$ and
\begin{align*}
    c_k^\dagger c_k \!  = \! c_k^\dagger   \frac{1}{\sqrt{a_0a_0^\dagger }}a_0 a_0^\dagger \frac{1}{\sqrt{a_0 a_0^\dagger}}c_k \! = \! \left(d_k \! + \! w_k d_{-k}^\dagger  \! + \! w_k\delta_{-k}^\dagger\right)^\dagger \!  \left(d_k \! + \! w_k d_{-k}^\dagger  \! + \! w_k\delta_{-k}^\dagger\right).
\end{align*}
Defining the coefficients $A_k:=|k|^2+|k|^2 w_k^2+\mu_k-\epsilon$, $B_k:=2|k|^2 w_k$ and $C_k:=2|k|^2 w_k^2$, we consequently obtain for $\epsilon\geq 0$ the identity
\begin{align}
\nonumber
    &  \sum_k |k|^2c_k^\dagger c_k \! +  \! \sum_{k\neq 0} \mu_k a_k^\dagger a_k  \! - \! \epsilon\mathcal{N}=   \sum_k  |k|^2  \left(d_k \! + \! w_k d_{-k}^\dagger \right)^\dagger \! \!  \left(d_k \! + \! w_k d_{-k}^\dagger\right) \! + \! \sum_{k\neq 0}(\mu_k \! - \! \epsilon)d_k^\dagger d_k \! - \! \mathcal{E}_3\\
   \label{Eq:Writen_In_d} 
    & \ =\frac{1}{2}  \sum_{k\neq 0} \! \left\{A_k\left(d_k^\dagger d_k \! + \! d_{-k}^\dagger d_{-k}\right) \! + \! B_k\left(d_k d_{-k} \! + \! d_{-k}^\dagger d_k^\dagger\right) \! + \! C_k \frac{[d_k,d_k^\dagger] \! + \! [d_{-k},d_{-k}^\dagger]}{2}\right\}\! -  \!  \mathcal{E}_3,
\end{align}
with 
\begin{align}
\label{Eq:Definition_third_Error}
  \mathcal{E}_3:=\sum_{k\neq 0}(\epsilon \! - \! \mu_k) \!  \! \left(\delta_k^\dagger \delta_k \!  + \! d_k^\dagger \delta_k \! + \! \mathrm{H.c.} \! \right) \! - \! \sum_k |k|^2  \! \left(w_k^2 \delta_k \delta_k^\dagger  \! +  \! w_k \left(d_k \! + \! w_k d_{-k}^\dagger\right)\delta_{-k}^\dagger  \! + \! \mathrm{H.c.}  \! \right).  
\end{align}
By Lemma \ref{Lem:Coefficient_Comparison} we have $A_k  \underset{N,K\rightarrow \infty}{\simeq} |k|^2+8\pi\mathfrak{a}N^\kappa-\epsilon$, as well as $B_k  \underset{N\rightarrow \infty}{\simeq} 8\pi\mathfrak{a}N^\kappa$ and $C_k  \underset{N \infty}{\simeq} \frac{\left(8\pi\mathfrak{a}N^\kappa\right)^2}{2|k|^2}$, and furthermore $A_k\geq B_k$ for $N\geq N_0$, $K\geq N^{\frac{\kappa}{2}}K_0$ and $\epsilon<\epsilon_0$. Consequently we can apply Bogoliubov's Lemma, see \cite{LS,NNRT}, which yields
\begin{align}
  \nonumber
   \frac{1}{2}\sum_{k\neq 0} & \left\{A_k\left(d_k^\dagger d_k+d_{-k}^\dagger d_{-k}\right)+B_k\left(d_k d_{-k}+d_{-k}^\dagger d_k\right)+C_k \frac{[d_k,d_k^\dagger]+[d_{-k},d_{-k}^\dagger]}{2}\right\}\\
 \label{Eq:Bogoliubov_Lemma}
    &\ \ \ \ \ \ \geq \frac{1}{2}\sum_{k\neq 0}\left\{\sqrt{A_k^2-B_k^2}-A_k+C_k\right\}\frac{[d_k,d_k^\dagger]+[d_{-k},d_{-k}^\dagger]}{2}.
\end{align}
Defining $\mathcal{E}_4:=\frac{1}{2}\sum_{k\neq 0}\left\{\sqrt{A_k^2-B_k^2}-A_k+C_k\right\}\left(1-\frac{[d_k,d_k^\dagger]+[d_{-k},d_{-k}^\dagger]}{2}\right)$, and combining the estimate in Eq.~(\ref{Eq:Pseudo_Quadratic}) with the identity in Eq.~(\ref{Eq:Writen_In_d}) and the lower bound in Eq.~(\ref{Eq:Bogoliubov_Lemma}), yields
\begin{align}
\label{Eq:Main}
    H_{N,\kappa}  -  4\pi \mathfrak{a}_{N^{1-\kappa}}N^\kappa (N-1)- \frac{1}{2}\sum_{k\neq 0}\left\{\sqrt{A_k^2-B_k^2}-A_k+C_k\right\}\geq \epsilon \mathcal{N}-\sum_{i=1}^4 \mathcal{E}_i,
\end{align}
for $\epsilon, N$ and $K$ satisfying the restrictions stated above Eq.~(\ref{Eq:Bogoliubov_Lemma}). We want to emphasise that the constant on the left hand side of Eq.~(\ref{Eq:Main}) contains the correct leading order energy $4\pi \mathfrak{a}_{N^{1-\kappa}}N^\kappa (N-1)$, and in the limit $N,K\rightarrow \infty$ and $\epsilon\rightarrow 0$, it contains the sub-leading Lee-Huang-Yang correction $\frac{1}{2} \! \sum_{k\neq 0} \! \left\{ \! \sqrt{|k|^4 \! + \! 16\pi \mathfrak{a} N^{\kappa}|k|^2} \! - \! |k|^2 \! - \! 8\pi \mathfrak{a} N^{\kappa} \! + \! \frac{(8\pi \mathfrak{a} N^{\kappa})^2}{2|k|^2} \! \right\}$ as well, which follows from the asymptotic behaviour of the coefficients $A_k$, $B_k$ and $C_k$ stated below Eq.~(\ref{Eq:Definition_third_Error}).

 \subsection{Proof of Theorem \ref{Th:GP}}
In order to verify Theorem \ref{Th:GP}, we need sufficient bounds on the error terms $\mathcal{E}_i$. For this purpose, we first derive estimates on the increment $\delta_k$ in the subsequent Lemma \ref{Lem:Delta_Control}, which we will use in Lemma \ref{Lem:GP_Error_Estimate} in order to control $\mathcal{E}_i$. For the rest of this subsection we will always assume that we are in the Gross-Pitaevskii Regime, i.e. we assume $\kappa=0$. Furthermore recall that we are working on the $N$-particle Hilbert space $L^2_\mathrm{sym} \! \left(\Lambda^N\right)$ and the claimed estimates hold restricted to this space.
\begin{lem}
\label{Lem:Delta_Control}
    There exists a $C>0$ such that $\sum_{k\neq 0}\left(\delta_k^\dagger \delta_k+\delta_k \delta_k^\dagger\right)\leq \frac{C}{N-\mathcal{N}}(\mathcal{N}+1)^2$.
\end{lem}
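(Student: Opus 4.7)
My plan is to decompose $\delta_k$ by separating the ``diagonal'' $(j,m,n)=(-k,0,0)$ contribution in $a_k-c_k = -\sum_{j,m,n}(T-1)_{jk,mn}a_j^\dagger a_m a_n$ from the off-diagonal remainder. Because $\pi_\mathcal{L}$ forces at least one of $(m,n)$ to vanish, this sum reduces to $\tfrac{w_k}{N}a_{-k}^\dagger a_0^2$ plus off-diagonal terms of the form $2(T-1)_{(n-k)k,0n}a_{n-k}^\dagger a_0 a_n$ with $n\neq 0$; in the Gross--Pitaevskii regime ($K=\infty$) the $n=k$ term vanishes, since the pair $(0,k)$ lies in $\mathcal{L}$ and so $(T-1)_{0k,0k}=0$ by the support property of $R$. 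Substituting into $\delta_k = a_0^\dagger\tfrac{1}{\sqrt{a_0 a_0^\dagger}}(a_k-c_k) + w_k a_{-k}^\dagger\tfrac{1}{\sqrt{a_0 a_0^\dagger}}a_0$ splits $\delta_k$ into a principal piece (combining the diagonal contribution with the $w_k$-counter-term) and a subleading remainder.

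For the principal piece, the identity $a_0^\dagger\tfrac{1}{\sqrt{a_0 a_0^\dagger}}a_0^2=\sqrt{a_0^\dagger a_0}\,a_0$ lets me write it in the factorised form $w_k a_{-k}^\dagger G(a_0^\dagger a_0)\,a_0$ with $G(m):=\tfrac{1}{\sqrt{m+1}}-\tfrac{\sqrt{m}}{N}$. A direct estimate gives $|G(m)|=\tfrac{N-\sqrt{m(m+1)}}{N\sqrt{m+1}}\leq \tfrac{C(N-m+1)}{N\sqrt{m+1}}$, which on $L^2_\mathrm{sym}(\Lambda^N)$ reduces to $|G(a_0^\dagger a_0)|\leq \tfrac{C(\mathcal{N}+1)}{N\sqrt{a_0 a_0^\dagger}}$; combining this with the identity $a_0^\dagger G(a_0^\dagger a_0)^2 a_0 = (a_0^\dagger a_0)\, G(a_0^\dagger a_0-1)^2$ yields $a_0^\dagger G^2 a_0\leq \tfrac{C(\mathcal{N}+1)^2}{N^2}$. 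Combined with $\sum_{k\neq 0}w_k^2\leq C$ and the commutation of $a_{-k}^\dagger a_{-k}$ past every mode-$0$ operator, the principal contribution to $\sum_{k\neq 0}\delta_k^\dagger\delta_k$ is controlled by $\tfrac{C(\mathcal{N}+1)^3}{N^2}$, and hence by $\tfrac{C(\mathcal{N}+1)^2}{N-\mathcal{N}}$ via the elementary inequality $(\mathcal{N}+1)(N-\mathcal{N})\leq N^2$.

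For the subleading piece I commute $\tfrac{1}{\sqrt{a_0 a_0^\dagger}}$ through $a_{n-k}^\dagger$ (trivial for $n\neq k$) and use $a_0^\dagger\tfrac{1}{\sqrt{a_0 a_0^\dagger}}a_0=\sqrt{a_0^\dagger a_0}$ to obtain $-2\sqrt{a_0^\dagger a_0}\,B_k$, where $B_k:=\sum_{n\neq 0,k}(T-1)_{(n-k)k,0n}a_{n-k}^\dagger a_n$ is the second quantisation of the momentum-hopping operator $h_k$ on $\ell^2(2\pi\mathbb{Z}^3)$ with kernel $(h_k)_{m,n}=(T-1)_{(n-k)k,0n}\delta_{m,n-k}$. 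Lemma \ref{Lem:Coefficient_Comparison} combined with the $1/(|j|^2+|k|^2)$ decay inherited from the pseudo-inverse $R$ gives $\|h_k\| = \sup_n |(T-1)_{(n-k)k,0n}|\leq C/(N(|k|^2+1))$, so $\sum_{k\neq 0}\|h_k\|^2\leq C/N^2$. The elementary one-body bound $B_k^\dagger B_k\leq \|h_k\|^2\mathcal{N}^2$ (valid on each fixed-excitation sector, since $B_k$ preserves $\mathcal{N}$) then produces $\sum_{k\neq 0}B_k^\dagger B_k\leq C\mathcal{N}^2/N^2$, and the prefactor $a_0^\dagger a_0 = N-\mathcal{N}$ gives the subleading bound $4(N-\mathcal{N})\cdot C\mathcal{N}^2/N^2 \leq C(\mathcal{N}+1)^2/(N-\mathcal{N})$, using $(N-\mathcal{N})\mathcal{N}\leq N(\mathcal{N}+1)$.

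The bound on $\sum_{k\neq 0}\delta_k\delta_k^\dagger$ follows by the same strategy, with the canonical commutators $[a_{-k},a_{-k}^\dagger]=1$ and $[h_k,h_k^\dagger]$ contributing extra terms that are absorbed into the principal and subleading estimates (since $\sum_k w_k^2<\infty$ and $\sum_k\|h_k\|^2\leq C/N^2$). The main obstacle is establishing the cancellation in the principal piece: each of $\tfrac{1}{\sqrt{a_0 a_0^\dagger}}a_0$ and $\tfrac{\sqrt{a_0^\dagger a_0}}{N}a_0$ has norm of order $1/\sqrt{N}$, and only their precisely tuned difference $G(a_0^\dagger a_0)a_0$ is of the required smaller order $(\mathcal{N}+1)/N$. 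Verifying this cleanly hinges on careful manipulation of the partial isometries built from $a_0$, $a_0^\dagger$ and $\sqrt{a_0 a_0^\dagger}$, which do not commute with each other nor with the mode-$0$ number operator.
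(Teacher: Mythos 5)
Your argument is correct and essentially coincides with the paper's own proof: your principal piece is exactly the paper's $\delta_k''=w_k a_{-k}^\dagger\big(\tfrac{1}{\sqrt{a_0a_0^\dagger}}a_0-\tfrac{\sqrt{a_0^\dagger a_0}\,a_0}{N}\big)$ (handled there through the function $h$, equivalent to your bound on $G(a_0^\dagger a_0)$), and your subleading piece is the paper's $\delta_k'=\sqrt{a_0^\dagger a_0}\sum_{|\ell|<K}f_{\ell,k}\,a_{\ell-k}^\dagger a_\ell$, which the paper estimates via $|f_{\ell,k}|\lesssim N^{-1}|k|^{-2}$ and squaring $\mathfrak{Re}\,\delta_k'$, $\mathfrak{Im}\,\delta_k'$ — the same estimate as your one-body bound $B_k^\dagger B_k\le\|h_k\|^2\mathcal{N}^2$. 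Only cosmetic points differ: the off-diagonal coefficient is the symmetrized sum $(T-1)_{(\ell-k)k,\ell 0}+(T-1)_{(\ell-k)k,0\ell}$ rather than literally $2(T-1)_{(\ell-k)k,0\ell}$, and the vanishing of the $\ell=k$ term needs no restriction to $K=\infty$, since for any $K$ the projection $\pi_{\mathcal{L}}$ forces $|\ell|<K$, so $\ell=k$ would place the left pair $(0,k)$ in $\mathcal{L}$ and annihilate the coefficient; hence your proof also covers the finite-$K$ case in which the lemma is applied in Lemma \ref{Lem:GP_Error_Estimate}.
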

\begin{proof}
    Let us define $h(x):=\frac{1}{\sqrt{1+\frac{1}{N}-x}}-\sqrt{1-x}$ as well as the coefficients
    \begin{align}
    \label{Eq:Definition_f}
       & \ \ \ \ \ \ \ \ \ \ \ \ \ f_{\ell,k}:=  \left(\left(T-1\right)_{(\ell-k)k, \ell 0}+\left(T-1\right)_{(\ell-k)k, 0 \ell}\right)\\
      \nonumber
      &=\frac{1}{|k|^2 \! + \! |\ell \! - \! k|^2}\left\{\Big(V_{N^{1-\kappa}} \! - \! V_{N^{1-\kappa}} R V_{N^{1-\kappa}}\Big)_{(\ell-k)k, \ell 0} \! + \! \Big(V_{N^{1-\kappa}} \! - \! V_{N^{1-\kappa}} R V_{N^{1-\kappa}}\Big)_{(\ell -k)k, 0 \ell }\right\},
    \end{align}
    in case $\ell\neq 0$, and $f_{\ell,k}:=0$ otherwise. Then we can write $\delta_k=-\delta_k'+\delta_k''$ with $\delta_k':  =\sqrt{a_0^\dagger a_0}\sum_{|\ell|<K}f_{\ell,k}\, a^{\dagger}_{\ell-k}a_\ell$ and $\delta_k'':  =w_k\left(\frac{1}{\sqrt{a_0a_0^\dagger }}a_0-\frac{\sqrt{a_0^\dagger a_0}a_0}{N}\right)a_{-k}^\dagger=w_k a_{-k}^\dagger \frac{a_0}{\sqrt{N}} h\!\left(\frac{\mathcal{N}}{N}\right) $. Note that $|h(x)|\lesssim \frac{x+1}{\sqrt{1-x}}$, $|w_k|\lesssim \frac{1}{|k|^2}$ and $\left(\frac{a_0}{\sqrt{N}}\right)^\dagger \frac{a_0}{\sqrt{N}}\leq 1$. Consequently
    \begin{align*}
        \sum_{k\neq 0} (\delta_k'')^\dagger \delta_k''\lesssim \sum_{k\neq 0}h\!\left(\frac{\mathcal{N}}{N}\right)\frac{a_{-k}a_{-k}^\dagger}{|k|^4} h\!\left(\frac{\mathcal{N}}{N}\right) \! \lesssim \! h\!\left(\frac{\mathcal{N}}{N}\right)^2 \! \! (\mathcal{N} \! + \! 1) \! \lesssim  \! \frac{(\mathcal{N}+1)^3}{N(N \! - \! \mathcal{N})},
    \end{align*}
   where we have used $\sum_{k\neq 0}\frac{1}{|k|^4}<\infty$. Similarly the same estimate can be shown for $\sum_{k\neq 0} \delta_k''(\delta_k'')^\dagger $. In order to estimate $\delta'_k$, note that $|f_{\ell,k}|\leq \frac{D }{N |k|^2}$ for a suitable constant $D$ by Lemma \ref{Lem:Coefficient_Comparison}. In combination with $\sqrt{a_0^\dagger a_0}\leq \sqrt{N}$ this immediately yields the bounds $\pm \mathfrak{Re}\delta_k'\leq \frac{D }{\sqrt{N} |k|^2}\mathcal{N}$ and $\pm \mathfrak{Im}\delta_k'\leq \frac{D }{\sqrt{N} |k|^2}\mathcal{N}$. Since $\mathcal{N}$ commutes with both $\mathfrak{Re}\delta_k'$ and $\mathfrak{Im}\delta_k'$ we can square these inequalities and obtain $\left(\delta'_k\right)^\dagger \delta'_k+\delta'_k \left(\delta'_k\right)^\dagger\leq 4\left(\mathfrak{Re}\delta_k'\right)^2+4\left(\mathfrak{Im}\delta_k'\right)^2\leq \frac{8D^2}{N|k|^4}\mathcal{N}^2$.
\end{proof}
In the following let $\mathcal{F}^+_M\subset L^2_\mathrm{sym} \! \left(\Lambda^N\right)$ denote the spectral subspace $\mathcal{N}\leq M$, and let us define for an operator $X$ the restricted operator $X\Big|_{\mathcal{F}^+_M}$ as $X\Big|_{\mathcal{F}^+_M}:=\pi_{\mathcal{F}^+_M} X \pi_{\mathcal{F}^+_M}$, where $\pi_{\mathcal{F}^+_M}$ is the orthogonal projection onto $\mathcal{F}^+_M$.
\begin{lem}
\label{Lem:GP_Error_Estimate}
   For all $K<\infty$ and $0<r<1$, there exists a constant $C_{K,r}>0$ such that
    \begin{align*}
    \pm \mathcal{E}_i\Big|_{\mathcal{F}^+_M}\leq C_{K,r} \sqrt{\frac{M+1}{N}}(\mathcal{N}+1).
    \end{align*} 
     for all $M\leq \min\{rN,N-1\}$ and $i\in \{1,2,3,4\}$.
\end{lem}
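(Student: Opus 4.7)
I would treat the four $\mathcal{E}_i$ separately on $\mathcal{F}^+_M$, using three ingredients: the uniform coefficient bounds of Lemma~\ref{Lem:Coefficient_Comparison} (in particular $|w_k|,|f_{\ell,k}|\lesssim 1/(N|k|^2)$ and $|\sigma_k|,|\mu_k|\le C$); the quadratic increment bound $\sum_k\delta_k^\dagger\delta_k+\delta_k\delta_k^\dagger \le C_r(M+1)(\mathcal{N}+1)/N$ obtained from Lemma~\ref{Lem:Delta_Control} together with $\mathcal{N}^2\le M(\mathcal{N}+1)$ on $\mathcal{F}^+_M$; and the partial-isometry identity $d_k+\delta_k = \phi\,a_k$ with $\phi:=a_0^\dagger/\sqrt{a_0 a_0^\dagger}$ satisfying $\phi^\dagger\phi=1$ and, on $\mathcal{F}^+_M$ (using $M\le N-1$), $\phi\phi^\dagger=1$. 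The target factor $\sqrt{(M+1)/N}$ will arise either directly from $(M+1)/N\le\sqrt{r+1/N}\,\sqrt{(M+1)/N}$, or by optimising a Cauchy--Schwarz parameter at $\eta=\sqrt{(M+1)/N}$.

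\textbf{Straightforward terms and $\mathcal{E}_3$, $\mathcal{E}_4$.} $\mathcal{E}_2$ is immediate: uniform boundedness of $\sigma_k$ and the finite sum over $|k|<K$ give $|\mathcal{E}_2|\le C_K(\mathcal{N}^2+\mathcal{N})/N$, which fits the target. The $\delta^\dagger\delta$ and $\delta\delta^\dagger$ pieces in $\mathcal{E}_3$ and $\mathcal{E}_4$ reduce directly to the quadratic increment bound after factoring out the uniform sups of their coefficients. For the cross terms in $\mathcal{E}_3$ --- namely $(\epsilon-\mu_k)(d_k^\dagger\delta_k+\mathrm{h.c.})$ and $|k|^2 w_k(d_k+w_k d_{-k}^\dagger)\delta_{-k}^\dagger+\mathrm{h.c.}$, whose coefficients are bounded but not summable --- I would substitute $d_k=\phi a_k-\delta_k$ to recast everything as $\delta$--$\delta$ products (already controlled) plus mixed $\phi a_k\cdot\delta_{\pm k}^{(\dagger)}$ terms. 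These are estimated by the weighted Cauchy--Schwarz $\pm(\phi a_k\delta_{-k}^\dagger+\mathrm{h.c.})\le \eta\,\delta_{-k}^\dagger\delta_{-k}+\eta^{-1}a_k^\dagger\phi^\dagger\phi a_k = \eta\,\delta_{-k}^\dagger\delta_{-k}+\eta^{-1}a_k^\dagger a_k$, summed against the bounded coefficient and optimised at $\eta=\sqrt{(M+1)/N}$. The essential structural point is that this Cauchy--Schwarz is arranged to produce $a_k^\dagger a_k$ (summable to $\mathcal{N}$) rather than $a_k a_k^\dagger$, which would generate the divergent vacuum contribution $\sum_k[a_k,a_k^\dagger]=\infty$. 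For $\mathcal{E}_4$, the same decomposition yields $[d_k,d_k^\dagger]=1+(\text{commutators involving }\delta_k)$ on $\mathcal{F}^+_M$, so the bracket $1-\tfrac12([d_k,d_k^\dagger]+[d_{-k},d_{-k}^\dagger])$ is quadratic in the $\delta$'s; summed against the $|k|^{-4}$-decaying coefficients $\sqrt{A_k^2-B_k^2}-A_k+C_k$, it falls within the bound.

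\textbf{The six-body term $\mathcal{E}_1$ and the main obstacle.} The principal difficulty is $\mathcal{E}_1$, where a degree-two target must be extracted from the degree-six operator $\mathcal{R}$. The structural remark is that $(T-1)_{jk,mn}$ vanishes unless $(m,n)\in\mathcal{L}$, so $\Psi_{jk}:=\sum_{mn}(T-1)_{jk,mn}a_m a_n = (w_k/N)\,\delta_{j,-k}\,a_0^2 + \mathds{1}(0<|j+k|<K)\,f_{j+k,k}\,a_{j+k}a_0$ and $\mathcal{R}=\sum_{k,j,j'}|k|^2\,\Psi_{j'k}^\dagger\,a_j^\dagger a_{j'}\,\Psi_{jk}$. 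The ``diagonal'' contribution $j=j'=-k$ with the $a_0^2$-branch on both sides, after using $a_0^{\dagger 2}a_0^2=(N-\mathcal{N})(N-\mathcal{N}-1)$ and relabelling $k\to -k$, produces exactly the subtracted main term $\sum_k|k|^2 w_k^2 a_k^\dagger a_k$ plus corrections of relative size $\mathcal{N}/N$; the latter are controlled using $|k|^2 w_k^2\le C$ and $\sum_{k\neq 0}\tfrac{1}{|k|^2}a_k^\dagger a_k\le C\mathcal{N}$. Every remaining contribution involves at least one factor $f_{\ell,k}$ with $|f_{\ell,k}|\le D/(N|k|^2)$, supplying an additional $1/N$ smallness; after normal-ordering the $a_0,a_0^\dagger$'s (each contributing at most $\sqrt{N}$) and applying Cauchy--Schwarz based on $\sum_k 1/|k|^4<\infty$, these terms fit into $C_K(\mathcal{N}+1)^2/N$, and hence into the target on $\mathcal{F}^+_M$. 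The hardest part of the lemma is precisely this bookkeeping: isolating a two-body main term from a six-body expression while tracking both the $1/N$ and $\mathcal{N}/N$ smallness without losing the ultraviolet sum.
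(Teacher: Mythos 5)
Your treatment of $\mathcal{E}_2$, $\mathcal{E}_3$ and $\mathcal{E}_4$ follows essentially the paper's route: Cauchy--Schwarz between the ``large'' quadratic quantities (which sum to $\mathcal{N}+1$ after writing $d_k=\phi a_k-\delta_k$, $c_k=d_k+w_ka_{-k}^\dagger\phi^\dagger$) and the increments controlled by Lemma \ref{Lem:Delta_Control}, with the weight optimised at $\sqrt{(M+1)/N}$; for $\mathcal{E}_4$ note that besides the term $[\delta_k,\delta_k^\dagger]$ the bracket also contains pieces \emph{linear} in $\delta_k$ (the cross commutators $[\delta_k,a_k^\dagger\phi^\dagger]$), but these are handled by exactly the same Cauchy--Schwarz, as in the paper.

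The genuine problem is in your $\mathcal{E}_1$ estimate. After extracting the diagonal $w$--$w$ branch (which you do correctly, reproducing the term $\sum_k|k|^2w_k^2N^{-2}(a_0^{2\dagger}a_0^2-N^2)a_k^\dagger a_k$), the remaining contributions split into $f$--$f$ terms and $w$--$f$ \emph{cross} terms of the schematic form $\sum_{k,\ell}|k|^2\tfrac{w_k}{N}f_{\ell,k}\,a_0^{\dagger 2}a_{\ell-k}^\dagger a_{-k}a_\ell a_0+\mathrm{H.c.}$, i.e.\ three zero-mode operators and three excited operators with total coefficient of size $N^{-2}|k|^{-2}$. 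Your claim that all of these ``fit into $C_K(\mathcal{N}+1)^2/N$'' is false for the cross terms: their natural size is $N^{-1/2}\,\mathcal{N}^{1/2}(\mathcal{N}+1)$ (three $a_0^{(\dagger)}$'s give $N^{3/2}$, a cubic expression in excited modes gives at best $\mathcal{N}^{1/2}(\mathcal{N}+1)$ after Cauchy--Schwarz), and for states with $O(1)$ excitations the off-diagonal matrix element between the $\mathcal{N}=m$ and $\mathcal{N}=m+1$ sectors is of order $N^{-1/2}$, which is not dominated by $(\mathcal{N}+1)^2/N\sim N^{-1}$; so no operator inequality of the claimed form can hold. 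These cross terms are precisely what forces the $\sqrt{(M+1)/N}$ scale of the lemma, and they must be treated either as the paper does, via the norm bound $\big\|\big(a_{\ell'}^\dagger a_0^\dagger a_\ell a_0\big)\big|_{\mathcal{F}^+_M}\big\|\lesssim\sqrt{M+1}\,(N+1)^{3/2}$ for $(\ell,\ell')\neq(0,0)$, or by the optimised Cauchy--Schwarz with $\eta=\sqrt{(M+1)/N}$ that you announce in your plan but do not actually invoke here, keeping one factor $\mathcal{N}^{1/2}\le\sqrt{M+1}$ on the restricted space. The $f$--$f$ terms (two zero-mode and four excited operators) do satisfy your $C_K(\mathcal{N}+1)^2/N$ bound, so the gap is localised to this one family of terms; once you repair the cross-term estimate, the argument closes as in the paper.
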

\begin{proof}
   Using $|\sigma_k|\lesssim 1$, see Lemma \ref{Lem:Coefficient_Comparison}, we have $\pm \mathcal{E}_2\leq C N^{-1}(\mathcal{N}+1)^2$, which concludes the case $i=2$. Regarding the case $i=3$, we have by Cauchy-Schwarz and Lemma \ref{Lem:Delta_Control}
    \begin{align*}
        \pm \mathcal{E}_3\leq \epsilon \sum_{k\neq 0}\left\{a_k^\dagger a_k+c_k^\dagger c_k+d_k^\dagger d_k+(d_k+w_k d_{-k}^\dagger)^\dagger (d_k+w_k d_{-k}^\dagger)\right\}+\epsilon^{-1} \frac{C}{N-\mathcal{N}}(\mathcal{N}+1)^2
    \end{align*}
   for any $\epsilon>0$. Since $d_k=\frac{1}{\sqrt{a_0^\dagger a_0}}a_0^\dagger a_k-\delta_k$ we have $\sum_{k\neq 0}d_k^\dagger d_k\lesssim \mathcal{N}$ again by Lemma \ref{Lem:Delta_Control}. Furthermore $\sum_{k\neq 0}c_k^\dagger c_k=\sum_{k\neq 0}\left(d_k+w_k a_{-k}^\dagger a_0\frac{1}{\sqrt{a_0^\dagger a_0}}\right)^\dagger \left(d_k+w_k a_{-k}^\dagger a_0\frac{1}{\sqrt{a_0^\dagger a_0}}\right)\lesssim \mathcal{N}+\sum_{k\neq 0}w_k^2 a_k a_k^\dagger\lesssim \mathcal{N}+1$. Similarly $\sum_{k\neq 0}(d_k+w_k d_{-k}^\dagger)^\dagger (d_k+w_k d_{-k}^\dagger)\lesssim \mathcal{N}+1$. Hence
   \begin{align*}
       \pm \mathcal{E}_3\Big|_{\mathcal{F}^+_M}\lesssim \epsilon (\mathcal{N}+1)+\epsilon^{-1} \frac{M+1}{N-M}(\mathcal{N}+1),
   \end{align*}
   which concludes the proof of the case $i=3$ for the optimal choice $\epsilon:=\sqrt{\frac{M+1}{N-M}}$. Regarding the case $i=4$, note that $\left[a_0^\dagger \frac{1}{\sqrt{a_0 a_0^\dagger}}a_k,a_k^\dagger \frac{1}{\sqrt{a_0 a_0^\dagger}}a_0\right]\Big|_{\mathcal{F}^+_{N-1}}=1|_{\mathcal{F}^+_{N-1}}$, and therefore $\mathcal{E}_4|_{\mathcal{F}^+_{N-1}}$ reads
\begin{align*}
    \frac{1}{2}\sum_{k\neq 0}\left\{A_k-\sqrt{A_k^2-B_k^2}-C_k\right\}\left(\left[\delta_k,a_k^\dagger \frac{1}{\sqrt{a_0 a_0^\dagger}}a_0\right]+\left[a_0^\dagger \frac{1}{\sqrt{a_0 a_0^\dagger}}a_k,\delta_k^\dagger\right]+\left[\delta_k,\delta_k^\dagger\right]\right)\Bigg|_{\mathcal{F}^+_{N-1}}.
\end{align*}
Multiplying out the commutators and estimating the resulting products in the same way as we did in the case $i=3$, concludes the proof of the case $i=4$. Regarding the final case $i=1$, let us write $\mathcal{E}_1=\frac{1}{2}\sum_{(\ell,\ell')\in A_K}\left(\mathcal{G}^{\ell,\ell'}+\mathrm{H.c.}\right)+\sum_{k\neq 0}|k|^2 w_k^2 N^{-2}\left(a_0^{2\dagger }a_0^2-N^2\right) a_k^\dagger a_k$ with $A_K:=\{(\ell,\ell')\neq 0: |\ell|,|\ell'|<K\}$ and $\mathcal{G}^{\ell,\ell'}:=\sum_{k\neq 0}|k|^2 a^\dagger_{\ell'-k}|k|^2\overline{f_{\ell,k}}f_{\ell',k}  a_{\ell'-k}^\dagger  a_{\ell}^\dagger a_0^\dagger a_{\ell'} a_0 a_{\ell-k}$, where $f_{\ell,k}$ is defined in Eq.~(\ref{Eq:Definition_f}). Note that $\left||k|^2\overline{f_{\ell,k}}f_{\ell',k}\right|\lesssim N^{-2}$, see for example the proof of Lemma \ref{Lem:Delta_Control}. Furthermore we have for $(\ell,\ell')\neq (0,0)$ the estimate $\left\| \left(a_{\ell'}^\dagger a_0^\dagger a_\ell a_0\right)\big|_{\mathcal{F}^+_M} \right\|\lesssim \sqrt{M+1}(N+1)^{\frac{3}{2}}$. Consequently $\pm \left(\mathcal{G}^{\ell,\ell'}+\mathrm{H.c.}\right)\big|_{\mathcal{F}^+_M}\lesssim \sqrt{\frac{M+1}{N}}\mathcal{N}$, which concludes the proof, since the set $A_K$ is finite and $\pm \sum_{k\neq 0}|k|^2 w_k^2 N^{-2}\left(a_0^{2\dagger }a_0^2-N^2 \right) a_k^\dagger a_k\lesssim \frac{M+1}{N}\mathcal{N}$.
\end{proof}

With Lemma \ref{Lem:GP_Error_Estimate} at hand, we are now in a position to verify Theorem \ref{Th:GP}. In the following let $\Psi_N$ be the ground state of the operator $H_N$, $f,g:\mathbb{R}\longrightarrow [0,1]$ smooth functions satisfying $f^2+g^2=1$ and $f(x)=1$ for $x\leq \frac{1}{2}$ as well as $f(x)=0$ for $x\geq 1$. Then we define for $0<\rho<1$ the truncated states $\Theta_N:=\frac{f \!  \left(\frac{\mathcal{N}}{\rho N}\right) \! \Psi_N}{\|f \!  \left(\frac{\mathcal{N}}{\rho N}\right) \! \Psi_N\|}$. Note that the ground state $\Psi_N$ satisfies complete Bose-Einstein condensation $\frac{1}{N}\braket{\Psi_{N},\mathcal{N}\Psi_{N}}\underset{N\rightarrow \infty}{\longrightarrow}0$ according to the well known results in \cite{LSe}. Consequently we obtain by the IMS inequality, which can for example be found in \cite[Proposition 20]{HST}, respectively which follows from the methods presented in our more general Lemma \ref{Lem:IMS}, 
\begin{align*}
    \braket{\Theta_N,H_N\Theta_N}\leq E_{N}+\frac{N}{\rho N-2\braket{\Psi_{N},\mathcal{N}\Psi_{N}}}\frac{C'}{N}\leq E_{N}+\frac{C}{N}
\end{align*}
for suitable $C',C$. Since $\Theta_N$ is an element of $\mathcal{F}^+_{\rho N}$, we have by Lemma \ref{Lem:GP_Error_Estimate} 
\begin{align*}
   \left\langle \Theta_N,\left(\epsilon \mathcal{N}-\sum_{i=1}^4 \mathcal{E}_i\right)\Theta_N\right\rangle \geq \left(\epsilon-4C_{K,\frac{1}{2}}\sqrt{\rho+\frac{1}{N}}\right)\braket{\Theta_N,\mathcal{N}\Theta_N}-4C_{K,\frac{1}{2}}\sqrt{\rho+\frac{1}{N}}
\end{align*}
for $\epsilon>0$, $K<\infty$ and $0<\rho<\frac{1}{2}$. Assuming $\rho< \left(\frac{\epsilon}{4C_{K,\frac{1}{2}}}\right)^2$, as well as $N$ large enough, $\epsilon<1$ and $K\geq K_0$ large enough such that Eq.~(\ref{Eq:Main}) holds, we obtain the lower bound
\begin{align*}
    E_N-  4\pi \mathfrak{a}_N (N  -  1)\geq  \frac{1}{2}\sum_{k\neq 0}\left\{\sqrt{A_k^2-B_k^2}-A_k+C_k\right\}-\frac{C}{N}-4C_{K,\frac{1}{2}}\sqrt{\rho+\frac{1}{N}}.
\end{align*}
Since $|\sqrt{A_k^2-B_k^2}-A_k+C_k|\lesssim \frac{1}{|k|^4}$ by Lemma \ref{Lem:Coefficient_Comparison}, we conclude using dominated convergence
\begin{align*}
  \lim_{\epsilon\rightarrow 0,K\rightarrow \infty} & \lim_{\rho\rightarrow 0}\lim_{N\rightarrow \infty}\left[\sum_{k\neq 0}\left\{\sqrt{A_k^2-B_k^2}-A_k+C_k\right\}-\frac{C}{N}-4C_{K,\frac{1}{2}}\sqrt{\rho+\frac{1}{N}}\right]\\
   & \ =\sum_{k\neq 0}\left\{\sqrt{|k|^4+16\pi \mathfrak{a} |k|^2}-|k|^2-8\pi \mathfrak{a}+\frac{(8\pi \mathfrak{a})^2}{2|k|^2}\right\}. 
\end{align*}

\section{Beyond the Gross-Pitaevskii Regime}
\label{Sec:Beyond_GP}
In this section we will explain how to treat the case $\kappa>0$. For this purpose we use the following algebraic version of Bogoliubov's Lemma, see for example \cite{LS,S}, 
\begin{align*}
   &  \frac{1}{2}\sum_{k\neq 0} \left\{A_k\left(d_k^\dagger d_k+d_{-k}^\dagger d_{-k}\right)+B_k\left(d_k d_{-k}+d_{-k}^\dagger d_k\right)+C_k \frac{[d_k,d_k^\dagger]+[d_{-k},d_{-k}^\dagger]}{2}\right\}\\
   \nonumber
    & = \sum_{k\neq 0}e_k\left(\gamma_k d_k \! + \! \nu_k d_{-k}^\dagger\right)^\dagger \left(\gamma_k d_k \! + \! \nu_k d_{-k}^\dagger\right) \! + \! \frac{1}{2}\sum_{k\neq 0}\left\{\sqrt{A_k^2 \! - \! B_k^2} \! - \! A_k+C_k\right\}\frac{[d_k,d_k^\dagger] \! + \! [d_{-k},d_{-k}^\dagger]}{2}
\end{align*}
with $e_k:=\sqrt{(A_k)^2-(B_k)^2}$, $\gamma_k:=\frac{1}{\sqrt{1-\alpha_k^2}}$ and $\nu_k:=\frac{\alpha_k}{\sqrt{1-\alpha_k^2}}$ where we define the coefficients $\alpha_k:=\frac{1}{B_k}\left(A_k-\sqrt{A_k^2-B_k^2}\right)$. Note that by Lemma \ref{Lem:Coefficient_Comparison}, it is easy to see that $0\leq \nu_k\leq \gamma_k \lesssim \frac{N^{\frac{\kappa}{4}}}{\sqrt{|k|}}$ for $|k|\lesssim N^{\frac{\kappa}{2}}$ and $|\nu_k|\lesssim \frac{N^\kappa}{|k|^2}$ globally. Setting $\epsilon:=0$ and using the estimate in Eq.~(\ref{Eq:Pseudo_Quadratic}) together with the identity in Eq.~(\ref{Eq:Writen_In_d}), we therefore obtain
\begin{align}
\nonumber
     & H_{N,\kappa}  -4\pi \mathfrak{a}_{N^{1-\kappa}}N^\kappa (N-1)- \frac{1}{2}\sum_{k\neq 0}\left\{\sqrt{A_k^2-B_k^2}-A_k+C_k\right\}\\
     \label{Eq:Beyond_GP_Algebraic_Lower_Bound}
    &\ \ \ \ \ \  \geq \sum_{k\neq 0}e_k\left(\gamma_k d_k+\nu_k d_{-k}^\dagger\right)^\dagger \left(\gamma_k d_k+\nu_k d_{-k}^\dagger\right)-\sum_{i=1}^4 \mathcal{E}_i.
\end{align}
\subsection{Control of the Error Terms $\mathcal{E}_i$}
\label{Subsection:Error_Control}
In order to obtain a useful representation of the error terms $\mathcal{E}_1$ and $\mathcal{E}_3$, recall the definition $f_{\ell,k}$ from Eq.~(\ref{Eq:Definition_f}) and let us introduce $\Lambda_{k \ell, k' \ell'}:  =\delta_{k+\ell=k'+\ell'}|k-\ell'|^2 N\overline{f_{\ell',\ell'-k}}f_{\ell,\ell-k'}$ as well as $\Upsilon^{(1)}_{\ell,k}:  =2 N^{\frac{1}{2}} |k|^2\overline{f_{\ell,-k}}w_k$, $\Upsilon^{(2)}_{\ell,k}:  = -N^{\frac{1}{2}}|k|^2 \overline{w_k} f_{\ell,-k}$ and $\Upsilon^{(3)}_{\ell,k}:  = -N^{\frac{1}{2}}\left(\mathds{1}(|k|<K)\frac{4\sigma_k}{N}-\frac{2\sigma_0}{N}\right)f_{\ell+k,k}$. Furthermore we define $O_1:=N^{-\frac{3}{2}}a_0^\dagger a_0^2$ as well as $O_2:=O_3:=\frac{1}{\sqrt{a_0 a_0^\dagger}}\frac{a_0}{\sqrt{N}} \sqrt{a_0 a_0^\dagger}$. Using the decomposition $\delta_k=-\delta_k'+\delta_k''$ from the proof of Lemma \ref{Lem:Delta_Control}, we can then write
\begin{align}
\label{Eq:Representation_Of_Error_I}
 &  \ \ \ \  \ \ \mathcal{E}_1=\sum_{k \ell, k' \ell'}\Lambda_{k \ell, k' \ell'}\, a_{\ell'}^\dagger a_{k'}^\dagger  \frac{a_0^\dagger a_0}{N}  a_k a_\ell+\left(\sum_{\ell,k}\Upsilon^{(1)}_{\ell,k} O_1\, a_k^\dagger a_\ell^\dagger a_{k+\ell}+\mathrm{H.c.}\right)\\
\nonumber
& \ \ \ \  \ \ \ \  \  \ \ \ \  \ \ +\sum_{k\neq 0}|k|^2 w_k^2 N^{-2}\left(a_0^{2\dagger }a_0^2-N^2\right) a_k^\dagger a_k,\\
\label{Eq:Representation_Of_Error_II}
  & \ \ \ \  \ \ \mathcal{E}_3 =\left(\sum_{\ell,k}\Upsilon^{(2)}_{\ell,k}O_2\, a_k^\dagger a_\ell^\dagger a_{k+\ell}+\mathrm{H.c.}\right)+\left(\sum_{\ell,k}\Upsilon^{(3)}_{\ell,k}O_3\, a_k^\dagger a_\ell^\dagger a_{k+\ell}+\mathrm{H.c.}\right)\\
\nonumber
&  \  \ \  \  \  \  \  \ +\sum_k |k|^2 w_k^2\big([\delta_k,d_k^\dagger]+[\delta_k,d_k^\dagger]+[\delta_k,\delta_k^\dagger]\big)+F_1+F_1^\dagger + F_2,
\end{align}
where we have defined $F_1:=-\sum_{k\neq 0}a_k^\dagger \frac{1}{\sqrt{a_0 a_0^\dagger}}a_0\left(N\left(\mathds{1}(|k|<K)\frac{4\sigma_k}{N}-\frac{2\sigma_0}{N}\right)\delta''_k+|k|^2 w_k(\delta''_{-k})^\dagger\right)$ and $F_2:=\sum_k |k|^2 w_k \left( \delta_k^\dagger \delta_{-k}^\dagger+\delta_{-k}\delta_k\right)$.

This Subsection is devoted to the derivation of suitable bounds on the most prominent error contributions $\sum_{k \ell, k' \ell'}\Lambda_{k \ell, k' \ell'}\, a_{\ell'}^\dagger a_{k'}^\dagger  \frac{a_0^\dagger a_0}{N}  a_k a_\ell$ and $\sum_{\ell,k}\Upsilon^{(i)}_{\ell,k} O_i\, a_k^\dagger a_\ell^\dagger a_{k+\ell}$ in Lemma \ref{Lem:Lambda_Control} and Lemma \ref{Lem:Upsilon_Control}. The residual error terms are then taken care of in Appendix \ref{Appendix:Additional_Error_Estimates}. Let us at this point introduce the new operators $b_k:=\gamma_k a_0^\dagger \frac{1}{\sqrt{a_0 a_0^\dagger}}a_k+\nu_k a_{-k}^\dagger \frac{1}{\sqrt{a_0 a_0^\dagger}} a_0$ and the particle number operator in these new variables $\widetilde{\mathcal{N}}:=\sum_{k\neq 0}b_k^\dagger b_k$, which notably satisfy the CCR on $\mathcal{F}^+_{N-1}$, i.e. $\left[b_k,b_\ell^\dagger\right]\Psi=\delta_{k,\ell}\Psi$ for $\Psi\in \mathcal{F}^+_{N-1}$. Let us furthermore introduce $\mathcal{F}^{\leq}_{M_0}$ as the spectral subspace $\sum_{0<|k|<K}a_k^\dagger a_k\leq M_0$.

\begin{lem}
\label{Lem:Lambda_Control}
    There exists a constant $C>0$ such that we have for $0<\delta<\frac{1}{5}$
    \begin{align}
            \label{Eq:Lambda_Control_0}  
        & \pm \!   \!  \!  \sum_{k \ell, k' \ell'} \!  \!  \! \Lambda_{k \ell, k' \ell'}\, a_{\ell'}^\dagger a_{k'}^\dagger  \frac{a_0^\dagger a_0}{N} a_k a_\ell\bigg|_{\mathcal{F}^{\leq}_{M_0}} \!  \!   \!   \!   \! \leq \!   C\! \!  \left( \!  \! N^{\frac{5\kappa}{2}+3\delta}\frac{M_0 \!  + \!  1}{N}+N^{\frac{11\kappa}{2}+3\delta-1} \!  \! \right)  \! \! \left(  \!  \!  \sum_k |k|^{1-2\delta} b_k^\dagger b_k\Big|_{\mathcal{F}^{\leq}_{M_0}}  \! \!  \!  \! + \! 1  \!  \! \right)\!  \! ,\\
                  \label{Eq:Lambda_Control_pre_0}  
      & \ \  \ \  \ \ \pm   \! \!  \sum_{k \ell, k' \ell'}\Lambda_{k \ell, k' \ell'}\, a_{\ell'}^\dagger a_{k'}^\dagger  \frac{a_0^\dagger a_0}{N} a_k a_\ell       \! \leq  \! C N^{\frac{11\kappa}{2}+3\delta-1} \! \left(\widetilde{\mathcal{N}} \! + \! 1\right) \!  \! \left(\sum_k |k|^{1-2\delta} b_k^\dagger b_k \! + \! 1\right)  \!  \! .
    \end{align}
\end{lem}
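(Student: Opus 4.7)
The plan is to bound the quartic error by exploiting the product structure of $|\Lambda_{k\ell,k'\ell'}|$ to rewrite it (up to commutator corrections) as a square of a density-type operator, and then to convert the $a$-operators into the Bogoliubov-diagonalized modes $b_k$ via the identity $a_0^\dagger \frac{1}{\sqrt{a_0 a_0^\dagger}}a_k = \gamma_k b_k - \nu_k b_{-k}^\dagger$, which holds on $\mathcal{F}^+_{N-1}$.

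First I would extract a product bound on $\Lambda$. With $p := \ell'-k$, momentum conservation $k+\ell=k'+\ell'$ gives $p=\ell-k'$, so the denominator of $f_{k+p,p}$ appearing in $\Lambda_{k\ell,k'\ell'}$ equals $|p|^2+|k|^2$. Combined with the uniform bound $|f_{\ell,k}|\lesssim N^{\kappa-1}/(|k|^2+|\ell-k|^2)$ inherited from Lemma \ref{Lem:Coefficient_Comparison}, one obtains
\begin{align*}
|\Lambda_{k\ell,k'\ell'}|\lesssim\frac{|p|^2\, N^{2\kappa-1}}{(|p|^2+|k|^2)(|p|^2+|k'|^2)}.
\end{align*}
This factorizes in $(k,p)$ and $(k',p)$. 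Setting $B_p:=\sum_k\frac{|p|}{|p|^2+|k|^2}\,a_k^\dagger a_{k+p}$, the operator on the left of \eqref{Eq:Lambda_Control_0} becomes, up to a diagonal commutator coming from $[a_k,a_{k'}^\dagger]=\delta_{k,k'}$ (which contributes a single-particle operator controlled by a simpler variant of the same argument), an expression of the schematic form $N^{2\kappa-1}\sum_p B_p^\dagger \frac{a_0^\dagger a_0}{N} B_p$.

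Next I would convert $B_p$ into $b$-operators. Since the $\frac{a_0}{\sqrt{N}}$-type factors are bounded on the $N$-particle space and commute with all $a_k,a_k^\dagger$ for $k\neq 0$, they can be commuted out freely; the Bogoliubov inversion then rewrites each $a_k^\dagger a_{k+p}$ as a linear combination of $b^\dagger b$, $b b$, $b^\dagger b^\dagger$, and (via the $b$-CCR on $\mathcal{F}^+_{N-1}$) a scalar. Expanding $B_p^\dagger B_p$ produces a bounded list of quartics in $b,b^\dagger$ together with associated quadratics. Each quartic is controlled by Cauchy-Schwarz against the weighted number operator $\sum_k|k|^{1-2\delta}b_k^\dagger b_k$: the prefactor paid is $\gamma_k^2$ or $\nu_k^2$, and the weight $|k|^{1-2\delta}$ with $\delta>0$ is exactly what is needed to render $\sum_{k,p}\frac{|p|^2}{(|p|^2+|k|^2)^2|k|^{1-2\delta}}$ convergent. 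Combining with $0\le\nu_k\le\gamma_k\lesssim N^{\kappa/4}/\sqrt{|k|}$ for $|k|\lesssim N^{\kappa/2}$ and $|\nu_k|\lesssim N^\kappa/|k|^2$ globally yields the prefactor $N^{11\kappa/2+3\delta-1}$ of \eqref{Eq:Lambda_Control_pre_0}, while the extra factor $\widetilde{\mathcal{N}}+1$ absorbs the second $B_p^\dagger B_p$, whose number operator cannot simultaneously be folded into the single-particle weight.

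For the sharper bound \eqref{Eq:Lambda_Control_0} on $\mathcal{F}^\leq_{M_0}$, I would replace one of the two ``$b$-number'' factors by the cheaper estimate $\sum_{0<|k|<K}a_k^\dagger a_k\le M_0$; this removes two low-momentum $N^{\kappa/2}$-factors coming from $\gamma,\nu$, producing the leading $N^{5\kappa/2+3\delta}(M_0+1)/N$ term, while the complementary high-momentum contribution where only $|\nu_k|\lesssim N^\kappa/|k|^2$ is available still requires $\widetilde{\mathcal{N}}$ and contributes the $N^{11\kappa/2+3\delta-1}$ summand. The main obstacle I anticipate is the power counting: one has to split the $k$- and $p$-sums into the regimes $|k|\lesssim N^{\kappa/2}$ versus $|k|\gg N^{\kappa/2}$ (and similarly for $p$), pair each regime with the appropriate estimate on $\gamma_k,\nu_k$, and verify that no bound is wasted. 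It is this case analysis that produces the sharp exponents $5\kappa/2$ and $11\kappa/2$, and making each contribution land at exactly the right power is the principal technical difficulty; the remainder is bookkeeping.
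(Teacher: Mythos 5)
Your structural idea (factorize $\Lambda$ over the transfer momentum $p=\ell'-k$, complete a square of density-type operators, then pass to the $b_k$ via the Bogoliubov inversion and pay $\gamma_k^2,\nu_k^2$) is a legitimate starting point, and your account of where the $(M_0+1)/N$ versus $\widetilde{\mathcal{N}}$ factors come from matches the paper's final steps. However, there is a genuine quantitative gap: your entire power counting rests on the pointwise bound $|f_{\ell,k}|\lesssim N^{\kappa-1}/(|k|^2+|\ell-k|^2)$, and with that input alone the relevant sums do not converge. Concretely, the sum you claim is rendered convergent by the weight $|k|^{1-2\delta}$, namely $\sum_{k,p}\frac{|p|^2}{(|p|^2+|k|^2)^2|k|^{1-2\delta}}$, diverges: already at fixed $k$ one has $\sum_{p}\frac{|p|^2}{(|p|^2+|k|^2)^2}=\infty$ in three dimensions. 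The same divergence hits even the innocuous commutator ("diagonal") term of your square decomposition, which is of the form $\sum_p N|p|^2|f_{\ell,p}|^2\,a_\ell^\dagger a_\ell$ and cannot be summed using the pointwise bound, since no decay of $\widehat{V}$ is assumed. The missing ingredient is the $L^2$-type information $\sum_k|k|^2 f_{\ell,\ell-k}^2\lesssim N^{\kappa-1}$ of Lemma \ref{Lem:Increased_Decay}, which encodes that $f$ lives at momenta $|k|\sim N^{1-\kappa}$; the paper interpolates this against the pointwise bound $N^{1-\kappa}|k|^2 f_{\ell,\ell-k}\lesssim 1$ by H\"older (this is exactly where the restriction $\delta<\frac15$, i.e. $\frac{1+\delta}{3\delta}>2$, and the factor $N^{3\delta}$ enter) to obtain the weighted operator-norm bound $\|\Lambda^{(\delta)}\|\lesssim N^{2\kappa+3\delta-1}$, equivalently $\pm\Lambda\lesssim N^{2\kappa+3\delta-1}(-\Delta_x)^{\frac12-\delta}$, via a weighted Schur test. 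Without this input the exponents $N^{\frac{5\kappa}{2}+3\delta}$ and $N^{\frac{11\kappa}{2}+3\delta-1}$ are unreachable, so the "bookkeeping" you defer cannot close.

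A secondary caution: replacing the kernel $\Lambda_{k\ell,k'\ell'}$ by a pointwise bound on its modulus inside a quartic operator is not by itself an operator inequality. This can be repaired, since $\Lambda$ factorizes exactly as $N|p|^2\overline{f_{\ell',p}}f_{\ell,p}$, so one may write the quartic exactly as $\sum_p N|p|^2 D_p^\dagger O D_p$ minus a commutator with $D_p=\sum_{0<|\ell|<K}f_{\ell,p}a_{\ell-p}^\dagger a_\ell$, and only then estimate; alternatively one bounds the kernel in (weighted) operator norm as the paper does. Either way, the estimate must be fed by Lemma \ref{Lem:Increased_Decay} (or an equivalent consequence of the scattering equation), not by the coefficient bounds of Lemma \ref{Lem:Coefficient_Comparison} alone.
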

\begin{proof}
Using $ a_k=\frac{1}{\sqrt{a_0 a_0^\dagger}}a_0\gamma_k b_k+\frac{1}{\sqrt{a_0 a_0^\dagger}}a_0\nu_{-k}b_{-k}^\dagger$ allows us to rewrite
\begin{align}
\label{Eq:First_Lambda_Indentification}
\sum_{k \ell, k' \ell'}\Lambda_{k \ell, k' \ell'}\, a_{\ell'}^\dagger a_{k'}^\dagger \frac{a_0^\dagger a_0}{N} a_k a_\ell=\sum_{k \ell, k' \ell'}\Lambda_{k \ell, k' \ell'}\, a_{\ell'}^\dagger \left(\gamma_{k'} b_{k'}+\nu_{-k'}b_{-k'}^\dagger\right)^\dagger O \left(\gamma_k b_k+\nu_{-k}b_{-k}^\dagger\right) a_\ell
\end{align}
with $O:=a_0^\dagger \frac{1}{\sqrt{a_0 a_0^\dagger}}\frac{a_0^\dagger a_0}{N}\frac{1}{\sqrt{a_0 a_0^\dagger}}a_0$. In order to establish a good upper bound on the matrix $\Lambda$, let us define the auxiliary matrix $\Lambda^{(\delta)}_{k \ell, k' \ell'}:=\Lambda_{k \ell,k' \ell'}|k|^{\delta-\frac{1}{2}}|k'|^{\delta-\frac{1}{2}}$. Making use of 
 \begin{align*}
        |\ell'-k|^2   & | f_{\ell,\ell-k'}| \! \leq  \! |(V_{N^{1-\kappa}} \! - \! V_{N^{1-\kappa}}RV_{N^{1-\kappa}})_{k'(\ell-k'),\ell  0} \! + \! (V_{N^{1-\kappa}} \! - \! V_{N^{1-\kappa}}RV_{N^{1-\kappa}})_{k'(\ell-k'),0 \ell}| \! \lesssim  \! N^{\kappa-1}
    \end{align*}
by Lemma \ref{Lem:Coefficient_Comparison}, we obtain that $\Lambda^{(\delta)}$ satisfies the weighted Schur test
   \begin{align}
    \label{Eq:Schur_Test}
        \sum_k p(k)\left|\Lambda^{(\delta)}_{k (k'+\ell'-k),k' \ell'}\right|\leq C p(k')N^{\kappa} \sup_{\ell'}\sum_k |k|^{2\delta-1}\left|f_{\ell',\ell'-k}\right|,
    \end{align}
  with the weight function $p(k):=|k|^{\delta-\frac{1}{2}}$ for a suitable constant $C$ and all $\ell'$ and $k'$. Defining $h_{\ell'}(k):=|k|^{3\delta+2}|f_{\ell',\ell'-k}|$, we obtain as a consequence of the weighted Schur Test in Eq.~(\ref{Eq:Schur_Test}) 
\begin{align*}
   &  \ \ \ \  \|\Lambda^{(\delta)}\|  \! \lesssim  \! N^{\kappa}\sup_{\ell'}\!\!\sum_k|k|^{2\delta-1}\!\!\left|f_{\ell',\ell'-k}\right| \! =  \! N^{\kappa}\sup_{\ell'} \! \sum_k   h_{\ell'}(k) |k|^{-(3+\delta)}\\
   &\lesssim N^{\kappa}\sup_{\ell'} \! \left(\sum_k   h_{\ell'}(k)^{\frac{1+\delta}{3\delta}} |k|^{-(3+\delta)}\right)^{\frac{3\delta}{1+\delta}}=N^{2\kappa-1}\sup_{\ell'}\! \left(\sum_k  |k|^{-2}\left(N^{1-\kappa}|k|^2 f_{\ell',\ell'-k}\right)^{\frac{1+\delta}{3\delta}}\right)^{\frac{3\delta}{1+\delta}},
\end{align*}
where we have used that there is an embedding of $L^{1}$ in $L^{\frac{1+\delta}{3\delta}}$ for the finite measure with discrete density $|k|^{-(3+\delta)}$. By Lemma \ref{Lem:Coefficient_Comparison} we know that $N^{1-\kappa}|k|^2 f_{\ell',\ell'-k}\lesssim 1$, which yields together with the assumption $\delta<\frac{1}{5}$, or equivalently $\frac{1+\delta}{3\delta}>2$, the estimate
\begin{align*}
& \sum_k  |k|^{-2}\left(N^{-\kappa}|k|^2 f_{\ell',\ell'-k}\right)^{\frac{1+\delta}{3\delta}} \! \lesssim \!  N^{2-2\kappa}\sum_k  |k|^{2}(f_{\ell',\ell'-k})^2 \! \lesssim N^{1-\kappa}\leq N,
\end{align*}
where we have used Lemma \ref{Lem:Increased_Decay}. Therefore $\|\Lambda^{(\delta)}\|\lesssim N^{2\kappa+3\delta-1}$, or equivalently $\pm \Lambda\lesssim N^{2\kappa+3\delta-1}(-\Delta_x)^{\frac{1}{2}-\delta}$. Consequently we obtain
\begin{align*}
\pm &  \! \sum_{k \ell, k' \ell'}\Lambda_{k \ell, k' \ell'}\, a_{\ell'}^\dagger a_{k'}^\dagger \frac{a_0^\dagger a_0}{N} a_k a_\ell \! \lesssim \!  N^{2\kappa+3\delta-1} \!   \! \!  \!  \! \sum_{ 0<|\ell|< K}\sum_{k\neq 0}|k|^{1-2\delta} a_{\ell}^\dagger \left(\gamma_{k} b_{k} \! + \! \nu_{-k}b_{-k}^\dagger\right)^\dagger  \!  \!  \left(\gamma_k b_k \! + \! \nu_{-k}b_{-k}^\dagger\right) \!  a_\ell\\
& \ \ \lesssim  N^{2\kappa+3\delta-1}\sum_{0<|\ell|<  K}\sum_{k\neq 0}|k|^{1-2\delta}(\gamma_k^2+\nu_k^2) a_{\ell}^\dagger  b_{k}^\dagger  b_k a_\ell+N^{2\kappa+3\delta-1}\left(\sum_{k\neq 0} |k|^{1-2\delta}\nu_k^2\right)\sum_{0<|\ell|<  K} a_{\ell}^\dagger  a_\ell
\end{align*}
by Eq.~(\ref{Eq:First_Lambda_Indentification}), where we have used $\|O\|\lesssim 1$ and $[b_k,b_k^\dagger]\leq 1$. Note that
\begin{align*}
N^{2\kappa+3\delta-1}\left(\sum_{k\neq 0} |k|^{1-2\delta}\nu_k^2\right)\sum_{0<|\ell|<  K} a_{\ell}^\dagger  a_\ell\lesssim N^{4\kappa+3\delta-1}\sum_{0<|\ell|<  K} a_{\ell}^\dagger  a_\ell\lesssim N^{\frac{11\kappa}{2}+3\delta-1}\! \left(\widetilde{\mathcal{N}}+1\right).
\end{align*}
Furthermore $\gamma_k^2+\nu_k^2\lesssim N^{\frac{\kappa}{2}}$ and $[ b_k , a_\ell]=-\delta_{k,-\ell}\frac{1}{\sqrt{a_0 a_0^\dagger }}a_0\nu_{k}$, and therefore
\begin{align*}
& \sum_{0<|\ell|<  K}\sum_{k\neq 0}|k|^{1-2\delta}(\gamma_k^2+\nu_k^2) a_{\ell}^\dagger  b_{k}^\dagger  b_k a_\ell \! \lesssim  \! N^{\frac{\kappa}{2}}\sum_{k\neq 0}|k|^{1-2\delta} b_k^\dagger \left(\sum_{0<|\ell|<  K}a_\ell^\dagger a_\ell  \! \right)  \! b_k  \! + \! N^{\frac{\kappa}{2}}\sum_{k\neq 0}|k|^{1-2\delta}\nu_k^2.
\end{align*}
Making use of the fact that $N^{-1}b_k^\dagger \left(\sum_{0<|\ell|<  K}a_\ell^\dagger a_\ell  \! \right)  \! b_k\bigg|_{\mathcal{F}^{\leq}_{M_0}}\leq \frac{M_0+1}{N} b_k^\dagger b_k$ concludes the proof of Eq.~(\ref{Eq:Lambda_Control_0}) and $b_k^\dagger \left(\sum_{0<|\ell|<  K}a_\ell^\dagger a_\ell  \! \right)  \! b_k\leq N^{\frac{3\kappa}{2}}\widetilde{\mathcal{N}} b_k^\dagger b_k$, see Lemma \ref{Lem:True_Particle_Number_Comparison}, yields Eq.~(\ref{Eq:Lambda_Control_pre_0}).
\end{proof}

\begin{lem}
\label{Lem:Upsilon_Control}
 For $r<1$ and $\kappa\leq 1$, there exists a $C>0$, such that for $i\in \{1,2,3\}$
    \begin{align}
    \label{Eq:Upsilon_Control_I}
        \pm  \! \left(\sum_{\ell,k}\Upsilon^{(i)}_{\ell,k}O_i\, a_k^\dagger a_\ell^\dagger a_{k+\ell}  \! +  \! \mathrm{H.c.}\right)\bigg|_{\mathcal{F}^{\leq}_{M_0}\cap \mathcal{F}^+_{rN}}  \!   \!   \! \leq   \! C N^{\frac{5\kappa}{2}}\sqrt{\frac{M_0  \! +  \! 1}{N}  \! +  \! N^{\frac{3\kappa}{2}-1}}\left(\widetilde{\mathcal{N}}\Big|_{\mathcal{F}^{\leq}_{M_0}\cap \mathcal{F}^+_{rN}}  \!   \! +  \! 1  \! \right)  \! .
    \end{align} 
    Furthermore, $\pm  \! \left(\sum_{\ell,k}\Upsilon^{(i)}_{\ell,k}O_i\, a_k^\dagger a_\ell^\dagger a_{k+\ell}  \! +  \! \mathrm{H.c.}\right)\bigg|_{\mathcal{F}^+_{rN}}\lesssim N^{\frac{13\kappa}{4}-\frac{1}{2}}\! \left(\widetilde{\mathcal{N}}+1\right)^2\bigg|_{\mathcal{F}^+_{rN}}$.
\end{lem}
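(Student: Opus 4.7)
My plan is to bound each $T_i:=\sum_{\ell,k}\Upsilon^{(i)}_{\ell,k}O_i\,a_k^\dagger a_\ell^\dagger a_{k+\ell}$ by Cauchy--Schwarz on the cubic form $T_i+T_i^\dagger$, after rewriting the $a_k$'s in terms of the $b_k$-variables. First I collect the ingredients: Lemma~\ref{Lem:Coefficient_Comparison} yields $|w_k|\lesssim N^\kappa/|k|^2$, $|\sigma_k|\lesssim N^\kappa$, and $|f_{\ell,k}|\lesssim N^{\kappa-1}/(|k|^2+|\ell-k|^2)$, hence $|\Upsilon^{(1)}_{\ell,k}|,|\Upsilon^{(2)}_{\ell,k}|\lesssim N^{2\kappa-1/2}(|k|^2+|\ell+k|^2)^{-1}$ and $|\Upsilon^{(3)}_{\ell,k}|\lesssim N^{2\kappa-3/2}(|k|^2+|\ell|^2)^{-1}$. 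The key structural fact is that $f_{\ell,\cdot}$ vanishes unless $|\ell|<K$ (inherited from the Feshbach projection $\pi_{\mathcal{L}}$), so for $i=1,2$ the $\ell$-sum is restricted to $0<|\ell|<K$, and for $i=3$ to $|\ell+k|<K$. Together with $\|O_i\|\lesssim 1$ on $\mathcal{F}^+_{rN}$ and the substitution $a_k=P^\dagger(\gamma_k b_k+\nu_{-k}b_{-k}^\dagger)$ (with $P:=a_0^\dagger/\sqrt{a_0a_0^\dagger}$), this provides all the tools.

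For the first (sharper) inequality I decompose $T_i=\sum_{0<|\ell|<K}a_\ell^\dagger R_\ell$ with $R_\ell:=O_i\sum_k\Upsilon^{(i)}_{\ell,k}a_k^\dagger a_{k+\ell}$ (valid since $O_i$ commutes with $a_\ell^\dagger$ for $\ell\neq 0$), and apply Cauchy--Schwarz in the form $\pm(T_i+T_i^\dagger)\leq\varepsilon\sum_{|\ell|<K}a_\ell^\dagger a_\ell\,(\widetilde{\mathcal{N}}+1)+\varepsilon^{-1}\sum_{|\ell|<K}R_\ell^\dagger(\widetilde{\mathcal{N}}+1)^{-1}R_\ell$. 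Restricted to $\mathcal{F}^{\leq}_{M_0}$ the first term is $\leq\varepsilon M_0(\widetilde{\mathcal{N}}+1)$. For the second I substitute the $b$-representation into each $a_k^\dagger,a_{k+\ell}$ inside $R_\ell$: the leading $b^\dagger b$ piece is controlled by a weighted Schur test on $\Upsilon^{(i)}$, analogous to the one performed in Lemma~\ref{Lem:Lambda_Control}, yielding $\sum_\ell R_\ell^\dagger R_\ell\lesssim N^{5\kappa-1}(\widetilde{\mathcal{N}}+1)$ after summing $K^3\lesssim N^{3\kappa/2}$ values of $\ell$ and using $\gamma_k^2+\nu_k^2\lesssim N^{\kappa/2}$ on their relevant support; the $\nu_k\nu_{k+\ell}$-commutator residues (which do not couple to $\widetilde{\mathcal{N}}$) contribute an extra $\sum_k\nu_k^2\cdot\sum_k|\Upsilon^{(i)}_{\ell,k}|^2\lesssim N^{2\kappa}\cdot N^{4\kappa-1}$. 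Optimizing $\varepsilon$ against $\varepsilon M_0+\varepsilon^{-1}N^{5\kappa-1}$ balances at $N^{5\kappa/2}\sqrt{M_0/N}$, and adding the $\nu\nu$-residue gives precisely the claimed $N^{5\kappa/2}\sqrt{(M_0+1)/N+N^{3\kappa/2-1}}(\widetilde{\mathcal{N}}+1)$.

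For the second (cruder) bound on $\mathcal{F}^+_{rN}$ I drop the low-momentum cutoff: the sum $\sum_{|\ell|<K}a_\ell^\dagger a_\ell$ must be replaced by $\mathcal{N}$, and Lemma~\ref{Lem:True_Particle_Number_Comparison} supplies $\mathcal{N}\lesssim N^{\kappa/2}\widetilde{\mathcal{N}}$, which forces $(\widetilde{\mathcal{N}}+1)^2$ on the right-hand side. Rerunning the same Cauchy--Schwarz balance without the $M_0$-factor yields the crude estimate $N^{13\kappa/4-1/2}(\widetilde{\mathcal{N}}+1)^2$. The main technical obstacle will be tracking the Bogoliubov weights $\gamma_k,\nu_k$ (of size $N^{\kappa/4}/\sqrt{|k|}$ for $|k|\lesssim N^{\kappa/2}$ and $N^\kappa/|k|^2$ globally) through both the substitution and the Cauchy--Schwarz: verifying that the $\nu\nu$-residues contribute exactly $N^{3\kappa/2-1}$ under the square root, and not some larger scale, requires combining the bound $\sum_k\nu_k^2\lesssim N^{2\kappa}$ with the $N^{-1}$ prefactor inherent in $|\Upsilon^{(i)}|^2$ and the $|\ell|<K$ cutoff in a precise way — this bookkeeping, together with choosing a Schur weight compatible with the non-diagonal $b_k b_{-k}^\dagger$ piece, is the delicate part of the argument.
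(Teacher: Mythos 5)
Your overall strategy (rewrite in $b$-variables, Cauchy--Schwarz on the cubic form, exploit the $|\ell|<K$ restriction against $\mathcal{F}^{\leq}_{M_0}$, optimize $\varepsilon$) is in the right spirit, but the way you split the Cauchy--Schwarz creates a genuine gap. You separate $T_i=\sum_{0<|\ell|<K}a_\ell^\dagger R_\ell$ and put the low-momentum operator $a_\ell$ alone on the $\varepsilon$-side, so that the $\varepsilon^{-1}$-side $\sum_\ell R_\ell^\dagger(\widetilde{\mathcal{N}}+1)^{-1}R_\ell$ contains no low-momentum leg at all. Your claimed intermediate bound $\sum_\ell R_\ell^\dagger R_\ell\lesssim N^{5\kappa-1}(\widetilde{\mathcal{N}}+1)$ is not attainable: $R_\ell$ is quadratic in unrestricted $a$'s, so after normal ordering the $\ell$-sum produces a genuinely quartic object whose conversion to $\widetilde{\mathcal{N}}$ costs an extra factor $\mathcal{N}\lesssim N^{\frac{3\kappa}{2}}(\widetilde{\mathcal{N}}+1)$ (or worse), and the justification you offer — ``summing $K^3\lesssim N^{3\kappa/2}$ values of $\ell$'' — rests on $K\sim N^{\kappa/2}$, which is false in the regimes where the lemma is actually applied ($K=N^{\frac{5}{16}+\frac{\kappa}{2}}$ in the proof of Theorem \ref{Th:Beyond_GP} and $K=\infty$ in Section \ref{Sec:Trial_States_and_their_Energy}); the proof must be uniform in $K$. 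With the honest bound $\sum_\ell R_\ell^\dagger(\widetilde{\mathcal{N}}+1)^{-1}R_\ell\lesssim N^{4\kappa-1}\mathcal{N}^2(\widetilde{\mathcal{N}}+1)^{-1}\lesssim N^{7\kappa-1}(\widetilde{\mathcal{N}}+1)$, the optimization gives $N^{\frac{7\kappa}{2}}\sqrt{(M_0+1)/N}$ rather than the claimed $N^{\frac{5\kappa}{2}}\sqrt{(M_0+1)/N}$, and similarly your second bound comes out as $N^{\frac{7\kappa}{2}-\frac12}$ instead of $N^{\frac{13\kappa}{4}-\frac12}$, so the stated exponents are not reached.

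The paper's proof avoids exactly this loss by arranging the Cauchy--Schwarz so that the restricted number operator survives on the $\varepsilon^{-1}$-side: after converting only the two high legs into $b$'s, the $X$-type term $\sum X^{(i)}_{\ell,k}\widetilde{O}_i\, b_k^\dagger a_\ell^\dagger b_{k+\ell}$ is split so that the $\varepsilon^{-1}$ term is $N^{5\kappa-1}\sum_k\widetilde{O}_i b_k^\dagger\big(\sum_{0<|\ell|<K}a_\ell^\dagger a_\ell\big)b_k\widetilde{O}_i^\dagger$, and the smallness $\frac{M_0+1}{N}$ is extracted there via $N^{-1}b_k^\dagger\big(\sum_{0<|\ell|<K}a_\ell^\dagger a_\ell\big)b_k\big|_{\mathcal{F}^{\leq}_{M_0}}\leq\frac{M_0+1}{N}b_k^\dagger b_k$; the $N^{\frac{3\kappa}{2}-1}$ under the square root comes from the pairing ($Y$-type) term $a_\ell^\dagger b_{k+\ell}b_{-k}$, whose normal ordering with the $b$-CCR produces a diagonal piece $\lesssim N^{5\kappa-1}\mathcal{N}\lesssim N^{5\kappa}N^{\frac{3\kappa}{2}-1}(\widetilde{\mathcal{N}}+1)$ — not from an $\ell^2$-sum of $\nu_k$'s as you suggest (also note $\sum_k\nu_k^2\lesssim N^{\frac{3\kappa}{2}}$, not merely $N^{2\kappa}$). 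Finally, your decomposition $T_3=\sum_{0<|\ell|<K}a_\ell^\dagger R_\ell$ is not correct for $i=3$, where the cut-off sits on $|\ell+k|<K$; the paper handles this by taking the Hermitian conjugate and relabelling (the different definitions of $X^{(3)},Y^{(3)}$), so that the surviving $a_\ell^\dagger$ leg is again the cut-off one. To repair your argument you would have to redo the Cauchy--Schwarz so that the product $b^\dagger\big(\sum_{0<|\ell|<K}a_\ell^\dagger a_\ell\big)b$ remains intact on the $\varepsilon^{-1}$-side, which is essentially the paper's proof.
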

\begin{proof}
Using $ a_k=\frac{1}{\sqrt{a_0 a_0^\dagger}}a_0\gamma_k b_k+\frac{1}{\sqrt{a_0 a_0^\dagger}}a_0\nu_{-k}b_{-k}^\dagger$, we can rewrite
\begin{align*}
    \left(\sum_{\ell,k}\Upsilon^{(i)}_{\ell,k}O_i\, a_k^\dagger a_\ell^\dagger a_{k+\ell} \! + \! \mathrm{H.c.}\right) \! = \! \left(\sum_{\ell,k}X^{(i)}_{\ell,k}\widetilde{O}_i b_k^\dagger a_\ell^\dagger b_{k+\ell} \! + \! \mathrm{H.c.}\right) \! + \! \left(\sum_{\ell,k}Y^{(i)}_{\ell,k}\widetilde{O}_i a_\ell^\dagger b_{k+\ell}b_{-k} \!  + \! \mathrm{H.c.}\right),
\end{align*}
where we define $X^{(i)}_{\ell,k}:=\gamma_k \gamma_{\ell+k}\Upsilon^{(i)}_{\ell,k}+\nu_{k}\nu_{k+\ell}\Upsilon^{(i)}_{\ell,-(k+\ell)}$, $Y^{(i)}_{\ell,k}:=\big(\nu_k \gamma_{k+\ell}+\nu_{k+\ell}\gamma_k\big)\Upsilon^{(i)}_{\ell,k}$ and $\widetilde{O}_i:=O_i$ in the case $i\in \{1,2\}$, and $X^{(3)}:=\nu_{-k}\gamma_{k+\ell}\overline{\Upsilon^{(3)}_{-k,k+\ell}}+\nu_{k+\ell}\gamma_{-k}\overline{\Upsilon^{(3)}_{k+\ell,-k}} $, $Y^{(3)}_{\ell,k}:=\big(\gamma_{-k}\gamma_{k+\ell}+\nu_{-k}\nu_{k+\ell}\big)\overline{\Upsilon^{(3)}_{-k,\ell+k}}$ and $\widetilde{O}_3:=O_3^\dagger \left(\frac{1}{\sqrt{a_0 a_0^\dagger}}a_0\right)^2$. Using $\left(T-1\right)_{(\ell-k)k, \ell 0}=\frac{1}{|k|^2+|\ell-k|^2}\Big(V_{N^{1-\kappa}}-V_{N^{1-\kappa}} R V_{N^{1-\kappa}}\Big)_{(\ell-k)k, \ell 0}$, the bounds from Lemma \ref{Lem:Coefficient_Comparison} and the simple observation that $|\gamma_k|,|\nu_k|\lesssim N^{\frac{\kappa}{4}}$, yields $|X_{\ell,k}|,|Y_{\ell,k}|\lesssim N^{\frac{5\kappa-1}{2}}\frac{1}{|\ell|^2+|k|^2}$. Consequently
\begin{align}
\label{Eq:Upper_Bound_X}
    \left(\sum_{\ell,k}X^{(i)}_{\ell,k}\widetilde{O}_i b_k^\dagger a_\ell^\dagger b_{k+\ell}+\mathrm{H.c.}\right)\leq \epsilon \sum_{k,\ell}\frac{b_{k+\ell}^\dagger b_{k+\ell}}{(|\ell|^2+|k|^2)^2}+\epsilon^{-1}N^{5\kappa-1}\sum_{k} \widetilde{O}_i b_k^\dagger \left(\sum_{0<\ell<K} a_\ell^\dagger a_\ell\right) b_k\widetilde{O}_i^\dagger,
\end{align}
where we have used that $X_{\ell,k}=0$ in case $|\ell|\geq K$. Note at this point that $[a_\ell b_k,\widetilde{O}_i^\dagger]=\gamma_k  a_\ell a_k\big[a_0^\dagger \frac{1}{\sqrt{a_0 a_0^\dagger}},\widetilde{O}_i^\dagger\big]+\nu_k a_\ell a_{-k}^\dagger \big[\frac{1}{\sqrt{a_0 a_0^\dagger}}a_0,\widetilde{O}_i^\dagger\big]$ and $\big[a_0^\dagger \frac{1}{\sqrt{a_0 a_0^\dagger}},\widetilde{O}_i^\dagger\big]^\dagger \big[a_0^\dagger \frac{1}{\sqrt{a_0 a_0^\dagger}},\widetilde{O}_i^\dagger\big]\Big|_{\mathcal{F}^+_{rN}}\lesssim \frac{1}{N^2}$ as well as $\big[\frac{1}{\sqrt{a_0 a_0^\dagger}}a_0,\widetilde{O}_i^\dagger\big]^\dagger \big[\frac{1}{\sqrt{a_0 a_0^\dagger}}a_0,\widetilde{O}_i^\dagger\big]\Big|_{\mathcal{F}^+_{rN}}\lesssim \frac{1}{N}^2$ and $\widetilde{O}_i\widetilde{O}_i^\dagger \Big|_{\mathcal{F}^+_{rN}}\lesssim 1$. Consequently
\begin{align}
\nonumber
    & \sum_{k}\! \!  \widetilde{O}_i b_k^\dagger \! \left(\! \sum_{0<\ell<K} a_\ell^\dagger a_\ell\! \right)\!  b_k\widetilde{O}_i^\dagger\Big|_{\mathcal{F}^+_{rN}}\! \! \! \lesssim \! \sum_{k} \!  b_k^\dagger \! \left(\! \sum_{0<\ell<K} a_\ell^\dagger a_\ell\! \right) \! b_k\Big|_{\mathcal{F}^+_{rN}}\! \! +\! \frac{1}{N^2}\! \sum_{k\neq 0}\! \gamma_k^2 a_k^\dagger\! \left(\! \sum_{0<\ell<K} a_\ell^\dagger a_\ell\! \right)\!  a_k\Big|_{\mathcal{F}^+_{rN}}\\
    \label{Eq:O_Operator_Control}
    &  + \! \frac{1}{N^2} \! \sum_{k\neq 0}|\nu_k|^2  a_k\! \left(\sum_{0<\ell<K} a_\ell^\dagger a_\ell\right)\!  a_k^\dagger\Big|_{\mathcal{F}^+_{rN}} \! \! \lesssim \! \sum_{k} b_k^\dagger\!  \left(\sum_{0<\ell<K} a_\ell^\dagger a_\ell\right) \! b_k\Big|_{\mathcal{F}^+_{rN}}\! \! +\! N^{2\kappa-1}\! \left(\widetilde{\mathcal{N}}+1\right)\! \Big|_{\mathcal{F}^+_{rN}}\! \! \! ,
\end{align}
where we have used $\sum_{0<\ell<K} a_\ell^\dagger a_\ell\leq N$ and applied Lemma \ref{Lem:True_Particle_Number_Comparison} in the last estimate. Using $\sum_{k,\ell}\frac{1}{(|\ell|^2+|k|^2)^2}b_{k+\ell}^\dagger b_{k+\ell}\lesssim \widetilde{\mathcal{N}}$ and that $b_k$ and $b_k^\dagger$ map $\mathcal{F}^{\leq}_{M_0}$ into $\mathcal{F}^{\leq}_{M_0+1}$, we have by Eq.~(\ref{Eq:Upper_Bound_X}) 
\begin{align*}
    & \left(\sum_{\ell,k}X^{(i)}_{\ell,k} \widetilde{O}_i b_k^\dagger a_\ell^\dagger b_{k+\ell} \! + \! \mathrm{H.c.}\right) \!  \! \bigg|_{\mathcal{F}^{\leq}_{M_0}\cap \mathcal{F}^+_{rN}} \!  \!  \!  \! \lesssim  \!   \! \left(\epsilon  \! + \! \epsilon^{-1}N^{5\kappa}\frac{M_0+1}{N} \! + \! \epsilon^{-1}N^{5\kappa}N^{2\kappa-2}\right) \!  \!  \left(\widetilde{\mathcal{N}} \! + \! 1\right)\bigg|_{\mathcal{F}^{\leq}_{M_0}\cap \mathcal{F}^+_{rN}}\\
    &\ \ \ \ \ \ \ \ \ \ \ \ \ \ \lesssim N^{\frac{5\kappa}{2}}\sqrt{\frac{M_0+1}{N}+N^{2\kappa-2}}\left(\widetilde{\mathcal{N}}\bigg|_{\mathcal{F}^{\leq}_{M_0}\cap \mathcal{F}^+_{rN}}+1\right)
\end{align*}
for an optimal choice of $\epsilon$. Regarding the $Y$-contributions, we are going to estimate
\begin{align}
\label{Eq:Split_Y_contribution}
    \left(\sum_{\ell,k}Y^{(i)}_{\ell,k} \widetilde{O}_i a_\ell^\dagger b_{k+\ell}b_{-k} +\mathrm{H.c.}\right)\lesssim \epsilon\, \widetilde{\mathcal{N}}+\epsilon^{-1}\sum_{k,\ell,\ell'}\overline{Y_{\ell',k}} Y_{\ell,k}\, \widetilde{O}_i a^\dagger_\ell b_{\ell+k}b_{\ell'+k}^\dagger a_{\ell'}\widetilde{O}_i^\dagger.
\end{align}
Defining the operator $G$ via its coefficients $G_{\ell k, \ell' k'}:=\delta_{\ell+k=\ell'+k'}\overline{Y_{\ell',k-\ell'}} Y_{\ell,k-\ell}$ and using the fact that $b_k$ satisfies the CCR on $\mathcal{F}^+_{N-1}$ as well as $a_\ell \mathcal{F}^+_{N-1}\subset \mathcal{F}^+_{N-2}$, we furthermore obtain
\begin{align*}
    & \sum_{k,\ell,\ell'} \! \overline{Y_{\ell',k}} Y_{\ell,k}\,\widetilde{O}_i a^\dagger_\ell b_{\ell+k}b_{\ell'+k}^\dagger a_{\ell'}\widetilde{O}_i^\dagger\Big|_{\mathcal{F}^+_{N-2}} \!  \!  \!  \!  \!  \! \!  =  \! \sum_{k,\ell} \! \overline{Y_{\ell,k}} Y_{\ell,k}\,a^\dagger_\ell   \widetilde{O}_i \widetilde{O}_i^\dagger a_{\ell}\Big|_{\mathcal{F}^+_{N-2}} \!  \!  \!  \!  \!  \!  + \!  \sum_{k\ell,k\ell'} \! G_{\ell k, \ell' k'}\widetilde{O}_i a_\ell^\dagger b_k^\dagger b_{k'}a_{\ell'}\widetilde{O}_i^\dagger\Big|_{\mathcal{F}^+_{N-2}}.
\end{align*}
Note that $\sum_{k,\ell}\overline{Y_{\ell,k}} Y_{\ell,k}\,a^\dagger_\ell   \widetilde{O}_i \widetilde{O}_i^\dagger a_{\ell}\lesssim \sum_{k,\ell}\overline{Y_{\ell,k}} Y_{\ell,k}\,a^\dagger_\ell   a_{\ell}\lesssim N^{5\kappa-1}\mathcal{N}\lesssim N^{5\kappa}N^{\frac{3\kappa}{2}-1}\left(\widetilde{\mathcal{N}}+1\right)$ by Lemma \ref{Lem:True_Particle_Number_Comparison}. Using furthermore the bound on the operator norm $\|G\|\lesssim N^{5\kappa-1}$ yields
\begin{align*}
   & \ \ \pm \sum_{k\ell,k\ell'}G_{\ell k, \ell' k'} \widetilde{O}_i a_\ell^\dagger b_k^\dagger b_{k'}a_{\ell'}\widetilde{O}_i^\dagger\lesssim N^{5\kappa-1}\sum_{0<|\ell|<K,k}\widetilde{O}_i a_\ell^\dagger b_k^\dagger b_k a_\ell\widetilde{O}_i^\dagger\\
   &  \lesssim N^{5\kappa-1}\sum_k \widetilde{O}_i b_k^\dagger \left(\sum_{0<|\ell|<K}a_\ell^\dagger a_\ell\right)b_k\widetilde{O}_i^\dagger +N^{5\kappa-1}\sum_k \widetilde{O}_i [b_k,a_{-k}]^\dagger [b_k,a_{-k}]\widetilde{O}_i^\dagger.
\end{align*}
Note that $\sum_k \widetilde{O}_i [b_k,a_{-k}]^\dagger [b_k,a_{-k}]\widetilde{O}_i^\dagger\Big|_{\mathcal{F}^{+}_{rN}} \lesssim \sum_k |\nu_k|^2\widetilde{O}_i \widetilde{O}_i^\dagger\Big|_{\mathcal{F}^{+}_{rN}}\lesssim \sum_k |\nu_k|^2\lesssim N^{\frac{3\kappa}{2}}$, and hence
\begin{align}
\label{Eq:UsefulEquation_Upsilon}
    \sum_{k,\ell,\ell'}\overline{Y_{\ell',k}} Y_{\ell,k}\, \widetilde{O}_i a^\dagger_\ell b_{\ell+k}b_{\ell'+k}^\dagger a_{\ell'}\widetilde{O}_i^\dagger \! \lesssim  \! N^{5\kappa-1}\sum_k \widetilde{O}_i b_k^\dagger  \! \left(\sum_{0<|\ell|<K}a_\ell^\dagger a_\ell\right) \! b_k\widetilde{O}_i^\dagger \! + \! N^{5\kappa-1}N^{\frac{3\kappa}{2}} \! \left(\widetilde{\mathcal{N}} \! + \! 1 \! \right) \! .
\end{align}
Therefore we obtain in combination with Eq.~(\ref{Eq:O_Operator_Control})
\begin{align*}
     & \pm  \!  \! \sum_{k,\ell,\ell'}\overline{Y_{\ell',k}} Y_{\ell,k}\, \widetilde{O}_i a^\dagger_\ell b_{\ell+k}b_{\ell'+k}^\dagger a_{\ell'}\widetilde{O}_i^\dagger\Big|_{\mathcal{F}^{\leq}_{M_0}\cap \mathcal{F}^{+}_{rN}} \!  \!  \! \lesssim  \! N^{5\kappa} \! \left( \! \frac{M_0 \! + \! 1}{N} \! + \! N^{2\kappa-2} \! + \! N^{\frac{3\kappa}{2}-1}\right)\! \left(\widetilde{\mathcal{N}}\Big|_{\mathcal{F}^{\leq}_{M_0}\cap \mathcal{F}^{+}_{rN}} \!  \! + \! 1 \! \right) \! .
\end{align*}
Using the optimal choice $\epsilon:= N^{\frac{5\kappa}{2}}\sqrt{\frac{M_0+1}{N}+N^{\frac{3\kappa}{2}-1}}$ concludes the proof of Eq.~(\ref{Eq:Upsilon_Control_I}). Regarding the second statement of the Lemma, note that $\sum_{0<\ell<K} a_\ell^\dagger a_\ell\lesssim N^{\frac{3\kappa}{2}-1}\! \left(\widetilde{\mathcal{N}}+1\right)$ by Lemma \ref{Lem:True_Particle_Number_Comparison}, and therefore we obtain by Eq.~(\ref{Eq:O_Operator_Control})
\begin{align*}
   \sum_{k}\! \!  \widetilde{O}_i b_k^\dagger \! \left(\! \sum_{0<\ell<K} a_\ell^\dagger a_\ell\! \right)\!  b_k\widetilde{O}_i^\dagger\Big|_{\mathcal{F}^+_{rN}} \! \! \lesssim N^{\frac{3\kappa}{2}}\! \left(\widetilde{\mathcal{N}}+1\right)^2\Big|_{\mathcal{F}^+_{rN}} \! .
\end{align*}
In combination with Eq.~(\ref{Eq:Upper_Bound_X}), respectively Eq.~(\ref{Eq:Split_Y_contribution}) and Eq.~(\ref{Eq:UsefulEquation_Upsilon}), this concludes the proof with the concrete choice $\epsilon:=N^{\frac{13\kappa}{4}-\frac{1}{2}}$.
\end{proof}

At this point we want to emphasise that our error estimates in Lemma \ref{Lem:Lambda_Control} and Lemma \ref{Lem:Upsilon_Control} can only produce good bounds in case we apply them for states in $\Psi\in \mathcal{F}^\leq_M$ with suitably small number of particles $M$. As we will demonstrate in Appendix \ref{Appendix:A_priori_Condensation}, we can always restrict our attention to such states.

\subsection{Proof of Theorem \ref{Th:Beyond_GP}}
\label{Subsection:Proof of Theorem_Beyond}

In the following let $0<\kappa<\frac{1}{8}$ and $d\in \mathbb{N}$, and let us introduce the concrete choices $\lambda_0:=\frac{3}{8}$, $\lambda:=\frac{21}{32}$ and $K:=N^{\frac{5}{16}+\frac{\kappa}{2}}$. As we explain later in this subsection, we can assume $E^{(d)}_{N,\kappa}\leq E_{N,\kappa}+CN^{\frac{\kappa}{2}}$ without loss of generality, and therefore there exists a $d$ dimensional subspace $\mathcal{V}_d \subseteq \mathcal{F}^{\leq}_{N^{\lambda_0}}\cap \mathcal{F}^+_{N^\lambda}$, such that according to Lemma \ref{Lem:IMS}
\begin{align}
\label{Eq:Advanced_Min_Max}
    E^{(d)}_{N,\kappa}\geq \sup_{\Psi\in \mathcal{V}_d:\|\Psi\|=1} \braket{\Psi, H_{N,\kappa}\Psi}-C N^{-\tau},
\end{align}
with $\tau:=\frac{5}{2} \! \left(\frac{1}{8}-\kappa\right)$. We want to emphasise at this point that Lemma \ref{Lem:IMS} depends crucially on the a priori estimates derived in \cite{F} as we explain in Appendix \ref{Appendix:A_priori_Condensation}. In order to control the error terms $\mathcal{E}_i$, we first observe that for any $\Psi$ in $\mathcal{V}_d $ with $\|\Psi\|=1$ we obtain
\begin{align*}
  \braket{\Psi,\mathcal{E}_2 \Psi}\leq \frac{1}{N}\braket{\Psi,\sum_{0<|k|<K} \sigma_k \mathcal{N}a_k^\dagger a_k \Psi}\lesssim N^{\kappa+\lambda_0-1}\braket{\Psi,\mathcal{N}\Psi}\lesssim N^{\frac{5}{2}\kappa+\lambda_0-1}\braket{\Psi,\big(\widetilde{\mathcal{N}}+1\big)\Psi},  
\end{align*}
where we used Lemma \ref{Lem:True_Particle_Number_Comparison} and $\widetilde{\mathcal{N}}$ is defined above Lemma \ref{Lem:Lambda_Control}. Again by Lemma \ref{Lem:True_Particle_Number_Comparison}, we have 
\begin{align*}
|\braket{\Psi,\sum_{k\neq 0}|k|^2 w_k^2 N^{-2}\left(a_0^{2\dagger }a_0^2 \! - \! N^2 \right) a_k^\dagger a_k\Psi}| \! \lesssim  \! N^{2\kappa-1}\braket{\Psi,\mathcal{N} \! \left(\mathcal{N} \! + \! 1\right)\Psi} \! \lesssim  \! N^{\frac{5\kappa}{2}+\lambda-1}\! \braket{\Psi,\left(\widetilde{\mathcal{N}} \! + \! 1\right)\Psi},    
\end{align*}
which is a term appearing in the definition of $\mathcal{E}_1$ in Eq.~(\ref{Eq:Representation_Of_Error_I}). Using the representation of $\mathcal{E}_1$ and $\mathcal{E}_3$ from Eq.~(\ref{Eq:Representation_Of_Error_I}) and Eq.~(\ref{Eq:Representation_Of_Error_II}), the estimates from Lemmata \ref{Lem:Lambda_Control}, \ref{Lem:Upsilon_Control} and \ref{Lem:Universal_Error_Estimate}, as well as Eq.~(\ref{Eq:Residuum_I}) and Eq.~(\ref{Eq:Residuum_II}), we therefore obtain
\begin{align*}
    \left\langle \Psi,\sum_{i=1}^4 \mathcal{E}_i \Psi\right\rangle \! \lesssim \! N^{-\tau} \! \left\langle \!  \Psi,\left( \! \widetilde{\mathcal{N}} \! + \! \sum_{k}|k|^{1-2\delta}b_k^\dagger b_k \! + \! 1 \! \right)\Psi \! \right\rangle \! \leq  \! N^{-\tau} \! \left\langle  \! \Psi,\left( \! \sum_{k}|k|^{1-2\delta}b_k^\dagger b_k \! + \! 1\right)\Psi \! \right\rangle.
\end{align*}
Further $\left\langle  \! \Psi,\left( \! \sum_{k}|k|^{1-2\delta}b_k^\dagger b_k \! + \! 1\right)\Psi \! \right\rangle\lesssim N^{-\frac{\kappa}{2}}\left\langle  \! \Psi,\sum_{k\neq 0}e_k\left(\gamma_k d_k \! + \! \nu_k d_{-k}^\dagger\right)^\dagger \left(\gamma_k d_k \! + \! \nu_k d_{-k}^\dagger\right) \Psi \! \right\rangle + 1$, by Eq.~(\ref{Eq:Comparison_With_Proper_Variables}). Combining what we have so far with the lower bound in Eq.~(\ref{Eq:Beyond_GP_Algebraic_Lower_Bound}) yields for a suitable constant $C$
\begin{align}
\nonumber
     & \braket{\Psi, H_{N,\kappa}\Psi}\geq 4\pi \mathfrak{a}_{N^{1-\kappa}}N^\kappa (N-1) + \frac{1}{2}\sum_{k\neq 0}\left\{\sqrt{A_k^2-B_k^2}-A_k+C_k\right\}-C N^{-\tau}\\
     \label{Eq:Useful_Lower_Bound}
    & \ \ \ \ \ +\left(1-CN^{-\tau}N^{-\frac{\kappa}{2}}\right)\left\langle  \! \Psi,\sum_{k\neq 0}e_k\left(\gamma_k d_k \! + \! \nu_k d_{-k}^\dagger\right)^\dagger \left(\gamma_k d_k \! + \! \nu_k d_{-k}^\dagger\right) \Psi \! \right\rangle.
\end{align}
We furthermore obtain by Lemma \ref{Lem:Approximate_LHY} the following estimate on $\sum_{k\neq 0}\left\{\sqrt{A_k^2-B_k^2}-A_k+C_k\right\}$
\begin{align}
\nonumber
    & \left|\sum_{k\neq 0}\left\{\sqrt{A_k^2-B_k^2}-A_k+C_k\right\}-\sum_{k\neq 0}\left\{\sqrt{|k|^4+16\pi \mathfrak{a} N^\kappa |k|^2}-|k|^2-8\pi \mathfrak{a} N^\kappa+\frac{(8\pi \mathfrak{a} N^\kappa)^2}{2|k|^2}\right\}\right|\\
    \label{Eq:Sum_Comparison}
    & \ \ \ \ \ \  \ \ \  \ \ \  \ \ \ \lesssim N^{4\kappa-1}\log N+\frac{N^{3\kappa}}{K}\lesssim N^{-\tau}.
\end{align}
Using $e_k\geq 0$ in Eq.~(\ref{Eq:Useful_Lower_Bound}) and the state $\Psi_0$ which spans the space $\mathcal{V}_1$, we can consequently verify the first statement Eq.~(\ref{Eq:Beyond_GP_GSE})
\begin{align*}
  &   E^{(1)}_{N,\kappa} \! \geq   \! \! \braket{\Psi_0, H_{N,\kappa}\Psi_0} \! - \! C_1 N^{-\tau}  \! \! \geq  \! 4\pi \mathfrak{a}_{N^{1-\kappa}}N^\kappa (N \! - \! 1)  \! +  \! \frac{1}{2} \sum_{k\neq 0} \! \left\{ \! \sqrt{A_k^2 \! - \! B_k^2} \! - \! A_k \! + \! C_k \! \right\} \! - \! C_2 N^{-\tau}\\
     & \geq  \! 4\pi \mathfrak{a}_{N^{1-\kappa}}N^\kappa (N   \! -  \!  1)   \!  + \!   \frac{1}{2}\sum_{k\neq 0}\left\{\sqrt{|k|^4+16\pi \mathfrak{a} N^\kappa |k|^2}-|k|^2-8\pi \mathfrak{a} N^\kappa+\frac{(8\pi \mathfrak{a} N^\kappa)^2}{2|k|^2}\right\} \! - \! C_3 N^{-\tau} \!,
\end{align*}
where $C_1,C_2,C_3>0$ are suitable constants.

In order to verify the second statement of Theorem \ref{Th:Beyond_GP}, recall the definition of $\lambda^{(d)}_{N,\kappa}$ in Theorem \ref{Th:Beyond_GP} and let us define $\widetilde{e}_k:=\min\{e_k,\lambda^{(d)}_{N,\kappa}+1\}$. By Eq.~(\ref{Eq:Useful_Lower_Bound}) we obtain for a suitable constant $C>0$ the lower bound
\begin{align}
\label{Eq:Useful_For_Spectrum}
   & \braket{\Psi, H_{N,\kappa}\Psi}\geq 4\pi \mathfrak{a}_{N^{1-\kappa}}N^\kappa (N-1) + \frac{1}{2}\sum_{k\neq 0}\left\{\sqrt{A_k^2-B_k^2}-A_k+C_k\right\}-CN^{-\tau}\\
   \nonumber
    & \ \ \ \  \ \ \ +\left(1-CN^{-\tau}N^{-\frac{\kappa}{2}}\right)\left\langle  \! \Psi,\sum_{k\neq 0}\widetilde{e}_k\left(\gamma_k d_k \! + \! \nu_k d_{-k}^\dagger\right)^\dagger \left(\gamma_k d_k \! + \! \nu_k d_{-k}^\dagger\right) \Psi \! \right\rangle .
\end{align}
Since $|\widetilde{e}_k|\lesssim N^{\frac{\kappa}{2}}$ uniformly in $k$, we furthermore have by Eq.~(\ref{Eq:Variable_Comparison_Spectrum}) for $\Psi\in \mathcal{V}_d$
\begin{align*}
& \left\langle \Psi ,\left\{\sum_{k\neq 0}\widetilde{e}_k b_k^\dagger b_k -\sum_{k\neq 0}\widetilde{e}_k\left(\gamma_k d_k \! + \! \nu_k d_{-k}^\dagger\right)^\dagger \left(\gamma_k d_k \! + \! \nu_k d_{-k}^\dagger\right)\right\}\Psi\right\rangle\\
 & \ \ \lesssim \left(N^{\frac{5\kappa+\lambda_0-1}{2}}+N^{2\kappa+\lambda-1}\right)\left(\braket{\Psi,\widetilde{\mathcal{N}}\Psi}+1\right)\lesssim N^{-\tau}\left(\braket{\Psi,\widetilde{\mathcal{N}}\Psi}+1\right),
\end{align*}
where the operators $b_k$ are introduced above Lemma \ref{Lem:Lambda_Control}. In combination with Eq.~(\ref{Eq:Useful_For_Spectrum}), Eq.~(\ref{Eq:Advanced_Min_Max}) and Eq.~(\ref{Eq:Sum_Comparison}), we therefore obtain for a suitable constant $C>0$
\begin{align*}
     & E_{N,\kappa}^{(d)}\geq 4\pi \mathfrak{a}_{N^{1-\kappa}}N^\kappa (N-1) +  \frac{1}{2}\sum_{k\neq 0}\left\{\sqrt{|k|^4+16\pi \mathfrak{a} N^\kappa |k|^2}-|k|^2-8\pi \mathfrak{a} N^\kappa+\frac{(8\pi \mathfrak{a} N^\kappa)^2}{2|k|^2}\right\}\\
    & \ \ \ \ \ \ \ \ \ \ +\left(1-C N^{-\tau}N^{-\frac{\kappa}{2}}\right)\sup_{\Psi\in \mathcal{V}_d:\|\Psi\|=1}\left\langle  \! \Psi,\sum_{k\neq 0}\widetilde{e}_k b_k^\dagger b_k\Psi \! \right\rangle-CN^{-\tau}.
\end{align*}
Following the work \cite{DN}, respectively by making use of the excitation map $U_N$ introduced in \cite{LNSS}, we note that the operators $b_k$ can be extended from $\mathcal{F}^+_{N-1}$ to operators satisfying the CCR, to be precise there exists a Hilbert space extension $L^2_{\mathrm{sym}} \! \left(\Lambda^N\right)\subseteq \mathfrak{h}$ and operators $\mathfrak{b}_k$ defined on $\mathfrak{h}$, such that the family $\{\mathfrak{b}_k:k\in 2\pi\mathbb{Z}^3\setminus \{0\}\}$ is unitarily equivalent to the standard annihilation operators and $\mathfrak{b}_k \Psi=b_k \Psi$ for $\Psi\in \mathcal{F}^+_{N-1}$. Denoting the $d$-th eigenvalue of $\sum_{k\neq 0}\widetilde{e}_k \mathfrak{b}_k^\dagger \mathfrak{b}_k$ as $\widetilde{\lambda}^{(d)}_{N,\kappa}$ and making use of the fact that $\mathcal{V}_d\subseteq \mathcal{F}^+_{N-1}$ for $N$ large enough, we obtain by the min-max principle
 \begin{align}
 \label{Eq:Space_Change}
     \sup_{\Psi\in \mathcal{V}_d:\|\Psi\|=1}\left\langle  \! \Psi,\sum_{k\neq 0}\widetilde{e}_k b_k^\dagger b_k\Psi \! \right\rangle= \sup_{\Psi\in \mathcal{V}_d:\|\Psi\|=1}\left\langle  \! \Psi,\sum_{k\neq 0}\widetilde{e}_k \mathfrak{b}_k^\dagger \mathfrak{b}_k\Psi \! \right\rangle\geq \widetilde{\lambda}^{(d)}_{N,\kappa}.
 \end{align}
 Since the operators $\mathfrak{b}_k$ are unitarily equivalent to standard annihilation operators, we know that the eigenvalues $\widetilde{\lambda}^{(d)}_{N,\kappa}$ are an enumeration of $\left\{\sum_{k\neq 0} n_k \widetilde{e}_k: n_k\in \mathbb{N}_0\right\}$ in increasing order. Using $\left||k|^2+8\pi \mathfrak{a} N^{\kappa}-2A_k\right|\lesssim N^{2\kappa-1}(1+|k|)$ for $|k|<K$ and $\left|8\pi \mathfrak{a} N^{\kappa}-2B_k\right|\lesssim N^{2\kappa-1}(1+|k|)$, we obtain $|e_k-\sqrt{|k|^4+16\pi \mathfrak{a} N^\kappa |k|^2}|\lesssim N^{2\kappa-1}$ for fixed $k$, and consequently $|\widetilde{\lambda}^{(d)}_{N,\kappa}-\lambda^{(d)}_{N,\kappa}|\lesssim N^{2\kappa-1}\leq N^{-\tau}$. Using that $|\lambda^{(d)}_{N,\kappa}|\lesssim N^{\frac{\kappa}{2}}$, we obtain for suitable $C,C'>0$
 \begin{align}
 \nonumber
      & E_{N,\kappa}^{(d)}- 4\pi \mathfrak{a}_{N^{1-\kappa}}N^\kappa (N-1) -  \frac{1}{2}\sum_{k\neq 0}\left\{\sqrt{|k|^4+16\pi \mathfrak{a} N^\kappa |k|^2}-|k|^2-8\pi \mathfrak{a} N^\kappa+\frac{(8\pi \mathfrak{a} N^\kappa)^2}{2|k|^2}\right\}\\
\label{Eq:Reference_Difference_lambda}
    & \ \ \  \ \ \   \ \ \   \ \ \   \geq \left(1-C' N^{-\tau}N^{-\frac{\kappa}{2}}\right)\lambda^{(d)}_{N,\kappa}- C' N^{-\tau}\geq \lambda^{(d)}_{N,\kappa}-CN^{-\tau}.
 \end{align}
 Together with Theorem \ref{Th:Upper_Bound}, this concludes the proof of Theorem \ref{Th:Beyond_GP}.

 \section{Trial States and their Energy}
 \label{Sec:Trial_States_and_their_Energy}
 In this Section we want to verify the upper bound on $E_{N,\kappa}$ in Theorem \ref{Th:Upper_Bound}. For this purpose let us use the concrete choice $K:=\infty$ for the cut-off parameter and let us keep the term $\frac{1}{2}\sum_{j k,mn} \left(\pi_\mathcal{H} V_{N^{1-\kappa}}\pi_\mathcal{H}\right)_{j k,mn}\psi_{j k}^\dagger \psi_{m n}=\frac{1}{2}\sum_{j k,mn\neq 0} \left( V_{N^{1-\kappa}}\right)_{j k,mn}\psi_{j k}^\dagger \psi_{m n}$, which we have bounded from below by $0$ in Eq.~(\ref{Eq:Throwing_Away_High_Momenta}), yielding the identity
 \begin{align}
 \label{Eq:Full_Representation}
     H'_{N,\kappa}  \!  =  \! \sum_{k\neq 0}e_k\left(\gamma_k d_k \! + \! \nu_k d_{-k}^\dagger\right)^\dagger \! \left(\gamma_k d_k \! + \! \nu_k d_{-k}^\dagger\right) \! + \! \frac{1}{2}\sum_{j k,mn\neq 0} \left( V_{N^{1-\kappa}}\right)_{j k,mn}\psi_{j k}^\dagger \psi_{m n} \! - \! \sum_{i=1}^4 \mathcal{E}_i,
 \end{align}
 with $H'_{N,\kappa}:=H_{N,\kappa}-4\pi \mathfrak{a}_{N^{1-\kappa}}N^\kappa (N-1)- \frac{1}{2}\sum_{k\neq 0}\left\{\sqrt{A_k^2-B_k^2}-A_k+C_k\right\}$. In order to obtain from this representation of $ H_{N,\kappa}$ an upper bound on its eigenvalues $E_{N,\kappa}^{(d)}$, we need to find trial states $\Psi_d$ which simultaneously annihilate the variables $\gamma_k d_k+\nu_k d_{-k}^\dagger$ for $k\neq 0$ and $\psi_{j k}$ for $j,k\neq 0$, at least in an approximate sense. This will be carried out in the following two Subsections \ref{Subsection:Annihilation_I} and \ref{Subsection:Annihilation_II}, using the operators $\mathfrak{b}_k$ introduced above Eq.~(\ref{Eq:Space_Change}) as well as $\mathfrak{a}_k:=\gamma_k \mathfrak{b}_k-\nu_k \mathfrak{b}_{-k}^\dagger$, which is an extension of $a_0^\dagger \frac{1}{\sqrt{a_0 a_0^\dagger}}a_k$ from $\mathcal{F}^+_{N-1}$ to $\mathfrak{h}$.

 \subsection{Annihilation of $\gamma_k d_k+\nu_k d_{-k}^\dagger$}
  \label{Subsection:Annihilation_I}
We start by expressing $\gamma_k d_k+\nu_k d_{-k}^\dagger$ in terms of the extensions $\mathfrak{b}_k$ of the operators $b_k$. In order to do this, note that $\gamma_k d_k+\nu_k d_{-k}^\dagger=b_k-\gamma_k \delta_k-\nu_k \delta_{-k}^\dagger$, where $\delta_k$ is as in Lemma \ref{Lem:Delta_Control}, and $b_k \Psi=\mathfrak{b}_k \Psi$, as well as $a_0^\dagger \frac{1}{\sqrt{a_0 a_0^\dagger}}a_k \Psi=\mathfrak{a}_k \Psi$, for states $\Psi\in \mathcal{F}^+_{N-1}$. Therefore we obtain
\begin{align}
\label{Eq:First_Identification_Variable_I}
\left(\gamma_k d_k+\nu_k d_{-k}^\dagger\right)\Psi=\left(\mathfrak{b}_k+\sum_p \sqrt{N}f_{p,k}\mathfrak{b}^\dagger_{p-k}\mathfrak{a}_p+\epsilon_k\right)\Psi,
\end{align}
for $\Psi\in \mathcal{F}^+_{N-1}$, with $\delta_k''$ as in Lemma \ref{Lem:Delta_Control} and
\begin{align*}
    \epsilon_k=-\nu_k \delta_{-k}^\dagger -\gamma_k \delta_k''+\sum_p\left(\gamma_k\gamma_{p-k}\sqrt{a_0^\dagger a_0}-\sqrt{N}\right) f_{p,k}\mathfrak{b}^\dagger_{p-k}\mathfrak{a}_p+\sum_{p}\gamma_k \nu_{p-k}\sqrt{a_0^\dagger a_0}f_{p,k} \mathfrak{b}_{k-p}\mathfrak{a}_p.
\end{align*}
In the following Lemma \ref{Lemma:Linear_Annihilation_epsilon} we show that the contribution coming from the $\epsilon_k$ can be regarded as a small error.
\begin{lem}
\label{Lemma:Linear_Annihilation_epsilon}
    There exists a constant $C>0$, such that we have for $\kappa<1$ the estimate
    \begin{align*}
        \sum_k e_k\,  \epsilon_k^\dagger \epsilon_k\leq C N^{6\kappa-1}\! \left(\widetilde{\mathcal{N}}+1\right)^3.
    \end{align*}
\end{lem}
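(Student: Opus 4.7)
The plan is to split $\epsilon_k$ into its four natural summands $\epsilon_k^{(1)} := -\nu_k\delta_{-k}^\dagger$, $\epsilon_k^{(2)} := -\gamma_k\delta_k''$, $\epsilon_k^{(3)} := \sum_p(\gamma_k\gamma_{p-k}\sqrt{a_0^\dagger a_0}-\sqrt{N})f_{p,k}\,\mathfrak{b}^\dagger_{p-k}\mathfrak{a}_p$, and $\epsilon_k^{(4)} := \sum_p\gamma_k\nu_{p-k}\sqrt{a_0^\dagger a_0}\,f_{p,k}\,\mathfrak{b}_{k-p}\mathfrak{a}_p$, apply the operator inequality $\epsilon_k^\dagger\epsilon_k\leq 4\sum_{j=1}^4(\epsilon_k^{(j)})^\dagger\epsilon_k^{(j)}$, and estimate each of the four sums $\sum_k e_k(\epsilon_k^{(j)})^\dagger\epsilon_k^{(j)}$ separately. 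Throughout I rely on the bounds $|\nu_k|, |\gamma_k|\lesssim N^{\kappa/4}/\sqrt{|k|}$ for $|k|\lesssim N^{\kappa/2}$ together with $|\nu_k|, |\gamma_k - 1|\lesssim N^\kappa/|k|^2$ globally, the estimate $|f_{p,k}|\lesssim N^{\kappa-1}/(|k|^2+|p-k|^2)$ from Eq.~(\ref{Eq:Definition_f}) combined with Lemma \ref{Lem:Coefficient_Comparison}, and the asymptotic $e_k\simeq |k|\sqrt{|k|^2+N^\kappa}$.

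For $\epsilon_k^{(1)}$ and $\epsilon_k^{(2)}$ I would follow the strategy of Lemma \ref{Lem:Delta_Control}: writing $\delta_k = -\delta_k'+\delta_k''$ with $\delta_k''=w_k a_{-k}^\dagger(a_0/\sqrt{N})h(\mathcal{N}/N)$ and $\delta_k'=\sqrt{a_0^\dagger a_0}\sum_{|\ell|<K}f_{\ell,k}a^\dagger_{\ell-k}a_\ell$, the weighted sum $\sum_k e_k\gamma_k^2 (\delta_k'')^\dagger\delta_k''$ collapses after the substitution $a_{-k}a_{-k}^\dagger = a_{-k}^\dagger a_{-k}+1$ to a kinetic-type operator $\sum_k e_k\gamma_k^2|w_k|^2 a_{-k}^\dagger a_{-k}$ plus a scalar multiple of $h(\mathcal{N}/N)^2$ times $\sum_k e_k\gamma_k^2|w_k|^2$. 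Using $h(\mathcal{N}/N)^2\lesssim(\mathcal{N}+1)^2/(N(N-\mathcal{N}))$, $|w_k|^2\lesssim N^{2\kappa}/|k|^4$, and the case split $|k|\lesssim N^{\kappa/2}$ versus $|k|\gtrsim N^{\kappa/2}$, the bookkeeping produces a contribution of order at most $N^{6\kappa-1}(\mathcal{N}+1)^3$. The $\delta'_k$ piece is handled in the same way via the pointwise bound $|f_{\ell,k}|\lesssim N^{\kappa-1}/|k|^2$, and Lemma \ref{Lem:True_Particle_Number_Comparison} then converts $\mathcal{N}$ into $\widetilde{\mathcal{N}}$ at the cost of a factor $N^{3\kappa/2}$ which is absorbed by the target exponent.

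For $\epsilon_k^{(3)}$ and $\epsilon_k^{(4)}$ I would further split the prefactor as $\gamma_k\gamma_{p-k}\sqrt{a_0^\dagger a_0}-\sqrt{N}=\sqrt{N}(\gamma_k\gamma_{p-k}-1)+\gamma_k\gamma_{p-k}(\sqrt{a_0^\dagger a_0}-\sqrt{N})$; the second piece trades the $\sqrt{N}$ for a factor of $\mathcal{N}$ via $|\sqrt{a_0^\dagger a_0}-\sqrt{N}|\lesssim\mathcal{N}/\sqrt{N}$, while the first piece uses the smallness of $\gamma_k\gamma_{p-k}-1$ to cancel the $\sqrt{N}$ and provide extra momentum decay. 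The residual quadratic forms $\sum_p f_{p,k}\,\mathfrak{b}^\dagger_{p-k}\mathfrak{a}_p$ and $\sum_p f_{p,k}\,\mathfrak{b}_{k-p}\mathfrak{a}_p$ are then estimated by a Schur-type test in $p$: substituting $\mathfrak{a}_p = \gamma_p\mathfrak{b}_p - \nu_p\mathfrak{b}_{-p}^\dagger$ and commuting, these are bounded by $(\widetilde{\mathcal{N}}+1)^2$ times a suitable operator norm of $f_{p,k}$ in $p$. The main technical obstacle I anticipate is keeping the Schur bound uniform in $k$ with enough decay to absorb the $e_k$-weight without exceeding the target exponent $N^{6\kappa-1}$; this requires the same careful case distinction between $|k|\lesssim N^{\kappa/2}$ and $|k|\gtrsim N^{\kappa/2}$ already used for $\epsilon_k^{(1)}$, together with the precise form of $f_{p,k}$ extracted from Eq.~(\ref{Eq:Definition_f}).
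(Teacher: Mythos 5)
Your overall strategy coincides with the paper's: the same four-term decomposition of $\epsilon_k$, a term-by-term quadratic estimate, the same split of the prefactor $\gamma_k\gamma_{p-k}\sqrt{a_0^\dagger a_0}-\sqrt{N}$ into a $(\gamma_k\gamma_{p-k}-1)$ piece and a $(\sqrt{a_0^\dagger a_0}-\sqrt{N})$ piece, Cauchy--Schwarz/Schur bounds in $p$ using the decay of $f_{p,k}$ (Lemma \ref{Lem:Coefficient_Comparison}, Lemma \ref{Lem:Increased_Decay}), and a final conversion of $\mathcal{N}$ into $\widetilde{\mathcal{N}}$ via Lemma \ref{Lem:True_Particle_Number_Comparison}. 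Your treatment of $\epsilon_k^{(3)},\epsilon_k^{(4)}$ is essentially the paper's argument. However, there is a concrete bookkeeping gap in your treatment of $\epsilon_k^{(1)},\epsilon_k^{(2)}$. First, the conversion cost: Lemma \ref{Lem:True_Particle_Number_Comparison} gives $\mathcal{N}+1\lesssim N^{3\kappa/2}\big(\widetilde{\mathcal{N}}+1\big)$, so replacing $(\mathcal{N}+1)^3$ by $(\widetilde{\mathcal{N}}+1)^3$ costs $N^{9\kappa/2}$, not $N^{3\kappa/2}$. If your intermediate bound were really of order $N^{6\kappa-1}(\mathcal{N}+1)^3$, as you state, the conversion would produce $N^{21\kappa/2-1}(\widetilde{\mathcal{N}}+1)^3$, which overshoots the target. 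The argument only closes because the $\delta''$ contribution carries two powers of $N^{-1}$ (schematically $w_k^2\,(\mathcal{N}+1)^2/N^2$ times one commutator power of $\mathcal{N}+1$), i.e.\ is of size $N^{3\kappa-2}(\mathcal{N}+1)^3$ before conversion, while the $\delta'$ contribution is only quadratic in $\mathcal{N}$, of size $N^{3\kappa-1}(\mathcal{N}+1)^2$; after conversion these are $N^{15\kappa/2-2}(\widetilde{\mathcal{N}}+1)^3$ and $N^{6\kappa-1}(\widetilde{\mathcal{N}}+1)^2$, which fit under $N^{6\kappa-1}(\widetilde{\mathcal{N}}+1)^3$. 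This is exactly what the paper packages as the unrestricted bound $\delta_k^\dagger\delta_k+\delta_k\delta_k^\dagger\lesssim |k|^{-4}N^{5\kappa-1}\big(\widetilde{\mathcal{N}}+1\big)^3$ at the end of Lemma \ref{Lem:Particle_Number_Comparison}, which it then multiplies by $\sum_k e_k\nu_k^2|k|^{-4}\lesssim N^{\kappa}$, respectively by $N^{-1}\sum_k e_k\gamma_k^2w_k^2$.

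Second, and related: the lemma is an unrestricted operator inequality, so you cannot use $h(\mathcal{N}/N)^2\lesssim(\mathcal{N}+1)^2/\big(N(N-\mathcal{N})\big)$ and then pretend $(N-\mathcal{N})^{-1}\lesssim N^{-1}$; that step fails on states with $\mathcal{N}$ close to $N$ (it is legitimate in Lemma \ref{Lem:Delta_Control} only because there the factor $(N-\mathcal{N})^{-1}$ is kept in the conclusion and the estimate is later restricted to $\mathcal{F}^+_M$ with $M\leq rN$). To get the global bound you must exploit the adjacent factor $a_0/\sqrt{N}$ in $\delta_k''=w_k a_{-k}^\dagger\frac{a_0}{\sqrt{N}}h\!\left(\frac{\mathcal{N}}{N}\right)$, e.g.\ via $\frac{a_0^\dagger a_0}{N}\,h\!\left(\frac{\mathcal{N}}{N}\right)^2\lesssim\frac{(\mathcal{N}+1)^2}{N^2}$, which is how the unrestricted estimate of Lemma \ref{Lem:Particle_Number_Comparison} is obtained. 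With these two corrections your sketch reproduces the paper's proof; without them the stated chain of inequalities does not yield the claimed $N^{6\kappa-1}\big(\widetilde{\mathcal{N}}+1\big)^3$.
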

\begin{proof}
    Due to the positivity of $e_k$ we can estimate the square separately for each term in the definition of $\epsilon_k$. Starting with $-\nu_k \delta_{-k}^\dagger$ we obtain by a slight adaptation of Eq.~(\ref{Eq:Strong_Delta_Control})
    \begin{align*}
        \sum_{k\neq 0}e_k \left(\nu_k \delta_{-k}^\dagger\right)^\dagger \nu_k \delta_{-k}^\dagger\lesssim \left(\sum_{k\neq 0}e_k \nu_k^2 |k|^{-4}\right) N^{5\kappa-1}\! \left(\widetilde{\mathcal{N}}+1\right)^3\lesssim N^{6\kappa-1}\! \left(\widetilde{\mathcal{N}}+1\right)^3.
    \end{align*}
    By a slight modification of the upper bound on $\left(\delta_k''\right)^\dagger \delta_k''$ obtained in Lemma \ref{Lem:Particle_Number_Comparison}, we have
    \begin{align*}
        \sum_{k\neq 0}e_k \left(\delta_k''\right)^\dagger \delta_k''\lesssim \frac{\sum_{k\neq 0}e_k \gamma_k^2 w_k^2}{N}N^{\frac{11\kappa}{2}-1} \! \left(\widetilde{\mathcal{N}}+1\right)\lesssim N^{\frac{11\kappa}{2}-1} \! \left(\widetilde{\mathcal{N}}+1\right).
    \end{align*}
    Regarding the term $\sum_p\left(\gamma_k\gamma_{p-k}\sqrt{\frac{a_0^\dagger a_0}{N}}-1\right) \sqrt{N}f_{p,k}\mathfrak{b}^\dagger_{p-k}\mathfrak{a}_p=X_k+\widetilde{X}_k$ with 
    \begin{align*}
      X_k: & =\sum_p\left(\sqrt{\frac{a_0^\dagger a_0}{N}}-1\right) \sqrt{N}f_{p,k}\mathfrak{b}^\dagger_{p-k}\mathfrak{a}_p, \\  
      \widetilde{X}_k: & =\sum_p\left(\gamma_k\gamma_{p-k}-1\right)\sqrt{\frac{a_0^\dagger a_0}{N}} \sqrt{N}f_{p,k}\mathfrak{b}^\dagger_{p-k}\mathfrak{a}_p,
    \end{align*}
    note that $\left|\sqrt{\frac{a_0^\dagger a_0}{N}}-1\right|^2\lesssim \frac{\mathcal{N}^2}{N^2}$ and $\left|\left(\gamma_k\gamma_{p-k}-1\right)\sqrt{\frac{a_0^\dagger a_0}{N}}\right|^2\lesssim \frac{N^{\kappa}}{|k|^{2}}+\frac{N^{\kappa}}{|p-k|^{2}}$. Since $e_k\lesssim N^{\frac{\kappa}{2}}|k|^2$ we have $\sum_{k\neq 0}e_k X_k^\dagger X_k \lesssim N^{\frac{\kappa}{2}}\sum_{k\neq 0}|k|^2 X_k^\dagger X_k\lesssim N^{\frac{3\kappa}{2}}\frac{\mathcal{N}^3}{N^2}\lesssim N^{6\kappa-2}(\widetilde{\mathcal{N}}+1)^3$ and furthermore
    \begin{align*}
         & \sum_{k\neq 0} \! |k|^2  \! \widetilde{X}_k^\dagger   \widetilde{X}_k  \! \lesssim \!  N^{\kappa} \! \sup_{p} \! \left\{ \!  \! \sum_k \!  |k|^2 \big(\sqrt{N}f_{p,k}\big)^2 \! \!  \left(\frac{1}{|k|^2} \! + \! \frac{1}{|p-k|^2}\right) \!  \! \right\} \! \sum_p  \! \mathfrak{a}_p^\dagger  \! \left( \! \widetilde{\mathcal{N}} \! + \! 1 \! \right) \!  \mathfrak{a}_p \! \lesssim \!  \frac{N^{\frac{9\kappa}{2}}}{N}\! \left( \! \widetilde{\mathcal{N}} \! + \! 1 \! \right)^2,
    \end{align*}
  where we have used $\mathcal{N}^3\lesssim N^{\frac{9\kappa}{2}}\left(\widetilde{\mathcal{N}}+1\right)^m$, $\sum_k |k|^2 \big(\sqrt{N}f_{p,k}\big)^2 \! \left(\frac{1}{|k|^2} \! + \! \frac{1}{|p-k|^2}\right)\lesssim N^{2\kappa-1}$ and $\sum_p a_p^\dagger \left(\widetilde{\mathcal{N}} \! + \! 1\right) a_p\lesssim N^{\frac{3\kappa}{2}}\! \left(\widetilde{\mathcal{N}}+1\right)^2$. The final term in $\epsilon_k$ can be estimated similarly.
\end{proof}
In order to cancel the term $\sum_p \sqrt{N}f_{p,k}\mathfrak{b}^\dagger_{p-k}\mathfrak{a}_p$ in Eq.~(\ref{Eq:First_Identification_Variable_I}), let us introduce the operators $G:=\frac{1}{2}\sum_{p,k}\sqrt{N}f_{p,k}\mathfrak{b}_k^\dagger \mathfrak{b}_{p-k}^\dagger\mathfrak{a}_{p}$ and $\Theta_\ell:=\frac{1}{2}\sum_{p}\nu_\ell \sqrt{N}f_{-\ell,p}\mathfrak{b}^\dagger_{-(\ell+p)}\mathfrak{b}^\dagger_p$, which allows us to write
\begin{align}
\label{Eq:Second_Identification_Variable_I}
\left(\gamma_k d_k+\nu_k d_{-k}^\dagger\right) \! \left(1 \! - \! G \right)\Psi =\left(\mathfrak{b}_k - \Theta_k  \! + \! \epsilon_k \! \left(1 \! - \! G \right) \! - \!  G\mathfrak{b}_k \! - \! \left(\sum_p \sqrt{N}f_{p,k}\mathfrak{b}^\dagger_{p-k}\mathfrak{a}_p\right) \! G\right)\Psi 
\end{align}
for $\Psi\in \mathcal{F}^+_{N-2}$, where we have used that $(1-G)\mathcal{F}^+_{N-2}\subset \mathcal{F}^+_{N-1}$.
\begin{cor}
\label{Corollary:Annihilation_I}
    Let us define the coefficients $\Pi_k:=\frac{1}{4}\sum_{j\neq -k} e_{j+k}\nu_{j+k}^2 N f_{j+k,j}\! \left(f_{j+k,j}+f_{j+k,k}\right)$. Then there exists a $C>0$, such that
    \begin{align}
    \label{Eq:Kinetic_Epsilon_Total}
        & \sum_k e_k\Xi_k^\dagger\,  \Xi_k\Big|_{\mathcal{F}^+_{N-2}}  \leq C N^{\frac{11\kappa}{2}-1} \! \left(\widetilde{\mathcal{N}}+1\right)^2 \! \left(\sum_k e_k \mathfrak{b}_k^\dagger \mathfrak{b}_k+1\right)\Big|_{\mathcal{F}^+_{N-2}}
    \end{align}
   with $\Xi_k:=\left(\gamma_k d_k+\nu_k d_{-k}^\dagger\right) \! \left(1 \! - \! G \right)-\mathfrak{b}_k+\Theta_k$. Furthermore we have $|\sum_k \Pi_k|\leq C  N^{\frac{9\kappa}{2}}\frac{\log N}{N}$ as well as $\pm \left(\sum_\ell e_\ell \Theta_\ell^\dagger \Theta_\ell-\sum_k \Pi_k\right)\leq C N^{\frac{9\kappa}{2}-1} \left(\widetilde{\mathcal{N}}+1\right)^2 \! \left(\sum_k e_k \mathfrak{b}_k^\dagger \mathfrak{b}_k+1\right)$.
\end{cor}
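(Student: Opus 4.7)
By Eq.~(\ref{Eq:Second_Identification_Variable_I}) the operator $\Xi_k$ decomposes as
\begin{align*}
\Xi_k = \epsilon_k(1-G) - G\mathfrak{b}_k - \Big(\sum_p \sqrt{N}f_{p,k}\mathfrak{b}^\dagger_{p-k}\mathfrak{a}_p\Big) G,
\end{align*}
so my plan for (\ref{Eq:Kinetic_Epsilon_Total}) is to bound $\Xi_k^\dagger \Xi_k$ by three times the sum of the individual squares and estimate each contribution separately. The first is controlled by Lemma \ref{Lemma:Linear_Annihilation_epsilon} together with an \emph{a priori} estimate of the form $G^\dagger G \lesssim N^{\frac{5\kappa}{2}-1}(\widetilde{\mathcal{N}}+1)^3$, which I would derive by Cauchy--Schwarz on the cubic operator $G=\tfrac12\sum_{p,k}\sqrt{N}f_{p,k}\mathfrak{b}_k^\dagger\mathfrak{b}^\dagger_{p-k}\mathfrak{a}_p$ using the Schur-type moments on $\sqrt{N}f_{p,k}$ coming from Lemma \ref{Lem:Coefficient_Comparison} and the particle-number comparison from Lemma \ref{Lem:True_Particle_Number_Comparison}. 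For the remaining two pieces I would use a weighted Cauchy--Schwarz that pulls out the residual $\mathfrak{b}_k$, respectively $\mathfrak{a}_p$, factor; the kinetic weight $e_k\lesssim N^{\kappa/2}|k|^2$ is then absorbed into $\sum_k e_k\mathfrak{b}_k^\dagger\mathfrak{b}_k$ on the right-hand side, while the factor $(\widetilde{\mathcal{N}}+1)^2$ appears from the quadratic pieces of $G$ after normal ordering and a second application of Lemma \ref{Lem:True_Particle_Number_Comparison}.

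The bound on $\sum_k \Pi_k$ is a deterministic lattice-sum estimate. Combining $e_{j+k}\lesssim N^{\kappa/2}|j+k|^2$, the estimates $N|f_{j+k,j}|\lesssim N^\kappa/(|j+k|^2+|j|^2)$ and $N|f_{j+k,k}|\lesssim N^\kappa/(|j+k|^2+|k|^2)$ from Lemma \ref{Lem:Coefficient_Comparison}, and $\nu_{j+k}^2\lesssim \min\!\big(N^{\kappa/2}/|j+k|,\,N^{2\kappa}/|j+k|^4\big)$, each $(j,k)$-summand of $\sum_k\Pi_k$ is bounded by $e_{j+k}\nu_{j+k}^2\cdot N^{2\kappa-1}$ times a rational kernel in $(|j|,|k|,|j+k|)$. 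Splitting the $j+k$-sum at the scale $N^{\kappa/2}$ — below which the inner sum over the free variable contributes the logarithm, above which the $1/|j+k|^4$ decay suppresses it — yields the claimed $N^{9\kappa/2}\log N/N$.

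The third estimate requires expanding
\begin{align*}
\Theta_\ell^\dagger \Theta_\ell = \tfrac{1}{4}\,\nu_\ell^2\,N\sum_{p,p'}\overline{f_{-\ell,p}}\,f_{-\ell,p'}\;\mathfrak{b}_{p}\mathfrak{b}_{-(\ell+p)}\mathfrak{b}^\dagger_{-(\ell+p')}\mathfrak{b}^\dagger_{p'}
\end{align*}
and normal-ordering the four bosonic operators via the CCR. The two fully-contracted Wick pairings (one diagonal $p=p'$, one crossed $p'=-(\ell+p)$) combine, after the change of variables $\ell=-(j+k)$, into exactly $\sum_k\Pi_k$; the asymmetric combination $f_{j+k,j}(f_{j+k,j}+f_{j+k,k})$ in the definition of $\Pi_k$ is precisely the signature of these two inequivalent pairings together with the reality relation $f_{-\ell,-k}=f_{\ell,k}$. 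The normal-ordered fourth-order remainder is controlled by a Schur test on the kernel $\overline{f_{-\ell,p}}f_{-\ell,p'}$ weighted by $e_\ell\nu_\ell^2 N$, extracting a factor of $\widetilde{\mathcal{N}}$ together with a factor of $\sum_k e_k\mathfrak{b}_k^\dagger \mathfrak{b}_k$, and the singly-contracted quadratic piece is absorbed similarly after bringing out the $|\ell|^{-2}$ weight from $e_\ell\nu_\ell^2 N|f_{-\ell,p}|^2$. The main obstacle of the whole corollary is precisely this combinatorial identification of the constant — verifying that the two distinct Wick pairings combine exactly into $\Pi_k$ as defined — and the subsequent check that the surviving operator-valued remainders fit into the announced prefactor $N^{9\kappa/2-1}(\widetilde{\mathcal{N}}+1)^2$.
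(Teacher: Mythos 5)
Your overall strategy coincides with the paper's: you use the identity in Eq.~(\ref{Eq:Second_Identification_Variable_I}) to reduce $\Xi_k$ to the three pieces $\epsilon_k(1-G)$, $G\mathfrak{b}_k$ and $\bigl(\sum_p\sqrt{N}f_{p,k}\mathfrak{b}^\dagger_{p-k}\mathfrak{a}_p\bigr)G$, control them via Lemma \ref{Lemma:Linear_Annihilation_epsilon}, an operator bound on $G^\dagger G$ and Cauchy--Schwarz; and for the last statement you normal-order $\Theta_\ell^\dagger\Theta_\ell$, correctly identify the two inequivalent Wick contractions (after $\ell=-(j+k)$) with the asymmetric combination $f_{j+k,j}(f_{j+k,j}+f_{j+k,k})$ defining $\Pi_k$, and bound the quadratic and quartic remainders by Schur-type estimates. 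This is exactly the paper's route. Your intermediate claim $G^\dagger G\lesssim N^{\frac{5\kappa}{2}-1}(\widetilde{\mathcal{N}}+1)^3$ is stated without proof and is likely slightly optimistic (the paper uses $G^\dagger G\lesssim N^{2\kappa-1}(\mathcal{N}+1)^2\lesssim N^{5\kappa-1}(\widetilde{\mathcal{N}}+1)^2$), but any such bound below $N^{\frac{11\kappa}{2}-1}$ suffices, so this is harmless.

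There is, however, a genuine flaw in your justification of $\bigl|\sum_k\Pi_k\bigr|\lesssim N^{\frac{9\kappa}{2}}\frac{\log N}{N}$. After summing out the free variable (which yields $\sum_j(|j|^2+|k|^2)^{-2}\lesssim |q|^{-1}$ with $q=j+k$, not a logarithm), the region $|q|\leq N^{\frac{\kappa}{2}}$ contributes \emph{no} logarithm at all; the logarithm arises from the opposite regime, $N^{\frac{\kappa}{2}}<|q|\leq N$, where $e_q\lesssim |q|^2$ and $\nu_q^2\lesssim N^{2\kappa}|q|^{-4}$ give a summand of order $N^{2\kappa}|q|^{-3}$. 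Moreover, your assertion that above the splitting scale ``the $1/|j+k|^4$ decay suppresses it'' is false as stated: summing $N^{2\kappa}|q|^{-3}$ over the whole lattice $|q|>N^{\frac{\kappa}{2}}$ diverges logarithmically, so the pointwise bound on $\nu_q$ alone cannot close the estimate. One needs the integrated decay $\sum_q |q|^2\nu_q^2\lesssim N^{1+\kappa}$ from Lemma \ref{Lem:Increased_Decay} to control the tail $|q|>N$ (giving $\sum_{|q|>N}|q|\nu_q^2\lesssim N^{\kappa}$), which is precisely how the paper truncates the sum at $|q|\sim N$ and where the factor $\log N$ is generated. The same tail mechanism is also needed when you bound the quartic remainder of $\sum_\ell e_\ell\Theta_\ell^\dagger\Theta_\ell$, so you should incorporate Lemma \ref{Lem:Increased_Decay} explicitly there as well; with that correction the rest of your argument goes through as in the paper.
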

\begin{proof}
Defining $X_k:=\left(\sum_p \sqrt{N}f_{p,k}\mathfrak{b}^\dagger_{p-k}\mathfrak{a}_p\right) \! G$, we obtain by Eq.~(\ref{Eq:Second_Identification_Variable_I})
\begin{align}
\label{Eq:Second_Identification_Variable_I_alt}
    \sum_k e_k\Xi_k^\dagger\,  \Xi_k\Big|_{\mathcal{F}^+_{N-2}} \! \lesssim  \! \sum_k e_k (1\! - \! G)^\dagger \epsilon_{k}^\dagger \epsilon_{k}(1\! - \! G)\Big|_{\mathcal{F}^+_{N-2}} \!  \! + \! \sum_{k\neq 0} e_k \mathfrak{b}_k^\dagger G^\dagger G \mathfrak{b}_k\Big|_{\mathcal{F}^+_{N-2}} \! \! + \! \sum_{k\neq 0}e_k X_k^\dagger X_k\Big|_{\mathcal{F}^+_{N-2}}  \!  \! .
\end{align}
  First of all note that we have by Lemma \ref{Lemma:Linear_Annihilation_epsilon}
\begin{align*}
\sum_{ k\neq 0} & e_k (1\! - \! G)^\dagger \epsilon_{k}^\dagger \epsilon_{k}(1\! - \! G)\lesssim N^{6\kappa-1}\! (1\! - \! G)^\dagger \left(\widetilde{\mathcal{N}}+1\right)^3(1\! - \! G)\\
 & \lesssim N^{6\kappa-1}\left(\widetilde{\mathcal{N}}+1\right)^3\lesssim N^{\frac{11\kappa}{2}-1}\left(\widetilde{\mathcal{N}}+1\right)^2\left(\sum_k e_k \mathfrak{b}_k^\dagger \mathfrak{b}_k+1\right).
\end{align*}  
Furthermore we have that $G^\dagger G\lesssim N^{2\kappa-1}\! \left(\mathcal{N}+1\right)^2\lesssim N^{5\kappa-1}\! \left(\widetilde{\mathcal{N}}+1\right)^2$, and consequently $\sum_{k\neq 0} e_k \mathfrak{b}_k^\dagger G^\dagger G \mathfrak{b}_k\lesssim N^{5\kappa-1}\! \left(\widetilde{\mathcal{N}}+1\right)^2 \! \left(\sum_k e_k \mathfrak{b}_k^\dagger \mathfrak{b}_k+1\right)$. Regarding the $X_k$ term, note that 
\begin{align*}
    \sum_{k\neq 0}e_k \! \left( \! \sum_p \sqrt{N}f_{p,k}\mathfrak{b}^\dagger_{p-k}\mathfrak{a}_p \! \right)^\dagger  \! \left( \! \sum_p \sqrt{N}f_{p,k}\mathfrak{b}^\dagger_{p-k}\mathfrak{a}_p \! \right) \! \lesssim  \! N^\kappa \left(\mathcal{N} \! + \! \sum_{\ell\neq 0}\mathfrak{b}_\ell^\dagger \mathcal{N}\mathfrak{b}_\ell\right) \! \lesssim  \! N^{\frac{5\kappa}{2}} \! \left(\widetilde{\mathcal{N}}^2 \! + \! 1\right),
\end{align*}
where we have used $\sum_{k\neq 0}|k|^2 f_{p,k}^2=\sum_{k\neq 0}|k|^2 f_{p,p-k}^2\lesssim N^{\kappa -1 }$ by Lemma \ref{Lem:Increased_Decay}, and therefore $\sum_{k\neq 0}e_k X_k^\dagger X_k\lesssim N^{\frac{5\kappa}{2}} G^\dagger \left(\widetilde{\mathcal{N}}^2 \! + \! 1\right) G\lesssim N^{\frac{11\kappa}{2}-1}\left(\widetilde{\mathcal{N}}^4 \! + \! 1\right) $. This concludes the proof of Eq.~(\ref{Eq:Kinetic_Epsilon_Total}) by Eq.~(\ref{Eq:Second_Identification_Variable_I_alt}). In order to estimate $\sum_\ell e_\ell \Theta_\ell^\dagger \Theta_\ell$, let us define the operator $(\Pi)_{jk,mn}:=\frac{1}{4}\delta_{j+k=m+n}e_{j+k}\nu_{j+k}^2 N f_{j+k,j} f_{m+n,m}$, which allows us to write
\begin{align*}
    \sum_\ell e_\ell \Theta_\ell^\dagger \Theta_\ell=\sum_{k\neq 0} \Pi_k \! \left(2\mathfrak{b}_k^\dagger \mathfrak{b}_k+1\right)\! +\sum_{jk,mn}(\Pi)_{jk,mn}\mathfrak{b}_k^\dagger \mathfrak{b}_j^\dagger \mathfrak{b}_m \mathfrak{b}_n.
\end{align*}
Note that $|N f_{j+k,j}|\lesssim \frac{N^\kappa}{|j|^2+|k|^2}$ by Lemma \ref{Lem:Coefficient_Comparison} and $e_k\lesssim N^{\frac{\kappa}{2}}|k|^2$, and therefore
\begin{align*}
 |\Pi_k|\lesssim   N^{\frac{5\kappa}{2}-1}\sum_{j\neq -k}\frac{|j+k|\nu_{j+k}^2}{ \left(|j|^2  + |k|^2\right)^2}\lesssim N^{\frac{5\kappa}{2}-1}\sum_{j\neq 0}\frac{|j|\nu_{j}^2}{ \left(|j|^2  + |k|^2\right)^2}.
\end{align*}
Again by Lemma \ref{Lem:Coefficient_Comparison} we have $0\leq \nu_j\lesssim \frac{N^\kappa}{|j|^2}$, which immediately gives us $|\Pi_k|\lesssim N^{\frac{9\kappa}{2}-1}$ and therefore $\pm \sum_{k\neq 0} \Pi_k\mathfrak{b}_k^\dagger \mathfrak{b}_k\lesssim N^{\frac{9\kappa}{2}-1}\widetilde{\mathcal{N}}$. Making use of the fact that
\begin{align*}
   \sum_{k}\frac{1}{\left(|j|^2  + |k|^2\right)^2}\lesssim \int_{\mathbb{R}^3}\frac{\mathrm{d}x}{\left(|j|^2+|x|^2\right)^2}=\left(\int_{\mathbb{R}^3}\frac{\mathrm{d}x}{\left(1+|x|^2\right)^2}\right)\frac{1}{|j|}\lesssim \frac{1}{|j|}
\end{align*}
we obtain $\left|\sum_{0<|j|\leq N}\sum_k \frac{|j|\nu_{j}^2}{ \left(|j|^2  + |k|^2\right)^2}\right|\lesssim N^{2\kappa}\sum_{0<|j|\leq N}|j|^{-3}\lesssim N^{2\kappa}\log N$. Regarding the left over part note that $\left|\sum_{|j|>N}\sum_k \frac{|j|\nu_{j}^2}{ \left(|j|^2  + |k|^2\right)^2}\right|\lesssim \sum_{|j|>N}|j| \nu_j^2\leq N^{-1}\sum_{|j|>0}|j|^2 \nu_j^2\lesssim N^{\kappa-1}$, where we have used Lemma \ref{Lem:Increased_Decay}. Finally we have $\pm \Pi\lesssim N^{\frac{9\kappa}{2}-1}(-\Delta_2)$ and therefore we conclude that $\pm \sum_{jk,mn}(\Pi)_{jk,mn}\mathfrak{b}_k^\dagger \mathfrak{b}_j^\dagger \mathfrak{b}_m \mathfrak{b}_n\lesssim N^{\frac{9\kappa}{2}-1} \! \left(\widetilde{\mathcal{N}}+1\right)\left(\sum_k e_k \mathfrak{b}_k^\dagger \mathfrak{b}_k+1\right)$.
\end{proof}

\subsection{Annihilation of $\psi_{j k}$}
 \label{Subsection:Annihilation_II}
Using the definition of $f_{j,k}$ in Eq.~(\ref{Eq:Definition_f}) and introducing $\chi:=\left(a_0^\dagger \frac{1}{\sqrt{a_0 a_0^\dagger}}\right)^2$, we can rewrite
\begin{align}
\label{Eq:psi_in_terms_of_b}
 \chi \psi_{j k}\Psi=\left(\mathfrak{b}_j\mathfrak{b}_k +  \sqrt{N}f_{k+j,k}\mathfrak{a}_{k+j}+\epsilon_{j k}\right)\Psi 
\end{align}
for states $\Psi \in \mathcal{F}^{+}_{N-2}$ and $j,k\neq 0$, with
\begin{align*}
    \epsilon_{j k}: & =\delta_{j+k=0}\! \left(w_k-2\gamma_k \nu_k \right)+(\gamma_j \gamma_k-1)\mathfrak{b}_j \mathfrak{b}_k-\nu_j \gamma_k \mathfrak{b}_{-j}^\dagger \mathfrak{b}_k-\nu_k \gamma_j \mathfrak{b}_{-k}^\dagger \mathfrak{b}_j+\nu_k \nu_j\mathfrak{b}_{-k}^\dagger \mathfrak{b}_j^\dagger \\
    & \ \ \ \  \ \ + \delta_{j+k=0}w_k\left[\chi \frac{a_0^2}{N}-1\right]+\sum_{p\neq 0}\left(\sqrt{a_0^\dagger a_0-1}-\sqrt{N}\right)f_{p,k}\mathfrak{a}_k.
\end{align*}
As the following Lemma \ref{Lem:Potential_Epsilon} demonstrates, the operators $\epsilon_{j k}$ can be regarded as being small.
\begin{lem}
\label{Lem:Potential_Epsilon}
   We have the estimate $\sum_{j k, mn\neq 0}\left(V_{N^{1-\kappa}}\right)_{jk,mn}\epsilon_{jk}^\dagger \epsilon_{mn}\lesssim N^{4\kappa -1}\! \left(\widetilde{\mathcal{N}}^2+1\right)$.
\end{lem}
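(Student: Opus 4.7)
The plan is to treat the seven summands of $\epsilon_{jk}$ separately and exploit the positivity of $V$ as a function on $\mathbb{R}^3$. Since $V\geq 0$, the kernel $V_{N^{1-\kappa}}(x-y)$ is pointwise non-negative, and for any operator-valued family $\{\xi_{jk}\}$ the form $\sum_{jk,mn}(V_{N^{1-\kappa}})_{jk,mn}\xi_{jk}^\dagger \xi_{mn}$ is non-negative in the operator sense. In particular the operator-valued Cauchy–Schwarz inequality $(X+Y)^\dagger(X+Y)\leq 2X^\dagger X+2Y^\dagger Y$ lifts to
\begin{align*}
\sum_{jk,mn}(V_{N^{1-\kappa}})_{jk,mn}\epsilon_{jk}^\dagger \epsilon_{mn}\leq C\sum_{i=1}^{7}\sum_{jk,mn}(V_{N^{1-\kappa}})_{jk,mn}\big(\epsilon_{jk}^{(i)}\big)^\dagger \epsilon_{mn}^{(i)},
\end{align*}
where $\epsilon_{jk}=\sum_i \epsilon_{jk}^{(i)}$ is the decomposition visible in the definition of $\epsilon_{jk}$.

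The central tool is the Sobolev-type operator bound $V_{N^{1-\kappa}}\leq C(1-\Delta_2)$ on $L^2(\Lambda^2)$, uniform in $\kappa$ and $N$, which follows from the identity $\int V_L(x-y)|\Phi(x,y)|^2\, dx\, dy=\int_0^1 \big(\int V_L(z)|\Phi(z+y,y)|^2 dz\big)dy$ combined with Hölder and the embedding $H^1\hookrightarrow L^6$. Translating this into Fourier coefficients and lifting to operator-valued $\xi_{jk}$ (by testing against an arbitrary state $\phi$ and applying the scalar inequality to $\Phi(x,y)=\sum_{jk}(\xi_{jk}\phi)e^{ijx}e^{iky}$) yields the key estimate
\begin{align*}
\sum_{jk,mn}(V_{N^{1-\kappa}})_{jk,mn}\xi_{jk}^\dagger \xi_{mn}\leq C\sum_{jk}(1+|j|^2+|k|^2)\,\xi_{jk}^\dagger \xi_{jk}.
\end{align*}
Combined with the bounds from Lemma \ref{Lem:Coefficient_Comparison} (giving $\nu_k\lesssim N^\kappa/|k|^2$ globally and $\gamma_k,\nu_k\lesssim N^{\kappa/4}/|k|^{1/2}$ for $|k|\lesssim N^{\kappa/2}$), together with $\gamma_k-1\leq \nu_k^2$ and $|w_k-2\gamma_k\nu_k|\lesssim N^{2\kappa}/|k|^4$, the four quadratic contributions $(\gamma_j\gamma_k-1)\mathfrak{b}_j\mathfrak{b}_k$, $\nu_j\gamma_k\mathfrak{b}_{-j}^\dagger\mathfrak{b}_k$, $\nu_k\gamma_j\mathfrak{b}_{-k}^\dagger\mathfrak{b}_j$ and $\nu_k\nu_j\mathfrak{b}_{-k}^\dagger\mathfrak{b}_j^\dagger$ reduce, after the Sobolev bound, to weighted sums of the form $\sum_{jk}(1+|j|^2+|k|^2)(\nu_j^2+\nu_k^2)\cdot (\gamma_j^2\vee 1)\cdot \mathfrak{b}^\dagger\mathfrak{b}^\dagger\mathfrak{b}\mathfrak{b}$, which after normal ordering against $\widetilde{\mathcal{N}}^2$ produce the announced $N^{4\kappa-1}(\widetilde{\mathcal{N}}^2+1)$ bound.

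For the three "small" contributions the argument is simpler. The c-number piece $\delta_{j+k=0}(w_k-2\gamma_k\nu_k)$ contributes $\sum_{k,n}(V_{N^{1-\kappa}})_{(-k)k,(-n)n}(w_k-2\gamma_k\nu_k)(w_n-2\gamma_n\nu_n)$, which with $|(V_L)_{jk,mn}|\lesssim N^{\kappa-1}$ and $\sum_k|w_k-2\gamma_k\nu_k|\lesssim N^{2\kappa}$ is controlled by $N^{5\kappa-1}$, and the factor $\delta_{j+k=0}w_k[\chi a_0^2/N-1]$ acquires an extra $\mathcal{N}/N$ thanks to $|\chi a_0^2/N-1|\lesssim \mathcal{N}/N$. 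Finally $\sum_p(\sqrt{a_0^\dagger a_0-1}-\sqrt{N})f_{p,k}\mathfrak{a}_k$ gains $|{\sqrt{a_0^\dagger a_0-1}-\sqrt{N}}|\lesssim \mathcal{N}/\sqrt{N}$; the remaining sum in $p$ is absorbed using $\sum_p|f_{p,k}|^2\lesssim N^{-2}$ from Lemma \ref{Lem:Coefficient_Comparison}, and the inequality $\mathcal{N}\lesssim N^{3\kappa/2}(\widetilde{\mathcal{N}}+1)$ from Lemma \ref{Lem:True_Particle_Number_Comparison} converts the $\mathcal{N}$-bounds into the desired $\widetilde{\mathcal{N}}$-bounds.

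The main obstacle is the contribution $\nu_k\nu_j\mathfrak{b}_{-k}^\dagger\mathfrak{b}_j^\dagger$, which is \emph{particle-number-creating} and therefore cannot be controlled by a simple substitution $\mathfrak{b}^\dagger\mathfrak{b}\leq \widetilde{\mathcal{N}}$; it is precisely here that the Sobolev operator bound on $V_{N^{1-\kappa}}$ is indispensable, since after applying it one is left with a convergent sum of the form $\sum_{jk}(1+|j|^2+|k|^2)\nu_j^2\nu_k^2\mathfrak{b}_{-k}^\dagger\mathfrak{b}_j^\dagger\mathfrak{b}_j\mathfrak{b}_{-k}+(\text{commutator})$ whose c-number commutator $\sum_j(1+|j|^2)\nu_j^4\sum_k\nu_k^2$ is bounded by $N^{4\kappa-1}\cdot N^{\kappa}$-free constants thanks to the strong decay $\nu_k\lesssim N^\kappa/|k|^2$ at infinity, giving exactly the claimed order $N^{4\kappa-1}$.
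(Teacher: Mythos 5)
Your opening move (splitting $\epsilon_{jk}$ into its seven summands and using the positivity of $V$ to treat each one separately) is exactly the paper's, but your central tool, the claimed uniform bound $V_{N^{1-\kappa}}\leq C(1-\Delta_2)$, is where the argument genuinely breaks. First, it is not available under the stated hypotheses: the paper only assumes $V\in L^1$ (and stresses that all estimates are uniform in $\|V\|_{L^1}$), while the H\"older/Sobolev argument you sketch needs $\|V_L\|_{L^{3/2}}=\|V\|_{L^{3/2}}$, i.e.\ $V\in L^{3/2}$; since $V_L=L^2V(L\,\cdot)$ scales critically with respect to $-\Delta$, an $L$-independent form bound fails for general $V\in L^1$. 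Second, and more fatally, even where such a bound holds its constant is $O(1)$, so it discards exactly the smallness the lemma lives on: every matrix element satisfies $\left|\left(V_{N^{1-\kappa}}\right)_{jk,mn}\right|\lesssim N^{\kappa-1}\|V\|_{L^1}$, and this factor $N^{\kappa-1}$ is the source of the $N^{-1}$ in the target $N^{4\kappa-1}$. After your key estimate, the quadratic contributions are controlled by $\sup_{j,k}\left(1+|j|^2+|k|^2\right)$ times products of $\gamma$'s and $\nu$'s acting on $\widetilde{\mathcal{N}}^2$; since $\gamma_k,\nu_k\sim N^{\kappa/4}$ at the smallest momenta, this supremum is of order $N^{\kappa}$ (order $1$ at $\kappa=0$), so you obtain $O\!\left(N^{\kappa}\right)\widetilde{\mathcal{N}}^2$ rather than $O\!\left(N^{4\kappa-1}\right)\widetilde{\mathcal{N}}^2$ --- off by essentially a factor of $N$. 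The same loss appears in your treatment of the number-creating term: the commutator constant $\sum_j(1+|j|^2)\nu_j^4\sum_k\nu_k^2$ is of order $N^{4\kappa}$, not $N^{4\kappa-1}$ as you assert.

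The paper never converts $V_{N^{1-\kappa}}$ into a kinetic operator. For each summand it forms the full coefficient matrix, e.g.\ $A_{jk,mn}=\left(V_{N^{1-\kappa}}\right)_{jk,mn}(\gamma_j\gamma_k-1)(\gamma_m\gamma_n-1)$, and bounds its operator norm by a weighted Schur test, $\|A\|\lesssim N^{\kappa-1}\sup_T\sum_p(\gamma_p\gamma_{p+T}-1)^2\lesssim N^{\frac{5\kappa}{2}-1}$, so the factor $N^{\kappa-1}$ survives; the $\nu\gamma$ and $\nu\nu$ terms are normal ordered first, producing a diagonal piece $\mu_q=\gamma_q^2\sum_p\left(V_{N^{1-\kappa}}\right)_{pq,pq}\nu_p^2\lesssim N^{\frac{7\kappa}{2}-1}$ plus a quartic piece handled by the same Schur-test device; and for the $\left[\chi\frac{a_0^2}{N}-1\right]$ term one uses the exact identity $\sum_{jk,mn\neq 0}\left(V_{N^{1-\kappa}}\right)_{(-k)k,(-j)j}w_kw_j=N^2\braket{V_{N^{1-\kappa}},RV_{N^{1-\kappa}}RV_{N^{1-\kappa}}}\leq N^2\braket{V_{N^{1-\kappa}},RV_{N^{1-\kappa}}}\lesssim N^{1+\kappa}$. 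Your handling of the last two ``small'' summands is fine in spirit, but note also a slip in the purely c-number piece: with your bound $\sum_k|w_k-2\gamma_k\nu_k|\lesssim N^{2\kappa}$ you land at $N^{5\kappa-1}$, which exceeds the claimed $N^{4\kappa-1}$ for $\kappa>0$; the sharper estimate $|w_k-2\gamma_k\nu_k|\lesssim \frac{N^{\kappa}}{|k|^2}\mathds{1}\!\left(|k|\leq N^{\frac{\kappa}{2}}\right)+\frac{N^{2\kappa}}{|k|^4}\mathds{1}\!\left(|k|>N^{\frac{\kappa}{2}}\right)$ gives $\sum_k|w_k-2\gamma_k\nu_k|\lesssim N^{\frac{3\kappa}{2}}$ and hence the correct $N^{4\kappa-1}$. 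To close the lemma you must replace the Sobolev bound by operator-norm (Schur-test) bounds on the weighted matrices $\left(V_{N^{1-\kappa}}\right)_{jk,mn}$, which is an $L^1$-based estimate and retains the crucial $N^{\kappa-1}$.
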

\begin{proof}
    Due to the sign of $V_{N^{1-\kappa}}$ it is enough to verify the statement for each term in the definition of $\epsilon_{j k}$ individually. Regarding the term $\delta_{j+k=0}\! \left(w_k-2\gamma_k \nu_k \right)$ note that $|w_k-2\gamma_k \nu_k|\lesssim \frac{N^\kappa}{|k|^2}\mathds{1}\! \left(|k|\leq N^{\frac{\kappa}{2}}\right)+\frac{N^{2\kappa}}{|k|^4}\mathds{1}\! \left(|k|> N^{\frac{\kappa}{2}}\right)$, and consequently
    \begin{align*}
        \sum_{j k, mn\neq 0}\left(V_{N^{1-\kappa}}\right)_{(-k)k,(-j)j}|w_k-2\gamma_k \nu_k||w_j-2\gamma_j \nu_j|\lesssim N^{4\kappa-1}.
    \end{align*}
    For the next term $(\gamma_j \gamma_k-1)\mathfrak{b}_j \mathfrak{b}_k$, note that the operator $A_{jk,mn}:=\left(V_{N^{1-\kappa}}\right)_{jk,mn}(\gamma_j \gamma_k-1)(\gamma_m \gamma_n-1)$ has an operator norm bounded by $CN^{\kappa-1}\sup_T \sum_p (\gamma_p \gamma_{p+T}-1)^2$, which follows from the weighted Schur test, see for example the proof of Lemma \ref{Lem:Lambda_Control}. Since $\sum_p (\gamma_p \gamma_{p+T}-1)^2\lesssim N^{\frac{3\kappa}{2}}$, uniformly in $T$, we obtain $ \sum_{j k, mn\neq 0}\left(A\right)_{jk,mn}\mathfrak{b}_k^\dagger \mathfrak{b}_j^\dagger\mathfrak{b}_m \mathfrak{b}_n\lesssim N^{\frac{5\kappa}{2}-1}\left(\widetilde{\mathcal{N}}+1\right)^2$. Coming to the next term $\nu_j \gamma_k \mathfrak{b}_{-j}^\dagger \mathfrak{b}_k$ we first notice that
    \begin{align}
    \label{Eq:Multiple_Errors_I}
        \sum_{j k, mn\neq 0}\left(V_{N^{1-\kappa}}\right)_{jk,mn}\! \left(\nu_j \gamma_k \mathfrak{b}_{-j}^\dagger \mathfrak{b}_k \right)^\dagger\! \left(\nu_m \gamma_n \mathfrak{b}_{-m}^\dagger \mathfrak{b}_n\right)=\sum_{q}\mu_q \mathfrak{b}_q^\dagger \mathfrak{b}_q+\sum_{j k, mn\neq 0}\left(A'\right)_{jk,mn}\mathfrak{b}_k^\dagger \mathfrak{b}_j^\dagger\mathfrak{b}_m \mathfrak{b}_n,
    \end{align}
    with $\mu_q:=\gamma_q^2\sum_p \left(V_{N^{1-\kappa}}\right)_{p q,p q}\nu_p^2$ and $\left(A'\right)_{jk,mn}:=\left(V_{N^{1-\kappa}}\right)_{jn,mk}\gamma_j \nu_k \gamma_m \nu_n$. Since $|\mu_q|\lesssim N^{\frac{7\kappa}{2}-1}$ and $\|A'\|\lesssim N^{\frac{7\kappa}{2}-1}$, the term in Eq.~(\ref{Eq:Multiple_Errors_I}) is bounded by $N^{\frac{7\kappa}{2} -1}\! \left(\widetilde{\mathcal{N}}^2+1\right)$. Note that the terms $\nu_k \gamma_j \mathfrak{b}_{-k}^\dagger \mathfrak{b}_j$ and $\nu_k \nu_j\mathfrak{b}_{-k}^\dagger \mathfrak{b}_j^\dagger$ can be analysed analogously. Coming to the term $\delta_{j+k=0}w_k\left[\chi \frac{a_0^2}{N}-1\right]$ we observe that $\left[\chi \frac{a_0^2}{N}-1\right]^\dagger \left[\chi \frac{a_0^2}{N}-1\right]\lesssim N^{-2}\! \left(\mathcal{N}+1\right)^2\lesssim N^{3\kappa-2}\! \left(\widetilde{\mathcal{N}}+1\right)^2$. Therefore $\sum_{j k, mn\neq 0}\left(V_{N^{1-\kappa}}\right)_{(-k)k,(-j)j}w_k w_j\left[\chi \frac{a_0^2}{N}-1\right]^\dagger \left[\chi \frac{a_0^2}{N}-1\right]\lesssim N^{3\kappa-2}\mu \! \left(\widetilde{\mathcal{N}}+1\right)^2$ with
    \begin{align*}
        \mu:= \!  \!  \! \sum_{j k, mn\neq 0} \! \!  \!   \!  \! \left(V_{N^{1-\kappa}}\right)_{(-k)k,(-j)j}w_k w_j \! = \! N^2 \!  \!  \braket{V_{N^{1-\kappa}},R V_{N^{1-\kappa}} R V_{N^{1-\kappa}}} \! \leq  \! N^2 \!  \!  \braket{V_{N^{1-\kappa}},R  V_{N^{1-\kappa}}} \! \lesssim \!  N^{1+\kappa}.
    \end{align*} \! 
    The final contribution $\sum_{p\neq 0}\left(\sqrt{a_0^\dagger a_0-1}-\sqrt{N}\right)f_{p,k}\mathfrak{a}_k$ can be dealt with analogously.
\end{proof}
In order to get rid of the term $\sqrt{N}f_{k+j,k}\mathfrak{a}_{k+j}$ in Eq.~(\ref{Eq:psi_in_terms_of_b}), let us again use the operator $G$ introduced above Eq.~(\ref{Eq:Second_Identification_Variable_I}), which allows us to write
\begin{align}
\label{Eq:Second_Identification_Variable_II}
    \chi \psi_{j k} \! \left(1 \! - \! G\right)\Psi =\left(\mathfrak{b}_j\mathfrak{b}_k +\epsilon_{j,k}  \left(1 \! - \!  G\right)- \widetilde{\epsilon}_{j,k}\right)\Psi 
\end{align}
for states $\Psi\in \mathcal{F}^{+}_{N-3}$, where we have used $(1-G)\mathcal{F}^{+}_{N-3}\subset \mathcal{F}^{+}_{N-2}$, with 
\begin{align*}
   \tilde{\epsilon}_{j,k}: \!   & \!  = \!  \!   \frac{1}{2}\sum_{p,q} \!  \sqrt{N}f_{p,q}\mathfrak{b}_q^\dagger \mathfrak{b}_{p-q}^\dagger \mathfrak{b}_j \mathfrak{b}_k \mathfrak{a}_p \!  + \!  \sum_p  \!  \sqrt{N}f_{p,k} \mathfrak{b}_{p-k}^\dagger  \mathfrak{b}_j  \mathfrak{a}_p \!  + \!  \sum_p  \!  \sqrt{N}f_{p,j} \mathfrak{b}_{p-j}^\dagger  \mathfrak{b}_k  \mathfrak{a}_p  \!  + \!  \sqrt{N}f_{k+j,k}\mathfrak{a}_{k+j} G.
\end{align*}
\begin{lem}
\label{Lem:Potential_Epsilon_Tilde}
   Let $\mathbb{K} :=\sum_k e_k \mathfrak{b}_k^\dagger \mathfrak{b}_k+\sum_{j k,mn\neq 0} \left( V_{N^{1-\kappa}}\right)_{j k,mn} \mathfrak{b}_k^\dagger \mathfrak{b}_j^\dagger \mathfrak{b}_m \mathfrak{b}_n$. Then we have 
   \begin{align*}
       \sum_{j k, mn\neq 0}\left(V_{N^{1-\kappa}}\right)_{jk,mn}\tilde{\epsilon}_{jk}^\dagger \tilde{\epsilon}_{mn}\lesssim N^{\frac{11\kappa}{2}-1}\! \left(\widetilde{\mathcal{N}}^4+1\right)\! \! \left(\mathbb{K}+1\right).
   \end{align*}
\end{lem}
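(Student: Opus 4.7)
The plan is to split $\tilde\epsilon_{jk}$ into its four summands $T_1^{jk},\dots,T_4^{jk}$ and estimate the quadratic $V$-forms $\sum V_{jk,mn}(T_i^{jk})^\dagger T_i^{mn}$ separately via the operator Cauchy--Schwarz $(A_1+\cdots+A_4)^\dagger(A_1+\cdots+A_4)\leq 4\sum_i A_i^\dagger A_i$. The key structural observation is that $\mathbb{K}$ commutes with $\widetilde{\mathcal{N}}$ (each of its pieces conserves particle number) and that its potential piece $F_{\mathrm{pot}}:=\sum V_{jk,mn}\mathfrak{b}_k^\dagger\mathfrak{b}_j^\dagger\mathfrak{b}_m\mathfrak{b}_n$ satisfies $F_{\mathrm{pot}}\leq\mathbb{K}$. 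Combined with the operator-monotonicity $\sum V_{jk,mn}X_{jk}^\dagger AX_{mn}\leq\sum V_{jk,mn}X_{jk}^\dagger BX_{mn}$ whenever $A\leq B$ (valid because the matrix $V_{N^{1-\kappa}}$ is positive-semidefinite), these furnish a systematic recipe: write each $T_i^{jk}$ in the form $A_i\mathfrak{b}_j\mathfrak{b}_k$ modulo lower-order corrections, dominate $A_i^\dagger A_i$ by a function of $\widetilde{\mathcal{N}}$ alone, commute the latter through $\mathfrak{b}_m\mathfrak{b}_n$ (which shifts $\widetilde{\mathcal{N}}\to\widetilde{\mathcal{N}}-2$), and collapse the remaining $V$-form to $\mathbb{K}$.

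For $T_1^{jk}$ I would commute the trailing $\mathfrak{a}_p$ past $\mathfrak{b}_j\mathfrak{b}_k$, which produces only c-number commutators $\nu_p\delta_{-p,j}\mathfrak{b}_k+\nu_p\delta_{-p,k}\mathfrak{b}_j$, so that $T_1^{jk}=G\mathfrak{b}_j\mathfrak{b}_k+(\text{lower order})$. The bound $G^\dagger G\lesssim N^{5\kappa-1}(\widetilde{\mathcal{N}}+1)^2$ from the proof of Corollary \ref{Corollary:Annihilation_I} then yields, via the recipe above, the estimate $\sum V_{jk,mn}(T_1^{jk})^\dagger T_1^{mn}\lesssim N^{5\kappa-1}(\widetilde{\mathcal{N}}+1)^2\mathbb{K}$, safely below the target. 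For terms 2 and 3, symmetric under $j\leftrightarrow k$, I would expand the square and split $\mathfrak{b}_{p-k}\mathfrak{b}_{p'-k}^\dagger$ into its CCR-commutator and normal-ordered parts; the diagonal piece is controlled via $\sum_k N|f_{p,k}|^2\lesssim N^{2\kappa-1}$ together with the a priori estimate $\sum_p\mathfrak{a}_p^\dagger\mathfrak{a}_p\lesssim\mathbb{K}+N^{3\kappa/2}$ (whose derivation rests on the Bogoliubov identity $\gamma_p^2+\nu_p^2=A_p/e_p$ and the pointwise bound $(\gamma_p^2+\nu_p^2)/e_p\lesssim|p|^{-2}$, giving $\sum_p(\gamma_p^2+\nu_p^2)\mathfrak{b}_p^\dagger\mathfrak{b}_p\lesssim\mathbb{K}$), while the off-diagonal piece is handled by a weighted Schur test on the kernel $\sqrt N f_{p,k}$ in the spirit of Lemma \ref{Lem:Lambda_Control}. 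For the fourth term, the factorisation $\sum_{jk}(T_4^{jk})^\dagger T_4^{jk}=\sum_p\bigl(\sum_k N|f_{p,k}|^2\bigr)G^\dagger\mathfrak{a}_p^\dagger\mathfrak{a}_p G$ lets both ingredients be applied directly.

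The main obstacle I foresee is the clean execution of the replace--commute--collapse recipe, particularly the final collapse step. Operator-monotonicity of $V$-forms lets one dominate $G^\dagger G$ by $N^{5\kappa-1}(\widetilde{\mathcal{N}}+1)^2$ inside the weighted sum, and commuting this majorant through $\mathfrak{b}_m\mathfrak{b}_n$ costs only the shift $\widetilde{\mathcal{N}}\to\widetilde{\mathcal{N}}-2$; but the final collapse $F_{\mathrm{pot}}\leq\mathbb{K}$ and the subsequent product $(\widetilde{\mathcal{N}}+1)^2\mathbb{K}$ both crucially exploit $[\widetilde{\mathcal{N}},\mathbb{K}]=0$. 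Without the $\mathbb{K}$ factor appearing on the right-hand side, one would need to dominate $\sum_p\mathfrak{a}_p^\dagger\mathfrak{a}_p$ by pure powers of $\widetilde{\mathcal{N}}$, which leaks an extra $N^{\kappa/2}$ through $\gamma_p^2\lesssim N^{\kappa/2}/|p|$ for $|p|\lesssim N^{\kappa/2}$; it is precisely the $\mathbb{K}$-comparison $(\gamma_p^2+\nu_p^2)/e_p\lesssim|p|^{-2}$ that absorbs this loss and yields the claimed exponent $\frac{11\kappa}{2}-1$.
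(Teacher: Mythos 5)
Your overall skeleton matches the paper's: split $\tilde{\epsilon}_{jk}$ into its four summands, use positivity of the matrix $\left(V_{N^{1-\kappa}}\right)_{jk,mn}$ to treat them one at a time, and collapse the quartic summand onto $\mathbb{K}$ with a factor $(\widetilde{\mathcal{N}}+1)^2$ sandwiched between $\mathfrak{b}^\dagger\mathfrak{b}^\dagger$ and $\mathfrak{b}\mathfrak{b}$. The genuine gap is in your treatment of the term $\sqrt{N}f_{k+j,k}\mathfrak{a}_{k+j}G$. What must be bounded is the $V$-weighted form, and since $\left(V_{N^{1-\kappa}}\right)_{jk,mn}\propto\delta_{j+k,m+n}$ it collapses along fibers of fixed total momentum into $\sum_{p\neq 0}\mu_p\, G^\dagger\mathfrak{a}_p^\dagger\mathfrak{a}_p G$ with $\mu_p=N\sum_{k,n}\left(V_{N^{1-\kappa}}\right)_{(p-k)k,(p-n)n}\overline{f_{p,k}}f_{p,n}$ — \emph{not} into $\sum_p\bigl(\sum_k N|f_{p,k}|^2\bigr)G^\dagger\mathfrak{a}_p^\dagger\mathfrak{a}_p G$, which is the unweighted diagonal sum $\sum_{jk}(T_4^{jk})^\dagger T_4^{jk}$ and does not dominate the weighted form. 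Moreover $\mu_p$ cannot be controlled by your two ingredients: taking absolute values of the matrix elements ($|\left(V_{N^{1-\kappa}}\right)_{jk,mn}|\lesssim N^{\kappa-1}$) produces $N^{\kappa-1}\bigl(\sum_k\sqrt{N}|f_{p,k}|\bigr)^2$, which diverges because $|f_{p,k}|\sim N^{\kappa-1}\left(|k|^2+|p-k|^2\right)^{-1}$ is not summable in $k$, while bounding the multiplication operator $V_{N^{1-\kappa}}$ in norm costs $N^{2-2\kappa}$. The paper's step here is structural: it identifies $\mu_p=N\braket{e^{ipx}V_{N^{1-\kappa}},R\,V_{N^{1-\kappa}}\,R\,e^{ipx}V_{N^{1-\kappa}}}$ and uses $V_{N^{1-\kappa}}\leq -\Delta_2+V_{N^{1-\kappa}}$, i.e. $R V_{N^{1-\kappa}} R\leq R$, together with Lemma \ref{Lem:Coefficient_Comparison}, to get $\mu_p\lesssim N^{\kappa}$; this resolvent inequality is the key missing idea in your outline and is not replaceable by $\sum_k N|f_{p,k}|^2\lesssim N^{2\kappa-1}$.

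A related, milder soft spot is your "diagonal piece" for $T_2,T_3$: after splitting $\mathfrak{b}_{p-k}\mathfrak{b}_{p'-n}^\dagger$ the $V$-matrix still couples $k$ and $n$ within each fiber, and the paper controls the resulting contributions (in particular the linear remainder $\nu_j\sqrt{N}f_{j,k}\mathfrak{b}_{j-k}^\dagger$ produced by normal ordering) with the $|k|^2$-weighted bound $\sum_k|k|^2f_{j,k}^2\lesssim N^{\kappa-1}$ of Lemma \ref{Lem:Increased_Decay} and $\sum_j|j|^{1+\delta}\nu_j^2\lesssim N^{2\kappa+\delta}$, not with the unweighted $\ell^2$-bound alone; your appeal to a Schur test "in the spirit of Lemma \ref{Lem:Lambda_Control}" points in the right direction but the weights are essential, not optional. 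Finally, for the quartic summand your reduction $T_1^{jk}=G\mathfrak{b}_j\mathfrak{b}_k+(\text{corrections})$ is legitimate, but the corrections $\tfrac{\sqrt{N}}{2}\nu_{-j}\sum_q f_{-j,q}\mathfrak{b}_q^\dagger\mathfrak{b}_{-j-q}^\dagger\mathfrak{b}_k$ are cubic operators of $\Theta_\ell$-type that still require their own estimate inside the $V$-form; labelling them "lower order" is not yet a proof, though this part is repairable with the tools you already cite.
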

\begin{proof}
 Note that we can carry out the estimate for each term in $\tilde{\epsilon}_{jk}$ individually due to the positivity of $V$.  Starting with the analysis of $\sqrt{N}f_{k+j,k}\mathfrak{a}_{k+j} G$, we obtain
 \begin{align*}
     \sum_{j k, mn\neq 0} & \left(V_{N^{1-\kappa}}\right)_{jk,mn}\left(\sqrt{N}f_{k+j,k}\mathfrak{a}_{k+j} G\right)^\dagger \left(\sqrt{N}f_{n+m,n}\mathfrak{a}_{n+m} G\right)\lesssim \sum_{p\neq 0}\mu_p G^\dagger  a_p^\dagger a_p G
 \end{align*}
 with $\mu_p:=N\underset{j+k=p}{\sum_{j k, mn\neq 0}}\left(V_{N^{1-\kappa}}\right)_{jk,mn}\overline{f_{k+j,k}}f_{n+m,n}$. Note that
 \begin{align*}
     \mu_p=N\braket{e^{ipx}V_{N^{1-\kappa}},R V_{N^{1-\kappa}} R e^{ipx}V_{N^{1-\kappa}}}\leq N\braket{e^{ipx}V_{N^{1-\kappa}},R  e^{ipx}V_{N^{1-\kappa}}}\lesssim N^{\kappa},
 \end{align*}
 and therefore $\sum_{p\neq 0}\mu_p G^\dagger  a_p^\dagger a_p G\lesssim N^{\kappa}G^\dagger \mathcal{N} G\lesssim N^{\frac{5\kappa}{2}}G^\dagger \left(\widetilde{\mathcal{N}}+1\right) G\lesssim N^{\frac{11\kappa}{2}-1}\left(\widetilde{\mathcal{N}}^3+1\right)$. Regarding the other terms, we are going to expand them further according to $\mathfrak{a}_k=\gamma_k \mathfrak{b}_k-\nu_k \mathfrak{b}_{-k}^\dagger$ and bringing them in normal order, e.g. $ \sum_p    \sqrt{N}f_{p,k} \mathfrak{b}_{p-k}^\dagger  \mathfrak{b}_j  \mathfrak{a}_p=\sum_p   \gamma_p \sqrt{N}f_{p,k} \mathfrak{b}_{p-k}^\dagger  \mathfrak{b}_j  \mathfrak{b}_p-\sum_p   \nu_p \sqrt{N}f_{p,k} \mathfrak{b}_{p-k}^\dagger  \mathfrak{b}_j  \mathfrak{b}_p^\dagger=\sum_p   \gamma_p \sqrt{N}f_{p,k} \mathfrak{b}_{p-k}^\dagger  \mathfrak{b}_j  \mathfrak{b}_p-\sum_p   \nu_p \sqrt{N}f_{p,k} \mathfrak{b}_{p-k}^\dagger \mathfrak{b}_p^\dagger \mathfrak{b}_j -\nu_j \sqrt{N}f_{j,k} \mathfrak{b}_{j-k}^\dagger$. Let us illustrate how to proceed, by analysing the $\nu_j \sqrt{N}f_{j,k} \mathfrak{b}_{j-k}^\dagger$ term in more detail
 \begin{align*}
   &  \ \ \ \ \sum_{j k, mn\neq 0}\left(V_{N^{1-\kappa}}\right)_{jk,mn}\left(\nu_j \sqrt{N}f_{j,k} \mathfrak{b}_{j-k}^\dagger\right)^\dagger \left(\nu_m \sqrt{N}f_{m,n} \mathfrak{b}_{m-n}^\dagger\right)\\
   & \lesssim N^{\kappa-1}\sum_{jk}|k|^{2}|j|^{1+\delta}\left(\nu_j \sqrt{N}f_{j,k} \mathfrak{b}_{j-k}^\dagger\right)^\dagger \left(\nu_j \sqrt{N}f_{j,k} \mathfrak{b}_{j-k}^\dagger\right)\lesssim N^{5\kappa-2}\widetilde{\mathcal{N}}+N^{2\kappa-1}\sum_j |j|^{1+\delta}\nu_j^2
 \end{align*}
for $0<\delta<1$, where we have used $\sum_{k\neq 0}|k|^2 f_{j,k}^2=\sum_{k\neq 0}|k|^2 f_{j,j-k}^2\lesssim N^{\kappa -1 }$ by Lemma \ref{Lem:Increased_Decay}. Furthermore $\sum_j |j|^{1+\delta}\nu_j^2\lesssim N^{2\kappa+\delta}$, again by Lemma \ref{Lem:Increased_Decay}, which concludes the argument. Regarding the final term we note that $A_{p,p'}:=\left(\sum_{q'}\sqrt{N}\gamma_{p'} f_{p',q'}\mathfrak{b}_{q'}^\dagger \mathfrak{b}_{p'-q'}^\dagger\right)^\dagger \left(\sum_q \sqrt{N}\gamma_p f_{p,q}\mathfrak{b}_q^\dagger \mathfrak{b}_{p-q}^\dagger\right)$ satisfies $A\lesssim N^{\frac{5\kappa}{2}-1} \! \left(\widetilde{\mathcal{N}}+1\right)^2$, and therefore
\begin{align*}
   & \ \ \ \  \sum_{j k, mn\neq 0}\left(V_{N^{1-\kappa}}\right)_{jk,mn}\left(\sum_{p,q} \!  \sqrt{N}f_{p,q}\mathfrak{b}_q^\dagger \mathfrak{b}_{p-q}^\dagger \mathfrak{b}_j \mathfrak{b}_k \mathfrak{a}_p\right)^\dagger \left(\sum_{p,q} \!  \sqrt{N}f_{p,q}\mathfrak{b}_q^\dagger \mathfrak{b}_{p-q}^\dagger \mathfrak{b}_m \mathfrak{b}_n \mathfrak{a}_p\right) \\
   & \lesssim N^{\frac{5\kappa}{2}-1} \sum_{j k, mn\neq 0}\left(V_{N^{1-\kappa}}\right)_{jk,mn} \mathfrak{b}_k^\dagger \mathfrak{b}_j^\dagger \! \left(\widetilde{\mathcal{N}}+1\right)^2 \! \mathfrak{b}_m \mathfrak{b}_n + N^{3\kappa-1}\widetilde{\mathcal{N}}\lesssim N^{3\kappa -1} \! \left(\widetilde{\mathcal{N}}+1\right)^2 \mathbb{K}. 
\end{align*}
\end{proof}

\begin{cor}
\label{Corollary:Annihilation_II}
Let $\Xi_{jk}:=\psi_{j k}(1 \! - \! G)-\mathfrak{b}_j \mathfrak{b}_k$. Then there exists a $C>0$ such that for $\kappa<\frac{2}{11}$
    \begin{align}
    \label{Eq:Potential_Epsilon_Total}
  \sum_{j k,mn\neq 0} \!  \!  \left( V_{N^{1-\kappa}}\right)_{j k,mn}\Xi_{jk}^\dagger  \, \Xi_{mn}\Big|_{\mathcal{F}^{+}_{N-3}} \! \leq  \! C N^{\frac{11\kappa}{2}-1} \! \left(\widetilde{\mathcal{N}} \! + \! 1\right)^4 \! \left(\mathbb{K} \! + \! 1\right)\Big|_{\mathcal{F}^{+}_{N-3}}.
    \end{align}
\end{cor}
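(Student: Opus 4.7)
The plan is to combine Eq.~(\ref{Eq:Second_Identification_Variable_II}) with the quadratic error estimates in Lemma~\ref{Lem:Potential_Epsilon} and Lemma~\ref{Lem:Potential_Epsilon_Tilde}. First I would multiply Eq.~(\ref{Eq:Second_Identification_Variable_II}) on the left by $\chi^\dagger$; since $\chi^\dagger\chi=1$ holds identically on the full Fock space (the operator $a_0^\dagger(a_0a_0^\dagger)^{-1/2}$ is a partial isometry with trivial initial projection), this isolates the entire discrepancy $\Xi_{jk}=\psi_{jk}(1-G)-\mathfrak{b}_j\mathfrak{b}_k$ in the two already-analysed error operators:
\begin{align*}
\Xi_{jk}\Psi \;=\; \chi^\dagger\bigl[\epsilon_{jk}(1-G)-\widetilde{\epsilon}_{jk}\bigr]\Psi \qquad \text{for every } \Psi\in \mathcal{F}^+_{N-3}.
\end{align*}

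Using the positivity $V_{N^{1-\kappa}}\geq 0$, so that $(X,Y)\mapsto \sum (V_{N^{1-\kappa}})_{jk,mn}X_{jk}^\dagger Y_{mn}$ is a positive semi-definite sesquilinear form, together with the elementary operator inequality $(a-b)^\dagger(a-b)\leq 2a^\dagger a + 2b^\dagger b$ and the fact that $\chi\chi^\dagger\leq 1$, I would then split
\begin{align*}
\sum_{jk,mn\neq 0}(V_{N^{1-\kappa}})_{jk,mn}\Xi_{jk}^\dagger \Xi_{mn}\bigg|_{\mathcal{F}^+_{N-3}} \;\lesssim\; (1-G)^\dagger \Sigma_\epsilon (1-G) + \Sigma_{\widetilde{\epsilon}},
\end{align*}
where $\Sigma_\epsilon:=\sum (V_{N^{1-\kappa}})_{jk,mn}\epsilon_{jk}^\dagger \epsilon_{mn}$ and $\Sigma_{\widetilde{\epsilon}}:=\sum (V_{N^{1-\kappa}})_{jk,mn}\widetilde{\epsilon}_{jk}^\dagger \widetilde{\epsilon}_{mn}$. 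Lemma~\ref{Lem:Potential_Epsilon_Tilde} bounds $\Sigma_{\widetilde{\epsilon}}\lesssim N^{\frac{11\kappa}{2}-1}(\widetilde{\mathcal{N}}^4+1)(\mathbb{K}+1)$, which already matches the claim up to constants. Lemma~\ref{Lem:Potential_Epsilon} bounds $\Sigma_\epsilon\lesssim N^{4\kappa-1}(\widetilde{\mathcal{N}}^2+1)$, so it only remains to control the sandwich $(1-G)^\dagger(\widetilde{\mathcal{N}}^2+1)(1-G)$ by a constant multiple of $(\widetilde{\mathcal{N}}+1)^4$, after which the elementary inequalities $N^{4\kappa-1}\leq N^{\frac{11\kappa}{2}-1}$ (valid for $\kappa\geq 0$) and $1\leq \mathbb{K}+1$ absorb this contribution into the asserted bound.

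The hard part will be precisely that sandwich estimate, because $G=\tfrac{1}{2}\sum \sqrt{N}f_{p,k}\mathfrak{b}_k^\dagger \mathfrak{b}_{p-k}^\dagger \mathfrak{a}_p$ is quadratic in creation operators and is not small in operator norm. My plan is to write $(1-G)^\dagger(\widetilde{\mathcal{N}}^2+1)(1-G)\lesssim (\widetilde{\mathcal{N}}^2+1) + G^\dagger(\widetilde{\mathcal{N}}^2+1)G$, then move the factor $(\widetilde{\mathcal{N}}^2+1)$ through $G$ using the fact that each monomial of $G$ shifts $\widetilde{\mathcal{N}}$ by at most three units, which is visible after decomposing $\mathfrak{a}_p=\gamma_p \mathfrak{b}_p-\nu_p \mathfrak{b}_{-p}^\dagger$ and normal-ordering, and finally invoke the inequality $G^\dagger G\lesssim N^{5\kappa-1}(\widetilde{\mathcal{N}}+1)^2$ already proved inside Corollary~\ref{Corollary:Annihilation_I}. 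The resulting contribution $N^{5\kappa-1}(\widetilde{\mathcal{N}}+1)^4$ is subdominant to the direct term $(\widetilde{\mathcal{N}}+1)^4$. The restriction $\kappa<\tfrac{2}{11}$ plays no role in any of the algebraic manipulations; it enters only to ensure that the overall prefactor $N^{\frac{11\kappa}{2}-1}$ is a negative power of $N$, which is what makes Eq.~(\ref{Eq:Potential_Epsilon_Total}) act as a genuine error in the trial-state computation underlying Theorem~\ref{Th:Upper_Bound}.
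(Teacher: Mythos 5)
Your overall route coincides with the paper's: start from Eq.~(\ref{Eq:Second_Identification_Variable_II}), use the positivity of $V_{N^{1-\kappa}}$ to split the quadratic form by Cauchy--Schwarz into an $\epsilon$-part and an $\widetilde{\epsilon}$-part, invoke Lemma~\ref{Lem:Potential_Epsilon} and Lemma~\ref{Lem:Potential_Epsilon_Tilde}, and control the sandwich $(1-G)^\dagger\big(\widetilde{\mathcal{N}}^2+1\big)(1-G)$ by commuting number operators through $G$ (which shifts $\widetilde{\mathcal{N}}$ by at most three) together with $G^\dagger G\lesssim N^{5\kappa-1}\big(\widetilde{\mathcal{N}}+1\big)^2$; this is exactly the paper's argument, which settles for the slightly cruder bound $(1-G)^\dagger\big(\widetilde{\mathcal{N}}^2+1\big)(1-G)\lesssim \widetilde{\mathcal{N}}^5+1$ and then absorbs it using $\widetilde{\mathcal{N}}\lesssim \mathbb{K}+1$. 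Your closing remark about the role of $\kappa<\frac{2}{11}$ is also accurate.

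The one step that does not survive scrutiny is your opening identity $\Xi_{jk}\Psi=\chi^\dagger\big[\epsilon_{jk}(1-G)-\widetilde{\epsilon}_{jk}\big]\Psi$. Multiplying Eq.~(\ref{Eq:Second_Identification_Variable_II}) by $\chi^\dagger$ and using $\chi^\dagger\chi=1$ gives $\psi_{jk}(1-G)\Psi=\chi^\dagger\mathfrak{b}_j\mathfrak{b}_k\Psi+\chi^\dagger\big[\epsilon_{jk}(1-G)-\widetilde{\epsilon}_{jk}\big]\Psi$, so $\Xi_{jk}\Psi$ contains the additional term $\big(\chi^\dagger-1\big)\mathfrak{b}_j\mathfrak{b}_k\Psi$, which you have silently dropped. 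This term is not small: $\chi^\dagger=\big(\tfrac{1}{\sqrt{a_0a_0^\dagger}}a_0\big)^2$ is a co-isometry that lowers the zero-mode occupation (hence the total particle number) by two, so it is not the identity on any nonzero vector; indeed $\chi^\dagger\mathfrak{b}_j\mathfrak{b}_k\Psi$ and $\mathfrak{b}_j\mathfrak{b}_k\Psi$ lie in orthogonal particle-number sectors, so $\big\|\big(\chi^\dagger-1\big)\mathfrak{b}_j\mathfrak{b}_k\Psi\big\|\geq\|\mathfrak{b}_j\mathfrak{b}_k\Psi\|$. The correct reconciliation is not to strip $\chi$ off the identity but to insert it into the quadratic form: wherever the corollary is applied, the $\psi$'s occur in the combination $\psi_{jk}^\dagger(\cdots)\psi_{mn}$, so $\chi^\dagger\chi=1$ lets one replace $\psi_{mn}(1-G)$ by $\chi\psi_{mn}(1-G)$ there; the operator actually being compared with $\mathfrak{b}_m\mathfrak{b}_n$ (both in the paper's proof and in its use inside Theorem~\ref{Th:Bounds_Without_Cut_Off}) is $\chi\psi_{mn}(1-G)$, and for that difference Eq.~(\ref{Eq:Second_Identification_Variable_II}) yields exactly $\epsilon_{mn}(1-G)-\widetilde{\epsilon}_{mn}$ on $\mathcal{F}^{+}_{N-3}$ with no leftover. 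With this reading of $\Xi_{jk}$ — which the paper leaves implicit — the remainder of your estimates goes through unchanged.
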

\begin{proof}
By Eq.~(\ref{Eq:Second_Identification_Variable_II}) we have the bound
\begin{align*}
    &\sum_{j k,mn\neq 0} \!  \!  \left( V_{N^{1-\kappa}}\right)_{j k,mn}\Xi_{jk}^\dagger  \, \Xi_{mn}\Big|_{\mathcal{F}^{+}_{N-3}}\lesssim \sum_{j k, mn\neq 0}\left(V_{N^{1-\kappa}}\right)_{jk,mn}(1\! - \! G)^\dagger \epsilon_{jk}^\dagger \epsilon_{mn}(1\! - \! G)\Big|_{\mathcal{F}^{+}_{N-3}}\\
  &  \ \   \ \     \ \   \ \ \   \ \     \ \   \ \    \ \   \ \ \   \ \     \ \   \ \    \ \   \ \ + \sum_{j k, mn\neq 0}\left(V_{N^{1-\kappa}}\right)_{jk,mn}\tilde{\epsilon}_{jk}^\dagger \tilde{\epsilon}_{mn}\Big|_{\mathcal{F}^{+}_{N-3}}.
\end{align*}
    Note that $\sum_{j k, mn\neq 0}\left(V_{N^{1-\kappa}}\right)_{jk,mn}(1\! - \! G)^\dagger \epsilon_{jk}^\dagger \epsilon_{mn}(1\! - \! G)\lesssim N^{\frac{7\kappa}{2}-1}\! (1\! - \! G)^\dagger \left(\widetilde{\mathcal{N}}^2+1\right)(1\! - \! G)\lesssim N^{\frac{7\kappa}{2}-1}\left(\widetilde{\mathcal{N}}^5+1\right)$ by Lemma \ref{Lem:Potential_Epsilon}. In combination with Lemma \ref{Lem:Potential_Epsilon_Tilde} this concludes the proof.
\end{proof}

\subsection{Proof of Theorem \ref{Th:Upper_Bound}}
Regarding the verification of Theorem \ref{Th:Upper_Bound}, let us define for $\delta>0$ the trial states
\begin{align*}
\Psi_d:=Z_d^{-1}(1 \! - \! G)\chi_\delta\, \Gamma_d,
\end{align*}
where $\chi_\delta:=\mathds{1} \! \left(\mathcal{N}\leq N^{\frac{3\kappa}{2}+\delta}\right)=\mathds{1} \! \left(\sum_{k\neq 0}\mathfrak{a}^\dagger_k \mathfrak{a}_k\leq N^{\frac{3\kappa}{2}+\delta}\right)$ and $\Gamma_d$ is the $d$-th eigenvector of the operator $\sum_k e_k\mathfrak{b}_k^\dagger \mathfrak{b}_k$, i.e. $\Gamma_d$ is the eigenvector of $\sum_k e_k\mathfrak{b}_k^\dagger \mathfrak{b}_k$ corresponding to the eigenvalue $\widetilde{\lambda}_{N,\kappa}^{(d)}$ defined above Eq.~(\ref{Eq:Space_Change}), and $Z_j$ is a normalization constant. Note that $\Psi_d\in \mathfrak{h}$ is an element of $\mathcal{F}^{+}_{N-1}\subset L^2_\mathrm{sym} \! \left(\Lambda^N\right)$ for $N$ large enough, and therefore an appropriate trial state. The following Lemma \ref{Lem:Exponential_Decay} demonstrates that the eigenstates $\Gamma_d$ are already in the range of the projections $\chi_\delta$ up to an error that decays faster than any power in $\frac{1}{N}$.
\begin{lem}
    \label{Lem:Exponential_Decay}
    For any $\lambda\in \mathbb{R}$, $m\in \mathbb{N}$ and $\delta>0$, there exists a constant $C>0$ such that
    \begin{align}
    \label{Eq:Exponential_Decay}
        \bigg\langle (1 \! - \! \chi_\delta) \Gamma_d,\Big(\sum_{k\neq 0}\mathfrak{b}^\dagger_k \mathfrak{b}_k\Big)^m   \mathbb{K} \, (1 \! - \! \chi_\delta) \Gamma_d\bigg\rangle\leq C N^{-\lambda},
    \end{align}
   where $\mathbb{K}$ is defined in Lemma \ref{Lem:Potential_Epsilon_Tilde}. Furthermore $|Z_d-1|\leq C' N^{\frac{7\kappa}{4}-\frac{1}{2}}$ for a suitable $C'$.
\end{lem}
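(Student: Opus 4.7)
The plan is to reduce \eqref{Eq:Exponential_Decay} to a tail estimate on the distribution of $\mathcal{N}$ in the state $\Gamma_d$. As a first observation, since $e_k \gtrsim N^{\kappa/2}$ for every $k \neq 0$ (from $e_{k_0} \sim N^{\kappa/2}$ at $|k_0| = 2\pi$) while $\widetilde{\lambda}_{N,\kappa}^{(d)} \leq (d-1)e_{k_0} \lesssim N^{\kappa/2}$, the total $\mathfrak{b}$-particle number of $\Gamma_d$ satisfies $N_\Gamma \leq d-1$ uniformly in $N$, and hence $\Gamma_d = p_d(\mathfrak{b}^\dagger)\Omega$ for a polynomial $p_d$ of bounded degree. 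Using the Bogoliubov relation $\mathfrak{a}_k = \gamma_k \mathfrak{b}_k - \nu_k \mathfrak{b}_{-k}^\dagger$, after normal-ordering we obtain
\begin{align*}
\mathcal{N} = \sum_k \nu_k^2 + \sum_k (\gamma_k^2+\nu_k^2)\mathfrak{b}_k^\dagger \mathfrak{b}_k - \sum_k \gamma_k \nu_k \bigl(\mathfrak{b}_k^\dagger \mathfrak{b}_{-k}^\dagger + \mathfrak{b}_{-k}\mathfrak{b}_k\bigr),
\end{align*}
and since $\sum_k \nu_k^2 \lesssim N^{3\kappa/2}$, a Wick-theorem computation on $\Gamma_d$ yields the moment bound $\langle \Gamma_d, \mathcal{N}^p \Gamma_d\rangle \leq C_{p,d} N^{3\kappa p/2}$ for every $p \in \mathbb{N}$.

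Because $\chi_\delta$ is a function of $\mathcal{N}$, the operator inequality $1 - \chi_\delta \leq \bigl(N^{-3\kappa/2-\delta}\mathcal{N}\bigr)^{2p}$ combined with the moment bound gives $\|(1-\chi_\delta)\Gamma_d\|^2 \leq C_{p,d} N^{-2p\delta}$, which is faster than any power of $N^{-1}$ once $p$ is chosen large. To upgrade this to the operator-valued statement, I would use that $\mathcal{N}_\mathfrak{b}^m := (\sum_k \mathfrak{b}_k^\dagger \mathfrak{b}_k)^m$ and $\mathbb{K}$ both preserve the $\mathfrak{b}$-particle number, so $X := \mathcal{N}_\mathfrak{b}^m \mathbb{K}$ is self-adjoint and Cauchy--Schwarz gives
\begin{align*}
\left|\langle (1-\chi_\delta)\Gamma_d, X(1-\chi_\delta)\Gamma_d\rangle\right| \leq \|X(1-\chi_\delta)\Gamma_d\| \cdot \|(1-\chi_\delta)\Gamma_d\|.
\end{align*}
An operator estimate of the form $X^2 \lesssim N^\beta (\mathcal{N}+1)^M$, obtained by dominating $\mathbb{K}$ by its kinetic and quartic parts and transferring between $\mathfrak{b}$- and $\mathfrak{a}$-particle numbers via the Bogoliubov coefficients, reduces $\|X(1-\chi_\delta)\Gamma_d\|^2$ to another Markov estimate of the same shape; combining the two factors and taking $p$ large enough then establishes \eqref{Eq:Exponential_Decay} for any prescribed $\lambda$.

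For the normalization I would expand
\begin{align*}
Z_d^2 = \|\chi_\delta\Gamma_d\|^2 - 2\,\mathrm{Re}\,\langle \chi_\delta\Gamma_d, G\chi_\delta\Gamma_d\rangle + \|G\chi_\delta\Gamma_d\|^2.
\end{align*}
The first term equals $1 - O(N^{-\lambda})$ by the moment bound above. The cross term vanishes to leading order since $G = \tfrac{1}{2}\sum_{p,k}\sqrt N f_{p,k}\mathfrak{b}_k^\dagger \mathfrak{b}_{p-k}^\dagger \mathfrak{a}_p$, after writing $\mathfrak{a}_p = \gamma_p \mathfrak{b}_p - \nu_p \mathfrak{b}_{-p}^\dagger$, shifts the $\mathfrak{b}$-particle number by $+1$ or $+3$, whence $\langle \Gamma_d, G\Gamma_d\rangle = 0$; the residual from the $\chi_\delta$'s is $O(N^{-\lambda})$ by Cauchy--Schwarz. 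Finally, $\|G\Gamma_d\|^2$ is computed by Wick's theorem: using $|\sqrt N f_{p,k}|\lesssim N^{\kappa - 1/2}(|k|^2+|p-k|^2)^{-1}$ and $\nu_p^2 \lesssim N^{2\kappa}/|p|^4$, the leading contraction contributes $\sum_{p,k}N|f_{p,k}|^2 \nu_p^2 \lesssim N^{7\kappa/2-1}$, with the $p_d$-dependent corrections of the same or smaller order. Hence $|Z_d^2 - 1| \lesssim N^{7\kappa/2 - 1}$ and, since $Z_d > 0$, $|Z_d - 1| \lesssim N^{7\kappa/4 - 1/2}$. The main obstacle is the moment bound on $\mathcal{N}^p$, which requires careful Wick-theorem bookkeeping tracking contractions between the off-diagonal Bogoliubov terms in $\mathcal{N}$ and the creation operators in $p_d(\mathfrak{b}^\dagger)$; once this is secured, the remainder of the proof reduces to Markov inequalities and routine operator comparisons.
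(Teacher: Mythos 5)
Your overall strategy (Markov inequality in $\mathcal{N}$ plus high moment bounds on $\Gamma_d$) is the same as the paper's, and your treatment of $Z_d$ is essentially fine — in fact the observation that $G$ shifts the $\mathfrak{b}$-particle number by $+1$ or $+3$, so that $\braket{\Gamma_d,G\Gamma_d}=0$, gives a bound on the cross term that is better than the Cauchy--Schwarz estimate the paper uses, and your moment bound $\braket{\Gamma_d,\mathcal{N}^p\Gamma_d}\lesssim N^{3\kappa p/2}$ could be obtained more cheaply from Lemma \ref{Lem:True_Particle_Number_Comparison} together with the fact that $\Gamma_d$ has a bounded number of $\mathfrak{b}$-excitations, rather than by Wick bookkeeping.

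However, the first part has a genuine gap: the operator inequality $X^2\lesssim N^\beta(\mathcal{N}+1)^M$ with $X=\big(\sum_k\mathfrak{b}_k^\dagger\mathfrak{b}_k\big)^m\mathbb{K}$ is false for every choice of $\beta$ and $M$. The operator $\mathbb{K}$ contains the kinetic part $\sum_k e_k\mathfrak{b}_k^\dagger\mathfrak{b}_k$ with $e_k\gtrsim |k|^2$ unbounded in $k$, while number operators (in either the $\mathfrak{a}$- or the $\mathfrak{b}$-variables) carry no momentum weights: testing against a one-excitation state $\mathfrak{b}_k^\dagger\Omega$ and letting $|k|\to\infty$ at fixed $N$ makes the left side diverge while the right side stays bounded. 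Since $(1-\chi_\delta)\Gamma_d$ does contain high-momentum $\mathfrak{b}$-components (because $\chi_\delta$ is a function of $\sum_k\mathfrak{a}_k^\dagger\mathfrak{a}_k$, which in $\mathfrak{b}$-variables creates pairs at all momenta), your reduction of $\|X(1-\chi_\delta)\Gamma_d\|$ to a Markov estimate collapses at this point, and it is not merely a matter of choosing $p$ large. The missing idea — and this is exactly how the paper proceeds — is to keep an energy operator, not just number operators, in the intermediate bound, but written in the $\mathfrak{a}$-representation: one bounds $\big(\sum_k\mathfrak{b}_k^\dagger\mathfrak{b}_k\big)^m\mathbb{K}\lesssim N^{\gamma_m}\big(\sum_k\mathfrak{a}_k^\dagger\mathfrak{a}_k+1\big)^m\big(\widetilde{\mathbb{K}}+1\big)$ with $\widetilde{\mathbb{K}}:=\sum_k|k|^2\mathfrak{a}_k^\dagger\mathfrak{a}_k+\frac12\sum(V_{N^{1-\kappa}})_{k\ell,mn}\mathfrak{a}_k^\dagger\mathfrak{a}_\ell^\dagger\mathfrak{a}_m\mathfrak{a}_n$ via Lemma \ref{Lem:True_Particle_Number_Comparison}. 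Since $\widetilde{\mathbb{K}}$ and $\sum_k\mathfrak{a}_k^\dagger\mathfrak{a}_k$ commute with $\chi_\delta$, the cutoffs can be absorbed through $1-\chi_\delta\leq N^{-\ell(\frac{3\kappa}{2}+\delta)}\big(\sum_k\mathfrak{a}_k^\dagger\mathfrak{a}_k\big)^\ell$ without ever having to move $\chi_\delta$ past a momentum-weighted operator; only at the very end does one convert back to $\mathfrak{b}$-variables and use that $\braket{\Gamma_d,\big(\sum_k\mathfrak{b}_k^\dagger\mathfrak{b}_k+1\big)^{2\ell+m}(\mathbb{K}+1)\Gamma_d}$ is finite because $\Gamma_d$ is an eigenstate with finitely many excitations. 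Without this commutation structure your Cauchy--Schwarz step cannot be closed; note also that the same issue would resurface in your $Z_d$ argument when bounding $\|G(1-\chi_\delta)\Gamma_d\|$, although there only $(\widetilde{\mathcal{N}}+1)^2$ appears, which can be dominated by powers of $\sum_k\mathfrak{a}_k^\dagger\mathfrak{a}_k$ (commuting with $\chi_\delta$), so that part is repairable.
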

\begin{proof}
    In the following let $\widetilde{\mathbb{K}}:=\sum_k |k|^2 \mathfrak{a}_k^\dagger  \mathfrak{a}_k+\frac{1}{2}\sum_{k \ell, m n}(V_{N^{1-\kappa}})_{k \ell , m n} \mathfrak{a}_k \mathfrak{a}_\ell \mathfrak{a}_m \mathfrak{a}_n$ and note 
    \begin{align*}
      \Big(\sum_{k\neq 0}\mathfrak{b}^\dagger_k \mathfrak{b}_k\Big)^m \mathbb{K} \lesssim N^{\gamma_m} \Big(\sum_{k\neq 0}\mathfrak{a}^\dagger_k \mathfrak{a}_k+1\Big)^{m} \left(\widetilde{\mathbb{K}}+1\right)=:N^{\gamma_m} \mathcal{M}
    \end{align*}
  by Lemma \ref{Lem:True_Particle_Number_Comparison}, with $\gamma_m:=\frac{3(m+1)\kappa}{2}+1$. Since $1-\chi_\delta\leq N^{-\ell  \left(\frac{3\kappa}{2}+\delta\right)}\Big(\sum_{k\neq 0}\mathfrak{a}^\dagger_k \mathfrak{a}_k\Big)^\ell$ for any $\ell$, and both $\chi_\delta$ and $\mathcal{N}$ commute with the non-negative operator $\mathcal{M}$, we have $ (1 \! - \! \chi_\delta) \mathcal{M} (1 \! - \! \chi_\delta) \lesssim  N^{-\ell  \left(3\kappa+2\delta\right)} \Big(\sum_{k\neq 0}\mathfrak{a}^\dagger_k \mathfrak{a}_k+1\Big)^{2\ell}\mathcal{M}  $, and therefore $\bigg\langle (1 \! - \! \chi_\delta) \Gamma_d,\Big(\sum_{k\neq 0}\mathfrak{b}^\dagger_k \mathfrak{b}_k\Big)^m   \mathbb{K} \, (1 \! - \! \chi_\delta) \Gamma_d\bigg\rangle$ is bounded from above by
    \begin{align*}
       &\ \ \ \ \ N^{\gamma_m-\ell  \left(3\kappa+2\delta\right)} \Big\langle \Gamma_d,\Big(\sum_{k\neq 0}\mathfrak{a}^\dagger_k \mathfrak{a}_k+1\Big)^{2\ell+m} \left(\widetilde{\mathbb{K}}+1\right) \Gamma_d\Big\rangle\\
       & \lesssim N^{2\gamma_m-2\ell\delta}\bigg\langle \Gamma_d,\Big(\sum_{k\neq 0}\mathfrak{b}^\dagger_k \mathfrak{b}_k+1\Big)^{2\ell+m} \left(\mathbb{K}+1\right) \Gamma_d\bigg\rangle\lesssim N^{2\gamma_m-2\ell\delta},
    \end{align*}
    where we used Lemma \ref{Lem:True_Particle_Number_Comparison} and that $\bigg\langle \Gamma_d,\Big(\sum_{k\neq 0}\mathfrak{b}^\dagger_k \mathfrak{b}_k+1\Big)^{2\ell+m+1} \left(\mathbb{K}+1\right) \Gamma_d\bigg\rangle$ is finite since $\Gamma_d$ is an eigenstate. Choosing $2\ell\delta\geq 2\gamma_m+\lambda$ concludes the proof of Eq.~(\ref{Eq:Exponential_Decay}). Regarding the estimate on $Z_d$, note that $Z_d^2=\|\chi_\delta \Gamma_d\|^2-2\mathfrak{Re}\braket{\chi_\delta\Gamma_d,G\chi_\delta \Gamma_d}+\braket{\chi_\delta\Gamma_d,G^\dagger G\chi_\delta \Gamma_d}$. Making use of the operator inequality $G^\dagger G \lesssim N^{\frac{7\kappa}{2}-1}\! \left(\widetilde{\mathcal{N}}+1\right)^2$ therefore yields
    \begin{align*}
        |Z_d^2-1|\leq 1-\|\chi_\delta \Gamma_d\|^2+N^{\frac{7\kappa}{4}-\frac{1}{2}}\! \left(\|\chi_\delta \Gamma_d\|^2+\braket{\chi_\delta \Gamma_d,\! \left(\widetilde{\mathcal{N}}+1\right)^2 \! \chi_\delta \Gamma_d}\right).
    \end{align*}
    By Eq.~(\ref{Eq:Exponential_Decay}) we know that $1-\|\chi_\delta \Gamma_d\|^2\lesssim N^{-\lambda}$ for any $\lambda>0$ and $\braket{\chi_\delta \Gamma_d,\! \left(\widetilde{\mathcal{N}}+1\right)^2 \! \chi_\delta \Gamma_d}\lesssim \braket{ \Gamma_d,\! \big(\sum_{k\neq 0}\mathfrak{b}^\dagger_k \mathfrak{b}_k+1\big)^2 \Gamma_d}+N^{-\lambda}\lesssim 1$, which concludes the proof. 
\end{proof}

In order to compute the energy of $\Psi_d$, we are first going to combine the results from Subsection \ref{Subsection:Annihilation_I} and \ref{Subsection:Annihilation_II}, as well as the results in Section \ref{Sec:Beyond_GP}, yielding the following Theorem \ref{Th:Bounds_Without_Cut_Off}.
\begin{thm}
\label{Th:Bounds_Without_Cut_Off}
   For $\kappa<\frac{2}{13}$, let us define the operator $W_\mathrm{log}:=-\frac{1}{2}\sum_{p,k}e_k \nu_k \sqrt{N}f_{-k,p}\mathfrak{b}^\dagger_{-(k+p)}\mathfrak{b}_k^\dagger \mathfrak{b}^\dagger_p$ and the ($N$ dependent) constant $C_\mathrm{log}:=\sum_{k} \Pi_k-D_\mathrm{log}$, where $\Pi_k$ is defined in Corollary \ref{Corollary:Annihilation_I}, $D_\mathrm{log}:=\sum_{k,k'}\nu_{k'}\nu_k \gamma_{k'}\gamma_k \Lambda_{k (-k), k' (-k')}$ and $\Lambda$ is introduced above Eq.~(\ref{Eq:Representation_Of_Error_I}). Then we have
    \begin{align}
    \label{Eq:Bounds_Without_Cut_Off}
  (1 \! - \! G)^\dagger H_{N,\kappa}' (1 \! - \! G)\Big|_{\mathcal{F}^{+}_{N-3}}=\left(C_\mathrm{log}+ \mathbb{K} + W_\mathrm{log}+W_\mathrm{log}^\dagger + X\right)\Big|_{\mathcal{F}^{+}_{N-3}},
    \end{align}
    where $X$ is an operator satisfying $\pm X\Big|_{\mathcal{F}^+_{rN}}\leq  C_r  N^{\frac{11\kappa}{4}-\frac{1}{2}} \! \left(\widetilde{\mathcal{N}}+1\right)^6 \! \Big(\mathbb{K}+1\Big)\Big|_{\mathcal{F}^+_{rN}}$ for $0<r<1$. Furthermore $ \big| \! \left\langle \chi_\delta  \Gamma_1,X \, \chi_\delta \Gamma_1\right\rangle \! \big|  \lesssim \frac{\sqrt{log N}}{N}$ as well as $ \big| \! \left\langle \chi_\delta  \Gamma_1,W_\mathrm{log} \, \chi_\delta \Gamma_1\right\rangle \! \big| \lesssim \frac{1}{N}$ in the case $\kappa=0$, and $|D_\mathrm{log}|\lesssim N^{\frac{9\kappa}{2}}\frac{\log N}{N}$.
\end{thm}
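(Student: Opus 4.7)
My plan is to conjugate the identity in Eq.~(\ref{Eq:Full_Representation}) term by term by $(1-G)$, using the annihilation results of Corollaries~\ref{Corollary:Annihilation_I} and~\ref{Corollary:Annihilation_II}. For the quadratic (kinetic) sum, Corollary~\ref{Corollary:Annihilation_I} gives $(\gamma_k d_k + \nu_k d_{-k}^\dagger)(1-G) = \mathfrak{b}_k - \Theta_k + \Xi_k$ on $\mathcal{F}^+_{N-2}$, so expansion of the square yields
\begin{align*}
    (1 \! - \! G)^\dagger\! \sum_k e_k(\gamma_k d_k \! + \! \nu_k d_{-k}^\dagger)^\dagger(\gamma_k d_k \! + \! \nu_k d_{-k}^\dagger)(1 \! - \! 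G) = \sum_k e_k\mathfrak{b}_k^\dagger\mathfrak{b}_k - \sum_k e_k(\mathfrak{b}_k^\dagger\Theta_k \! + \! \mathrm{H.c.}) + \sum_k e_k\Theta_k^\dagger\Theta_k + X_1,
\end{align*}
with $X_1$ a collection of terms involving $\Xi_k$ controlled by Eq.~(\ref{Eq:Kinetic_Epsilon_Total}). The explicit form of $\Theta_\ell$ matches $-\sum_k e_k(\mathfrak{b}_k^\dagger \Theta_k + \Theta_k^\dagger\mathfrak{b}_k)$ with $W_\mathrm{log} + W_\mathrm{log}^\dagger$, and Corollary~\ref{Corollary:Annihilation_I} rewrites $\sum_k e_k\Theta_k^\dagger\Theta_k = \sum_k\Pi_k + X_1'$ with $X_1'$ absorbable into $X$. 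For the potential sum, Corollary~\ref{Corollary:Annihilation_II} replaces $\psi_{jk}(1-G)$ by $\mathfrak{b}_j\mathfrak{b}_k + \Xi_{jk}$, producing the $V$-part of $\mathbb{K}$ plus contributions controlled by Eq.~(\ref{Eq:Potential_Epsilon_Total}).

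The $-D_\mathrm{log}$ contribution to $C_\mathrm{log}$ must come from $-\sum_{i=1}^4 \mathcal{E}_i$ in Eq.~(\ref{Eq:Full_Representation}), conjugated by $(1-G)$. The dominant scalar piece arises from the $\Lambda$-term in $\mathcal{E}_1$ of Eq.~(\ref{Eq:Representation_Of_Error_I}): substituting $a_k = \frac{1}{\sqrt{a_0 a_0^\dagger}}a_0\,\mathfrak{a}_k$ with $\mathfrak{a}_k = \gamma_k\mathfrak{b}_k - \nu_k\mathfrak{b}_{-k}^\dagger$, the normal-ordered expansion of $\mathfrak{a}_k\mathfrak{a}_{-k}$ has constant part $-\gamma_k\nu_k$, and $\mathfrak{a}_{-k'}^\dagger\mathfrak{a}_{k'}^\dagger$ yields $-\gamma_{k'}\nu_{k'}$. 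On the diagonal slice $\ell=-k,\ \ell'=-k'$, the product of these two constants gives $\sum_{k,k'}\Lambda_{k(-k),k'(-k')}\nu_k\nu_{k'}\gamma_k\gamma_{k'} = D_\mathrm{log}$; combined with the minus sign in $-\mathcal{E}_1$ this yields $-D_\mathrm{log}$, completing the identification $\sum_k\Pi_k - D_\mathrm{log} = C_\mathrm{log}$. All remaining pieces of $\mathcal{E}_1$ (off-diagonal slices, non-scalar Wick contractions, and $\Upsilon$-contributions) and the whole of $\mathcal{E}_2, \mathcal{E}_3, \mathcal{E}_4$ from Eqs.~(\ref{Eq:Representation_Of_Error_I}) and~(\ref{Eq:Representation_Of_Error_II}) are operator-valued; using the estimates of Lemmata~\ref{Lem:Lambda_Control}, \ref{Lem:Upsilon_Control}, and~\ref{Lem:Universal_Error_Estimate}, together with Lemma~\ref{Lem:True_Particle_Number_Comparison} to exchange $\mathcal{N}$ and $\widetilde{\mathcal{N}}$, they are absorbed into $X$, and a careful exponent count produces the stated operator bound on $\mathcal{F}^+_{rN}$.

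The hardest step is the sharper expectation-value bound $|\langle \chi_\delta\Gamma_1, X\chi_\delta\Gamma_1\rangle| \lesssim \sqrt{\log N}/N$ at $\kappa = 0$, where the raw operator bound only gives $N^{-1/2}$. To close the gap I would exploit that $\Gamma_1$ is the ground state of $\sum_k e_k\mathfrak{b}_k^\dagger\mathfrak{b}_k$, so all polynomial moments of $\widetilde{\mathcal{N}}$ and $\mathbb{K}$ on $\chi_\delta\Gamma_1$ are $O(1)$ in $N$ via Lemma~\ref{Lem:Exponential_Decay}. The individual contributions to $X$ are normal-ordered monomials in $\mathfrak{b}, \mathfrak{b}^\dagger$ weighted by either a $\sqrt{N}f$ or a $\nu_k$ factor, and their expectations against $\Gamma_1$ reduce to explicit momentum integrals whose logarithmic divergence tracks the same $|k|^{-3}$ tails that produce the $\log N$ in the bound $|\sum_k\Pi_k| \lesssim N^{9\kappa/2}\log N/N$ of Corollary~\ref{Corollary:Annihilation_I}. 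The bound $|\langle \chi_\delta\Gamma_1, W_\mathrm{log}\chi_\delta\Gamma_1\rangle| \lesssim 1/N$ is more direct: $W_\mathrm{log}$ contains three creation operators, hence couples $\Gamma_1$ only to its three-excitation component, and the coefficient $e_k\nu_k\sqrt{N}f_{-k,p}$ is of order $N^{-1/2}$, which Cauchy--Schwarz together with the moment bounds upgrades to $1/N$. Finally $|D_\mathrm{log}| \lesssim N^{9\kappa/2}\log N/N$ is an elementary momentum estimate using $\nu_k \lesssim N^\kappa/|k|^2$ and the bound on $\Lambda$ from Lemma~\ref{Lem:Coefficient_Comparison}, entirely parallel to the argument for $\sum_k\Pi_k$ in Corollary~\ref{Corollary:Annihilation_I}.
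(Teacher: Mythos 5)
Your overall architecture coincides with the paper's: conjugate Eq.~(\ref{Eq:Full_Representation}) by $(1-G)$, use Corollaries \ref{Corollary:Annihilation_I} and \ref{Corollary:Annihilation_II} to replace $(\gamma_k d_k+\nu_k d_{-k}^\dagger)(1-G)$ by $\mathfrak{b}_k-\Theta_k$ and $\psi_{jk}(1-G)$ by $\mathfrak{b}_j\mathfrak{b}_k$ up to $\Xi$-errors, read off $\mathbb{K}$, $W_\mathrm{log}+W_\mathrm{log}^\dagger$ and $\sum_k\Pi_k$, extract $D_\mathrm{log}$ from the $\Lambda$-part of $\mathcal{E}_1$ (the paper does this via $\langle \Gamma_1,\mathcal{E}'_\mathrm{log}\Gamma_1\rangle=D_\mathrm{log}+\sum_{k\ell}\Lambda_{k\ell,k\ell}\nu_k^2\nu_\ell^2$, which is your diagonal-slice contraction plus a scalar cross-contraction of size $O(1/N)$ you implicitly absorb into $X$), and bound everything else on $\mathcal{F}^+_{rN}$ by the error lemmas; the identity (\ref{Eq:Bounds_Without_Cut_Off}) and the operator bound on $X$ follow as in the paper. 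The estimate $|D_\mathrm{log}|\lesssim N^{\frac{9\kappa}{2}}\frac{\log N}{N}$ is indeed the elementary momentum computation you describe.

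The gap is in the sharpened $\kappa=0$ expectation bounds, which you yourself flag as the hardest step. For $W_\mathrm{log}$, the mechanism you state — coefficient of order $N^{-1/2}$, ``upgraded'' by Cauchy--Schwarz — does not produce $1/N$: Cauchy--Schwarz with a single $N^{-1/2}$ smallness factor only gives $N^{-1/2}$. The actual point is that $\Gamma_1$ is the $\mathfrak{b}$-vacuum, so the expectation of the purely creational cubic $W_\mathrm{log}$ in $\Gamma_1$ vanishes identically, and the whole contribution comes from the replacement $\Gamma_1\to\chi_\delta\Gamma_1$, controlled to order $N^{-\lambda}$ for every $\lambda$ by Lemma \ref{Lem:Exponential_Decay}; your phrase ``couples $\Gamma_1$ only to its three-excitation component'' misreads $\Gamma_1$, which for $d=1$ has no such component. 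Similarly, for $|\langle\chi_\delta\Gamma_1,X\chi_\delta\Gamma_1\rangle|\lesssim\frac{\sqrt{\log N}}{N}$ you cannot rely on the operator bounds (Lemma \ref{Lem:Upsilon_Control} only yields $N^{-1/2}$ at $\kappa=0$): you must use that the cubic terms $\sum_{\ell,k}\Upsilon^{(i)}_{\ell,k}O_i\,a_k^\dagger a_\ell^\dagger a_{k+\ell}+\mathrm{H.c.}$ are odd, so their expectation in the $e^{i\pi\mathcal{N}}$-invariant state $\chi_\delta\Gamma_1$ vanishes exactly and only the even cross term with $G$ (carrying an extra $N^{-1/2}$ from $G^\dagger G\lesssim N^{-1}(\widetilde{\mathcal{N}}+1)^3$) survives; and the $\sqrt{\log N}$ arises from a specific Cauchy--Schwarz between the $\Xi_k$-errors of Eq.~(\ref{Eq:Kinetic_Epsilon_Total}), of size $N^{-1/2}$ in the $e_k$-weighted norm, and the $\Theta_k$-operators whose square sums to $\sum_k\Pi_k=O(\frac{\log N}{N})$ by Corollary \ref{Corollary:Annihilation_I}. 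Your sketch gestures at ``momentum integrals with logarithmic divergences'' but, as written, does not identify these cancellations, and without them the bound cannot be pushed below $N^{-1/2}$.
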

\begin{proof}
We first derive the estimate on $D_\mathrm{log}$. By Lemma \ref{Lem:Coefficient_Comparison} we have the estimates $|\nu_{k'}|\lesssim \frac{N^{\kappa}}{|k'|^2}$ and $|\Lambda_{k (-k), k' (-k')}|\lesssim \frac{N^{2\kappa-1}}{|k|^2+|k'|^2}$. Combining this with $|\gamma_k|\lesssim N^{\frac{\kappa}{4}}$ yields
\begin{align*}
& |D_\mathrm{log}| \! \lesssim  \! \frac{N^{\frac{7\kappa}{2}}}{N}\sum_{k,k'}|\nu_k|\frac{1}{|k'|^2 (|k|^2 \! + \! |k'|^2)} \! \lesssim  \! \frac{N^{\frac{7\kappa}{2}}}{N}\sum_{k}|\nu_k|\int_{\mathbb{R}^3}\frac{\mathrm{d}k'}{|k'|^2 (|k|^2 \! + \! |k'|^2)} \! =  \! \frac{N^{\frac{7\kappa}{2}}}{N}\mu \sum_{k}|k|^{-1} |\nu_k|
\end{align*}
with $\mu:=\int_{\mathbb{R}^3}\frac{\mathrm{d}k'}{|k'|^2 (1+|k'|^2)}<\infty$. Furthermore note that we have
\begin{align*}
\frac{N^{\frac{7\kappa}{2}}}{N}\mu\sum_{0<|k|\leq N}|k|^{-1} |\nu_k|\lesssim \frac{N^{\frac{9\kappa}{2}}}{N}\sum_{0<|k|\leq N}|k|^{-3}\lesssim \frac{N^{\frac{9\kappa}{2}}}{N}\int_{1}^N \frac{\mathrm{d}r}{r}=N^{\frac{9\kappa}{2}}\frac{\log N}{N},
\end{align*}
as well as $\sum_{|k|>N}|k|^{-1} |\nu_k|\leq \sqrt{\sum_{|k|>N}|k|^{-4}}\sqrt{\sum_{k}|k|^{2}|\nu_k|^2}\lesssim N^{\frac{\kappa}{2}}$ by Lemma \ref{Lem:Increased_Decay}, which provides the desired bound on $D_\mathrm{log}$. In order to archive the decomposition in Eq.~(\ref{Eq:Bounds_Without_Cut_Off}), recall the definition of $\Xi_k$ and $\Xi_{jk}$ from Corollary \ref{Corollary:Annihilation_I} and Corollary \ref{Corollary:Annihilation_II}. By Eq.~(\ref{Eq:Full_Representation}) we can write
    \begin{align*}
&   \ \   \ \   \ \   \ \   \ \   \ \   \ \   \ \  (1  -  G)^\dagger \left(H_{N,\kappa}' +\sum_{i=1}^4 \mathcal{E}_i\right)(1  -  G)\\
&= \sum_{k\neq 0}e_k \left(\mathfrak{b}_k-\Theta_k+\Xi_k\right)^\dagger \left(\mathfrak{b}_k-\Theta_k+\Xi_k\right)+\sum_{j k,mn\neq 0} \!  \!  \left( V_{N^{1-\kappa}}\right)_{j k,mn}\left(\mathfrak{b}_k\mathfrak{b}_j+\Xi_{jk}\right)^\dagger  \left(\mathfrak{b}_m\mathfrak{b}_n+\Xi_{mn}\right).  
    \end{align*}
    With the definitions $\widetilde{X}_1:  =\sum_{k\neq 0}e_k \Xi_k^\dagger\, \Xi_k+\sum_{j k,mn\neq 0} \!  \!  \left( V_{N^{1-\kappa}}\right)_{j k,mn}\Xi_{jk}^\dagger  \, \Xi_{mn}- \! \left(Y \! + \! Y^\dagger \right)$, where $Y:=\sum_{k\neq 0}e_k \Xi_k^\dagger \, \Theta_k$ and $\widetilde{X}_2:=\sum_{k\neq 0}e_k \Xi_k^\dagger \mathfrak{b}_k+\sum_{j k,mn\neq 0} \!  \!  \left( V_{N^{1-\kappa}}\right)_{j k,mn}\Xi_{jk}^\dagger  \, \mathfrak{b}_m \mathfrak{b}_n$, we obtain
    \begin{align}
    \label{Eq:Pre_Split}
        (1  -  G)^\dagger \left(H_{N,\kappa}' +\sum_{i=1}^4 \mathcal{E}_i\right)(1  -  G)=\sum_{k\neq 0}A_k+\mathbb{K}+ W_\mathrm{log}+W_\mathrm{log}^\dagger +\widetilde{X}_1+\widetilde{X}_2+\widetilde{X}_2^\dagger,
    \end{align}
    where we used that $\sum_{k\neq 0}e_k \left(\mathfrak{b}_k-\Theta_k\right)^\dagger \left(\mathfrak{b}_k-\Theta_k\right)+\sum_{j k,mn\neq 0} \!  \!  \left( V_{N^{1-\kappa}}\right)_{j k,mn}\mathfrak{b}_k^\dagger \mathfrak{b}_j^\dagger  \mathfrak{b}_m\mathfrak{b}_n$ is identical to $\sum_{k\neq 0}A_k+\mathbb{K}+ W_\mathrm{log}+W_\mathrm{log}^\dagger$. Let us furthermore define $X:=X_1+X_2$ with
    \begin{align*}
        X_1: & =\widetilde{X}_1+(1  -  G)^\dagger\left(\mathcal{E}_\mathrm{log}+\mathcal{E}_\mathrm{odd}-\sum_{i=1}^4 \mathcal{E}_i\right)(1  -  G),\\
        X_2: & =\widetilde{X}_2+\widetilde{X}_2^\dagger + D_\mathrm{log}-(1  -  G)^\dagger \left( \mathcal{E}_\mathrm{log}+\mathcal{E}_\mathrm{odd}\right)(1  -  G),
    \end{align*}
    where $\mathcal{E}_\mathrm{log}:=\sum_{k \ell, k' \ell'}\Lambda_{k \ell, k' \ell'}\, a_{\ell'}^\dagger a_{k'}^\dagger  \frac{a_0^\dagger a_0}{N} a_k a_\ell$ and $\mathcal{E}_{\mathrm{odd}}:=\sum_{i=1}^3\left(\sum_{\ell,k}\Upsilon^{(i)}_{\ell,k} O_i\, a_k^\dagger a_\ell^\dagger a_{k+\ell}+\mathrm{H.c.}\right)$, with $\Lambda$, $\Upsilon^{(i)}$ and $O_i$ introduced above Eq.~(\ref{Eq:Representation_Of_Error_I}). Using Eq.~(\ref{Eq:Pre_Split}) it is easy to check that $X$ allows a decomposition of $(1  -  G)^\dagger H_{N,\kappa}' (1  -  G)$ according to Eq.~(\ref{Eq:Bounds_Without_Cut_Off}). As a consequence of Corollary \ref{Corollary:Annihilation_I} and Corollary \ref{Corollary:Annihilation_II} $\pm \left(\widetilde{X}_2+\widetilde{X}_2^\dagger\right)\big|_{\mathcal{F}^+_{N-3}}\lesssim N^{\frac{11\kappa}{4}-\frac{1}{2}} \! \left(\widetilde{\mathcal{N}}+1\right)^6 \! \Big(\mathbb{K}+1\Big)\big|_{\mathcal{F}^+_{N-3}}$, and by Lemma \ref{Lem:Upsilon_Control} we have
    \begin{align*}
       \pm (1  -  G)^\dagger \mathcal{E}_\mathrm{odd}(1  -  G)\Big|_{\mathcal{F}^+_{rN}}\lesssim N^{\frac{13\kappa}{4}-\frac{1}{2}}(1  -  G)^\dagger \left(\widetilde{\mathcal{N}}+1\right)(1  -  G)\Big|_{\mathcal{F}^+_{rN}}\lesssim N^{\frac{13\kappa}{4}-\frac{1}{2}}\left(\widetilde{\mathcal{N}}+1\right)\Big|_{\mathcal{F}^+_{rN}}.
    \end{align*}
In order to an analyse $\mathcal{E}_{\mathrm{log}}$, define $\alpha_\delta:=\sup_p \left\{\sum_k |k|^{1-2\delta} \big(\sqrt{N}f_{p,k}\big)^2\right\}\lesssim N^{2\kappa-1}$ and observe
  \begin{align}
  \label{Eq:G-Sandwich}
    & G^\dagger\!  \left(\widetilde{\mathcal{N}} \! + \! 1\right) \!  \! \left(\sum_k |k|^{1-2\delta} b_k^\dagger b_k \! + \! 1\right)\! G \lesssim \alpha_{-\frac{1}{2}} \sum_p \mathfrak{a}_p^\dagger \left(\widetilde{\mathcal{N}} \! + \! 1\right)^3 \!  \! \left(\sum_k |k|^{1-2\delta} b_k^\dagger b_k \! + \! 1\right)\mathfrak{a}_p\\
    \nonumber
     &  \ \ \   + \alpha_\delta \sum_p \mathfrak{a}_p^\dagger \left(\widetilde{\mathcal{N}} \! + \! 1\right)^3 \! \mathfrak{a}_p \lesssim N^{4\kappa-1}\left(\widetilde{\mathcal{N}} \! + \! 1\right)^4 \!  \! \left(\sum_k |k|^{1-2\delta} b_k^\dagger b_k \! + \! 1\right).
  \end{align}    
By Lemma \ref{Lem:Lambda_Control} we therefore have $\pm (1  -  G)^\dagger \mathcal{E}_\mathrm{odd}(1  -  G)\lesssim N^{\frac{11\kappa}{2}+3\delta-1}\left(\widetilde{\mathcal{N}} \! + \! 1\right)^4 \!  \! \left(\sum_k |k|^{1-2\delta} b_k^\dagger b_k \! + \! 1\right)$. Choosing $\delta$ small enough consequently yields $
\pm X_2\Big|_{\mathcal{F}^+_{rN}}\lesssim N^{\frac{11\kappa}{4}-\frac{1}{2}} \! \left(\widetilde{\mathcal{N}}+1\right)^6 \! \Big(\mathbb{K}+1\Big)\Big|_{\mathcal{F}^+_{rN}}$. We further have $\pm \widetilde{X}_1\Big|_{\mathcal{F}^+_{N-3}}\lesssim N^{\frac{11\kappa}{2}-1}\sqrt{\log N} \! \left(\widetilde{\mathcal{N}}+1\right)^6 \! \Big(\mathbb{K}+1\Big)\Big|_{\mathcal{F}^+_{N-3}}$ by Corollary \ref{Corollary:Annihilation_I} and Corollary \ref{Corollary:Annihilation_II}. Combining the results from Eq.~(\ref{Eq:Representation_Of_Error_I}), Eq.~(\ref{Eq:Representation_Of_Error_II}), Lemma \ref{Lem:Universal_Error_Estimate} and Lemma \ref{Lem:Particle_Number_Comparison}, and using the estimates $\pm \mathcal{E}_2\lesssim N^{4\kappa-1} \! \left(\mathcal{N}+1\right)^2$ and $\pm \sum_{k\neq 0}|k|^2 w_k^2 N^{-2}\left(a_0^{2\dagger }a_0^2-N^2\right) a_k^\dagger a_k\lesssim N^{5\kappa-1} \! \left(\widetilde{\mathcal{N}}+1\right)^2$, which are similar to the ones already obtained in Subsection \ref{Subsection:Proof of Theorem_Beyond}, yields
    \begin{align*}
        & \pm (1-G)^\dagger \left(\mathcal{E}_\mathrm{odd}+\mathcal{E}_\mathrm{log}-\sum_{i=1}^4 \mathcal{E}_i\right)(1-G)\bigg|_{\mathcal{F}^+_{rN}}\lesssim N^{6\kappa-1}\! (1-G)^\dagger \left(\widetilde{\mathcal{N}}+1\right)^3 (1-G) \Big|_{\mathcal{F}^+_{rN}}\\
        &  \ \ \ \  \ \ \ \  \lesssim N^{6\kappa-1}\!\left(\widetilde{\mathcal{N}}+1\right)^3 \Big|_{\mathcal{F}^+_{rN}}\lesssim N^{\frac{11\kappa}{2}-1}\! \left(\widetilde{\mathcal{N}}+1\right)^2 \! \Big(\mathbb{K}+1\Big)\Big|_{\mathcal{F}^+_{rN}}.
    \end{align*}
Hence $\pm X_1\Big|_{\mathcal{F}^+_{rN}}\lesssim N^{\frac{11\kappa}{2}-1}\sqrt{\log N} \! \left(\widetilde{\mathcal{N}}+1\right)^6 \! \Big(\mathbb{K}+1\Big)\Big|_{\mathcal{F}^+_{rN}}$, which concludes together with the corresponding bound on $X_2$ the proof of $\pm X\Big|_{\mathcal{F}^+_{rN}}\leq  C_r  N^{\frac{11\kappa}{4}-\frac{1}{2}} \! \left(\widetilde{\mathcal{N}}+1\right)^6 \! \Big(\mathbb{K}+1\Big)\Big|_{\mathcal{F}^+_{rN}}$. 

In the following let $\kappa:=0$. Using again the bound on $X_1$, we obtain
\begin{align*}
|\braket{\chi_\delta \Gamma_1, X_1 \chi_\delta \Gamma_1}|\lesssim \frac{\sqrt{\log N}}{N} \! \left\langle \chi_\delta \Gamma_1, \left(\widetilde{\mathcal{N}}+1\right)^6 \! \Big(\mathbb{K}+1\Big) \chi_\delta \Gamma_1\right\rangle\lesssim \frac{\sqrt{\log N}}{N},
\end{align*}
where we have used that $\mathfrak{b}_k \Gamma_1=0$, and therefore $\left(\widetilde{\mathcal{N}}+1\right)^6 \! \Big(\mathbb{K}+1\Big)\Gamma_1=\Gamma_1$, together with Lemma \ref{Lem:Exponential_Decay}. Regarding $ \left\langle \chi_\delta  \Gamma_1,X_2 \, \chi_\delta \Gamma_1\right\rangle$, note $\pm \left(\widetilde{X}_2+\widetilde{X}_2^\dagger \right)\Big|_{\mathcal{F}^+_{N-3}}\lesssim \epsilon \left(\widetilde{\mathcal{N}}+1\right)^4\left(\mathbb{K}+1\right)\Big|_{\mathcal{F}^+_{N-3}}+\epsilon^{-1}\mathbb{K}\Big|_{\mathcal{F}^+_{N-3}}$ by Corollary \ref{Corollary:Annihilation_I} and Corollary \ref{Corollary:Annihilation_II}. Since $\mathbb{K}\Gamma_1=0$, we obtain $ \left\langle \chi_\delta  \Gamma_1,\mathbb{K} \, \chi_\delta \Gamma_1\right\rangle\lesssim N^{-2\lambda}$ for any $\lambda>0$ by Lemma \ref{Lem:Exponential_Decay}, and consequently $\Big|  \! \left\langle \chi_\delta  \Gamma_1,\widetilde{X}_2 \, \chi_\delta \Gamma_1\right\rangle \! \Big|\lesssim N^{-\lambda}$. By a similar argument we obtain  $\left\langle \chi_\delta  \Gamma_1,W_\mathrm{log} \, \chi_\delta \Gamma_1\right\rangle\lesssim N^{-\lambda}$. In order to estimate the $\mathcal{E}_\mathrm{odd}$ term, note that $\Gamma_1$ is a quasi-free state with respect to $\mathfrak{b}_k$, respectively $\mathfrak{a}_k$, and $\mathcal{E}_\mathrm{odd}$ contains only odd powers in $\{a_k:k\in 2\pi\mathbb{Z}^3\setminus \{0\}\}$. Therefore $ e^{i\pi \mathcal{N}}\chi_\delta \Gamma_1=\chi_\delta e^{i\pi \sum_{k\neq 0}\mathfrak{a}_k^\dagger \mathfrak{a}_k}\Gamma_1=\chi_\delta \Gamma_1$ is invariant under the transformation $e^{i\pi \mathcal{N}}$, while $e^{-i\pi \mathcal{N}} \mathcal{E}_\mathrm{odd} e^{i\pi \mathcal{N}}=-\mathcal{E}_\mathrm{odd}$, and therefore $\left\langle \chi_\delta \Gamma_1, \mathcal{E}_\mathrm{odd}\chi_\delta \Gamma_1\right\rangle=0$, and the same holds for $G^\dagger \mathcal{E}_\mathrm{odd} G$. Hence
\begin{align*}
   \left\langle \chi_\delta \Gamma_1, (1  -  G)^\dagger \mathcal{E}_\mathrm{odd}(1  -  G)\chi_\delta \Gamma_1\right\rangle= -2\mathfrak{Re}\left\langle \chi_\delta \Gamma_1, \mathcal{E}_\mathrm{odd}G\chi_\delta\Gamma_1\right\rangle.
 \end{align*}
 A rough, but easy, estimate shows that $G^\dagger G\lesssim N^{-1}\left(\widetilde{\mathcal{N}}+1\right)^3$ and $\mathcal{E}_\mathrm{odd}^2\lesssim N^{-1}\left(\widetilde{\mathcal{N}}+1\right)^3$ in the case $\kappa=0$, and hence $\pm \left(\mathcal{E}_\mathrm{odd}G+G^\dagger \mathcal{E}_\mathrm{odd}\right)\lesssim N^{-1}\left(\widetilde{\mathcal{N}}+1\right)^3$. Therefore we obtain that $\Big| \! \left\langle \chi_\delta \Gamma_1, (1  -  G)^\dagger \mathcal{E}_\mathrm{odd}(1  -  G)\chi_\delta \Gamma_1\right\rangle \! \Big|\lesssim \frac{1}{N}$. In order to analyse the final term $\mathcal{E}_\mathrm{log}$, note that we have by Lemma \ref{Lem:Lambda_Control} together with Eq.~(\ref{Eq:G-Sandwich}) for $\epsilon>0$
 \begin{align*}
 & \ \ \ \ \ \ \ \ \ \pm \left\{(1-G)^\dagger\mathcal{E}_{\mathrm{log}}(1-G)-\mathcal{E}_\mathrm{log}\right\}\\
 & \lesssim \epsilon N^{3\delta-1} \left(\widetilde{\mathcal{N}} \! + \! 1\right) \!  \! \left(\sum_k |k|^{1-2\delta} b_k^\dagger b_k \! + \! 1\right)+\frac{2}{\epsilon}N^{3\delta-1}G^\dagger \left(\widetilde{\mathcal{N}} \! + \! 1\right) \!  \! \left(\sum_k |k|^{1-2\delta} b_k^\dagger b_k \! + \! 1\right)G\\
 &\lesssim \left(\epsilon N^{3\delta-1}+\frac{2}{\epsilon}N^{3\delta-2}\right)\left(\widetilde{\mathcal{N}} \! + \! 1\right) \!  \! \left(\sum_k |k|^{1-2\delta} b_k^\dagger b_k \! + \! 1\right)\lesssim \frac{1}{N}\left(\widetilde{\mathcal{N}} \! + \! 1\right)\left(\mathbb{K} \! + \! 1\right).
 \end{align*}
 Furthermore note that we have $\mathcal{E}_\mathrm{log}=\sum_{k \ell, k' \ell'}\Lambda_{k \ell, k' \ell'}\, \mathfrak{a}_{\ell'}^\dagger \mathfrak{a}_{k'}^\dagger  \widetilde{O} \mathfrak{a}_k \mathfrak{a}_\ell$, where the operator $\widetilde{O}:=\left(a_0^\dagger \frac{1}{\sqrt{a_0 a_0^\dagger }}\right)^2\frac{a_0^\dagger a_0}{N}\left(\frac{1}{\sqrt{a_0 a_0^\dagger }}a_0\right)^2$ satisfies $\pm (\widetilde{O}-1)\lesssim \frac{\mathcal{N}}{N}$ and $\Lambda$ satisfies $\|\Lambda\|\lesssim 1$. Therefore $\pm \left\{\mathcal{E}_\mathrm{log}-\mathcal{E}'_\mathrm{log}\right\}\lesssim \frac{1}{N}\mathcal{N}^3\lesssim \frac{1}{N}(\widetilde{\mathcal{N}}+1)^3$ with $\mathcal{E}'_{\mathrm{log}}:=\sum_{k \ell, k' \ell'}\Lambda_{k \ell, k' \ell'}\, \mathfrak{a}_{\ell'}^\dagger \mathfrak{a}_{k'}^\dagger   \mathfrak{a}_k \mathfrak{a}_\ell$, and hence
 \begin{align*}
 \pm \Big\langle \chi_\delta \Gamma_1, \Big\{(1-G)^\dagger\mathcal{E}_{\mathrm{log}}(1-G)-\mathcal{E}'_\mathrm{log}\Big\} \chi_\delta \Gamma_1\Big\rangle \lesssim \frac{1}{N}.
 \end{align*}
 Using again that $\|\Lambda\|\lesssim 1$ and Lemma \ref{Lem:Exponential_Decay} further yields 
\begin{align*}
\left| \! \Big\langle \chi_\delta \Gamma_1, \mathcal{E}'_{\mathrm{log}} \chi_\delta \Gamma_1  \!  \Big\rangle \! -\! \Big\langle \Gamma_1, \mathcal{E}'_{\mathrm{log}} \Gamma_1  \! \Big\rangle \! \right|\! \lesssim\!  \frac{1}{N} \braket{\Gamma_1, \! (\mathcal{N}\! +\! 1)^2 \Gamma_1}\! +\! N \! \braket{(1\! -\! \chi_\delta)\Gamma_1, \! (\mathcal{N}\! +\! 1)^2 (1\! -\! \chi_\delta)\Gamma_1}\! \lesssim \! \frac{1}{N}.
\end{align*} 
Finally we note that $\mathfrak{a}_k \mathfrak{a}_\ell \Gamma_1=-\gamma_k\nu_k \delta_{\ell+k=0} \Gamma_1+\nu_k \nu_\ell \mathfrak{b}_k \mathfrak{b}_\ell \Gamma_1$, which allows us to compute $\Big\langle \Gamma_1, \mathcal{E}'_{\mathrm{log}} \Gamma_1  \! \Big\rangle=D_\mathrm{log}+\sum_{k \ell}\Lambda_{k \ell, k \ell}\nu_k^2 \nu_\ell^2$. Combining what we have so far, and using that $|\sum_{k \ell}\Lambda_{k \ell, k \ell}\nu_k^2 \nu_\ell^2|\lesssim \frac{1}{N}$ and $\left|D_\mathrm{log}-D_\mathrm{log}\|\chi_\delta \Gamma_1\|^2\right|\lesssim N^{-\lambda}$ for $\lambda>0$, we obtain 
\begin{align*}
\left| \! \Big\langle \chi_\delta \Gamma_1, \{(1-G)^\dagger \mathcal{E}_{\mathrm{log}}(1-G)-D_\mathrm{log}\} \chi_\delta \Gamma_1  \!  \Big\rangle \right|\lesssim \frac{1}{N}.
\end{align*}
This concludes the proof of $ \big| \! \left\langle \chi_\delta  \Gamma_1,X \, \chi_\delta \Gamma_1\right\rangle \! \big|  \lesssim \frac{\sqrt{log N}}{N}$.
\end{proof}
With Theorem \ref{Th:Bounds_Without_Cut_Off} at hand, we are in a position to verify Theorem \ref{Th:Upper_Bound} by writing 
\begin{align}
\label{Eq:Upper_Bound_Starting_Point}
   & \ \  \ \ \braket{\Psi_d, H_{N,\kappa}' \Psi_d}=Z_d^{-2} \left\langle \chi_\delta \Gamma_d, \left(C_\mathrm{log}+ \mathbb{K} + \! \left(W_\mathrm{log}+W_\mathrm{log}^\dagger\right) \! + X_1 +X_2\right) \chi_\delta \Gamma_d\right \rangle\\
   \nonumber
   &=\lambda_{N,\kappa}^{(d)} \! + \! \left(Z_d^{-2} \left\langle \chi_\delta \Gamma_d, \mathbb{K} \chi_\delta \Gamma_d\right \rangle  \! -  \! \lambda_{N,\kappa}^{(d)}\right) \! + \! Z_d^{-2} \left\langle \chi_\delta \Gamma_d, \left(C_\mathrm{log} \! + \! W_\mathrm{log} \! + \! W_\mathrm{log}^\dagger \! + \! X_1 \! + \! X_2\right) \chi_\delta \Gamma_d\right \rangle.
\end{align}
In the following we are going to decompose $Z_d^{-2} \left\langle \chi_\delta \Gamma_d, \mathbb{K} \chi_\delta \Gamma_d\right \rangle  \! -  \! \lambda_{N,\kappa}^{(d)}$ as
\begin{align*}
    Z_d^{-2}\! \Big(\! \!  \left\langle \chi_\delta \Gamma_d, \mathbb{K} \chi_\delta \Gamma_d\right \rangle \! -\!   \left\langle  \Gamma_d, \mathbb{K}  \Gamma_d\right \rangle \! \!  \Big)\! +\! Z_d^{-2}\! \! \left(\left\langle \Gamma_d , \mathbb{K}\Gamma_d\right\rangle \! - \! \widetilde{\lambda}_{N,\kappa}^{(d)}\right)\! +\! Z_d^{-2}\! \! \left( \widetilde{\lambda}_{N,\kappa}^{(d)}\! -\! \lambda^{(d)}_{N,\kappa}\! \right)\! +\! \lambda^{(d)}_{N,\kappa}\! \left(1\! -\! Z_d^{-2}\right).
\end{align*}
The first term $\left\langle \chi_\delta \Gamma_d, \mathbb{K} \chi_\delta \Gamma_d\right \rangle \! -\!   \left\langle  \Gamma_d, \mathbb{K}  \Gamma_d\right \rangle$ decays faster than any power in $N^{-\lambda}$ by Lemma \ref{Lem:Exponential_Decay}. Regarding $\left\langle \Gamma_d , \mathbb{K}\Gamma_d\right\rangle \! - \! \widetilde{\lambda}_{N,\kappa}^{(d)}$, note that $\Gamma_d$ is an eigenvector of $\sum_{k\neq 0}e_k \mathfrak{b}_k^\dagger \mathfrak{b}_k$ to the eigenvalue $\widetilde{\lambda}^{(d)}_{N,\kappa}$, and as such has a finite number of excitations, i.e. there exists a finite set $I_d\subset 2\pi\mathbb{Z}^3\setminus \{0\}$ and a $C_d$, such that $\mathfrak{b}_k \Gamma_d=0$ for $k\notin I$ and $\|\mathfrak{b}_j \mathfrak{b}_k \Gamma_d\|\leq C$, and therefore we obtain
\begin{align}
\label{Eq:Eigen_Argument}
    \pm  \left(\left\langle \Gamma_d , \mathbb{K}\Gamma_d\right\rangle - \widetilde{\lambda}_{N,\kappa}^{(d)}\right)= \pm \sum_{jk, mn}\left(V_{N^{1-\kappa}}\right)_{jk , mn}\braket{\mathfrak{b}_j \mathfrak{b}_k \Gamma_d,\mathfrak{b}_m \mathfrak{b}_n \Gamma_d}\leq C_d^2 |I_d|^4 \|\widehat{V}\|_\infty N^{\kappa-1},
\end{align}
Regarding the third term we have $\left|\widetilde{\lambda}_{N,\kappa}^{(d)}\! -\! \lambda^{(d)}_{N,\kappa}\right|\lesssim N^{2\kappa-1}$, see the estimate above Eq.~(\ref{Eq:Reference_Difference_lambda}), and again by Lemma \ref{Lem:Exponential_Decay}, we have $\left|1\! -\! Z_d^{-2}\right|\lesssim N^{-\lambda}$ for any $\lambda>0$. Therefore
\begin{align*}
    \left| Z_d^{-2} \left\langle \chi_\delta \Gamma_d, \mathbb{K} \chi_\delta \Gamma_d\right \rangle  \! -  \! \lambda_{N,\kappa}^{(d)}\right|\lesssim N^{2\kappa-1}.
\end{align*}
Note that $Z_d^{-2} \left\langle \chi_\delta \Gamma_d, \left(\! X_1 \! + \! X_2\right) \chi_\delta \Gamma_d\right \rangle\lesssim N^{\frac{11\kappa}{4}-\frac{1}{2}}\braket{\Gamma_d,\left(\widetilde{\mathcal{N}}+1\right)^6 \! \Big(\mathbb{K}+1\Big)\Gamma_d}+C_\lambda N^{-\lambda}\lesssim N^{\frac{13\kappa}{4}-\frac{1}{2}}$ for any $\lambda\geq 1-\frac{13\kappa}{4}$ by Theorem \ref{Th:Bounds_Without_Cut_Off} and Lemma \ref{Lem:Exponential_Decay}. Arguing similar as in Eq.~(\ref{Eq:Eigen_Argument}) we further obtain $\big| \! \braket{\Gamma_d, W_\mathrm{log} \Gamma_d}\! \big|\lesssim N^{2\kappa-\frac{1}{2}}$, and therefore $\big| \! Z_d^{-2}\braket{\chi_\delta \Gamma_d, W_\mathrm{log} \chi_\delta \Gamma_d}\! \big|\lesssim N^{2\kappa-\frac{1}{2}}$ by Lemma \ref{Lem:Exponential_Decay}. Regarding the constant $C_\mathrm{log}=\sum_{k} \Pi_k-D_\mathrm{log}$, note that we have $|\sum_k \Pi_k|\lesssim N^{\frac{9\kappa}{2}}\frac{\log N}{N}$ by Corollary \ref{Corollary:Annihilation_I} and $|D_\mathrm{log}|\lesssim N^{\frac{9\kappa}{2}}\frac{\log N}{N}$ by Theorem \ref{Th:Bounds_Without_Cut_Off}, and therefore
 \begin{align}
 \label{Eq:Estimate_on_eigenvalues}
     \braket{\Psi_d, H_{N,\kappa}' \Psi_d}\leq \widetilde{\lambda}_{N,\kappa}^{(d)}+ C N^{\frac{13\kappa}{4}-\frac{1}{2}}.
 \end{align}
 In case of $\kappa=0$ and $d=1$ we can further improve Eq.~(\ref{Eq:Estimate_on_eigenvalues}). Starting again with Eq.~(\ref{Eq:Upper_Bound_Starting_Point}), we make use of the fact that $\braket{\Gamma_1,\mathbb{K}\Gamma_1}=0$ and use the estimates $ \big| \! \left\langle \chi_\delta  \Gamma_1,X \, \chi_\delta \Gamma_1\right\rangle \! \big|  \lesssim \frac{\sqrt{\log N}}{N}$ and $ \big| \! \left\langle \chi_\delta  \Gamma_1,W_\mathrm{log} \, \chi_\delta \Gamma_1\right\rangle \! \big| \lesssim \frac{1}{N}$ from Theorem \ref{Th:Bounds_Without_Cut_Off}, in order to obtain for any $\lambda>0$ and suitable constants $C',C>0$ by Lemma \ref{Lem:Exponential_Decay}
 \begin{align}
  \label{Eq:Estimate_on_ground_state_energy}
     \braket{\Psi_1, H_{N,0}' \Psi_1}\leq Z_d^{-2} C_\mathrm{log}+\braket{\Gamma_1,\mathbb{K}\Gamma_1}+C'\frac{\sqrt{\log N}}{N}+C' N^{-\lambda}\leq C\frac{\log N}{N}.
 \end{align}
 Using the definition $H'_{N,\kappa}:=H_{N,\kappa}-4\pi \mathfrak{a}_{N^{1-\kappa}}N^\kappa (N-1)- \frac{1}{2}\sum_{k\neq 0}\left\{\sqrt{A_k^2-B_k^2}-A_k+C_k\right\}$, Eq.~(\ref{Eq:Estimate_on_eigenvalues}) and Eq.~(\ref{Eq:Estimate_on_ground_state_energy}) immediately yield
\begin{align*}
     \braket{\Psi_d,H_{N,\kappa}\Psi_d} & \leq 4\pi \mathfrak{a}_{N^{1-\kappa}}N^\kappa (N-1)+ \frac{1}{2}\sum_{k\neq 0}\left\{\sqrt{A_k^2-B_k^2}-A_k+C_k\right\}+\widetilde{\lambda}_{N,\kappa}^{(d)}+C N^{\frac{13\kappa}{4}-\frac{1}{2}},\\
      \braket{\Psi_1,H_{N,0}\Psi_1} & \leq 4\pi \mathfrak{a}_{N^{1-\kappa}}N^\kappa (N-1)+ \frac{1}{2}\sum_{k\neq 0}\left\{\sqrt{A_k^2-B_k^2}-A_k+C_k\right\}+C\frac{\log N}{N},
 \end{align*}
 which concludes the proof of Theorem \ref{Th:Upper_Bound} since we have by Eq.~(\ref{Eq:Sum_Comparison}) that
 \begin{align*}
   \left|\sum_{k\neq 0}\left\{\sqrt{A_k^2-B_k^2}-A_k+C_k\right\}-\sum_{k\neq 0}\left\{\sqrt{|k|^4+16\pi \mathfrak{a} N^\kappa |k|^2}+|k|^2+8\pi \mathfrak{a} N^\kappa-\frac{(8\pi \mathfrak{a} N^\kappa)^2}{2|k|^2}\right\}\right|
 \end{align*}
 is of order $N^{4\kappa-1}\log N$ in the case $K:=\infty$.

\appendix

\section{Coefficients of the renormalized Potential}
\label{Appendix:Coefficients}

In this part of the Appendix, we are going to verify essential properties of the box scattering length $\mathfrak{a}_L$ and the renormalized potential $V_{L}-V_{L} R V_{L}$, starting with the proof of $|\mathfrak{a}_L-\mathfrak{a}|\lesssim L^{-1}$ in the following Lemma \ref{Lem:Scattering_Comparison}, which we will subsequently use in order to verify Lemma \ref{Lem:Coefficient_Comparison}.

\begin{lem}
\label{Lem:Scattering_Comparison}
    Let $V\in L^1 \! \left(\mathbb{R}^3\right)$. Then there exists a constant $C>0$, such that $|\mathfrak{a}_L-\mathfrak{a}|\leq \frac{C}{L}$. 
\end{lem}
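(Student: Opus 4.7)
My plan is to reformulate $\mathfrak{a}_L$ as a torus scattering problem, compare it to the corresponding free-space scattering problem by rescaling, and extract the $O(L^{-1})$ difference from the asymptotic decay $\omega(y)\sim \mathfrak{a}/|y|$ of the free-space scattering wave. I first observe that, by translation invariance of $V_L$ in the center-of-mass coordinate, $R V_L\cdot 1 = \tilde{\omega}_L(x-y)$ for some $\tilde{\omega}_L\in L^2(\Lambda)$. Unfolding the projection $\pi_{\mathcal{H}}$ in relative coordinates shows that $\tilde{\omega}_L$ has zero mean on $\Lambda$ and solves the elliptic equation $(-2\Delta + V_L)\tilde{\omega}_L = V_L - \frac{8\pi\mathfrak{a}_L}{L}$ on $\Lambda$; equivalently, $L\int_\Lambda V_L\tilde{\omega}_L = \int V - 8\pi\mathfrak{a}_L$, which is the quantity that must be compared to $\int V - 8\pi\mathfrak{a}$.

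Next I would introduce the free-space function $\omega := 1-\varphi$, characterized by $(-2\Delta + V)\omega = V$ on $\mathbb{R}^3$ and $\omega(y)\to 0$ at infinity; it satisfies $\omega(y) = \mathfrak{a}/|y| + O(|y|^{-2})$ asymptotically and $\int V\omega = \int V - 8\pi\mathfrak{a}$. Rescaling, $\bar{\omega}(r) := \omega(Lr)$ solves $(-2\Delta + V_L)\bar{\omega} = V_L$ on $\mathbb{R}^3$ and is pointwise bounded by $C/(L|r|)$ away from the origin. For $L$ large, $\operatorname{supp} V_L \subset B_{R/L}(0) \Subset \Lambda$, so $\bar{\omega}$ can be truncated and mean-corrected to produce a zero-mean function $\bar{\omega}^{\ast}$ on $\Lambda$ that agrees with $\bar{\omega}$ on $\operatorname{supp} V_L$ up to an $O(L^{-1})$ constant shift in $L^\infty$ and approximately solves the same PDE as $\tilde{\omega}_L$, with a residual source of size $O(L^{-1})$.

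The core of the argument is then to show that $u := \tilde{\omega}_L - \bar{\omega}^{\ast}$ is of order $O(L^{-1})$ in a norm allowing one to test against $V_L$. Subtracting the two PDEs produces $(-2\Delta + V_L)u = h$ with $\|h\|_{L^2(\Lambda)} = O(L^{-1})$, and inverting $-2\Delta + V_L$ on the zero-mean subspace of $L^2(\Lambda)$ uniformly in $L$ yields $\|u\|_{L^\infty(\operatorname{supp} V_L)} = O(L^{-1})$. Combining this with the elementary calculation $L\int_\Lambda V_L \bar{\omega}^{\ast} = \int_{\mathbb{R}^3} V\omega + O(L^{-1}) = \int V - 8\pi\mathfrak{a} + O(L^{-1})$, obtained by the substitution $u = Lr$ and bookkeeping for the truncation and mean correction, then produces $L\int V_L\tilde{\omega}_L = \int V - 8\pi\mathfrak{a} + O(L^{-1})$, from which $|\mathfrak{a}_L - \mathfrak{a}| \lesssim L^{-1}$ is immediate.

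The main obstacle is the uniform-in-$L$ control of the inverse of $-2\Delta + V_L$ on the zero-mean subspace of $L^2(\Lambda)$: as $L\to\infty$, $V_L$ degenerates to a highly concentrated, distributionally singular interaction, and naive elliptic estimates pick up constants that blow up in $L$. The cleanest resolution is to exploit positivity, using $V_L\geq 0$ to bound $-2\Delta + V_L \geq -2\Delta$ from below and combining this with the uniform spectral gap $(2\pi)^2$ of $-\Delta$ on the zero-mean subspace to deduce $L$-independent $H^1$ estimates. An alternative, more explicit, route is to iterate the Lippmann--Schwinger equation $u = R_0(h - V_L u)$, where $R_0$ is the zero-mean free Laplacian inverse on $\Lambda$, and use that the Green's function difference $G_\Lambda - G_{\mathbb{R}^3}$ extends to a uniformly bounded function on $\overline{\Lambda}$, so that every pairing against the compactly supported $V_L$ contributes a factor $\int V_L = L^{-1}\int V$, producing the desired $O(L^{-1})$ estimate.
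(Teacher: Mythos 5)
Your starting point---identifying $RV_L$ applied to the zero mode with a zero-mean torus scattering solution $\tilde\omega_L$ satisfying $(-2\Delta+V_L)\tilde\omega_L=V_L-\tfrac{8\pi\mathfrak{a}_L}{L}$, and comparing it with the rescaled whole-space solution $\omega(L\cdot)$ cut off to the box---is sound and is the same in spirit as the paper's construction. The gap is in the core comparison step. First, the residual in $(-2\Delta+V_L)u=h$ is not $O(L^{-1})$ in $L^2(\Lambda)$: the mean correction produces a term $c\,V_L$ with $c=\int\chi\,\omega(L\cdot)=O(L^{-1})$, and $\|V_L\|_{L^2}\sim L^{1/2}\|V\|_{L^2}$ (indeed $V_L$ need not lie in $L^2$ at all, since only $V\in L^1$ is assumed and the paper's estimates are meant to be uniform in $\|V\|_{L^1}$), so this contribution is at best $O(L^{-1/2})$. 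Second, even granting $\|h\|_{L^2}=O(L^{-1})$, your inversion step does not yield $\|u\|_{L^\infty(\mathrm{supp}\,V_L)}=O(L^{-1})$: the positivity/spectral-gap argument ($-2\Delta+V_L\geq -2\Delta$ on the zero-mean subspace) gives only an $L$-uniform $H^1$ bound, and $H^1\not\hookrightarrow L^\infty$ in three dimensions; upgrading to $L^\infty$ on the shrinking ball $\mathrm{supp}\,V_L\subset B_{R/L}(0)$ would require $H^2$-type estimates involving $\|V_L u\|_{L^2}$, which is circular without extra regularity of $V$. Since the quantity you must control is $L\int V_L u$ and $L\int V_L=\int V\sim 1$, these lost powers of $L$ destroy the claimed $O(L^{-1})$ rate. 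Your one-sentence alternative via the Lippmann--Schwinger equation and the boundedness of $G_\Lambda-G_{\mathbb{R}^3}$ is in fact the viable route (essentially the argument of \cite{H}), but as written it is only a gesture: the iteration, its convergence, and the pairing estimates against $V_L$ are precisely the missing content.

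For contrast, the paper never estimates the difference of the two scattering solutions in any norm. It takes $\psi_L=\chi\varphi_L$, observes $(-\Delta_2+V_L)\psi_L=V_L-\xi_L$ where $\xi_L$ is supported where $\varphi_L$ coincides with its explicit tail $\frac{\mathfrak{a}}{L|x-y|}$, so that $\xi_L=L^{-1}\xi$ with a fixed $\xi\in C_0^\infty$, and then uses the exact identity $RV_L=\psi_L+R\xi_L-\left(\int\chi\varphi_L\right)(1-RV_L)$ to obtain a closed formula for $8\pi\mathfrak{a}-8\pi\mathfrak{a}_L$ in terms of $\int\chi\varphi_L$ and $L\langle RV_L,\xi_L\rangle$. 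These are bounded by $\int\chi\varphi_L=O(L^{-1})$ and $|L\langle RV_L,\xi_L\rangle|\lesssim\|(-\Delta_2)^{-1}RV_L\|\lesssim L^{-1}$, the latter following from the uniform coefficient bound $|(V_L-V_LRV_L)_{k(-k),00}|\lesssim L^{-1}$; only $V\in L^1$ and positivity enter. To repair your proposal you would either need to implement this identity-based cancellation or carry out the Green's-function comparison in full.
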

\begin{proof}
Let $\omega$ be the radial solution of $\left(-2\Delta_x+V\right)\omega=V$ in $\mathbb{R}^3$ satisfying $\omega(x) \underset{|x|\rightarrow \infty}{\longrightarrow}0$, which allows us to express the scattering length $\mathfrak{a}$ as $8\pi \mathfrak{a}:=\int_{\mathbb{R}^3}V(x)\, \mathrm{d}x-\braket{V,\omega}$. Note that $\braket{V,\omega}$ is well-defined, since $\omega$ is in $L^2_{\mathrm{loc}}$ and $V$ has compact support. Furthermore we introduce the re-scaled objects $V_L(x):=L^2 V(L x)$ and $\varphi_L(x):=\varphi(L x)$. In order to verify the Lemma, let us use the solution $\varphi$ in order to create an approximation of the box scattering solution $R V_L\in L^2\! \left(\Lambda^2\right)$, i.e. for a smooth cut-off function $0\leq \chi\leq 1$ satisfying $\mathrm{supp}(\varphi)\subset \mathrm{int}(\Lambda)$ and $\chi(x)=1$ for all $|x|\leq \frac{1}{4}$, let us define $\psi_L(x,y):=\chi(x-y)\varphi_L(x-y)$ as well as $\xi_L(x,y):=2(\Delta \chi)(x-y)\varphi_L(x-y)+4(\nabla \chi)(x-y)\nabla \varphi_L(x-y)$, where $x-y$ is the distance on the torus. The function $\psi_L\in L^2\! \left(\Lambda^2\right)$ then satisfies the differential equation 
 \begin{align*}
    \big(\!  -\! \Delta_2\! +\! V_L\big)\psi_L & =V_L-\xi_L,
 \end{align*}
where $\Delta_2:=\Delta_x + \Delta_y$, see Section \ref{Sec:The two-body Problem}. Using $R \big(\!  -\! \Delta_2\! +\! V_L \big)=1-(1-R V_L)(1-\pi_{\mathcal{H}})$, and $(1-\pi_{\mathcal{H}})\psi_L=\left(\int \chi\varphi_L\right) \! 1$, we can therefore express $R V_L$ as
\begin{align}
\label{Eq:Identity_For_The_Resolvent}
    R V_L=\psi_L+R \xi_L-\left(\int_{\mathbb{R}^3} \chi\varphi_L\right) \! \big(1-R V_L\big).
\end{align}
Since $V$ has compact support, we have $L\braket{V_L,\psi_L}=L\braket{V_L,\varphi_L}=\braket{V,\varphi}$ for $L$ large enough. Consequently we can express $ 8\pi\mathfrak{a}_L$ according to Eq.~(\ref{Eq:Identity_For_The_Resolvent}) as
\begin{align*}
    8\pi\mathfrak{a}_L &  \! =  \! L \braket{V_L,(1 \! - \! R V_L)} \! = \! \int_{\mathbb{R}^3}V \! - \! L\braket{V_L,RV_L} \! = \!  \int_{\mathbb{R}^3}V  \! - \! \braket{V,\varphi} \! - \! L \! \braket{V_L,R \xi_L}  \! + \! 8\pi \mathfrak{a}_L  \! \int_{\mathbb{R}^3} \!  \chi\varphi_L\\
    &= 8 \pi \mathfrak{a}-L \! \braket{V_L,R \xi_L} +8\pi \mathfrak{a}_L  \! \int_{\mathbb{R}^3} \chi\varphi_L,
\end{align*}
or equivalently 
\begin{align}
\label{Eq:Expansion}
8\pi\mathfrak{a}-8\pi\mathfrak{a}_L=\left(1-\frac{1}{1-\int_{\mathbb{R}^3} \chi\varphi_L}\right)8\pi \mathfrak{a}+\frac{1}{1-\int_{\mathbb{R}^3} \chi\varphi_L}L \! \braket{R V_L, \xi_L}.
\end{align}
Note that $0\leq \int_\Lambda \chi\varphi_L \leq \frac{C}{L}$, which can be verified easily utilizing the bounds $\|\widehat{\chi}\|_\infty<\infty$ and
\begin{align*}
    2L |k|^2 |\widehat{\varphi_L}(k)|=2 \left|\frac{k}{L}\right|^2 \left|\widehat{\varphi}\left(\frac{k}{L}\right)\right|=\left|\widehat{V} \! \left(\frac{k}{L}\right)-\left\langle e^{i\frac{k}{L}x}V,\frac{1}{-2\Delta+V}V\right\rangle\right|\lesssim 1,
\end{align*}
where we have used $V\frac{1}{-2\Delta+V}V\leq V$ and $|\braket{e^{i\frac{k}{L}x},Ve^{i\frac{k}{L}x}}|= \int_{\mathbb{R}^3}V<\infty$. Consequently the first term on the right hand side of Eq.~(\ref{Eq:Expansion}) is of the order $\frac{1}{L}$. Regarding the second term, note that the support of $\Delta \chi(x-y)$ and $\nabla \chi(x-y)$ is contained in $\mathrm{int}(\Lambda)\setminus B_{\frac{1}{4}}(0)$, which is disjoint to the support of $V_L$ for $L$ large enough, and therefore $\varphi_L(x-y)=\frac{\mathfrak{a}}{L|x-y|}$ on the support of $\Delta \chi(x-y)$, respectively on the support of $\nabla \chi(x-y)$. This especially means that $\xi_L=\frac{1}{L}\xi$, where $\xi(x,y):=2(\Delta \chi)(x-y)\frac{\mathfrak{a}}{|x-y|}-4(\nabla \chi)(x-y)\frac{(x-y)\mathfrak{a}}{|x-y|^3}$ is a $C^\infty_0$ function. Hence
\begin{align*}
    \left| L\braket{R V_L,\xi_L}\right| \! = \! \left|\left\langle(-\Delta_2)^{-1}R V_L,(-\Delta_2)\xi\right\rangle\right| \! \lesssim \! \|(-\Delta_2)^{-1}R V_L\|.
\end{align*}
In order to compute $\|(-\Delta_2)^{-1}R V_L\|$, note that the coefficients $f_k:=\braket{e^{ik(x-y)},(-\Delta_2)R V_L}=\left(V_L-V_L R V_L\right)_{k(-k), 0 0)}$ satisfy $|f_k|\leq \frac{D}{L}$ for a suitable constant $D$. Therefore we obtain 
\begin{align*}
   \|(-\Delta_2)^{-1}R V_L\|^2=\sum_k \frac{|f_k|^2}{|k|^4}\leq \frac{D}{L^2}\sum_k \frac{1}{|k|^4},
\end{align*}
and consequently $\|(-\Delta_2)^{-1}R V_L\|\leq \frac{C}{L}$ for a suitable constant $C$.
\end{proof}

With the result from Lemma \ref{Lem:Scattering_Comparison} we are finally in a position to verify Lemma \ref{Lem:Coefficient_Comparison}.
\begin{proof}[Proof of Lemma \ref{Lem:Coefficient_Comparison}]
Let us first derive the bound on the coefficients $L\Big(V_{L}-V_{L} R V_{L}\Big)_{k_1 k_2, k_3 k_4}$. Since $V\in L^1 \! \left(\mathbb{R}^3\right)$ and $VRV\geq 0$, it is enough to bound $L\Big(V_{L} R V_{L}\Big)_{k_1 k_2, k_1 k_2}$. For $(k_1,k_2)\in \mathcal{H}$
\begin{align*}
L\Big(V_{L} R V_{L}\Big)_{k_1 k_2, k_1 k_2} \!  \! = \! L\braket{e^{ik_1 x}e^{i k_2 y},\pi_{\mathcal{H}}V_L RV_L \pi_{\mathcal{H}}e^{ik_1 x}e^{i k_2 y}} \! \leq \!  L\braket{e^{ik_1 x}e^{i k_2 y},V_L e^{ik_1 x}e^{i k_2 y}} \! = \! \int_{\mathbb{R}^3} \! V,
\end{align*}
where we have used $\pi_{\mathcal{H}}V_L RV_L \pi_{\mathcal{H}}\leq \pi_{\mathcal{H}}V_L \pi_{\mathcal{H}}$. Regarding the momentum pairs $(k_1,k_2)\in \mathcal{L}$, let us define $X:=\pi_{\mathcal{L}}V_L R V_L \pi_{\mathcal{L}}$ and $\chi(x,y):=\mathds{1}_{B_{L^{-1}R}(0)}(x-y)$, where $R$ is such that $\mathrm{supp}(V)\subset B_R(0)$. Then we obtain by Cauchy-Schwarz
\begin{align}
\nonumber
& V_L R V_L  =\chi V_L R V_L \chi \lesssim \chi \pi_{\mathcal{H}} V_L R V_L \pi_{\mathcal{H}} \chi +\chi \pi_{\mathcal{L}} V_L R V_L\pi_{\mathcal{L}}\chi \\
\label{Eq:Helpful_Estimate_X}
& \ \ \leq \chi \pi_{\mathcal{H}} V_L \pi_{\mathcal{H}} \chi +\chi \pi_{\mathcal{L}} X \pi_{\mathcal{L}}\chi \lesssim  V_L   +\chi \pi_{\mathcal{L}} (V_L+X) \pi_{\mathcal{L}}\chi .
\end{align}
Using $\pi_{\mathcal{L}} \chi \pi_{\mathcal{L}}\leq 2\int_{\Lambda^2} \chi\lesssim L^{-3}$ and $\pi_{\mathcal{L}} V_L \pi_{\mathcal{L}}\leq 2\int_{\Lambda^2} V_L\lesssim L^{-1}$ we especially obtain
\begin{align*}
\|X\|\leq \|\pi_{\mathcal{L}} V_L \pi_{\mathcal{L}}\|+\|\pi_{\mathcal{L}} \chi \pi_{\mathcal{L}}\|^2 \left(\|\pi_{\mathcal{L}} V_L \pi_{\mathcal{L}}\|+\|X\|\right)\lesssim L^{-1}+L^{-6}\|X\|,
\end{align*}
and therefore $\|X\|\lesssim L^{-1}$, which implies $L\Big(V_{L} R V_{L}\Big)_{k_1 k_2, k_1 k_2}\lesssim 1$ for $(k_1,k_2)\in \mathcal{L}$.

In order to verify Eq.~(\ref{Eq:Coefficients_Renormalized_Potential}), let $W(x,y):=e^{i k_1 x}e^{i k_2 y}-e^{i(k_1+k_2)x}$. By Eq.~(\ref{Eq:Helpful_Estimate_X}) we have
\begin{align*}
L \braket{V_L W, R V_L W}\lesssim L \braket{W, V_L W}+\|\pi_{\mathcal{L}}\chi W\|^2\lesssim L \braket{W, V_L W}+L^{-6},
\end{align*}
where we have used that $\|\pi_{\mathcal{L}} V_L \pi_{\mathcal{L}}\|, \|X\|\lesssim L^{-1}$ and $\|\pi_{\mathcal{L}}\chi W\|=\left|\int_{\Lambda^2}\chi W\right|\lesssim L^{-3}$. Further
\begin{align*}
L \braket{W, V_L W}=\int_{\mathbb{R}^3} V( z)|e^{i k L^{-1} z}-1|^2\, \mathrm{d}z\leq L^{-2}|k|^2\int_{\mathbb{R}^3}V(z)|z|^2\, \mathrm{d}z\lesssim L^{-2}|k_2|^2,
\end{align*}
where we have used that $V$ has compact support and $V\in L^1 \! \left(\mathbb{R}^3\right)$. Consequently
    \begin{align}
\nonumber
        &\left| \! L\Big(V_{L} R V_{L}\Big)_{k_1 k_2,k_3 k_4} \! - \! L\Big(V_{L} R V_{L}\Big)_{(k_1+k_2) 0 ,k_3 k_4} \! \right| \! = \! L \left| \! \left\langle V_L W, R V_L e^{ik_3 x}e^{ik_4 y}\right\rangle \! \right| \\
            \label{Eq:Zero_Total_Momentum_Comparison}
        &  \ \ \   \ \ \  \leq  \sqrt{L \braket{V_L, R V_L}}\sqrt{L \braket{V_L W, R V_L W}} \lesssim  \! L^{-1} |k_2|.
    \end{align}
By Eq.~(\ref{Eq:Zero_Total_Momentum_Comparison}) and the fact that $\left|L\Big(V_{L}-V_{L} R V_{L}\Big)_{00,00}-8\pi \mathfrak{a} \right|\lesssim L^{-1}$, see Lemma \ref{Lem:Scattering_Comparison}, we observe that it is enough to verify 
\begin{align}
    \label{Eq:Coefficients_Renormalized_Potential_II}
    \left| \! L\Big(V_{L} R V_{L}\Big)_{k 0,k 0} \! - \! L\Big(V_{L} R V_{L}\Big)_{0 0 ,0 0} \! \right|\lesssim L^{-1}|k|
\end{align}
in order to prove Eq.~(\ref{Eq:Coefficients_Renormalized_Potential}). In order to prove Eq.~(\ref{Eq:Coefficients_Renormalized_Potential_II}), let us introduce the projection $\pi$ onto the momentum pairs $(\ell_1,\ell_2)$ with $\ell_2\neq 0$ as well as the boosted Laplace operator $-\Delta^{(p)}_{2}:=(\frac{1}{i}\nabla_x+p)^2+(\frac{1}{i}\nabla_y)^2$, which satisfies $-\Delta^{(p)}_{2}=e^{-ip x}(-\Delta_2)e^{ip x}$ for $p\in 2\pi \mathbb{Z}^3$. Furthermore let us introduce the pseudo-inverse $R_p$ of the operator $\pi\left(-\Delta^{(p)}_{2}+V_L\right)\pi$, and define the state $\psi:=e^{-ipx}R e^{ipx}V_L$. Using $(-\Delta_2 \! + \! V_L)R=1-(1-\pi_\mathcal{H})(1-V_L R)$, yields the equation
    \begin{align*}
        \pi (-\Delta^{(p)}_{2} \! + \! V_L)\psi  & =  \pi e^{-ipx}(-\Delta_2 \! + \! V_L)R e^{ipx}V_L=  \pi V_L  -  \lambda e^{ip(y-x)}
    \end{align*}
    where $\lambda:=\mathds{1}(|p|<K)(V_L-V_L R V_L)_{0 p ,p 0}$. Using $R_p   (-\Delta^{(p)}_{2} \! + \! V_L)=1-(1-R_p V_L)(1-\pi)$ we therefore obtain $\psi=R_p V_L - \lambda  R_p e^{ip(y-x)}$. By the first part of this proof $|\lambda|\lesssim L^{-1}$ and similarly $|\braket{V_L,R_p e^{i\ell_1 x}e^{i \ell_2 y}}|\lesssim L^{-1}$ for $\ell_1,\ell_2\in 2\pi \mathbb{Z}^3$, and therefore
    \begin{align*}
   \left|L\Big(V_{L} R V_{L}\Big)_{p 0, p 0}-L\braket{V_L, R_p  V_L}\right|=L  \left|\lambda\braket{V_L,R_p e^{ip(y-x)}}\right|\lesssim L^{-1}.
    \end{align*}
    In order to prove Eq.~(\ref{Eq:Coefficients_Renormalized_Potential_II}), it is consequently enough to verify
    \begin{align}
    \label{Eq:Boosted_Laplace}
       \left|\braket{ V_L, R_p  V_L}-\braket{ V_L, R_0  V_L}\right|\lesssim L^{-2}|p|. 
    \end{align}
    In order to show Eq.~(\ref{Eq:Boosted_Laplace}), note that $R_p-R_0=R_0\left(-|p|^2+2i\nabla_x\cdot p\right)R_p$ and therefore
    \begin{align}
    \nonumber
 &\left\langle  V_L,R_p V_L \right\rangle  - \left\langle   V_L,R_0   V_L  \right\rangle  =  \frac{1}{2} \! \left\langle   V_L,  \left( R_p   -  R_0  \right)   V_L  \right\rangle  +  \frac{1}{2}  \left\langle  V_L,  \left( \! R_{-p}   -  R_0  \! \right)   V_L \right\rangle\\
\nonumber
       & \ \  = -\frac{|p|^2}{2}\left\langle  V_L,R_0 \left(R_p+R_{-p}\right) V_L \right\rangle+ \left\langle  V_L,R_0 \left( i\nabla_x\cdot p\right) \left(R_p-R_{-p}\right)  V_L \right\rangle  \\
      \label{Eq:Estimate_With_Symmetry}
       & \ \ = -\frac{|p|^2}{2}\sum_{q\neq 0} \overline{\psi_0(q)}\big(\psi_p(q)+\psi_{-p}(q)\big)+\sum_{q\neq 0} (p\cdot q)\overline{\psi_0(q)}\big(\psi_p(q)-\psi_{-p}(q)\big)
    \end{align}
   with $\psi_p:=R_{p} V_L$. Defining $\eta_p:=-\Delta_2^{(p)}\psi_p$, we note that $\eta_p(q)=L^{-1}\widehat{V}\left(\frac{q}{L}\right)-\braket{e^{iq(x-y)}V_L,R_p V_L}$ for $q\neq 0$ and therefore $|\eta_p(q)|\lesssim \frac{1}{L}$, which can be verified similarly to the bounds on $(V_L R V_L)_{k_1 k_2,k_3 k_4}$ from the first part of this proof. Consequently $|\psi_p(q)|\lesssim \frac{1}{L(|q|^2+|q+p|^2)}$ and
   \begin{align}
   \label{Eq:Appendix_New_Bound_I}
       \frac{|p|^2}{2}\left|\sum_{q\neq 0} \overline{\psi_0(q)}\big(\psi_p(q)+\psi_{-p}(q)\big)\right|\lesssim L^{-2}|p|^2 \sum_{q\neq 0}\frac{1}{2|q|^2 (|q|^2+|q+p|^2)}\lesssim L^{-2}|p|.
   \end{align}
   Regarding the second term in Eq.~(\ref{Eq:Estimate_With_Symmetry}), let us identify $\sum_{q\neq 0} (p\cdot q)\overline{\psi_0(q)}\big(\psi_p(q)-\psi_{-p}(q)\big)$ as
   \begin{align}
   \label{Eq:Appendix_New_Bound_II}
\sum_{q\neq 0} \frac{(p\cdot q)\overline{\eta_0(q)}\eta_p(q)}{2|q|^2}\left(\frac{1}{|q|^2 \! + \! |q \! + \! p|^2} \! - \! \frac{1}{|q|^2 \! + \! |q \! - \! p|^2}\right) \! + \!  \sum_{q\neq 0} \frac{(p\cdot q)\overline{\psi_0(q)}\left(\eta_p(q) \! - \! \eta_{-p}(q)\right)}{|q|^2 \! + \! |q \! - \! p|^2}.
   \end{align}
   Starting with the first term in Eq.~(\ref{Eq:Appendix_New_Bound_II}), let $f(q,p):=\frac{1}{2|q|^2}\left(\frac{1}{|q|^2  +  |q  +  p|^2}  -  \frac{1}{|q|^2  +  |q  -  p|^2}\right)$ and note
   \begin{align*}
       \left|\sum_{q\neq 0}(p\cdot q)\overline{\eta_0(q)}\eta_p(q) f(q,p)\right|\lesssim L^{-2}|p|\sum_{q\neq 0}|q| |f(q,p)|\lesssim L^{-2}|p|\int_{\mathbb{R}^3}|q| |f(q,p)| \, \mathrm{d}q=L^{-2}|p|\mu,
   \end{align*}
with $\mu:=\int_{\mathbb{R}^3}|q| |f(q,e)| \, \mathrm{d}q<\infty$ and $e$ being a unit vector. Regarding the second term in Eq.~(\ref{Eq:Appendix_New_Bound_II}), we define $\psi_{p,q}:=R_{p} e^{iq(x-y)}V_L$, which again satisfies $|\psi_{p,q}(\ell)|\lesssim \frac{1}{L(|\ell|^2+|\ell+p|^2)}$. Using Lemma \ref{Lem:Increased_Decay} and the fact that $R_{-p}-R_p=-4 R_{-p} \left(i\nabla_x\cdot p\right)R_{p}$, we observe that
   \begin{align*}
      & \left|\eta_p(q)  -  \eta_{-p}(q)\right|=\left|\left\langle e^{iq(x-y)}V_L,\left(R_{-p}-R_p\right) V_L\right\rangle\right| =4\left|\sum_{\ell \neq 0}(p\cdot \ell)\overline{\psi_{-p,q}(\ell)}\psi_p(\ell)\right|\\
     & \ \ \leq 4|p| \sqrt{\sum_{\ell\neq 0}|\psi_{-p,q}(\ell)|^2}\sqrt{\sum_{\ell\neq 0}|\ell|^2 |\psi_{p}(\ell)|^2}\lesssim |p| \sqrt{L^{-2}|p|^{-1}}\sqrt{L^{-1}}=L^{-\frac{3}{2}}|p|^{\frac{1}{2}}. 
   \end{align*}
   Consequently we can estimate the second term in Eq.~(\ref{Eq:Appendix_New_Bound_II}), using Lemma \ref{Lem:Increased_Decay} again, by
   \begin{align*}
      & \ \ \  \ \ \  \ \  \ \ \  \ \ \left| \sum_{q\neq 0} \frac{(p\cdot q)\overline{\psi_0(q)}\left(\eta_p(q) \! - \! \eta_{-p}(q)\right)}{|q|^2 \! + \! |q \! - \! p|^2}\right|\lesssim L^{-\frac{3}{2}}|p|^{\frac{3}{2}}\sum_{q\neq 0}\frac{|q||\psi_0(q)|}{|q|^2 \! + \! |q \! - \! p|^2}\\
       & \leq L^{-\frac{3}{2}}|p|^{\frac{3}{2}}\sqrt{\sum_{q\neq 0}\left(|q|^2 \! + \! |q \! - \! p|^2\right)^{-2}}\sqrt{\sum_{q\neq 0}|q|^2 |\psi_0(q)|^2}\lesssim L^{-\frac{3}{2}}|p|^{\frac{3}{2}}\sqrt{|p|^{-1}}\sqrt{L^{-1}}=L^{-2}|p|.
   \end{align*}
\end{proof}

With Lemma \ref{Lem:Coefficient_Comparison} at hand, we can furthermore verify the following Lemma \ref{Lem:Approximate_LHY}.

\begin{lem}
\label{Lem:Approximate_LHY}
    Let $A_k$, $B_k$ and $C_k$ be the coefficients defined above Eq.~(\ref{Eq:Writen_In_d}), and let us define $f_k:=\sqrt{A_k^2-B_k^2}-A_k+C_k$ and $g_k:=\sqrt{|k|^4+16\pi \mathfrak{a} N^\kappa |k|^2}-|k|^2-8\pi \mathfrak{a} N^\kappa+\frac{(8\pi \mathfrak{a} N^\kappa)^2}{2|k|^2}$. Then we have $ \sum_{k\neq 0} \Big|f_k-g_k\Big| \lesssim \frac{N^{4\kappa}\log N}{N}+\frac{N^{3\kappa}}{K}$ for $K\geq C N^{\frac{\kappa}{2}}$ and a suitable constant $C$.
\end{lem}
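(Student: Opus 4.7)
I would split $\sum_{k\neq 0}|f_k-g_k| = \sum_{0<|k|<K} + \sum_{|k|\geq K}$, bounding $|f_k|$ and $|g_k|$ separately in the high-momentum range and comparing them via a Taylor/mean-value estimate in the low-momentum range. Throughout, let $\tilde A_k := |k|^2 + 8\pi\mathfrak{a}N^\kappa$, $\tilde B_k := 8\pi\mathfrak{a}N^\kappa$, $\tilde C_k := \tilde B_k^2/(2|k|^2)$, so that $g_k = F(\tilde A_k, \tilde B_k, \tilde C_k)$ with $F(A,B,C) := \sqrt{A^2-B^2}-A+C$.

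For $|k|\geq K$, the indicator in $\mu_k$ vanishes and the $|k|^2 w_k^2$ contributions cancel, giving $A_k = |k|^2 - 2\sigma_0$. Lemma \ref{Lem:Coefficient_Comparison} provides the uniform global bounds $|B_k|, |\sigma_k|\lesssim N^\kappa$, so the assumption $K\geq CN^{\kappa/2}$ with $C$ large enough forces $A_k\geq |k|^2/2$ and $B_k^2\leq A_k^2/2$. I would Taylor-expand $\sqrt{A_k^2-B_k^2}-A_k = -B_k^2/(2A_k) + O(B_k^4/A_k^3)$ and combine with the algebraic identity $C_k - B_k^2/(2A_k) = B_k^2(A_k-|k|^2)/(2|k|^2 A_k) = -B_k^2\cdot 2\sigma_0/(2|k|^2 A_k)$ to get $|f_k|\lesssim N^{3\kappa}/|k|^4 + N^{4\kappa}/|k|^6$; the analogous expansion for $g_k$ around $(\tilde A_k, \tilde B_k, \tilde C_k)$ produces the same bound. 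Summing via $\sum_{|k|\geq K}|k|^{-4}\lesssim K^{-1}$, and using $K\geq CN^{\kappa/2}$ to make $N^{4\kappa}/K^3\lesssim N^{3\kappa}/K$ subdominant, yields the $N^{3\kappa}/K$ contribution.

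For $0<|k|<K$, I would use the exact relations $C_k = B_k^2/(2|k|^2)$ and $\tilde C_k = \tilde B_k^2/(2|k|^2)$ to reduce the comparison to the two-variable function $\hat F(A,B) := F(A, B, B^2/(2|k|^2))$. Lemma \ref{Lem:Coefficient_Comparison}, combined with Lemma \ref{Lem:Scattering_Comparison} to handle the $|\mathfrak{a}_{N^{1-\kappa}}-\mathfrak{a}|\lesssim N^{\kappa-1}$ piece of $2\sigma_0$, gives $|A_k-\tilde A_k|, |B_k-\tilde B_k|\lesssim N^{2\kappa-1}(1+|k|)$. A mean-value estimate on $\hat F$ then uses $\lvert\partial_A \hat F\rvert \lesssim B^2/(A\sqrt{A^2-B^2})$ and $\lvert\partial_B \hat F\rvert \lesssim B\lvert\sqrt{A^2-B^2}-|k|^2\rvert/(|k|^2\sqrt{A^2-B^2})$ at the intermediate point $A\sim |k|^2+N^\kappa$, $B\lesssim N^\kappa$. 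Splitting the resulting per-$k$ bound into the regimes $|k|^2\lesssim N^\kappa$ (which contributes $O(N^{4\kappa-1})$ by volume counting) and $|k|^2\gtrsim N^\kappa$ (which reduces to $N^{4\kappa-1}\sum|k|^{-3}\lesssim N^{4\kappa-1}\log(K/N^{\kappa/2})$), and checking that the subleading $B^4/A^3$ and higher Taylor terms contribute no worse, produces the $N^{4\kappa-1}\log N$ contribution.

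The main obstacle is precisely the low-momentum analysis. A naive mean-value bound on $F(A,B,C)$ with $A,B,C$ treated as independent would combine $\partial_C F = 1$ with $|C_k-\tilde C_k|\lesssim N^{3\kappa-1}/|k|$, producing a contribution of size $N^{3\kappa-1}\sum_{|k|<K}|k|^{-1}\lesssim N^{3\kappa-1}K^2$, which is far larger than the target once $K$ is a significant power of $N$. Reducing to the constrained function $\hat F(A,B)$ is essential: its $B$-derivative benefits from the near-cancellation $\sqrt{A^2-B^2}\approx |k|^2$ in the regime $|k|^2\gtrsim N^\kappa$, and it is precisely this cancellation that brings the tail of the low-range sum down to the sharp $\log N$ level.
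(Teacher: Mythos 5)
Your core strategy coincides with the paper's own proof: reducing to the constrained two-variable function $\hat F(A,B)$ is exactly the algebraic identity used there, $f_k=A_k R\big[(B_k/A_k)^2\big]+\tfrac{B_k^2(A_k-|k|^2)}{2|k|^2A_k}$ with the Taylor residuum $R[x]=\sqrt{1-x}-1+\tfrac{x}{2}$, which encodes precisely the cancellation $C_k=B_k^2/(2|k|^2)$ that you correctly identify as the crux; the separate treatment of a high-momentum tail by individual bounds and of the low-momentum region by a comparison estimate, with the further split at $|k|\sim N^{\kappa/2}$, is likewise parallel to the paper.

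There is, however, one genuine gap, located at the top of your ``low'' range. Your mean-value estimate feeds in the coefficient bounds $|A_k-\tilde A_k|,|B_k-\tilde B_k|\lesssim N^{2\kappa-1}(1+|k|)$ from Lemma \ref{Lem:Coefficient_Comparison} uniformly for all $0<|k|<K$, which produces the per-$k$ bound $N^{4\kappa-1}|k|^{-3}$ and hence, after summation, $N^{4\kappa-1}\log\!\big(K N^{-\kappa/2}\big)$ — and you then identify this with $N^{4\kappa-1}\log N$. The lemma imposes no upper bound on $K$, and it is in fact invoked with $K:=\infty$ at the end of Section \ref{Sec:Trial_States_and_their_Energy} (via Eq.~(\ref{Eq:Sum_Comparison})); for such $K$ (or any superpolynomial $K$) your bound diverges and does not yield the stated $\tfrac{N^{4\kappa}\log N}{N}+\tfrac{N^{3\kappa}}{K}$. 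The repair is exactly the additional cut at $|k|=N$ made in the paper's proof: for $|k|>N$ the linear-in-$|k|$ estimate is no longer the right input — one uses instead the uniform bound of Lemma \ref{Lem:Coefficient_Comparison}, giving $|A_k-\tilde A_k|,|B_k-\tilde B_k|\lesssim N^{\kappa}$ (so the per-$k$ bound improves to $N^{3\kappa}|k|^{-4}$), or one simply bounds $|f_k|+|g_k|\lesssim N^{3\kappa}|k|^{-4}$ term by term via $|R[x]|\lesssim x^2$; either way $\sum_{|k|>N}|f_k-g_k|\lesssim N^{3\kappa-1}$, which is admissible, and the logarithm generated by the region $CN^{\kappa/2}<|k|\leq \min\{N,K\}$ is then genuinely $\log N$. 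With this third region added (and the routine check that the segment joining $(A_k,B_k)$ to $(\tilde A_k,\tilde B_k)$ stays in the region $A-B\gtrsim |k|^2$, so the mean-value derivatives are controlled as you state), your argument goes through and reproduces the paper's bound.
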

\begin{proof}
    We will verify the statement separately for the different terms $\sum_{0<|k|\leq CN^{\frac{\kappa}{2}}} \big|f_k-g_k\big|$, $\sum_{CN^{\frac{\kappa}{2}}<|k|\leq N} \big|f_k-g_k\big|$ and $\sum_{|k|>N} \big|f_k-g_k\big|$, where $C>0$. First of all observe that we have
\begin{align*}
   & f_k=A_k R\! \left[\left(\frac{B_k}{A_k}\right)^2\right]+\frac{(B_k)^2(A_k-|k|^2)}{2|k|^2 A_k},\\
   & g_k =\left(|k|^2+8\pi \mathfrak{a} N^\kappa\right)R\! \left[\left(\frac{8\pi \mathfrak{a} N^\kappa}{|k|^2+8\pi \mathfrak{a} N^\kappa}\right)^2\right]+\frac{(8\pi \mathfrak{a} N^\kappa)^3}{2|k|^2 \left(|k|^2+8\pi \mathfrak{a} N^\kappa\right)},
\end{align*}
with the definition of the Taylor residuum $R[x]:=\sqrt{1-x}-1+\frac{x}{2}$. By Lemma \ref{Lem:Coefficient_Comparison} we have the estimates $|A_k-|k|^2|\lesssim N^{\kappa}$ and $|B_k|\lesssim N^\kappa$ and therefore we obtain 
\begin{align*}
    \sum_{|k|>N} \big|f_k-g_k\big|\lesssim \sum_{|k|>N} \left(\big|f_k\big|+\big|g_k\big|\right)\lesssim \sum_{|k|>N} \frac{N^{3\kappa}}{|k|^4}\lesssim N^{3\kappa-1},
\end{align*}
where we have used $|R[x]|\lesssim |x|^2$. Furthermore we have $\left||k|^2+8\pi \mathfrak{a} N^{\kappa}-2A_k\right|\lesssim N^{2\kappa-1}(1+|k|)+N^{\kappa}\mathds{1}(|k|>K)$ and $\left|8\pi \mathfrak{a} N^{\kappa}-2B_k\right|\lesssim N^{2\kappa-1}(1+|k|)$ by Lemma \ref{Lem:Coefficient_Comparison}, and as a consequence we have $\big|\frac{(B_k)^2(A_k-|k|^2)}{2|k|^2 A_k}-\frac{(8\pi \mathfrak{a} N^\kappa)^3}{2|k|^2 \left(|k|^2+8\pi \mathfrak{a} N^\kappa\right)}\big|\lesssim \frac{N^{4\kappa}}{N|k|^3}+\frac{N^{3\kappa}\mathds{1}(|k|>K)}{|k|^4}$. Defining $x:=\left(\frac{B_k}{A_k}\right)^2$ as well as $y:=\left(\frac{8\pi \mathfrak{a} N^\kappa}{|k|^2+8\pi \mathfrak{a} N^\kappa}\right)^2$ we have the similar estimate $|x-y|\lesssim \frac{N^{3\kappa}}{N|k|^3}+\frac{N^{2\kappa}\mathds{1}(|k|>K)}{|k|^4}$, and hence
\begin{align*}
   & \big|A_k R[x] \! - \! \left(|k|^2 \! + \! 8\pi \mathfrak{a} N^\kappa\right)R[y]\big|\leq \big|(A_k \! - \! |k|^2-8\pi \mathfrak{a} N^\kappa) R[x] \big| \! + \! \left(|k|^2 \! + \! 8\pi \mathfrak{a} N^\kappa\right)\! \big| R[x] \! - \! R[y]\big|\\
   & \ \  \ \ \lesssim \left(N^{2\kappa-1}|k|+N^{\kappa}\mathds{1}(|k|>K)\right)\frac{N^{4\kappa}}{|k|^8}+|k|^2 \frac{N^{2\kappa}}{|k|^4}\left(\frac{N^{3\kappa}}{N|k|^3}+\frac{N^{2\kappa}\mathds{1}(|k|>K)}{|k|^4}\right)\\
   & \ \  \ \ \lesssim \frac{N^{5\kappa}}{N|k|^5}+\frac{N^{4\kappa}\mathds{1}(|k|>K)}{|k|^6}
\end{align*}
for $|k|>CN^{\frac{\kappa}{2}}$ and $C$ large enough such that $x,y\leq \frac{1}{2}$, where we have used that $\big| R[x] \! - \! R[y]\big|\lesssim \max\{x,y\}|y-x|$ for such $x$ and $y$. Consequently $\sum_{CN^{\frac{\kappa}{2}}<|k|\leq N} \big|f_k-g_k\big|\lesssim \frac{N^{4\kappa}\log N}{N}+\frac{N^{3\kappa}}{K}$. Regarding the final term $\sum_{0<|k|\leq CN^{\frac{\kappa}{2}}} \big|f_k-g_k\big|$, observe that 
\begin{align*}
    \bigg|\sqrt{A_k^2-B_k^2}-\sqrt{|k|^4+16\pi \mathfrak{a} N^\kappa |k|^2}\bigg|\lesssim \frac{\big|A_k^2-B_k^2-|k|^4-16\pi \mathfrak{a} N^\kappa |k|^2\big|}{\sqrt{|k|^4+16\pi \mathfrak{a} N^\kappa |k|^2}}\lesssim \frac{N^{3\kappa-1}}{N^{\frac{\kappa}{2}} |k|}.
\end{align*}
Furthermore we have $\big||k|^2+8\pi \mathfrak{a} N^\kappa-\frac{(8\pi \mathfrak{a} N^\kappa)^2}{2|k|^2}-A_k+C_k\big|\lesssim \frac{N^{3\kappa}}{N |k|}$, and therefore we conclude with the estimate $\sum_{0<|k|\leq CN^{\frac{\kappa}{2}}} \big|f_k-g_k\big|\lesssim \sum_{0<|k|\leq CN^{\frac{\kappa}{2}}} \frac{N^{3\kappa}}{N |k|}\lesssim N^{4\kappa-1}$. 
\end{proof}

While it is clear by Lemma \ref{Lem:Coefficient_Comparison} that the coefficients $\nu_k$ ,respectively $f_{\ell,\ell-k}$, introduced in Section \ref{Sec:Beyond_GP}, respectively Lemma \ref{Lem:Delta_Control}, are bounded from above by $\frac{N^{\kappa}}{|k|^2}$, the following Lemma \ref{Lem:Increased_Decay} shows that the decay in $k$ is even stronger. A similar result holds for the coefficients $\psi_p(\ell)$ defined below Eq.~(\ref{Eq:Estimate_With_Symmetry}).

\begin{lem}
\label{Lem:Increased_Decay}
    There exists a constant $C>0$, such that we have $\sum_{k\neq 0} |k|^2 \nu_k^2\leq C N^{1+\kappa}$ as well as $\sum_{k\neq 0} |k|^2 f_{\ell,\ell-k}^2\leq C N^{\kappa-1}$ for all $\ell$. Furthermore $\sum_{\ell\neq 0}|\ell|^2 |\psi_p(\ell)|^2\leq \frac{C}{L}$.
\end{lem}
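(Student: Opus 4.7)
I would prove the three inequalities in the order (2), (1), (3), using the unifying principle that the renormalized potential $W := V_L - V_L R V_L$ is, modulo $O(L^{-1})$ corrections, the multiplication operator $V_L \varphi_L$ in the relative coordinate (via the identity $R V_L \approx \chi\varphi_L$ from Eq.~(\ref{Eq:Identity_For_The_Resolvent})), whose matrix elements are essentially the rescaled Fourier coefficients $L^{-1}\widehat{V\varphi}(\cdot/L)$.

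For (2), I would note that $(|k|^2+|\ell-k|^2)f_{\ell,\ell-k} = \widetilde W_k := W_{k(\ell-k),\ell 0} + W_{k(\ell-k),0\ell}$, hence
\begin{align*}
\sum_{k\neq 0}|k|^2 f_{\ell,\ell-k}^2 \leq \sum_{k\neq 0}\frac{|\widetilde W_k|^2}{|k|^2+|\ell-k|^2} \leq 2\sum_{j=1,2}\|W\, e^{i\ell x_j}\|_{\dot{H}^{-1}(\Lambda^2)}^2,
\end{align*}
by Parseval on the total-momentum-$\ell$ subspace (preserved by the translation-invariant $W$). Replacing $W e^{i\ell x_j}$ by $V_L \varphi_L e^{i\ell x_j}$ via the scattering approximation with $O(L^{-1})$ error, a Riemann-sum comparison with the continuum integral then gives
\begin{align*}
\|V_L \varphi_L e^{i\ell x_1}\|_{\dot{H}^{-1}(\Lambda^2)}^2 \lesssim L^{-2}\sum_{k\neq 0} \frac{|\widehat{V\varphi}(k/L)|^2}{|k|^2+|\ell-k|^2} \lesssim L^{-1}\|V\varphi\|_{\dot{H}^{-1}(\mathbb R^3)}^2,
\end{align*}
uniformly in $\ell$. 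For (1), I would split at $|k| \sim N^{\kappa/2}$: the small-$k$ contribution, using $|\nu_k| \lesssim N^{\kappa/4}|k|^{-1/2}$, sums to $O(N^{5\kappa/2}) \leq O(N^{1+\kappa})$ since $\kappa < 2/3$, and the large-$k$ part is dominated by $w_k = \tfrac{N}{2|k|^2}W_{(-k)k, 00}$, yielding $\sum_{k\neq 0}|k|^2 w_k^2 = \tfrac{N^2}{4}\sum_{k\neq 0}|W_{(-k)k,00}|^2/|k|^2$, which is exactly (2) specialized to $\ell = 0$ multiplied by $N^2$, and hence bounded by $N^{1+\kappa}$.

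For (3), I would test $\pi(-\Delta_2^{(p)}+V_L)\psi_p = \pi V_L - \lambda\, \pi e^{ip(y-x)}$ (with $|\lambda| \lesssim L^{-1}$, as in the proof of Lemma~\ref{Lem:Coefficient_Comparison}) against $\psi_p$, rearranging into
\begin{align*}
\braket{\psi_p, -\Delta_2^{(p)}\psi_p} = \braket{\psi_p, V_L(1-\psi_p)} - \lambda\braket{\psi_p, e^{ip(y-x)}}.
\end{align*}
Since $\psi_p = R_p V_L$ has zero total momentum, it depends only on $x-y$; the boosted-scattering approximation then reduces the first right-hand term to $\int V_L\omega_L\varphi_L + O(L^{-1})$, which by rescaling equals $L^{-1}\int V\omega\varphi + O(L^{-1}) = O(L^{-1})$, while the second is $O(L^{-2})$ via $\|\psi_p\| \lesssim L^{-1}$ (from the pointwise bound $|\psi_p(q)| \lesssim L^{-1}(|q|^2+|q+p|^2)^{-1}$ established in the proof of Lemma~\ref{Lem:Coefficient_Comparison}). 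Since $\sum_q |q|^2|\psi_p(q)|^2 \leq \braket{\psi_p, -\Delta_2^{(p)}\psi_p}$ by dropping the non-negative $(q+p)^2$-term in the eigenvalues of $-\Delta_2^{(p)}$, this yields the claim. The main technical obstacle will be making the scattering approximation $W \approx V_L\varphi_L$ (and similarly $\psi_p \approx $ boosted scattering solution) rigorous with uniformly $O(L^{-1})$ remainders, which essentially parallels the proof of Lemma~\ref{Lem:Scattering_Comparison} adapted to the boosted operator $-\Delta_2^{(p)}$.
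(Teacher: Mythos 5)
Your plan has a genuine gap at its core, and it sits exactly where you flag "the main technical obstacle": the replacement of $W e^{i\ell x_j}$ by $V_L\varphi_L e^{i\ell x_j}$ with a controlled remainder \emph{in the weighted $\ell^2$ (i.e.\ $\dot H^{-1}$-type) norm, uniformly in $\ell$}. Nothing in the paper's toolbox gives this: Eq.~(\ref{Eq:Identity_For_The_Resolvent}) and Lemma \ref{Lem:Scattering_Comparison} treat only the zero-momentum sector and only produce coefficient-wise information, and the coefficient bounds of Lemma \ref{Lem:Coefficient_Comparison} degrade linearly in the momenta, while a uniform bound $|W_{k(\ell-k),\ell 0}|\lesssim L^{-1}$ alone is useless here because $\sum_k(|k|^2+|\ell-k|^2)^{-1}$ diverges in three dimensions; so the summable decay you need is precisely what has to be proven, for a boosted ($\ell\neq 0$) scattering problem, and is not a routine adaptation. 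Moreover, your Riemann-sum step $L^{-2}\sum_k |\widehat{V\varphi}(k/L)|^2|k|^{-2}\lesssim L^{-1}\|V\varphi\|^2_{\dot H^{-1}(\mathbb{R}^3)}$ is shaky in the tail: for $V$ merely in $L^1$ one only knows that $\widehat{V\varphi}$ is bounded and vanishes at infinity, and boundedness alone leaves $\sum_{|k|\gtrsim L}|k|^{-2}$ divergent, so the sum-to-integral passage needs quantitative decay (or a position-space reformulation through $\nabla\omega\in L^2$) that the sketch does not supply. Since your bound on $\sum_k|k|^2\nu_k^2$ reduces to the $\ell=0$ case of the same estimate (note also that $f_{0,k}$ is defined to be zero, so it is the analogous computation rather than a literal specialization), and your part (3) again invokes the unproven boosted approximation to evaluate $\braket{\psi_p,V_L(1-\psi_p)}$, the gap propagates to all three claims.

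For comparison, the paper closes the lemma in three lines with no continuum approximation at all: setting $\varphi_\ell:=V_{N^{1-\kappa}}(e^{i\ell x}+e^{i\ell y})$, the $f_{\ell,\ell-k}$ are the Fourier coefficients of $R\varphi_\ell$, so $\sum_k\left(|k|^2+|\ell-k|^2\right)f_{\ell,\ell-k}^2=\braket{R\varphi_\ell,(-\Delta_2)R\varphi_\ell}\leq\braket{R\varphi_\ell,(-\Delta_2+V_{N^{1-\kappa}})R\varphi_\ell}=\braket{\varphi_\ell,R\varphi_\ell}$ by positivity of $V$ and the pseudo-inverse identity, and the right-hand side is a sum of matrix elements of $V_{N^{1-\kappa}}RV_{N^{1-\kappa}}$ already bounded by $N^{\kappa-1}$ in (the proof of) Lemma \ref{Lem:Coefficient_Comparison}; this is Eq.~(\ref{Eq:Increased_Decay}). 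The same identity with $\psi_p=R_pV_L$ and the boosted Laplacian gives $\sum_\ell|\ell|^2|\psi_p(\ell)|^2\leq\braket{V_L,R_pV_L}\lesssim L^{-1}$, and $|\nu_k|\lesssim|k|^{-2}B_k$ reduces the $\nu_k$ bound to the $\ell=0$ case times $N^2$ --- essentially the reduction you had in mind, but fed by the internal coefficient bounds rather than by the continuum scattering solution. Your first move in (3), testing the equation against $\psi_p$ and discarding the nonnegative boost contribution, is exactly right; if you then keep the potential term on the left, i.e.\ use $\braket{\psi_p,(-\Delta_2^{(p)}+V_L)\psi_p}=\braket{\psi_p,V_L}-\lambda\braket{\psi_p,e^{ip(y-x)}}$ together with the $O(L^{-1})$ matrix-element bounds from the proof of Lemma \ref{Lem:Coefficient_Comparison}, that part closes with no approximation --- and the analogous shortcut is what your argument for (2) is missing.
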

\begin{proof}
Let us define $\varphi_{\ell }:=V_{N^{1-\kappa}}(e^{i\ell x}+e^{i\ell y})$ for $|\ell|<K$. Then we have by Lemma \ref{Lem:Coefficient_Comparison}
\begin{align}
\nonumber
 &\sum_{k\neq 0} |k|^2 f_{\ell,\ell-k}^2\leq  \sum_k  \left(|k|^2 \! + \! |\ell \! - \! k|^{2}\right)(f_{\ell,\ell-k})^2  = \braket{R\varphi_{\ell },(-\Delta_2)R\varphi_{\ell } }\leq \braket{R\varphi_{\ell },(-\Delta_2+V_{N^{1-\kappa}})R\varphi_{\ell } }\\
 \label{Eq:Increased_Decay}
 & \ \  \ \  \ \ =\braket{\varphi_{\ell },R\varphi_{\ell } }= 2\left\{(V_{N^{1-\kappa}} R V_{N^{1-\kappa}})_{\ell 0 ,\ell 0}+(V_{N^{1-\kappa}} R V_{N^{1-\kappa}})_{\ell 0 ,0 \ell }\right\}\lesssim N^{\kappa-1}.
\end{align}
The estimate on $\sum_{\ell\neq 0}|\ell|^2 |\psi_p(\ell)|^2$ can be derived in the same way using $\psi_p=R_p \widetilde{\varphi}$ with $\widetilde{\varphi}:=V_L$. Regarding $\nu_k$ note that $|\nu_k|\lesssim |k|^{-2}B_k=N\braket{e^{ik(x-y)},R\varphi_0}$, and therefore $\sum_{k\neq 0} |k|^2 \nu_k^2\lesssim N^2 \braket{R\varphi_{\ell },(-\Delta_2)R\varphi_{\ell } }$. Applying Eq.~(\ref{Eq:Increased_Decay}) for $\ell=0$ concludes the proof.
\end{proof}

\section{Additional Error Estimates}
\label{Appendix:Additional_Error_Estimates}
In this Section we will discuss estimates which will allow us, together with the results in Subsection \ref{Subsection:Error_Control}, to control the error term $\sum_{i=1}^4 \mathcal{E}_i$.
\begin{lem}
\label{Lem:Universal_Error_Estimate}
     Let $0<r<1$. Then there exists a constant $C>0$ such that
     \begin{align}
     \label{Eq:Error_4}
         \pm \mathcal{E}_4\Big|_{\mathcal{F}^+_{M}}\leq C N^{\frac{11\kappa}{2}-1}\left(\widetilde{\mathcal{N}}\Big|_{\mathcal{F}^+_{M}}+1\right)
     \end{align}
      for all $M\leq \min\{rN,N-1\}$. Furthermore
      \begin{align*}
          \pm \sum_k |k|^2 w_k^2\big([\delta_k,d_k^\dagger]+[\delta_k,d_k^\dagger]+[\delta_k,\delta_k^\dagger]\big)\Big|_{\mathcal{F}^+_{M}}\leq C N^{\frac{11\kappa}{2}-1}\left(\widetilde{\mathcal{N}}\Big|_{\mathcal{F}^+_{M}}+1\right).
      \end{align*}
\end{lem}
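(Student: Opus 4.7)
The plan is to treat both bounds by the same mechanism: expand all commutators explicitly into polynomials in $\delta_k$, $d_k$ and $a_k^\dagger(a_0 a_0^\dagger)^{-1/2}a_0$, apply Cauchy--Schwarz against the pointwise bounds on the multiplier coefficients, and finally invoke a $\kappa$-adapted version of Lemma \ref{Lem:Delta_Control} together with Lemma \ref{Lem:True_Particle_Number_Comparison}.

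For the estimate on $\mathcal{E}_4$, I would first use the identity $d_k = a_0^\dagger(a_0 a_0^\dagger)^{-1/2}a_k - \delta_k$ together with the fact that $\bigl[a_0^\dagger(a_0 a_0^\dagger)^{-1/2}a_k,\, a_k^\dagger(a_0 a_0^\dagger)^{-1/2}a_0\bigr]$ equals $1$ on $\mathcal{F}^+_{N-1}$ to write
\begin{align*}
1 - [d_k,d_k^\dagger]\Big|_{\mathcal{F}^+_{N-1}} = \bigl[a_0^\dagger (a_0 a_0^\dagger)^{-1/2}a_k,\delta_k^\dagger\bigr] + \bigl[\delta_k, a_k^\dagger (a_0 a_0^\dagger)^{-1/2}a_0\bigr] - [\delta_k,\delta_k^\dagger],
\end{align*}
exactly as was already exploited in the $i=4$ case of Lemma \ref{Lem:GP_Error_Estimate}. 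The coefficient $\alpha_k := \sqrt{A_k^2-B_k^2} - A_k + C_k$ satisfies the pointwise bound $|\alpha_k| \lesssim \min\{N^\kappa,\, N^{3\kappa}|k|^{-4}\}$ for $K \geq N^{\kappa/2}K_0$ by Lemma \ref{Lem:Coefficient_Comparison}. Multiplying out the above commutators, applying Cauchy--Schwarz in the form $\pm(A^\dagger B + B^\dagger A) \leq \epsilon A^\dagger A + \epsilon^{-1} B^\dagger B$ with $A$ built from $\delta_k$ and $B$ from $a_k^\dagger (a_0 a_0^\dagger)^{-1/2}a_0$, and then optimizing over $\epsilon$, reduces the task to bounding $\sum_k |\alpha_k|\,\delta_k^\dagger \delta_k$ (and its adjoint) together with $\sum_k |\alpha_k|\,a_k^\dagger(a_0 a_0^\dagger)^{-1}a_k$. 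The latter is controlled by $N^{3\kappa-1}\mathcal{N}$ using $(a_0 a_0^\dagger)^{-1}\big|_{\mathcal{F}^+_M} \lesssim N^{-1}$; the former is handled by the $\kappa$-adapted analogue of Lemma \ref{Lem:Delta_Control}, whose proof carries over verbatim after replacing the bounds $|f_{\ell,k}|\lesssim N^{-1}|k|^{-2}$ and $|w_k|\lesssim |k|^{-2}$ by their $\kappa$-dependent counterparts $|f_{\ell,k}|\lesssim N^{\kappa-1}|k|^{-2}$ and $|w_k|\lesssim N^\kappa|k|^{-2}$.

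The second display is structurally identical, with $|k|^2 w_k^2 \lesssim N^{2\kappa}|k|^{-2}$ playing the role of $|\alpha_k|$. Cauchy--Schwarz splits the mixed commutators $[\delta_k, d_k^\dagger]$ and its adjoint into a $\delta$-contribution (bounded by a weighted analogue of Lemma \ref{Lem:Delta_Control}) and a $d$-contribution (bounded using $\sum_k d_k^\dagger d_k \lesssim \mathcal{N}$ together with the uniform factor $|k|^2 w_k^2 \lesssim N^{2\kappa}|k|^{-2}$). The diagonal commutator $[\delta_k,\delta_k^\dagger]$ is handled by the decomposition $\delta_k = -\delta_k'+\delta_k''$ from the proof of Lemma \ref{Lem:Delta_Control}, which reduces it to normal-ordered quartics estimated by the same particle-number techniques.

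The main obstacle is extracting the precise exponent $\tfrac{11\kappa}{2}-1$: it arises as $2\kappa$ from the weighted $\delta$-estimate, plus $\tfrac{3\kappa}{2}$ from the conversion $\mathcal{N} \lesssim N^{3\kappa/2}(\widetilde{\mathcal{N}}+1)$ provided by Lemma \ref{Lem:True_Particle_Number_Comparison}, plus $2\kappa$ from the two-body matrix elements, minus the factor $N^{-1}$ gained from $(a_0 a_0^\dagger)^{-1}$ or from the scaling of $V_{N^{1-\kappa}}$. The balancing choice of $\epsilon$ in each application of Cauchy--Schwarz must be made so that no extraneous $N^\kappa$ factors enter, and uniformity in $K$ must be preserved throughout. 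Setting $\kappa=0$ recovers the bound $N^{-1}(\widetilde{\mathcal{N}}+1)$ consistent with the Gross--Pitaevskii case of Lemma \ref{Lem:GP_Error_Estimate}.
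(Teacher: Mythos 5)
Your setup (the representation of $\mathcal{E}_4$ on $\mathcal{F}^+_{N-1}$ through the commutators $\bigl[\delta_k,a_k^\dagger (a_0a_0^\dagger)^{-1/2}a_0\bigr]$, $\bigl[a_0^\dagger (a_0a_0^\dagger)^{-1/2}a_k,\delta_k^\dagger\bigr]$ and $[\delta_k,\delta_k^\dagger]$) agrees with the paper, but the way you then estimate the mixed commutators contains a genuine gap. You propose to split them by Cauchy--Schwarz into a $\delta$-contribution $\sum_k|\alpha_k|\,\delta_k^\dagger\delta_k$ and an $a$-contribution $\sum_k|\alpha_k|\,a_k^\dagger(a_0a_0^\dagger)^{-1}a_k$, and to optimize over $\epsilon$. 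This discards exactly the cancellation that the commutator provides, and it cannot produce the prefactor $N^{\frac{11\kappa}{2}-1}$ uniformly on $\mathcal{F}^+_M$ with $M\leq rN$: the $\kappa$-adapted Lemma \ref{Lem:Delta_Control} only gives $\sum_k\delta_k^\dagger\delta_k\lesssim N^{2\kappa-1}(\mathcal{N}+1)^2$ on $\mathcal{F}^+_{rN}$, so the $\delta$-side of your splitting is of size $N^{4\kappa-1}(\mathcal{N}+1)^2$, and since $\mathcal{N}$ may be as large as $rN$ on the allowed subspace, the best one can extract after optimizing $\epsilon$ against the $a$-side $N^{2\kappa-1}(\mathcal{N}+1)$ is of order $N^{3\kappa-\frac12}(\mathcal{N}+1)$, i.e. $N^{\frac{9\kappa}{2}-\frac12}(\widetilde{\mathcal{N}}+1)$ after Lemma \ref{Lem:True_Particle_Number_Comparison} --- short of the claim by a factor $N^{\frac12-\kappa}$. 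This is precisely the $\sqrt{(M+1)/N}$-type bound of Lemma \ref{Lem:GP_Error_Estimate}, which is weaker than the present statement; in particular your closing remark that setting $\kappa=0$ ``recovers'' $N^{-1}(\widetilde{\mathcal{N}}+1)$ is not correct, since Lemma \ref{Lem:GP_Error_Estimate} only yields $\sqrt{(M+1)/N}(\mathcal{N}+1)$.

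The missing idea is that the mixed commutators must be computed explicitly rather than split: the paper writes $\bigl[\delta_k,a_k^\dagger (a_0a_0^\dagger)^{-1/2}a_0\bigr]+\bigl[a_0^\dagger (a_0a_0^\dagger)^{-1/2}a_k,\delta_k^\dagger\bigr]=T_1+T_2+\mathrm{H.c.}$, where $T_1$ and $T_2$ contain the factors $\frac{1}{\sqrt{a_0^\dagger a_0}}\sqrt{a_0a_0^\dagger}-\frac{1}{\sqrt{a_0a_0^\dagger}}\sqrt{a_0^\dagger a_0}$ and $\sqrt{a_0^\dagger a_0}-\sqrt{a_0a_0^\dagger}$, each of norm $O(N^{-1})$ on $\mathcal{F}^+_{rN}$. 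Thus every mixed term itself carries the $N^{-1}$, giving $T_1+T_1^\dagger,\,T_2+T_2^\dagger\lesssim N^{\kappa-1}(\mathcal{N}+1)$ with only a \emph{single} power of $\mathcal{N}$, and similarly $[\delta_k,\delta_k^\dagger]\big|_{\mathcal{F}^+_{rN}}\lesssim N^{2\kappa-1}(\mathcal{N}+1)$ (here too one must normal order \emph{inside} the commutator so that the quartic parts cancel; bounding $\pm[\delta_k,\delta_k^\dagger]$ by $\delta_k^\dagger\delta_k+\delta_k\delta_k^\dagger$, as your ``normal-ordered quartics'' phrasing suggests, reintroduces $(\mathcal{N}+1)^2$ and fails for the same reason). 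Multiplying by the uniform coefficient bound $\bigl|A_k-\sqrt{A_k^2-B_k^2}-C_k\bigr|\lesssim N^{2\kappa}$ (note your pointwise bound $\min\{N^{\kappa},N^{3\kappa}|k|^{-4}\}$ is wrong at small $|k|$, where $C_k\sim N^{2\kappa}|k|^{-2}$ dominates) then gives $N^{4\kappa-1}\mathcal{N}$ and hence $N^{\frac{11\kappa}{2}-1}(\widetilde{\mathcal{N}}+1)$. The same defect affects your treatment of the second display: $[\delta_k,d_k^\dagger]$ must likewise be reduced, via $d_k^\dagger=a_k^\dagger(a_0a_0^\dagger)^{-1/2}a_0-\delta_k^\dagger$, to the same explicitly computed commutators, not split by Cauchy--Schwarz into a $d$-part and a $\delta$-part.
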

\begin{proof}
Recall the representation of the error term $\mathcal{E}_4|_{\mathcal{F}^+_{N-1}}$ from Lemma \ref{Lem:GP_Error_Estimate}
\begin{align*}
    \frac{1}{2}\sum_{k\neq 0}\left\{A_k-\sqrt{A_k^2-B_k^2}-C_k\right\}\left(\left[\delta_k,a_k^\dagger \frac{1}{\sqrt{a_0 a_0^\dagger}}a_0\right]+\left[a_0^\dagger \frac{1}{\sqrt{a_0 a_0^\dagger}}a_k,\delta_k^\dagger\right]+\left[\delta_k,\delta_k^\dagger\right]\right)\Bigg|_{\mathcal{F}^+_{N-1}}
\end{align*}
and compute $\left[\delta_k,a_k^\dagger \frac{1}{\sqrt{a_0 a_0^\dagger}}a_0\right]+\left[a_0^\dagger \frac{1}{\sqrt{a_0 a_0^\dagger}}a_k,\delta_k^\dagger\right]=\left(T_1+T_2+\mathrm{H.c.}\right)$ with
\begin{align*}
    T_1: & =\frac{a_0}{\sqrt{N}}\left(\frac{1}{\sqrt{a_0^\dagger a_0}}\sqrt{a_0 a_0^\dagger}-\frac{1}{\sqrt{a_0 a_0^\dagger}}\sqrt{a^\dagger_0 a_0}\right)\frac{a_0}{\sqrt{N}}w_k a_{-k}^\dagger a_k^\dagger,\\
    T_2: & =\frac{1}{\sqrt{a_0 a_0^\dagger}}\left(\sqrt{a_0^\dagger a_0} \! - \! \sqrt{a_0 a_0^\dagger}\right) \! \frac{a_0 a^\dagger_k}{N}\sum_{|\ell|<K}N\left(\left(T \! - \! 1\right)_{(\ell-k)k, \ell 0} \! + \! \left(T-1\right)_{(\ell-k)k, 0 \ell}\right) a^{\dagger}_{\ell-k}a_\ell.
\end{align*}
Using that $\frac{a_0}{\sqrt{N}}\left(\frac{1}{\sqrt{a_0^\dagger a_0}}\sqrt{a_0 a_0^\dagger}-\frac{1}{\sqrt{a_0 a_0^\dagger}}\sqrt{a^\dagger_0 a_0}\right)\frac{a_0}{\sqrt{N}}$ and $\frac{1}{\sqrt{a_0 a_0^\dagger}}\left(\sqrt{a_0^\dagger a_0} \! - \! \sqrt{a_0 a_0^\dagger}\right) \! \frac{a_0 a^\dagger_k}{N}$ is bounded by $\frac{C_r}{N}$ on $\mathcal{F}^+_{rN}$, we immediately obtain $(T_1+T_1^\dagger)|_{\mathcal{F}^+_{rN}}\lesssim N^{\kappa-1}(\mathcal{N}+1)$ from $\sum_{k\neq 0}w_k^2\lesssim N^{\kappa}$ and following the proof of Lemma \ref{Lem:Delta_Control} we furthermore obtain $(T_2+T_2^\dagger)|_{\mathcal{F}^+_{rN}} \lesssim N^{\kappa-1}\mathcal{N}$. In a similar fashion, one arrives at the estimate $\left[\delta_k,\delta_k^\dagger\right]\big|_{\mathcal{F}^+_{rN}}\lesssim N^{2\kappa-1}(\mathcal{N}+1)$. This concludes the proof of Eq.~(\ref{Eq:Error_4}), since $\left|A_k-\sqrt{A_k^2-B_k^2}-C_k\right|\lesssim N^{2\kappa}$ by Lemma \ref{Lem:Coefficient_Comparison} and therefore
\begin{align*}
     \pm \mathcal{E}_4\Big|_{\mathcal{F}^+_{M}}\lesssim N^{4\kappa-1}\mathcal{N}\Big|_{\mathcal{F}^+_{M}}\lesssim N^{\frac{11\kappa}{2}-1}\left(\widetilde{\mathcal{N}}\Big|_{\mathcal{F}^+_{M}}+1\right),
\end{align*}
where we have used Lemma \ref{Lem:True_Particle_Number_Comparison}. The second part of the Lemma can be verified analogously.
\end{proof}

The following Lemma \ref{Lem:True_Particle_Number_Comparison} will allow us to compare terms in the variables $a_k$ with terms in the new variables $b_k$.
\begin{lem}
\label{Lem:True_Particle_Number_Comparison}
    We have $\mathcal{N}\lesssim N^{\frac{\kappa}{2}}\widetilde{\mathcal{N}}+N^{\frac{3\kappa}{2}}$, and for $0<\delta\leq 1$ and $\ell \in \mathbb{N}$
        \begin{align*}
      \left(\sum_{k\neq 0}\mathfrak{a}_k^\dagger \mathfrak{a}_k+1\right)^\ell \! \!  \! \left(\sum_{k}|k|^{1+\delta} \mathfrak{a}_k^\dagger \mathfrak{a}_k+1\right)  \!    & \lesssim  \! N^{\left(\frac{3\ell}{2}+2-\delta\right)\kappa+\delta} \!  \left(\sum_{k\neq 0}\mathfrak{b}_k^\dagger \mathfrak{b}_k+1\right)^\ell  \! \!  \! \left(\sum_{k}|k|^{1+\delta} \mathfrak{b}_k^\dagger \mathfrak{b}_k+1\right) \! ,\\
 \left(\sum_{k\neq 0}\mathfrak{b}_k^\dagger \mathfrak{b}_k+1\right)^\ell \! \!  \! \left(\sum_{k}|k|^{1+\delta} \mathfrak{b}_k^\dagger \mathfrak{b}_k+1\right)  \!    & \lesssim  \! N^{\left(\frac{3\ell}{2}+2-\delta\right)\kappa+\delta} \!  \left(\sum_{k\neq 0}\mathfrak{a}_k^\dagger \mathfrak{a}_k+1\right)^\ell \! \!  \! \left(\sum_{k}|k|^{1+\delta} \mathfrak{a}_k^\dagger \mathfrak{a}_k+1\right).
    \end{align*}
The same estimate holds when we exchange $\sum_{k}|k|^{1+\delta} \mathfrak{b}_k^\dagger \mathfrak{b}_k$ with $\mathbb{K}$ defined in Lemma \ref{Lem:Potential_Epsilon_Tilde} and $\sum_{k}|k|^{1+\delta} \mathfrak{a}_k^\dagger \mathfrak{a}_k$ with $\sum_{k}|k|^{2} \mathfrak{a}_k^\dagger \mathfrak{a}_k+\frac{1}{2}\sum_{k\ell,mn}(V_{N^{1-\kappa}})_{\ell k,mn}\mathfrak{a}_k^\dagger \mathfrak{a}_\ell^\dagger \mathfrak{a}_m \mathfrak{a}_n$. Furthermore 
\begin{align*}
    \left(\mathcal{N}+1\right)^\ell \!  \left(\sum_{k}|k|^{1+\delta} a_k^\dagger a_k+1\right)\big|_{\mathcal{F}^+_{M_\ell}}  \!     \lesssim  \! N^{\left(\frac{3\ell}{2}+2-\delta\right)\kappa+\delta} \!  \left(\widetilde{\mathcal{N}}+1\right)^\ell  \!  \left(\sum_{k}|k|^{1+\delta} b_k^\dagger b_k+1\right)\big|_{\mathcal{F}^+_{M_\ell}}
\end{align*}
with the definition $M_\ell:=N-2(\ell+1)$.
\end{lem}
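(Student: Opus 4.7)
The statement breaks into three distinct pieces: the first-order bound $\mathcal{N} \lesssim N^{\kappa/2}\widetilde{\mathcal{N}} + N^{3\kappa/2}$ on $L^2_\mathrm{sym}(\Lambda^N)$; a pair of symmetric operator inequalities on the extended space $\mathfrak{h}$ comparing polynomials in the $\mathfrak{a}$'s with polynomials in the $\mathfrak{b}$'s; and a restricted version of these on $\mathcal{F}^+_{M_\ell}$ involving the physical operators $a_k, b_k$.

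For the first claim I would substitute the identity $a_k = (a_0 a_0^\dagger)^{-1/2} a_0 (\gamma_k b_k + \nu_{-k} b_{-k}^\dagger)$ from Section \ref{Sec:Beyond_GP}. Since $(a_0 a_0^\dagger)^{-1/2} a_0$ is a contraction, Cauchy-Schwarz yields the pointwise operator bound $a_k^\dagger a_k \leq 2\gamma_k^2\, b_k^\dagger b_k + 2\nu_k^2\, b_{-k} b_{-k}^\dagger$. Summing in $k$, the first term is controlled by $2N^{\kappa/2}\widetilde{\mathcal{N}}$ via the uniform bound $\gamma_k^2 \lesssim N^{\kappa/2}$ from Lemma \ref{Lem:Coefficient_Comparison}, while the second, after using $[b_{-k}, b_{-k}^\dagger] = 1$ on $\mathcal{F}^+_{N-1}$, produces $2(\sup \nu_k^2)\widetilde{\mathcal{N}} + 2\sum_k \nu_k^2$. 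The needed estimate $\sum_k \nu_k^2 \lesssim N^{3\kappa/2}$ is obtained by splitting at $|k| \sim N^{\kappa/2}$: the local bound $\nu_k \lesssim N^{\kappa/4}/\sqrt{|k|}$ handles the small-$k$ part while the global bound $\nu_k \lesssim N^\kappa/|k|^2$ handles the tail, and each piece contributes $N^{3\kappa/2}$.

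For the Bogoliubov-swap inequalities on $\mathfrak{h}$, I would expand $\mathfrak{a}_k = \gamma_k \mathfrak{b}_k - \nu_k \mathfrak{b}_{-k}^\dagger$ inside each observable and bring the resulting polynomial into Wick order using the global CCR for $\mathfrak{b}_k$. For the kinetic-weighted observable this produces three contributions: a diagonal term $\sum |k|^{1+\delta}(\gamma_k^2 + \nu_k^2)\mathfrak{b}_k^\dagger \mathfrak{b}_k$ controlled by $N^{\kappa/2}\sum |k|^{1+\delta}\mathfrak{b}_k^\dagger \mathfrak{b}_k$; an off-diagonal pair-creation term $\sum |k|^{1+\delta}\gamma_k \nu_k (\mathfrak{b}_k^\dagger \mathfrak{b}_{-k}^\dagger + \mathrm{h.c.})$ handled by Cauchy-Schwarz at the cost of one extra factor $N^{\kappa/2}$; and a c-number $\sum |k|^{1+\delta}\nu_k^2$ estimated by H\"older interpolation of $\sum \nu_k^2 \lesssim N^{3\kappa/2}$ with $\sum |k|^2 \nu_k^2 \lesssim N^{1+\kappa}$ from Lemma \ref{Lem:Increased_Decay}. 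The $\ell$-th power version then follows by multiplying by $(\sum \mathfrak{b}_k^\dagger \mathfrak{b}_k + 1)^\ell$ and pulling $\mathfrak{b}$'s through to normal order; each commutator $[\mathfrak{b}_j, \mathfrak{b}_k^\dagger] = \delta_{j,k}$ encountered contributes at worst an additional factor $\sum \nu_k^2 \lesssim N^{3\kappa/2}$, which over $\ell$ such pulls accumulates into the additive $\tfrac{3\ell}{2}\kappa$ in the exponent. The reverse inequality follows from the symmetric relation $\mathfrak{b}_k = \gamma_k \mathfrak{a}_k + \nu_k \mathfrak{a}_{-k}^\dagger$, and the variant involving $\mathbb{K}$ is obtained analogously after noting $\sum e_k \mathfrak{b}_k^\dagger \mathfrak{b}_k \lesssim N^{\kappa/2}(\sum |k|^2 \mathfrak{b}_k^\dagger \mathfrak{b}_k + 1)$ and that the quartic potential interaction transforms into the same class of kinetic-plus-particle-number expressions under the Bogoliubov substitution.

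For the restricted inequality on $\mathcal{F}^+_{M_\ell}$ I would use the identities $b_k \Psi = \mathfrak{b}_k \Psi$ and $a_0^\dagger (a_0 a_0^\dagger)^{-1/2} a_k \Psi = \mathfrak{a}_k \Psi$ valid for $\Psi \in \mathcal{F}^+_{N-1}$ to transfer the extended-space bounds to the physical operators; the polynomial of interest has degree $2\ell + 2$, so every intermediate state produced by the creation operators must still lie in $\mathcal{F}^+_{N-1}$, i.e.\ $\mathcal{N} \leq N - 2(\ell + 1) = M_\ell$, which is precisely the stated restriction. On this subspace the partial isometry $(a_0 a_0^\dagger)^{-1/2} a_0$ differs from the identity by a correction of operator norm $O(\mathcal{N}/N)$ that is subleading and absorbable into the lower-order bound. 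The principal obstacle will be the bookkeeping of the normal ordering in the $\ell$-th power case: tracking which contractions contribute to the leading power of $N$ and verifying that the H\"older interpolation of the c-number sum yields the sharp exponent $\big(\tfrac{3\ell}{2} + 2 - \delta\big)\kappa + \delta$ rather than a strictly larger one.
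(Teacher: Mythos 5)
Your overall strategy is the same as the paper's: substitute $a_k=\frac{1}{\sqrt{a_0a_0^\dagger}}a_0\left(\gamma_k b_k+\nu_{-k}b_{-k}^\dagger\right)$ (respectively $\mathfrak{a}_k=\gamma_k\mathfrak{b}_k-\nu_k\mathfrak{b}_{-k}^\dagger$), use $\gamma_k^2,\nu_k^2\lesssim N^{\frac{\kappa}{2}}$, pick up one factor $\|\nu\|^2\lesssim N^{\frac{3\kappa}{2}}$ per particle-number factor and one factor $\sum_k|k|^{1+\delta}\nu_k^2$ from the weighted observable, and then transfer to the physical operators on $\mathcal{F}^+_{M_\ell}$ by noting that a polynomial of degree $2(\ell+1)$ cannot leave $\mathcal{F}^+_{N-1}$. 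Your first claim and the restriction argument are fine (in fact no $O(\mathcal{N}/N)$ correction is needed at all: on $\mathcal{F}^+_{N-1}$ one has $\mathfrak{a}_k^\dagger\mathfrak{a}_k\Psi=a_k^\dagger\frac{1}{\sqrt{a_0a_0^\dagger}}a_0a_0^\dagger\frac{1}{\sqrt{a_0a_0^\dagger}}a_k\Psi=a_k^\dagger a_k\Psi$ exactly, so the identities are exact, not approximate).

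The genuine gap is the c-number estimate that actually produces the stated exponent. The exponent $\left(\frac{3\ell}{2}+2-\delta\right)\kappa+\delta=\frac{3\ell\kappa}{2}+2\kappa+\delta(1-\kappa)$ requires the sharp bound $\sum_k|k|^{1+\delta}\nu_k^2\lesssim N^{2\kappa+\delta(1-\kappa)}$ (this is the quantity $\|\nu^\delta\|^2$ in the paper's proof). Your proposed H\"older interpolation between $\sum_k\nu_k^2\lesssim N^{\frac{3\kappa}{2}}$ and $\sum_k|k|^2\nu_k^2\lesssim N^{1+\kappa}$ gives only $\left(\sum_k|k|^2\nu_k^2\right)^{\frac{1+\delta}{2}}\left(\sum_k\nu_k^2\right)^{\frac{1-\delta}{2}}\lesssim N^{\frac{(1+\kappa)(1+\delta)}{2}+\frac{3\kappa(1-\delta)}{4}}$, which for small $\delta$ is of order $N^{\frac{1}{2}}$ even at $\kappa=0$, far above the target $N^{\delta}$; so the interpolation you flag as the "principal obstacle" indeed does not close, and with it your argument proves only a weaker exponent than the one claimed in the lemma. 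The fix is to use the pointwise decay $\nu_k\lesssim N^{\kappa}|k|^{-2}$ (Lemma \ref{Lem:Coefficient_Comparison}) in the bulk and the weighted bound of Lemma \ref{Lem:Increased_Decay} only in the tail: for a cutoff $R$, $\sum_{0<|k|\leq R}|k|^{1+\delta}\nu_k^2\lesssim N^{2\kappa}R^{\delta}$ while $\sum_{|k|>R}|k|^{1+\delta}\nu_k^2\leq R^{\delta-1}\sum_k|k|^2\nu_k^2\lesssim R^{\delta-1}N^{1+\kappa}$, and choosing $R=N^{1-\kappa}$ balances the two contributions to give $N^{2\kappa+\delta(1-\kappa)}$. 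With this replacement your bookkeeping of the diagonal, pair-creation and commutator terms (each particle-number factor costing $\max\{\|\nu\|^2,\|\gamma\|_\infty^2\}=N^{\frac{3\kappa}{2}}$) reproduces the paper's proof.
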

\begin{proof}
    Using $ a_k=\frac{1}{\sqrt{a_0 a_0^\dagger}}a_0\gamma_k b_k+\frac{1}{\sqrt{a_0 a_0^\dagger}}a_0\nu_{-k}b_{-k}^\dagger$, we can rewrite
\begin{align*}
    \mathcal{N}=\sum_{k\neq 0}\left(\gamma_k b_k+\nu_{-k}b_{-k}^\dagger\right)^\dagger \left(\gamma_k b_k+\nu_{-k}b_{-k}^\dagger\right)\lesssim \sum_{k\neq 0}\left(\gamma_k^2+\nu_k^2\right)b_k^\dagger b_k+\sum_{k\neq 0}\nu_k^2.
\end{align*}
Since $\sum_{k\neq 0}\nu_k^2\lesssim N^{\frac{3\kappa}{2}}$ and $|\nu_k|\leq |\gamma_k|\lesssim N^{\frac{\kappa}{4}}$, we obtain $\mathcal{N}\Big|_{\mathcal{F}^+_{N-1}}\lesssim N^{\frac{\kappa}{2}}\widetilde{\mathcal{N}}\Big|_{\mathcal{F}^+_{N-1}}+N^{\frac{3\kappa}{2}}$. For the other statements recall that $\mathfrak{a}_k=\gamma_k \mathfrak{b}_k+\nu_{-k}\mathfrak{b}_{-k}^\dagger$, and let us define the vector $\nu^\delta_k:=|k|^{\frac{1+\delta}{2}}\nu_k$. Then we can estimate $ \left(\sum_{k\neq 0}\mathfrak{a}_k^\dagger \mathfrak{a}_k+1\right)^\ell \! \!  \! \left(\sum_{k}|k|^{1+\delta} \mathfrak{a}_k^\dagger \mathfrak{a}_k+1\right)$ from above by
\begin{align*}
& C_{\ell} \max\{\|\nu\|^2,\|\gamma\|^2_\infty\}^\ell \max\{\|\nu^\delta\|^2,\|\gamma\|^2_\infty\}\left(\sum_{k\neq 0}\mathfrak{b}_k^\dagger \mathfrak{b}_k+1\right)^\ell  \! \!  \! \left(\sum_{k}|k|^{1+\delta} \mathfrak{b}_k^\dagger \mathfrak{b}_k+1\right)\\
& \ \  \ \  \ \  \ \ \lesssim N^{\left(\frac{3\ell}{2}+2-\delta\right)\kappa+\delta} \!  \left(\sum_{k\neq 0}\mathfrak{b}_k^\dagger \mathfrak{b}_k+1\right)^\ell  \! \!  \! \left(\sum_{k}|k|^{1+\delta} \mathfrak{b}_k^\dagger \mathfrak{b}_k+1\right).
\end{align*}
for a suitable constant $C_{\ell}$, where we have used $\|\nu\|^2=\sum_{k\neq 0}\nu_k^2\lesssim N^{\frac{3\kappa}{2}}$, $\|\gamma\|^2_\infty\lesssim N^{\frac{\kappa}{2}}$ and $\|\nu^\delta\|^2\lesssim N^{2\kappa+\delta(1-\kappa)}$. The same statement holds for $\mathfrak{a}_k$ exchanged with $\mathfrak{b}_k$, since $\mathfrak{b}_k=\gamma_k \mathfrak{a}_k+\nu_{k}\mathfrak{a}_{-k}^\dagger$. In order to verify the final statement, note that $\left(\mathcal{N}+1\right)^\ell \!  \left(\sum_{k}|k|^{1+\delta} a_k^\dagger a_k+1\right)\Psi=\left(\sum_{k\neq 0}\mathfrak{a}_k^\dagger \mathfrak{a}_k+1\right)^\ell \! \!  \! \left(\sum_{k}|k|^{1+\delta} \mathfrak{a}_k^\dagger \mathfrak{a}_k+1\right)\Psi$ for $\Psi\in \mathcal{F}^+_{N-1}$. Using $b_k^\dagger b_k \mathcal{F}^+_{M}\subset \mathcal{F}^+_{M+2}$ we furthermore have $\left(\widetilde{\mathcal{N}}+1\right)^\ell \!  \left(\sum_{k}|k|^{1+\delta} b_k^\dagger b_k+1\right)\Psi=\left(\sum_{k\neq 0}\mathfrak{b}_k^\dagger \mathfrak{b}_k+1\right)^\ell \! \!  \! \left(\sum_{k}|k|^{1+\delta} \mathfrak{b}_k^\dagger \mathfrak{b}_k+1\right)\Psi$ for $\Psi\in \mathcal{F}^+_{M_\ell}$, which concludes the proof.
\end{proof}

While the most prominent error terms appearing in the proof of Theorem \ref{Th:Beyond_GP} are being estimated in Subsection \ref{Subsection:Error_Control}, the following Lemma \ref{Lem:Particle_Number_Comparison} provides good estimates on the residual terms, which will furthermore be important for the proof of Theorem \ref{Th:Upper_Bound}.

\begin{lem}
\label{Lem:Particle_Number_Comparison}
    Let $0<\lambda_0<1-4\kappa$ and $\frac{5\kappa}{2}<\lambda<1-\frac{3\kappa}{2}$. Then 
    \begin{align}
   \label{Eq:Variable_Comparison_Spectrum}
  &  b_k^\dagger b_k\Big|_{\mathcal{F}^{\leq}_{N^{\lambda_0}}\cap \mathcal{F}^+_{N^\lambda}}-\left(\gamma_k d_k+\nu_k d_k^\dagger\right)^\dagger \left(\gamma_k d_k+\nu_k d_k^\dagger\right)\Big|_{\mathcal{F}^{\leq}_{N^{\lambda_0}}\cap \mathcal{F}^+_{N^\lambda}} \\
  \nonumber
   &\  \lesssim \left(N^{2\kappa+\frac{\lambda_0}{2}-\frac{1}{2}}+N^{\frac{3\kappa}{2}+\lambda-1}\right)\left(b_k^\dagger b_k+\frac{\widetilde{\mathcal{N}}+1}{|k|^4}\right)\Big|_{\mathcal{F}^{\leq}_{N^{\lambda_0}}\cap \mathcal{F}^+_{N^\lambda}} , 
    \end{align}
   and there exists a constant $K_0$ such that for $K\geq K_0 N^{\frac{\kappa}{2}}$ and $\delta>0$
    \begin{align}
    \label{Eq:Comparison_With_Proper_Variables}
        \sum_{k}|k|^{1-\delta}b_k^\dagger b_k\Big|_{\mathcal{F}^{\leq}_{N^{\lambda_0}}\cap \mathcal{F}^+_{N^\lambda}}\lesssim N^{-\frac{\kappa}{2}}\sum_{k\neq 0}e_k\left(\gamma_k d_k+\nu_k d_k^\dagger\right)^\dagger \left(\gamma_k d_k+\nu_k d_k^\dagger\right)\Big|_{\mathcal{F}^{\leq}_{N^{\lambda_0}}\cap \mathcal{F}^+_{N^\lambda}}+1.
    \end{align}
 Regarding the error terms $F_1$ and $F_2$ introduced below Eq.~(\ref{Eq:Representation_Of_Error_II}) we have the estimates
    \begin{align}
             \label{Eq:Residuum_II}
        & \ \ \ \  \pm \left(F_1+F_1^\dagger\right)\bigg|_{\mathcal{F}^{\leq}_{N^{\lambda_0}}\cap \mathcal{F}^+_{N^\lambda}}\lesssim N^{\frac{5\kappa}{2}+\lambda-1}\left(\widetilde{\mathcal{N}}\Big|_{\mathcal{F}^{\leq}_{N^{\lambda_0}}\cap \mathcal{F}^+_{N^\lambda}}+1\right),\\
        \label{Eq:Residuum_I}
     &  \pm F_2 \Big|_{\mathcal{F}^{\leq}_{N^{\lambda_0}}\cap \mathcal{F}^+_{N^\lambda}} \lesssim \left(N^{\frac{9\kappa}{2}+\lambda_0-1}+N^{\frac{7\kappa}{2}+2\lambda-2}\right)\left(\widetilde{\mathcal{N}}\Big|_{\mathcal{F}^{\leq}_{N^{\lambda_0}}\cap \mathcal{F}^+_{N^\lambda}}+1\right),
    \end{align}
    and $\pm \left(F_1+F_1^\dagger+F_2\right)\lesssim N^{6\kappa-1}\! \left(\widetilde{\mathcal{N}}+1\right)^3$. Further $\mathcal{N}(\mathcal{N}+1)\Big|_{\mathcal{F}^+_{N^\lambda}} \lesssim  2N^{\frac{\kappa}{2}+\lambda}\left(\widetilde{\mathcal{N}}\Big|_{\mathcal{F}^+_{N^\lambda}}+1\right)$. 
\end{lem}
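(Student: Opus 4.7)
The plan is to derive all six bounds from refined, pointwise-in-$k$ versions of the estimates underlying Lemma \ref{Lem:Delta_Control}. Retaining the factor $|k|^{-2}$ from $|f_{\ell,k}|\lesssim\frac{1}{N|k|^2}$ and $|w_k|\lesssim \frac{N^\kappa}{|k|^2}$, the two-step argument of that proof produces, for each $k\neq 0$,
\begin{align*}
(\delta_k')^\dagger\delta_k'+\delta_k'(\delta_k')^\dagger\lesssim \frac{\mathcal{N}^2}{N|k|^4},\qquad (\delta_k'')^\dagger\delta_k''+\delta_k''(\delta_k'')^\dagger\lesssim \frac{N^{2\kappa}(\mathcal{N}+1)^2}{N(N-\mathcal{N})|k|^4}(1+a_{-k}^\dagger a_{-k}),
\end{align*}
with a sharper variant on $\mathcal{F}^{\leq}_{N^{\lambda_0}}$ in which the $\mathcal{N}^2$ in the first bound is replaced by $N^{\lambda_0}\mathcal{N}$. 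Lemma \ref{Lem:True_Particle_Number_Comparison} then converts these into estimates involving $\widetilde{\mathcal{N}}$ at the cost of a factor $N^{\kappa/2}$ per power of $\mathcal{N}$.

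For Eq.~(\ref{Eq:Variable_Comparison_Spectrum}) I write $b_k=(\gamma_k d_k+\nu_k d_{-k}^\dagger)+\eta_k$ with $\eta_k:=\gamma_k\delta_k+\nu_k\delta_{-k}^\dagger$, the identity already used in Subsection \ref{Subsection:Annihilation_I}. Expansion gives
\begin{align*}
b_k^\dagger b_k-(\gamma_k d_k+\nu_k d_{-k}^\dagger)^\dagger(\gamma_k d_k+\nu_k d_{-k}^\dagger)=b_k^\dagger\eta_k+\eta_k^\dagger b_k-\eta_k^\dagger\eta_k,
\end{align*}
and the weighted Cauchy--Schwarz $\pm(b_k^\dagger\eta_k+\eta_k^\dagger b_k)\leq E_k\, b_k^\dagger b_k+E_k^{-1}\eta_k^\dagger\eta_k$, with $E_k:=N^{2\kappa+\lambda_0/2-1/2}+N^{3\kappa/2+\lambda-1}$, reduces the task to checking $\eta_k^\dagger\eta_k\lesssim E_k^2(\widetilde{\mathcal{N}}+1)/|k|^4$ on the specified subspace. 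The latter is confirmed by inserting the pointwise bounds above together with $\gamma_k^2,\nu_k^2\lesssim N^{\kappa/2}$: the $(1+a_{-k}^\dagger a_{-k})$-factor produces the $\lambda_0$-contribution (controlled by $\mathcal{F}^{\leq}_{N^{\lambda_0}}$ for $|k|<K$ and absorbed by the $|k|^{-4}$ decay otherwise), while $\mathcal{N}^2\leq N^{2\lambda}$ produces the $\lambda$-contribution.

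Equation~(\ref{Eq:Comparison_With_Proper_Variables}) follows from the lower bound $e_k^2=A_k^2-B_k^2\gtrsim |k|^4+N^\kappa|k|^2$, a direct consequence of Lemma \ref{Lem:Coefficient_Comparison} for $K\geq K_0 N^{\kappa/2}$, which yields $|k|^{1-\delta}\lesssim |k|\lesssim N^{-\kappa/2}e_k$ uniformly in $k$ (recall $|k|\geq 2\pi$); the substitution of $b_k^\dagger b_k$ by the $(\gamma_k d_k+\nu_k d_{-k}^\dagger)^\dagger(\gamma_k d_k+\nu_k d_{-k}^\dagger)$-variant via the previous item creates an error summing to $\lesssim(\widetilde{\mathcal{N}}+1)$, absorbed by the additive $+1$. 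For Eq.~(\ref{Eq:Residuum_II}), Cauchy--Schwarz on $F_1$ yields $\pm(F_1+F_1^\dagger)\leq \epsilon\mathcal{N}+\epsilon^{-1}\sum_k N^{2\kappa}(\delta_k'')^\dagger\delta_k''$; the pointwise bound on $\delta_k''$ together with $\sum_k(1+a_{-k}^\dagger a_{-k})/|k|^4\lesssim \mathcal{N}+1$ and an optimal choice of $\epsilon$ produces $N^{5\kappa/2+\lambda-1}(\widetilde{\mathcal{N}}+1)$. For Eq.~(\ref{Eq:Residuum_I}), $F_2$ is dominated term-by-term by $\sum_k N^\kappa(\delta_k^\dagger\delta_k+\delta_{-k}\delta_{-k}^\dagger)$, and the two stated exponents $N^{9\kappa/2+\lambda_0-1}$ and $N^{7\kappa/2+2\lambda-2}$ come respectively from the sharper $\delta_k'$-bound on $\mathcal{F}^{\leq}_{N^{\lambda_0}}$ and the $\delta_k''$-bound on $\mathcal{F}^+_{N^\lambda}$. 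The unrestricted bound $\pm(F_1+F_1^\dagger+F_2)\lesssim N^{6\kappa-1}(\widetilde{\mathcal{N}}+1)^3$ is obtained by using the summed estimates of Lemma \ref{Lem:Delta_Control} instead of their pointwise counterparts, and $\mathcal{N}(\mathcal{N}+1)|_{\mathcal{F}^+_{N^\lambda}}\lesssim N^{\kappa/2+\lambda}(\widetilde{\mathcal{N}}+1)$ is immediate from $\mathcal{N}\leq N^\lambda$ plus Lemma \ref{Lem:True_Particle_Number_Comparison}.

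The principal obstacle lies in the bookkeeping of the two independent parameters $\lambda$ and $\lambda_0$: the pointwise bound on $\delta_k''$ contains the factor $(1+a_{-k}^\dagger a_{-k})$ which admits no pointwise estimate better than $\mathcal{N}$, yet its $|k|^{-4}$-weighted sum contributes only $\mathcal{N}$ rather than $\mathcal{N}^2$; threading this cancellation through each of the individual pointwise estimates without inflating exponents is precisely what allows the sharp constants $N^{2\kappa+\lambda_0/2-1/2}$, $N^{3\kappa/2+\lambda-1}$, $N^{9\kappa/2+\lambda_0-1}$ and $N^{7\kappa/2+2\lambda-2}$ to emerge in the final statement.
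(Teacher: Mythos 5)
Your skeleton is essentially the paper's: decompose $\delta_k=-\delta_k'+\delta_k''$, prove a $k$-pointwise analogue of Eq.~(\ref{Eq:Strong_Delta_Control}), obtain Eq.~(\ref{Eq:Variable_Comparison_Spectrum}) by writing $b_k=(\gamma_kd_k+\nu_kd_{-k}^\dagger)+\eta_k$ and weighted Cauchy--Schwarz, get Eq.~(\ref{Eq:Comparison_With_Proper_Variables}) from $|k|^{1-\delta}\lesssim N^{-\kappa/2}e_k$, and derive the $F_1,F_2$ bounds from the same $\delta$-estimates. The genuine gap is in the conversion of powers of $\mathcal{N}$ into $\widetilde{\mathcal{N}}$. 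Your blanket rule ``cost $N^{\kappa/2}$ per power of $\mathcal{N}$'' comes from Lemma \ref{Lem:True_Particle_Number_Comparison}, $\mathcal{N}\lesssim N^{\kappa/2}\widetilde{\mathcal{N}}+N^{3\kappa/2}$, whose \emph{constant} part costs $N^{3\kappa/2}$, not $N^{\kappa/2}$, and your treatment of the $\delta_k''$ contribution to $\eta_k^\dagger\eta_k$ overshoots the claimed bound. Concretely, using $(\mathcal{N}+1)^2\leq N^{2\lambda}$ together with $a_{-k}^\dagger a_{-k}\leq N^{\lambda_0}$ (your stated attribution of the $\lambda_0$-term) gives, after the factor $\gamma_k^2\lesssim N^{\kappa/2}$, a contribution of size $N^{\frac{5\kappa}{2}+2\lambda+\lambda_0-2}|k|^{-4}$, while your scheme requires it to be $\lesssim E_k^2|k|^{-4}$ with $E_k^2\simeq N^{4\kappa+\lambda_0-1}+N^{3\kappa+2\lambda-2}$; already for the paper's own choices $\lambda_0=\frac{3}{8}$, $\lambda=\frac{21}{32}$, $\kappa<\frac18$ one has $2\lambda+\lambda_0-2=-\frac{5}{16}>\lambda_0-1$ and $>2\lambda-2$, so this fails. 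Replacing $a_{-k}^\dagger a_{-k}$ by $\mathcal{N}\lesssim N^{\kappa/2}\widetilde{\mathcal{N}}+N^{3\kappa/2}$ instead still leaves a constant piece $\sim N^{4\kappa+2\lambda-2}|k|^{-4}$, off by $N^{\kappa}$ from $E_k^2$ unless $\lambda_0\geq 2\lambda-1$, which the hypotheses do not provide; the same spurious $N^{\kappa}$ infects your route to the constant term in Eq.~(\ref{Eq:Residuum_I}).

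The paper avoids exactly this loss: in Eq.~(\ref{Eq:Third_Power_Comparison}) the cube $(\mathcal{N}+1)^3$ generated by $\delta_k''$ is commuted into the form $\sum_p b_p^\dagger\,\mathcal{N}^2\,b_p$ plus remainders, so a single $N^{\kappa/2}$ is paid for the whole cube while $\mathcal{N}^2\leq N^{2\lambda}$ is used inside the sandwich, the remainders being subordinate precisely because $\lambda>\frac{5\kappa}{2}$; the $\lambda_0$-dependence enters only through the $\delta_k'$-piece on $\mathcal{F}^{\leq}_{N^{\lambda_0}}$ (Eq.~(\ref{Eq:Third_Power_Comparison_Alt})), never through $\delta_k''$. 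This yields the uniform bound Eq.~(\ref{Eq:Strong_Delta_Control}) with prefactor $N^{\frac{5\kappa}{2}}\left(N^{\kappa+\lambda_0-1}+N^{2\lambda-2}\right)$ and a clean $(\widetilde{\mathcal{N}}+1)$, from which all stated exponents follow. Three smaller points: $|f_{\ell,k}|\lesssim N^{\kappa-1}|k|^{-2}$, not $N^{-1}|k|^{-2}$; for $|k|\geq K$ the extra number factor cannot be ``absorbed by the $|k|^{-4}$ decay'', since the right-hand side of Eq.~(\ref{Eq:Variable_Comparison_Spectrum}) retains the same $|k|^{-4}$; and in Eq.~(\ref{Eq:Comparison_With_Proper_Variables}) an error of the form $(\text{small coefficient})\cdot\widetilde{\mathcal{N}}$ is not ``absorbed by the additive $+1$'' --- it must be absorbed back into $\sum_k|k|^{1-\delta}b_k^\dagger b_k$, which is exactly where the hypotheses $\lambda_0<1-4\kappa$ and $\lambda<1-\frac{3\kappa}{2}$ are used in the paper's argument.
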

\begin{proof}
 In order to verify the Lemma, let us first show
\begin{align}
\label{Eq:Strong_Delta_Control}
    \left\{(\delta_k)^\dagger \delta_k+ \delta_k (\delta_k)^\dagger\right\}\Big|_{\mathcal{F}^{\leq}_{N^{\lambda_0}}\cap \mathcal{F}^+_{N^\lambda}}\lesssim |k|^{-4}N^{\frac{5\kappa}{2}}\left(N^{\kappa+\lambda_0-1}+N^{2\lambda-2}\right)\left(\widetilde{\mathcal{N}}\Big|_{\mathcal{F}^{\leq}_{N^{\lambda_0}}\cap \mathcal{F}^+_{N^\lambda}}+1\right).
\end{align}
For this purpose, we follow the proof of Lemma \ref{Lem:Delta_Control} by writing $\delta_k=-\delta_k'+\delta_k''$ and estimating
\begin{align*}
   \left\{(\delta_k'')^\dagger \delta_k''+ \delta_k'' (\delta_k'')^\dagger\right\}\Big|_{\mathcal{F}^+_{N^\lambda}}\lesssim N^{2\kappa-2}|k|^{-4}\left(\mathcal{N}+1\right)^3\lesssim N^{2\kappa-2}|k|^{-4}\Big(\sum_{m,n,p\neq 0} a_m^\dagger a_n^\dagger a_p^\dagger a_p a_n a_m+1\Big).
\end{align*}
By expressing $a_p$ in terms of $b_p$ and $b_{-p}^\dagger$, we furthermore obtain
\begin{align}
\label{Eq:Third_Power_Comparison}
   & \sum_{m,n,p\neq 0} a_m^\dagger a_n^\dagger a_p^\dagger a_p a_n a_m\Big|_{\mathcal{F}^+_{N^\lambda}} \!  \!  \! = \! \sum_{m,n,p\neq 0} (\gamma_p^2 \! + \! \nu_p^2) a_m^\dagger a_n^\dagger b_p^\dagger b_p a_n a_m\Big|_{\mathcal{F}^+_{N^\lambda}} \!  \!  + \! \sum_{m,n,p\neq 0} \nu_p^2 a_m^\dagger a_n^\dagger a_n a_m\Big|_{\mathcal{F}^+_{N^\lambda}}\\
   \nonumber
    & \ \  \ \  \ \  \ \ \lesssim \sum_{m,n,p\neq 0} (\gamma_p^2 \! + \! \nu_p^2)  b_p^\dagger a_m^\dagger a_n^\dagger a_n a_m b_p \Big|_{\mathcal{F}^+_{N-1}} \! 
  \!  + \sum_{m,n,p\neq 0} (\gamma_p^2 \! + \! \nu_p^2)   [a_n a_m, b_p]^\dagger [a_n a_m, b_p] \Big|_{\mathcal{F}^+_{N^\lambda}}\\
  \nonumber
  & \ \  \ \  \ \  \ \  \ \  \ \  \ \  \ \  +\! \sum_{m,n,p\neq 0} \nu_p^2 a_m^\dagger a_n^\dagger a_n a_m\Big|_{\mathcal{F}^+_{N^\lambda}}\lesssim N^{\frac{\kappa}{2}}\sum_{p\neq 0} b_p^\dagger \mathcal{N}^2 b_p \Big|_{\mathcal{F}^+_{N^\lambda}}+N^{2\kappa}\mathcal{N}\Big|_{\mathcal{F}^+_{N^\lambda}}+N^{\frac{3\kappa}{2}}\mathcal{N}^2\Big|_{\mathcal{F}^+_{N^\lambda}}\\
  \nonumber
  &  \ \  \ \  \ \  \ \ \lesssim N^{\frac{\kappa}{2}+2\lambda}\widetilde{\mathcal{N}}\Big|_{\mathcal{F}^+_{N^\lambda}}+N^{\frac{7\kappa}{2}}\left(\widetilde{\mathcal{N}}\Big|_{\mathcal{F}^+_{N^\lambda}}+1\right)+N^{3\kappa+\lambda}\left(\widetilde{\mathcal{N}}\Big|_{\mathcal{F}^+_{N^\lambda}}+1\right).
\end{align}
Since we assume $\lambda>\frac{5\kappa}{2}$, we obtain $ \left\{(\delta_k'')^\dagger \delta_k''+ \delta_k'' (\delta_k')^\dagger\right\}\Big|_{\mathcal{F}^+_{N^\lambda}}\lesssim |k|^{-4} N^{\frac{5\kappa}{2}+2\lambda-2}\left(\widetilde{\mathcal{N}}+1\right)$. 
Let us furthermore estimate $\left\{(\delta_k')^\dagger \delta_k'+ \delta_k' (\delta_k')^\dagger\right\}\Big|_{\mathcal{F}^{\leq}_{N^{\lambda_0}}\cap \mathcal{F}^+_{N^\lambda}}$ from above by
\begin{align}
\label{Eq:Third_Power_Comparison_Alt}
 & \ \ \ \ \ \ \ \ \ \ \ |k|^{-4}N^{2\kappa-1}\left(\sum_{0<|\ell|,|\ell'|<K} a^\dagger_{\ell'-k}a_\ell^\dagger a_\ell a_{\ell'-k}+\sum_{p\neq 0} a_p^\dagger a_p\right)\Big|_{\mathcal{F}^{\leq}_{N^{\lambda_0}}\cap \mathcal{F}^+_{N^\lambda}}\\
 \nonumber
     &  \lesssim \! |k|^{-4}N^{2\kappa-1}\left(\! N^{\lambda_0}\sum_{p\neq 0}a^\dagger_{p} a_{p}\! +\! N^{\frac{3\kappa}{2}}\left(\widetilde{\mathcal{N}}+1\right)\! \! \! \right)\Big|_{\mathcal{F}^{\leq}_{N^{\lambda_0}}\cap \mathcal{F}^+_{N^\lambda}}\! \! \!  \lesssim \! |k|^{-4}N^{\frac{7\kappa}{2}+\lambda_0-1}\! \! \left(\! \widetilde{\mathcal{N}}\Big|_{\mathcal{F}^{\leq}_{N^{\lambda_0}}\cap \mathcal{F}^+_{N^\lambda}}\! +\! 1\! \right)\! .
\end{align}
Combining the estimates on $\delta'_k$ and $\delta''_k$ therefore yields Eq.~(\ref{Eq:Strong_Delta_Control}). In order to verify Eq.~(\ref{Eq:Comparison_With_Proper_Variables}) note that $b_k-\gamma_k d_k-\nu_k d_{-k}^\dagger=\gamma_k \delta_k+\nu_k \delta_{-k}^\dagger$. Using Eq.~(\ref{Eq:Strong_Delta_Control}), as well as the observation that $|k|^{1-\delta} \lesssim N^{-\frac{\kappa}{2}}e_k$ for $K\geq K_0 N^{\frac{\kappa}{2}}$ and $K_0$ large enough, yields
\begin{align*}
      & \sum_{k}|k|^{1-\delta}b_k^\dagger b_k\Big|_{\mathcal{F}^{\leq}_{N^{\lambda_0}}\cap \mathcal{F}^+_{N^\lambda}}\lesssim N^{-\frac{\kappa}{2}}\sum_{k\neq 0}e_k\left(\gamma_k d_k+\nu_k d_k^\dagger\right)^\dagger \left(\gamma_k d_k+\nu_k d_k^\dagger\right)\Big|_{\mathcal{F}^{\leq}_{N^{\lambda_0}}\cap \mathcal{F}^+_{N^\lambda}}\\
     & \ \ \ \ \ \ \ \ \ \ \ \ \ \ \ \ \ \ \ \ +N^{\frac{\kappa}{2}}\sum_k |k|^{1-\delta}(\delta_k^\dagger \delta_k+\delta_k \delta_k^\dagger)\Big|_{\mathcal{F}^{\leq}_{N^{\lambda_0}}\cap \mathcal{F}^+_{N^\lambda}}\\
     &  \lesssim \! N^{-\frac{\kappa}{2}}\! \sum_{k\neq 0}\! e_k \! \left(\gamma_k d_k \! + \! \nu_k d_k^\dagger\right)^\dagger  \! \left(\gamma_k d_k \! + \! \nu_k d_k^\dagger\right) \! \Big|_{\mathcal{F}^{\leq}_{N^{\lambda_0}}\cap \mathcal{F}^+_{N^\lambda}} \! \!   \! \! + \! N^{3\kappa} \! \! \left(N^{\kappa+\lambda_0-1} \! + \! N^{2\lambda-2}  \right) \! \! \left( \! \widetilde{\mathcal{N}}\Big|_{\mathcal{F}^{\leq}_{N^{\lambda_0}}\cap \mathcal{F}^+_{N^\lambda}} \!  \!  \! + \! 1 \! \right)\! \! .  
\end{align*}
By our assumption $\lambda_0<1-4\kappa$ and $\lambda_0<1-\frac{3}{2}\kappa$ we have 
\begin{align*}
  \sum_{k}|k|^{1-\delta}b_k^\dagger b_k & \lesssim \left(1-N^{3\kappa} \! \left(N^{\kappa+\lambda_0-1} \! + \! N^{2\lambda-2}  \right)\right)\sum_{k}|k|^{1-\delta}b_k^\dagger b_k\\
  &\leq \sum_{k}|k|^{1-\delta}b_k^\dagger b_k-N^{3\kappa} \! \left(N^{\kappa+\lambda_0-1} \! + \! N^{2\lambda-2}  \right)\widetilde{\mathcal{N}},
\end{align*}
which concludes the proof of Eq.~(\ref{Eq:Comparison_With_Proper_Variables}). Note that Eq.~(\ref{Eq:Variable_Comparison_Spectrum}) can be verified similarly. Furthermore Eq.~(\ref{Eq:Residuum_I}) follows immediately from Eq.~(\ref{Eq:Strong_Delta_Control}), using the fact that $|k|^2 |w_k|\lesssim N^\kappa$. When it comes to Eq.~(\ref{Eq:Residuum_II}), let us estimate in a similar fashion to Lemma \ref{Lem:Delta_Control}
\begin{align}
\label{Eq:N_Quadrat}
   \pm & \left(F_1+F_1^\dagger \right)\lesssim N^{2\kappa-1}\mathcal{N}(\mathcal{N} \! + \! 1)=N^{2\kappa-1}\sum_{p,q}a_p^\dagger a_q^\dagger a_q a_p+N^{2\kappa-1}\sum_p a_p^\dagger a_p\\
   \nonumber
   & \lesssim N^{\frac{5\kappa}{2}-1}\sum_{p,q}a_p^\dagger b_q^\dagger b_q a_p+N^{5\kappa-1}\! \left(\widetilde{\mathcal{N}}+1\right)\lesssim N^{\frac{5\kappa}{2}-1}\sum_{p,q}b_q^\dagger a_p^\dagger a_p b_q +N^{5\kappa-1}\! \left(\widetilde{\mathcal{N}}+1\right),
\end{align}
following the argument in Eq.~(\ref{Eq:Third_Power_Comparison}). This immediately implies Eq.~(\ref{Eq:Residuum_II}) as well as the bound $\pm \left(F_1+F_1^\dagger\right)\lesssim N^{5\kappa-1}\! \left(\widetilde{\mathcal{N}}+1\right)^2$. Note that $\mathcal{N}^2\lesssim N^{3\kappa} \! \left(\widetilde{\mathcal{N}}+1\right)^2$ by Eq.~(\ref{Eq:N_Quadrat}), and therefore we obtain by a similar argument as in Eq.~(\ref{Eq:Third_Power_Comparison}) and Eq.~(\ref{Eq:Third_Power_Comparison_Alt}) that $(\delta_k)^\dagger \delta_k+ \delta_k (\delta_k)^\dagger\lesssim |k|^{-4}N^{5\kappa-1} \! \left(\widetilde{\mathcal{N}}+1\right)^3$, and consequently $\pm F_2\lesssim N^{6\kappa-1}\! \left(\widetilde{\mathcal{N}}+1\right)^3$.
\end{proof}

\section{A priori Condensation}
\label{Appendix:A_priori_Condensation}
 In this subsection, we will use the strong estimates on the number of expected excited particles $\braket{\Psi,\mathcal{N}\Psi}$ derived in \cite{F}, in order to obtain necessary ad hoc for our proof of Theorem \ref{Th:Beyond_GP}. By \cite[Theorem 1.2]{F} we have the a priori estimate $\braket{\Psi,\mathcal{N}\Psi}\leq C N^{\frac{5\kappa}{2}}$ for states $\Psi$ satisfying $\braket{\Psi,H_{N,\kappa}\Psi}-4\pi \mathfrak{a}N^{1+\kappa}\lesssim N^{\frac{5\kappa}{2}}$. Making use of the fact that the Lee-Huang-Yang correction $\frac{1}{2} \! \sum_{k\neq 0} \! \left\{ \! \sqrt{|k|^4 \! + \! 16\pi \mathfrak{a} N^{\kappa}|k|^2} \! - \! |k|^2 \! - \! 8\pi \mathfrak{a} N^{\kappa} \! + \! \frac{(8\pi \mathfrak{a} N^{\kappa})^2}{2|k|^2} \! \right\}$ is of the order $N^{\frac{5\kappa}{2}}$, therefore yields 
\begin{align}
\label{Eq:A_priori_Condensation}
       \braket{\Psi_d,\mathcal{N}\Psi_d}\leq C N^{\frac{5\kappa}{2}} 
\end{align}
for any eigenstate $\Psi_d$ corresponding to an eigenvalue $E_{N,\kappa}^{(d)}$ with $E_{N,\kappa}^{(d)} - E_{N,\kappa}\lesssim N^{\frac{5\kappa}{2}}$. Note that the Eq.~(\ref{Eq:A_priori_Condensation}) only gives us an estimate on the expected number of particles. The following Lemma \ref{Lem:IMS} however tells us that we can lift the control on the expected number of particles to a control on the number of particles in a spectral sense, i.e. we will argue that we can restrict our attention to states in the spectral subspace $\mathcal{F}^{\leq}_{\lambda_0}\cap \mathcal{F}^{+}_\lambda$ without changing the energy significantly. Here we follow the methods presented in \cite{FGJMO}.

\begin{lem}
\label{Lem:IMS}
Let $E_{N,\kappa}^{(d)}$ satisfy $E_{N,\kappa}^{(d)}\leq E_{N,\kappa}+CN^{\frac{\kappa}{2}}$ for a constant $C>0$ and $K\leq N^{1+\kappa}$, and assume $\lambda_0>\frac{5\kappa}{2}$ as well as $\lambda>3\kappa$. Then there exists a $d$ dimensional subspace $\mathcal{V}_d\subseteq \mathcal{F}^{\leq}_{\lambda_0}\cap \mathcal{F}^{+}_\lambda$, such that for all elements $\Psi\in \mathcal{V}_d$ with $\|\Psi\|=1$
\begin{align}
    \label{Eq:IMS_Result}
      \braket{\Psi,H_{N,\kappa}\Psi}\leq E^{(d)}_{N,\kappa}+N^{-2\lambda_0} \! \left(N^{\frac{9\kappa}{2}}+KN^{2\kappa}\right)+N^{3\kappa-\lambda_0}+N^{-2\lambda}N^{1+\kappa}+N^{3\kappa-\lambda}.
\end{align}
\end{lem}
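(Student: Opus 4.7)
Let $\Psi^{(1)},\dots,\Psi^{(d)}$ be an orthonormal family of eigenvectors of $H_{N,\kappa}$ corresponding to the eigenvalues $E_{N,\kappa}^{(1)}\leq\dots\leq E_{N,\kappa}^{(d)}$. The hypothesis $E_{N,\kappa}^{(d)}\leq E_{N,\kappa}+CN^{\kappa/2}$ together with the a priori bound of \cite{F} quoted in Eq.~(\ref{Eq:A_priori_Condensation}) yields $\braket{\Psi^{(j)},\mathcal{N}\Psi^{(j)}}\leq C' N^{5\kappa/2}$. Fix smooth $f,g\colon\mathbb{R}\to[0,1]$ with $f^2+g^2=1$, $f\equiv 1$ on $[0,\tfrac12]$, and $f\equiv 0$ on $[1,\infty)$; one checks $1-f(x)^2\leq 2x$ for every $x\geq 0$. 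With $\mathcal{M}:=\sum_{0<|k|<K}a_k^\dagger a_k\leq\mathcal{N}$, introduce the mutually commuting cut-offs $\chi:=f(\mathcal{M}/N^{\lambda_0})$, $\widetilde\chi:=f(\mathcal{N}/N^{\lambda})$, $g_\chi:=g(\mathcal{M}/N^{\lambda_0})$, $g_{\widetilde\chi}:=g(\mathcal{N}/N^{\lambda})$, and put $T:=\chi\widetilde\chi$. The truncated vectors $\widetilde\Psi^{(j)}:=T\Psi^{(j)}$ then lie in $\mathcal{F}^\leq_{N^{\lambda_0}}\cap\mathcal{F}^+_{N^\lambda}$. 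From $1-\chi^2\leq 2\mathcal{M}/N^{\lambda_0}$ we get $\braket{\Psi^{(j)},g_\chi^2\Psi^{(j)}}\lesssim N^{5\kappa/2-\lambda_0}\to 0$ and likewise $\braket{\Psi^{(j)},g_{\widetilde\chi}^2\Psi^{(j)}}\lesssim N^{5\kappa/2-\lambda}\to 0$, thanks to $\lambda_0>5\kappa/2$ and $\lambda>3\kappa>5\kappa/2$. Hence $\|(1-T)\Psi^{(j)}\|\to 0$, the Gram matrix $\braket{\widetilde\Psi^{(i)},\widetilde\Psi^{(j)}}$ converges to $\delta_{ij}$, and $\mathcal{V}_d:=\operatorname{span}\{\widetilde\Psi^{(1)},\dots,\widetilde\Psi^{(d)}\}$ is $d$-dimensional for all sufficiently large $N$.

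Every unit vector $\Psi\in\mathcal{V}_d$ is of the form $\Psi=T\Phi/\|T\Phi\|$ with $\Phi$ a linear combination of the $\Psi^{(j)}$, so $\braket{\Phi,H_{N,\kappa}\Phi}\leq E_{N,\kappa}^{(d)}\|\Phi\|^2$ and $\|T\Phi\|^2=\|\Phi\|^2(1+o(1))$. Applying the IMS identity $\alpha A\alpha+\beta A\beta=A-\tfrac12[\alpha,[\alpha,A]]-\tfrac12[\beta,[\beta,A]]$ (valid whenever $\alpha^2+\beta^2=1$) first with $(\alpha,\beta)=(\chi,g_\chi)$ to $A=H_{N,\kappa}$ and then with $(\alpha,\beta)=(\widetilde\chi,g_{\widetilde\chi})$ to the resulting $\chi H_{N,\kappa}\chi$, and exploiting that $\chi,g_\chi$ commute with $\widetilde\chi,g_{\widetilde\chi}$, one gets
\begin{align*}
\braket{T\Phi,H_{N,\kappa}\,T\Phi}=\braket{\Phi,H_{N,\kappa}\Phi}-\braket{g_\chi\Phi,H_{N,\kappa}g_\chi\Phi}-\braket{g_{\widetilde\chi}\chi\Phi,H_{N,\kappa}g_{\widetilde\chi}\chi\Phi}-\tfrac12 R,
\end{align*}
where $R$ collects four expectations of double commutators of $H_{N,\kappa}$ with the cut-offs. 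Lower-bounding the two negative terms via $H_{N,\kappa}\geq E_{N,\kappa}$, using $\|\Phi\|^2-\|g_\chi\Phi\|^2-\|g_{\widetilde\chi}\chi\Phi\|^2=\|T\Phi\|^2$, and invoking $E_{N,\kappa}^{(d)}-E_{N,\kappa}\leq CN^{\kappa/2}$ together with the Markov bounds above, the non-commutator part of the Rayleigh quotient $\braket{T\Phi,H_{N,\kappa}T\Phi}/\|T\Phi\|^2$ is bounded by $E_{N,\kappa}^{(d)}+C'N^{\kappa/2}(N^{5\kappa/2-\lambda_0}+N^{5\kappa/2-\lambda})=E_{N,\kappa}^{(d)}+C'(N^{3\kappa-\lambda_0}+N^{3\kappa-\lambda})$, which accounts for two of the error contributions in Eq.~(\ref{Eq:IMS_Result}).

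The main technical obstacle is bounding $R$. Since the kinetic term commutes with every function of $\mathcal{N}$ or $\mathcal{M}$, only the interaction $W:=\tfrac12\sum(V_{N^{1-\kappa}})_{jk,mn}\,a_k^\dagger a_j^\dagger a_m a_n$ contributes. Classifying each of $j,k,m,n$ in $\{0\}$, $\{0<|p|<K\}$, $\{|p|\geq K\}$ and using momentum conservation $j+k=m+n$, a given normal-ordered term shifts $\mathcal{M}$ by $\Delta_{\mathcal{M}}\in\{0,\pm 1,\pm 2\}$ and $\mathcal{N}$ by $\Delta_{\mathcal{N}}\in\{0,\pm 1,\pm 2\}$. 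The mean-value theorem gives $|f((x+\Delta)/N^{\lambda_0})-f(x/N^{\lambda_0})|\leq\|f'\|_\infty|\Delta|/N^{\lambda_0}$, so each commutator $[\chi,W]$ or $[g_\chi,W]$ contributes a factor $N^{-\lambda_0}$ and annihilates the $\Delta_\mathcal{M}=0$ part of $W$, and analogously for $\widetilde\chi,g_{\widetilde\chi}$ with $N^{-\lambda}$. The two $\chi$-double commutators are thus controlled by $N^{-2\lambda_0}$ times the operator norm, on the subspace $\mathcal{N}\leq C N^{5\kappa/2}$, of the $\mathcal{M}$-off-diagonal part of $W$; an enumeration of the off-diagonal momentum configurations combined with $|(V_{N^{1-\kappa}})_{jk,mn}|\lesssim N^{\kappa-1}$ from Lemma \ref{Lem:Coefficient_Comparison} and Schur/Cauchy-Schwarz arguments of the type already employed in Lemma \ref{Lem:Lambda_Control} delivers the bound $N^{9\kappa/2}+KN^{2\kappa}$. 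Similarly, the two $\widetilde\chi$-double commutators are controlled by $N^{-2\lambda}$ times the operator norm of the $\mathcal{N}$-off-diagonal part of $W$, which is of the size of the whole interaction, $\lesssim N^{1+\kappa}$. The hardest step is the sharp enumeration producing $N^{9\kappa/2}+KN^{2\kappa}$: the $N^{9\kappa/2}$ term comes from off-diagonal contributions carrying condensate factors $a_0$, which combine with the a priori bound $\mathcal{N}\lesssim N^{5\kappa/2}$, while the $KN^{2\kappa}$ term arises from purely excited terms whose momentum sums are only tamed by the cut-off $K$. Combining all four contributions and dividing by $\|T\Phi\|^2=1+o(1)$ yields exactly Eq.~(\ref{Eq:IMS_Result}).
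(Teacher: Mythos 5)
Your overall skeleton coincides with the paper's: localize the first $d$ eigenfunctions with commuting cut-offs in $\sum_{0<|k|<K}a_k^\dagger a_k$ and in $\mathcal{N}$, use the a priori bound of Eq.~(\ref{Eq:A_priori_Condensation}) and Markov's inequality to control the norm loss and the dimension of the localized space, and convert $E^{(d)}_{N,\kappa}-E_{N,\kappa}\lesssim N^{\kappa/2}$ into the errors $N^{3\kappa-\lambda_0}+N^{3\kappa-\lambda}$. The genuine gap is in the only hard step, the bound on the IMS double commutators. You propose to bound them by $N^{-2\lambda_0}$ times ``the operator norm, on the subspace $\mathcal{N}\leq CN^{5\kappa/2}$, of the $\mathcal{M}$-off-diagonal part of $W$'' and claim that $|(V_{N^{1-\kappa}})_{jk,mn}|\lesssim N^{\kappa-1}$ together with Schur/Cauchy--Schwarz gives $N^{9\kappa/2}+KN^{2\kappa}$. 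This fails twice. First, the eigenstates are not supported in the spectral subspace $\mathcal{N}\leq CN^{5\kappa/2}$; only the expectation of $\mathcal{N}$ is controlled, and the cut-off functions localize to $\mathcal{M}\sim N^{\lambda_0}$, not to small $\mathcal{N}$, so no restricted operator norm is at your disposal. Second, and more fundamentally, the bare off-diagonal interaction is far too large for the claimed size: already the single configuration $a_0^{\dagger 2}\sum_k\widehat{V_{N^{1-\kappa}}}(k)\,a_ka_{-k}$ satisfies $\sum_k|\widehat{V_{N^{1-\kappa}}}(k)|^2\sim N^{1-\kappa}$, hence is of order $N\cdot N^{\frac{1-\kappa}{2}}(\mathcal{N}+1)$, i.e. of order $N^{\frac{3}{2}+2\kappa}$ on states with $\mathcal{N}\sim N^{5\kappa/2}$ --- after multiplication by $N^{-2\lambda_0}$ (with $\lambda_0=\frac38$ as used for Theorem \ref{Th:Beyond_GP}) this is divergent, nowhere near $N^{-\tau}$. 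No enumeration/Schur argument on the bare potential (the coefficients in Lemma \ref{Lem:Lambda_Control} are built from the renormalized $f_{\ell,k}$, not from $V_{N^{1-\kappa}}$ itself) can produce $N^{9\kappa/2}+KN^{2\kappa}$.

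The missing idea, which is the heart of the paper's proof, is a renormalization inside the commutator: one introduces the pair variables $\widetilde{\psi}$ from Eq.~(\ref{Eq:Definition_psi}) with cut-off $K_0<2\pi$, proves the operator bound $H_{N,\kappa}-4\pi\mathfrak{a}_{N^{1-\kappa}}N^\kappa(N-1)\geq \mathbb{P}-CN^\kappa\mathcal{N}$ (the analogue of Eq.~(\ref{Eq:Pseudo_Quadratic})), which together with the hypothesis $E^{(d)}_{N,\kappa}\leq E_{N,\kappa}+CN^{\kappa/2}$ yields the a priori estimate $\braket{\Psi,\mathbb{P}\Psi}\lesssim N^{7\kappa/2}$ on the relevant eigenstates, and then rewrites $a_ma_n=\widetilde{\psi}_{mn}-N^{-1}w_m a_0^2\delta_{n,-m}$ in each off-diagonal block so that the dangerous pieces are absorbed by Cauchy--Schwarz into $\mathbb{P}$; only the $w$-corrections survive, producing $KN^{2\kappa}$ and $K^2N^{\kappa-1}\leq KN^{2\kappa}$ --- this is precisely where the hypothesis $K\leq N^{1+\kappa}$, which your argument never uses, enters. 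Similarly, your claim that the $\mathcal{N}$-off-diagonal part of $W$ has operator norm $\lesssim N^{1+\kappa}$ is false (its norm on the $N$-particle space is polynomially larger); the correct statement, used in the paper, is the form bound $\mathcal{E}'\lesssim N^{-2\lambda}\left(H_{N,\kappa}+N^{1+\kappa}\right)$, evaluated on the already-localized low-energy states. Without these two inputs your outline does not yield Eq.~(\ref{Eq:IMS_Result}).
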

\begin{proof}
  Let $\mathcal{W}_d$ be the space spanned by the first $d$ eigenfunctions of $H_{N,\kappa}$, $f,g:\mathbb{R}\longrightarrow [0,1]$ smooth functions satisfying $f^2+g^2=1$ and $f(x)=1$ for $x\leq \frac{1}{2}$ as well as $f(x)=0$ for $x\geq 1$ and let $\mathcal{N}_*:=\sum_{0<|k|<K}a_k^\dagger a_k$. With this at hand we define the space $\widetilde{\mathcal{V}}_d:=f\!\left(\frac{\mathcal{N}_*}{N^{\lambda_0}}\right) \! \mathcal{W}_d$, which clearly satisfies $\widetilde{\mathcal{V}}_d\subseteq \mathcal{F}^{\leq}_{\lambda_0}$. Note that by our assumption $\lambda_0>\frac{5\kappa}{2}$ and Eq.~(\ref{Eq:A_priori_Condensation}), we have
  \begin{align}
  \label{Eq:Mass_Estimate}
      \left\|f\!\left(\frac{\mathcal{N}_*}{N^{\lambda_0}}\right)\Psi\right\|^2\geq \left(1-C N^{\frac{5\kappa}{2}-\lambda_0}\right)\|\Psi\|^2>0
  \end{align}
 for any $\Psi\in \mathcal{W}_d\setminus \{0\}$, a suitable constant $C$ and $N$ large enough. Hence it is clear that $\widetilde{V}_d$ is $d$ dimensional again. Using the IMS identity $ f \!  \left(\frac{\mathcal{N}_*}{N^\lambda_0}\right) H_{N,\kappa}f \!  \left(\frac{\mathcal{N}_*}{N^\lambda_0}\right)+g \!  \left(\frac{\mathcal{N}_*}{N^\lambda_0}\right) H_{N,\kappa}g \!  \left(\frac{\mathcal{N}_*}{N^\lambda_0}\right)=H_{N,\kappa}+\mathcal{E}$ with
    \begin{align*}
      \mathcal{E}:=\frac{1}{2}\left[f \!  \left(\frac{\mathcal{N}_*}{N^\lambda_0}\right),\left[H_{N,\kappa},f \!  \left(\frac{\mathcal{N}_*}{N^\lambda_0}\right)\right]\right]+\frac{1}{2}\left[g \!  \left(\frac{\mathcal{N}_*}{N^\lambda_0}\right),\left[H_{N,\kappa},g \!  \left(\frac{\mathcal{N}_*}{N^\lambda_0}\right)\right]\right],
    \end{align*}
see for example \cite{LNSS,FGJMO}, we obtain furthermore for a generic state $\Theta=\frac{f\!\left(\frac{\mathcal{N}_*}{N^{\lambda_0}}\right)\Psi}{\|f\!\left(\frac{\mathcal{N}_*}{N^{\lambda_0}}\right)\Psi\|}\in \widetilde{V}_d$, where $\Psi$ is a state in $\mathcal{W}_d$, and a suitable constant $C>0$ the estimate
\begin{align*}
    \braket{\Theta,H_{N,\kappa}\Theta} & \leq \left\|f\!\left(\frac{\mathcal{N}_*}{N^{\lambda_0}}\right)\Psi\right\|^{-2}\left(E_{N,\kappa}^{(d)}-\braket{g\!\left(\frac{\mathcal{N}_*}{N^{\lambda_0}}\right)\Psi,H_{N,\kappa}g\!\left(\frac{\mathcal{N}_*}{N^{\lambda_0}}\right)\Psi}+\braket{\Psi,\mathcal{E}\Psi}\right)\\
    &\leq E_{N,\kappa}^{(d)}+C \braket{\Psi,\mathcal{E}\Psi}+CN^{3\kappa-\lambda_0},
\end{align*}
where we have used the lower bound $\braket{g\!\left(\frac{\mathcal{N}_*}{N^{\lambda_0}}\right)\Psi,H_{N,\kappa}g\!\left(\frac{\mathcal{N}_*}{N^{\lambda_0}}\right)\Psi}\geq \left(1-\left\|f\!\left(\frac{\mathcal{N}_*}{N^{\lambda_0}}\right)\Psi\right\|^2\right)E_{N,\kappa}\geq \left(1-\left\|f\!\left(\frac{\mathcal{N}_*}{N^{\lambda_0}}\right)\Psi\right\|^2\right)E^{(d)}_{N,\kappa} -C N^{3\kappa-\lambda_0}$, which follows from our assumptions on $E^{(d)}_{N,\kappa}$ together with Eq.~(\ref{Eq:Mass_Estimate}). In order to estimate $\braket{\Psi,\mathcal{E}\Psi}$, let us define $\pi_0$ as the projection onto the zero mode, $\pi_1$ as the projection onto the modes $\{k\in 2\pi\mathbb{Z}^3:0<|k|<K\}$ and $\pi_3$ as the projection onto the modes $\{k\in 2\pi\mathbb{Z}^3:|k|>K\}$, and rewrite $\mathcal{E}$ as
\begin{align*}
    \mathcal{E}=\frac{1}{4N^{2\lambda_0}}\sum_{I\in \{0,1,2\}^4}  \sum_{jk,mn} (\pi_{I_1}\pi_{I_2} V_{N^{1-\kappa}} \pi_{I_3}\pi_{I_4})_{jk, mn} a^\dagger_k a_j^\dagger X_I a_m a_n,
\end{align*}
with $X_I:= N^{2\lambda_0}\left[f \!  \left(\frac{\mathcal{N}_*+\#_{I_1,I_2}}{N^{\lambda_0}}\right)-f \!  \left(\frac{\mathcal{N}_*+\#_{I_3,I_4}}{N^{\lambda_0}}\right)\right]^2+ N^{2\lambda_0}\left[g \!  \left(\frac{\mathcal{N}_*+\#_{I_1,I_2}}{N^{\lambda_0}}\right)-g \!  \left(\frac{\mathcal{N}_*+\#_{I_3,I_4}}{N^{\lambda_0}}\right)\right]^2$ and $\#_{i,j}$ counts how many of the indices $i,j$ are equal to $1$. Before we start with the term-by-term analysis, let us introduce the variables $\widetilde{c}$ and $\widetilde{\psi}$ as the ones defined in Eq.~(\ref{Eq:Definition_c}) and Eq.~(\ref{Eq:Definition_psi}) with the concrete choice of the cut-off parameter $0<K_0<2\pi$. Note that in this case, we obtain in analogy to Eq.~(\ref{Eq:Pseudo_Quadratic}) for a suitable $C>0$
\begin{align*}
    H_{N,\kappa}-4\pi \mathfrak{a}_{N^{1-\kappa}}N^{\kappa}(N-1)\geq \mathbb{P}-CN^{\kappa}\mathcal{N},
\end{align*}
where $\mathbb{P}:=\sum_k |k|^2 \widetilde{c}_k^\dagger \widetilde{c}_k+\frac{1}{2}\sum_{jk,mn\neq 0}\left(V_{N^{1-\kappa}}\right)_{jk,mn}\widetilde{\psi}^\dagger_{jk}\widetilde{\psi}_{mn}$, which especially implies the upper bound $\braket{\Psi,\mathbb{P}\Psi}\lesssim N^{\frac{7\kappa}{2}}$ for $\Psi\in \mathcal{W}_d$. Starting with the case $I_1=I_2=0$ and $I_3=I_4=1$, we identify $\sum_{|k|<K}(V_{N^{1-\kappa}})_{00,(-k)k}\left(a_0^{\dagger}\right)^2 X_{I} a_{-k}a_k+\mathrm{H.c.}$ as
\begin{align*}
    & \left(\sum_{|k|<K}(V_{N^{1-\kappa}})_{00,(-k)k}\left(a_0^{\dagger}\right)^2 X_{I} \widetilde{\psi}_{(-k)k}-N^{-1}\sum_{|k|<K}(V_{N^{1-\kappa}})_{00,(-k)k}\left(a_0^{\dagger}\right)^2 a_0^2 X_{I} w_k \right)+\mathrm{H.c.}\\
    & \ \ \ \ \ \ \ \ \ \ \ \ \lesssim N^{\kappa}\mathbb{P}+KN^{2\kappa},
\end{align*}
which gives an contribution of at most order $N^{\frac{9\kappa}{2}}+K N^{2\kappa}$ when evaluated against $\Psi$. Since we clearly only have to consider $I$ for which $\#_{I_1,I_2}\neq \#_{I_3,I_4}$, the only relevant cases left are the ones where both $I_1,I_2$ and $I_3,I_4$ contain at least one non-zero index, and at least one of these index pairs contains the index $1$. W.l.o.g., let us assume $I_1=1$. In this case 
\begin{align}
\nonumber
    & \sum_{jk,mn} (\pi_{I_1}\pi_{I_2} V_{N^{1-\kappa}} \pi_{I_3}\pi_{I_4})_{jk, mn} a^\dagger_k a_j^\dagger X_I a_m a_n=\sum_{jk,mn} (\pi_{I_1}\pi_{I_2} V_{N^{1-\kappa}} \pi_{I_3}\pi_{I_4})_{jk, mn} a^\dagger_k a_j^\dagger X_I \widetilde{\psi}_{mn}\\
    \label{Eq:General_Case}
    &  \ \ \ \ \ \ \ \ \ \ \ \  \ \ \ \ \ \ \ \ \ -N^{-1} a_0^2\sum_{jk,m} (\pi_{I_1}\pi_{I_2} V_{N^{1-\kappa}} \pi_{I_3}\pi_{I_4})_{jk, m(-m)} a^\dagger_k a_j^\dagger X_I w_m.
\end{align}
Note that the second term on the right hand side of Eq.~(\ref{Eq:General_Case}) can be treated in the same way as the case $I_1=I_2=0$ and $I_3=I_4=1$. Regarding the first term we estimate
\begin{align*}
    & \sum_{jk,mn} (\pi_{I_1}\pi_{I_2} V_{N^{1-\kappa}} \pi_{I_3}\pi_{I_4})_{jk, mn} a^\dagger_k a_j^\dagger X_I \widetilde{\psi}_{mn}+\mathrm{H.c.}\lesssim \sum_{jk,mn} (\pi_{I_1}\pi_{I_2} V_{N^{1-\kappa}} \pi_{I_1}\pi_{I_2})_{jk, mn} a^\dagger_k a_j^\dagger a_m a_n+\mathbb{P}\\
    & \ \ \ \ \ \ \ \ \ \ \lesssim \mathbb{P}+N^{-2}\left(a_0^\dagger\right)^2 a_0^2 \sum_{|j|,|m|<K}(V_{N^{1-\kappa}} )_{j(-j), m(-m)}w_j w_m\lesssim \mathbb{P}+K^2 N^{\kappa-1},
\end{align*}
which is of order $N^{\frac{7\kappa}{2}}+KN^{2\kappa}$ due to our assumption $K\leq N^{1+\kappa}$. Therefore
\begin{align*}
    \braket{\Theta,H_{N,\kappa}\Theta}\leq E^{(d)}_{N,\kappa}+N^{-2\lambda_0} \! \left(N^{\frac{9\kappa}{2}}+KN^{2\kappa}\right)+N^{3\kappa-\lambda_0}
\end{align*}
for any $\Theta\in \widetilde{\mathcal{V}}_d$ with $\|\Theta\|=1$. Note that states in $\Theta\in \widetilde{\mathcal{V}}_d$ still satisfy $\braket{\Theta,\mathcal{N}\Theta}\lesssim N^{\frac{5\kappa}{2}}$, and let us further define $\mathcal{V}_d:=f\!\left(\frac{\mathcal{N}}{N^\lambda}\right)\! \widetilde{\mathcal{V}}_d$. Clearly $\mathcal{V}_d\subseteq \mathcal{F}^{\leq}_{\lambda_0}\cap \mathcal{F}^{+}_\lambda$ and similar to before we see that $\mathcal{V}_d$ is indeed $d$ dimensional. Making again use of the IMS identity, and performing similar estimates then yields for generic states $\Theta=\frac{f\!\left(\frac{\mathcal{N}}{N^\lambda}\right)\Psi}{\left\|f\!\left(\frac{\mathcal{N}}{N^\lambda}\right)\right\|}\in \mathcal{V}_d$, where $\Psi\in \widetilde{\mathcal{V}}_d$ with $\|\Psi\|=1$, and 
\begin{align*}
    \braket{\Theta,H_{N,\kappa}\Theta}\leq E^{(d)}_{N,\kappa}+N^{-2\lambda_0} \! \left(N^{\frac{9\kappa}{2}}+KN^{2\kappa}\right)+N^{3\kappa-\lambda_0}+N^{3\kappa-\lambda}+C\braket{\Psi,\mathcal{E}' \Psi},
\end{align*}
with $\mathcal{E}':=\frac{1}{4 N^{2\lambda}}\left(\sum_{k,\ell \neq 0}\widehat{V_{N,\kappa}}(k)a_k^\dagger a^\dagger_{\ell-k} X_0 a_\ell a_0+\mathrm{H.c.}\right)+\frac{1}{4N^{2\lambda}}\left(\sum_{k\neq 0}\widehat{V_{N,\kappa}}(k)a_k^\dagger a_{-k}^\dagger X_1 a_0^2+\mathrm{H.c.}\right)$ and $X_i:= N^{2\lambda}\left[f \!  \left(\frac{\mathcal{N}+2}{N^\lambda}\right)-f \!  \left(\frac{\mathcal{N}+i}{N^{\lambda}}\right)\right]^2+ N^{2\lambda}\left[g \!  \left(\frac{\mathcal{N}+2}{N^\lambda}\right)-g \!  \left(\frac{\mathcal{N}+i}{N^\lambda}\right)\right]^2$. Using the fact that we have $\mathcal{E}'\lesssim N^{-2\lambda} \left(H_{N,\kappa}+N^{1+\kappa}\right)$ concludes the proof.
\end{proof}

\begin{center}
\textsc{Acknowledgments}
\end{center}
Funding from the ERC Advanced Grant ERC-AdG CLaQS, grant agreement n. 834782, is gratefully acknowledged.


\begin{thebibliography}{10}

 \bibitem{BBCS}
C.~Boccato, C.~Brennecke, S.~Cenatiempo and B.~Schlein.
\newblock Bogoliubov Theory in the Gross-Pitaevskii Limit.
\newblock {\em Acta Mathematica}: 222(2019).

 \bibitem{B}
N.~Bogoliubov.
\newblock On the theory of superfluidity.
\newblock {\em Proc. Inst. Math. Kiev}: 9(1947).

 \bibitem{BCS}
C.~Brennecke, M.~Caporaletti and B.~Schlein.
\newblock Excitation Spectrum for Bose Gases beyond the
Gross–Pitaevskii Regime.
\newblock {\em Reviews in Mathematical Physics}: 34(2022).

 \bibitem{BS}
C.~Boccato and R.~Seiringer.
\newblock The Bose gas in a box with Neumann boundary conditions.
\newblock {\em Annales Henri Poincaré}: 24(2023).

\bibitem{DN}
J.~Dereziński and M.~Napiórkowski.
\newblock Excitation Spectrum of Interacting Bosons
in the Mean-Field Infinite-Volume Limit.
\newblock {\em Annales Henri Poincaré}: 15(2014).

 \bibitem{F}
S.~Fournais.
\newblock Length scales for BEC in the dilute Bose gas.
\newblock {\em Partial Differential Equations, Spectral Theory, and Mathematical Physics}: (2021).

 \bibitem{FGJMO}
S.~Fournais, T.~Girardot, L.~Junge, L.~Morin and M.~Olivieri.
\newblock Lower bounds on the energy of the Bose gas.
\newblock {\em arXiv}: 2305.00797(2023).

 \bibitem{FS1}
S.~Fournais and J.~Solovej.
\newblock The energy of dilute Bose gases.
\newblock {\em Annals of Mathematics}: 192(2020).

 \bibitem{FS2}
S.~Fournais and J.~Solovej.
\newblock The energy of dilute Bose gases II: The general case.
\newblock {\em Inventiones Mathematicae}: 232(2023).


 \bibitem{HHNST}
F.~Haberberger, C.~Hainzl, P.~Nam, R.~Seiringer and A.~Triay.
\newblock The free energy of dilute Bose gases at low temperatures.
\newblock {\em arXiv}: 2304.02405(2023).

 \bibitem{H}
C.~Hainzl.
\newblock Another Proof of BEC in the GP-Limit.
\newblock {\em Journal of Mathematical Physics}: 62(2021).

 \bibitem{HST}
C.~Hainzl., B.~Schlein and A.~Triay.
\newblock Bogoliubov Theory in the Gross-Pitaevskii Limit:
a Simplified Approach.
\newblock {\em Forum of Mathematics Sigma}: 10(2022).


 \bibitem{NNRT}
P.~Nam, M.~Napiórkowski, J.~Ricaud and A.~Triay.
\newblock Optimal rate of condensation for trapped bosons in
the Gross–Pitaevskii regime.
\newblock {\em Analysis and Partial Differential Equations}: 15(2021).

\bibitem{LHY}
T.~Lee, K.~Huang and C.~Yang.
\newblock Eigenvalues and eigenfunctions of a bose
system of hard spheres and its low-temperature properties.
\newblock {\em Physical Review}: 106(1957).

 \bibitem{LNSS}
M.~Lewin, P.~Nam, S.~Serfaty and J.~Solovej.
\newblock Bogoliubov spectrum of interacting Bose gases.
\newblock {\em Communications on Pure and Applied Mathematics}: 68(2015).

 \bibitem{LSe}
E.~Lieb and R.~Seiringer.
\newblock Proof of Bose-Einstein condensation for dilute trapped gases.
\newblock {\em Physical Review Letters}: 88(2002).

 \bibitem{LS}
E.~Lieb and J.~Solovej.
\newblock Ground State Energy of the One-Component Charged Bose Gas.
\newblock {\em Communications in Mathematical Physics}: 217(2001).

 \bibitem{LSY}
E.~Lieb, R.~Seiringer and J.~Yngvason.
\newblock Bosons in a trap: A rigorous derivation of the Gross-Pitaevskii
energy functional.
\newblock {\em Physical Review A}: 61(2000).

 \bibitem{LY}
E.~Lieb and J.~Yngvason.
\newblock Ground State Energy of the low density Bose Gas.
\newblock {\em Physical Review Letters}: 80(1998).


 \bibitem{S}
R.~Seiringer.
\newblock The Excitation Spectrum for Weakly Interacting Bosons.
\newblock {\em Communications in Mathematical Physics}: 306(2011).

 \bibitem{W}
T.~Wu.
\newblock Ground State of a Bose System of Hard Spheres.
\newblock {\em Physical Review}: 115(1959).


\end{thebibliography}
\end{document}